\newtheorem{observation}{Observation}
\newtheorem{definition}{Definition}
\newtheorem{theorem}{Theorem}
\newtheorem{lemma}{Lemma}
\newtheorem{example}{Example}
\newtheorem{corollary}{Corollary}
\newtheorem{proposition}{Proposition}
\newcommand{\Flip}{{\tt\textsc{Flip}}}
\newcommand{\EFlip}{{\tt\textsc{EFlip}}}
\newcommand{\Out}{{\tt\textsc{Out}}}
\newcommand{\attr}{{\tt{Att}}}
\newcommand{\Worlds}{{\tt{Worlds}}}
\newcommand{\widebar}[1]{{\overline{#1}}}
\newcommand{\ul}[1]{{\underline{#1}}}
\newcommand{\secureview}{{\tt Secure-View}~}  %{{\sc WF-Secure-View}~}
\newcommand{\scream}[1]{{\color{blue}{\texttt{\textbf{ * #1 *}}}{\typeout{#1}}}}
\newcommand{\red}[1]{{\color{red}{\texttt{\textbf{ * #1 *}}}{\typeout{#1}}}}
\newcommand{\eat}[1]{}
\newcommand{\neweat}[1]{}
\newcommand{\eg}{\emph{e.g.}}
\newcommand{\ie}{\emph{i.e.}}
\newcommand{\angb}[1]{\langle #1 \rangle}
\newcommand{\comple}[1]{\widebar{#1}}
\newcommand{\cost}{{\tt cost}}
\newcommand{\be}{\begin{enumerate}}
\newcommand{\ee}{\end{enumerate}}
\newcommand{\Dom}{\texttt{Dom}}
\newcommand{\Codomain}{\texttt{CoDom}}
\newcommand{\tup}[1]{\mathbf{#1}}
\newcommand{\proj}[2]{\Pi_{{#1}}({#2})}
\newcommand{\true}{\textsc{True}}
\newcommand{\false}{\textsc{False}}
\newcommand{\safe}{{\mathbf{S}}}
\newcommand{\uds}{{\mathbf{U}}}
\newcommand{\UDS}{{\texttt{UD-safe}}}
\newcommand{\DS}{{\texttt{D-safe}}}
\newcommand{\US}{{\texttt{U-safe}}}
\newcommand{\wrt}{{with respect to~}}
\begin{document}
%\numberofauthors{3} 
\title{A Propagation Model for Provenance Views of\\ Public/Private Workflows}
\eat{
\author{
%\alignauthor
Susan B. Davidson\\
      % \affaddr{
      University of Pennsylvania\\
       \texttt{susan@cis.upenn.edu}
       
       Tova Milo\\
       Tel Aviv University\\
       \texttt{milo@cs.tau.ac.il}
\alignauthor
Sudeepa Roy\\
       University of Pennsylvania\\
       \texttt{sudeepa@cis.upenn.edu}
}
}

\author{Susan B. Davidson\thanks{University of Pennsylvania; Philadelphia, PA, USA. {\tt susan@cis.upenn.edu}.}
~~~~~~~~~~~~~~~~~~Tova Milo\thanks{Tel Aviv University; Tel Aviv, Israel. {\tt milo@cs.tau.ac.il}.}
~~~~~~~~~~~~~~~~~~Sudeepa Roy\thanks{University of Washington; Seattle, WA, USA. {\tt sudeepa@cs.washington.edu}. This work was done while the author was in the University of Pennsylvania.}
}

\maketitle

\begin{abstract}
	We study the problem of concealing functionality of a proprietary or private module
when provenance information is shown over repeated executions of a 
workflow which contains both {\em public} and {\em private} modules.    
Our approach is to use {\em provenance views} to hide carefully chosen subsets of data over 
all executions of the workflow to ensure $\Gamma$-privacy:  for each private module and 
each input $x$, the module's output $f(x)$ is indistinguishable from $\Gamma -1$ other possible 
values given the visible data in the workflow executions. We show that $\Gamma$-privacy cannot be achieved 
simply by combining solutions for individual private modules; data hiding 
must also be {\em propagated} through public modules. We then examine how much additional data 
must be hidden and when it is safe to stop propagating data hiding.  The answer depends strongly 
on the workflow topology as well as the behavior of public modules on the visible data. 
In particular, for a class of workflows (which include the common tree and chain workflows),
taking private solutions for each private module, augmented with a {\em public closure} that
is {\em upstream-downstream safe},  ensures $\Gamma$-privacy. We define these notions formally and show that
the restrictions are necessary. We also study the related optimization problems
of minimizing the amount of hidden data.

\end{abstract}

 \section{Introduction}\label{sec:intro}

Workflow provenance has been extensively studied, and is
increasingly captured in workflow  systems to ensure reproducibility, enable debugging, and verify the validity and
reliability of results.  However,  as pointed out in
~\cite{DBLP:conf/icdt/DavidsonKRSTC11}, there is a tension between
provenance and privacy:  Confidential intermediate data may be shown
({\em data privacy}); the functionality of proprietary modules may
become exposed by showing the input and output values to that module
over all  executions of the workflow ({\em module privacy});  and
the exact execution path taken in a specification, hence details of
the connections between data, may be revealed ({\em structural
privacy}).  An increasing amount of attention is therefore being
paid to specifying privacy concerns, and developing techniques to
guarantee that these concerns are
addressed~\cite{NiXBSH09,TanGMJMTM06,CadenheadKT11,CadenheadKKT11}.

This paper focuses on privacy of module functionality, in particular
in  the general -- and common -- setting in which proprietary ({\em
private}) modules are used in workflows which also contain
non-proprietary ({\em public}) modules, whose functionality  is
assumed to be known by users.  
There are proprietary modules for tasks like gene sequencing, protein folding, medical diagnoses, 
that are commercially available and are combined with
other modules in a \emph{workflow} for different biological or medical experiments \cite{smartgene, dna20}.
%In these experiments, \emph{workflow provenance}, 
%a record of module interactions and intermeiate data items produced in an execution of the workflow,
%can be immensely useful in order to ensure reproducibility, enable debugging, and verify the validity and
%reliability of results.
The functionality of these proprietary modules 
(\ie\ what result will be output for a given input) is not known, and
owners of these proprietary modules would like to ensure that
their functionality is not revealed when the provenance information is published.
In contrast for a public module (\eg\ a reformatting or sorting module), given an input to the module a user can construct the
 output even if the exact algorithm
used by the module is not known by users (\eg\ Merge sort vs
Quick sort).
%As an example of a public module,
%consider one which sorts its inputs.  Even if the exact algorithm
%used by the module is not known by users (e.g. Merge sort vs
%Quick sort), given an input to the module a user can construct the
% output. This is in contrast to private modules whose functionality 
%   (\ie\ what result will be output for a given input) is not known.
 
% \scream{add previous work}
 
 \eat{
 In contrast, the functionality of private modules
 (i.e. what result will be output for a given input)
 is not known and
 should not be revealed by the visible provenance information.
 Examples of private modules include proprietary gene sequencing and medical diagnosis modules.
 }

Following \cite{DKM+11}, the approach we use is to extend the
notion of $\ell$-diversity~\cite{MKG+07} to the workflow setting by
carefully choosing a subset of intermediate input/output data to
hide over {\em all} executions of the workflow so that each private
module is ``$\Gamma$-private'':  for every input $x$, the actual
value of the output of the module, $f(x)$, is indistinguishable from
$\Gamma-1$ other possible values w.r.t. the visible data values in
the provenance information (in Section~\ref{sec:related} we discuss ideas
related to differential privacy).
% The visible input/output data for the module is called a {\em safe subset} for that module.
The complexity of the problem arises from the fact that modules
interact with each other through data flow defined by the 
%arbitrary directed acyclic graph (DAG) structure of the workflow, 
workflow structure, and therefore merely hiding subsets of inputs/outputs for
private modules may not guarantee their privacy when embedded in a
workflow.
%Similar to \cite{DKM+11}, we study 
We consider workflows with directed acyclic graph (DAG) structure,
that are commonly used in practice \cite{myexpt}, contain common chain and tree workflows,
and comprise a fundamental yet non-trivial class of workflows 
for analyzing module privacy.

As an example, consider a private module $m_2$, which we assume is non-constant.
Clearly, when executed in isolation as a {\em standalone} module,
then either hiding all its inputs or hiding all
its outputs over all executions guarantees privacy for %the maximum 
any privacy parameter $\Gamma$.   However, suppose $m_2$ is embedded in a simple
chain workflow $m_1 \longrightarrow m_2 \longrightarrow m_3$,  where both $m_1$ and
$m_3$ are public, equality modules. Then even if we hide {\em both}
the input and output of $m_2$, their values can be retrieved from
the input to $m_1$ and the output from $m_3$.   Note that the same
problem would arise if $m_1$ and $m_3$ were invertible functions, e.g.
reformatting modules, a common case in practice.

In \cite{DKM+11}, we showed that in a workflow
 with only private modules
 (an {\em all-private workflow}) the problem has a simple, elegant
solution:  If a set of hidden input/output data guarantees
$\Gamma$-standalone-privacy for a private module, then if the module
is placed in an all-private workflow where a superset of that data
is hidden, then $\Gamma$-workflow-privacy is guaranteed for that
module in the workflow. In other words, in an all-private workflow,
hiding the union of the corresponding hidden data of the individual
modules guarantees $\Gamma$-workflow-privacy for all of  them.
Clearly, as illustrated above, this does not hold when the private
module is placed in a workflow which contains public and private
modules (a {\em public/private workflow}).   In \cite{DKM+11} we
therefore explored {\em privatizing} public modules, i.e. hiding the
names of carefully selected public modules so that their function
is no longer known, and then hiding subsets of input/output data
to ensure their $\Gamma$-privacy.  Returning to the example above,
if it were no longer known that $m_1$ was an equality module then
hiding the input to $m_2$ (output of $m_1$) would be sufficient.
Similarly, if $m_3$ was privatized then hiding the output of $m_2$
(input to $m_3$) would be sufficient.
It may appear that merging some public modules with preceding or succeeding
private modules may give a workflow with all private modules and then the
methods from \cite{DKM+11} can be applied. 
However, merging may be difficulty for workflows with complex network structure,
large amount of data may be needed to be hidden, and more importantly, 
it may not be possible to merge at all when the structure of the workflow is known.

%\scream{Argue here that merging the public modules with private modules
%will hide more data as well as may not be possible for general DAGs without
%possibly hiding everything (Rev-2)}
%
%\scream{Add Netflix example? (Rev-3)}

 Although privatization is a reasonable approach in some cases, there
 are many practical scenarios where it cannot be employed. For instance,
 when the workflow specification (the module names and
 connections) is already known to the users, or when the identity of
 the privatized public module can be discovered through the structure
 of the workflow and the names or types of its inputs/outputs.

 {\em To overcome this problem, we propose an alternative
 novel solution, based on the propagation of data hiding through
 public modules. }
 Returning to our example, if the input to $m_2$
were hidden then the input to $m_1$ would also be hidden, although
the user would still know that $m_1$ was the equality function.
Similarly, if the output of $m_2$ were hidden then the output of
$m_3$ would also be hidden; again, the user would still know that
$m_3$ was the equality function.
 While in this example things appear to be simple, several
 technically challenging issues must be addressed when employing
 such a propagation model in the general case: 1) whether to propagate hiding upward (e.g. to $m_1$) or
 downward (e.g. to $m_3$); 2) how far to propagate data hiding; and
 3) which data of public modules must be hidden.
 Overall the goal is to guarantee that the functionality of private
 modules is not revealed while minimizing the amount of hidden data.

%\scream{Sudeepa:  Check the rewording.}

%%% TOVA : I SLIGHTLY REPHRASED, PLEASE READ

In this paper we focus on  {\em downward} propagation, for reasons that
will be discussed in Section~\ref{sec:prelims-propagation}. {\em
Using a downward propagation model, we show the following strong
results}: For a special class of common workflows, {\em
single (private)-predecessor workflows}, or simply {\em single-predecessor workflows} (which include the common tree and
chain workflows), taking solutions for $\Gamma$-standalone-privacy
of each private module ({\em safe subsets}) augmented with specially
chosen input/output data of public modules in their {\em public closure}
(up to a successor private module) that is rendered {\em
upstream-downstream safe (\UDS)} by the data hiding, and hiding the
union of data in the augmented solutions for each private module
will ensure $\Gamma$-workflow privacy for all private modules.
 We define these notions formally in Section~\ref{sec:prelims-propagation}
and go on to show that single-predecessor workflows is
 the largest class of workflows for which propagation of  data hiding only
within the public closure suffices.

\eat{

In this paper we focus on  downward propagation, for reasons that
will be discussed in Section~\ref{sec:prelims-propagation}. {\em
Using a downward propagation model, we show the following strong
results}: For a special class of common workflows, {\em
single-predecessor workflows} (which include the common tree and
chain workflows), taking solutions for $\Gamma$-standalone-privacy
of each private module ({\em safe subsets}) augmented with
specially chosen input/output data of public modules up to a successor private module
(technically, a {\em public closure} that is rendered {\em upstream-downstream safe (\UDS)} by
the data hiding),  and hiding the union of data in the augmented solutions for each private
module will ensure $\Gamma$-workflow privacy for all private
modules.
 We define these notions formally in Section~\ref{sec:prelims-propagation}
and go on to show that single-predecessor workflows is
 the largest class of workflows for which propagation of  data hiding only
within the public closure suffices.

}

Since data may have different {\em costs} in terms of hiding, and
there may be many different safe subsets for private modules and \UDS\ 
subsets for public modules, {\em  the next problem we address is finding a
minimum cost solution -- the  {\em optimum view problem}}.
Using the result from above, we show that for single-predecessor workflows
the optimum view problem may be solved by first identifying safe and
\UDS\ subsets for the private and public modules, respectively,
then assembling them together optimally. The complexity of
identifying safe subsets for a private module was studied in
\cite{DKM+11} and the problem was shown to be NP-hard (EXP-time) in
the number of module attributes. %inputs and outputs. 
We show here that
identifying \UDS\ subsets for public modules is of similar complexity:
Even deciding whether a given subset is \UDS\ for a module is coNP-hard in the number of input/output data.
We note however that this is not as negative as it might appear, since the number of
inputs/outputs of individual modules is not high; furthermore, the computation may
be performed as a pre-processing step with the cost being amortized
over possibly many uses of the module in different workflows. In
particular we show that, given the computed subsets, for chain and
tree-shaped workflows, the optimum view problem has a polynomial
time solution in the size of the workflow and the maximum number of
safe/\UDS\ subsets for a private/public modules. Furthermore, the
algorithm can be applied to general single-predecessor workflows
where the public closures have chain or tree shapes. In contrast,
when the public closure has an arbitrary DAG shape, the problem becomes
NP-hard (EXP-time) in the size of the public closure.

%\scream{Sudeepa:  Check the rewording.}

%%% TOVA : I SLIGHTLY REPHRASED
%\scream{Sue:  Should we say WHY we just study acyclic workflows?} -- done

{\em We then consider general acyclic workflows}, and give a sufficient condition
to  ensure $\Gamma$-privacy that is not the trivial solution of
hiding all data in the workflow.  In contrast to single-predecessor
workflows, hiding data within a public closure no longer suffices;
data hiding must continue through other private modules to the
entire downstream workflow. \eat{Significantly, while propagation
for a private module in a single-predecessor workflows stops when
another private module is encountered, in a general workflow
propagation must continue through other private modules.} In return,
the requirement from data hiding for public modules is somewhat
weaker here: hiding must  only ensure that the module is {\em
downstream-safe} (\DS), which typically involves fewer input/output
data than upstream-downstream-safety (\UDS). \eat{Since
single-predecessor workflows are a special case of general
workflows, this also yields an alternate, incomparable solution for
single-predecessor workflows. In the first, more input/output data of
public modules may be hidden to satisfy %the 
\UDS ty, %property 
but
propagation stops when another private module is encountered.  In
the second, fewer input/output data may be hidden to satisfy %the 
\DS ty, %property, 
but propagation extends to more modules.}

%\scream{Any use of possible world model beyond\\
% $\Gamma$-privacy? (Rev-3)}
% 
%\scream{How hiding data values can be used in\\
% workflow provenance that randomization for\\
%  differential-privacy cannot achieve? \\
%  Mention some relevant tasks. \\
%  How cost can be assigned apriori? (Rev-3)}

The remainder of the paper is organized as follows:   Our workflow
model  and notions of standalone- and workflow-module privacy are
given in Section~\ref{sec:prelims-PODS11}.
Section~\ref{sec:prelims-propagation} describes our propagation
model,  defines upstream-downstream-safety and single-predecessor
workflows, and states the privacy theorem.
Section~\ref{sec:privacy-theorems} discusses the proof of the
privacy theorem, and the necessity of the upstream-downstream-safety
condition as well as the single-predecessor restriction.  The
optimization problem is studied in Section~\ref{sec:optimization}.
We then discuss general public/private workflows in
Section~\ref{sec:privacy-theorems}, before giving related work in Section~\ref{sec:related} and concluding in
Section~\ref{sec:conclusion}.

%\subsection{Preliminaries from \cite{DKM+11}}

%\scream{Copied from the PODS 2011  paper.}

\section{Preliminaries}\label{sec:prelims-PODS11}
We start  by reviewing the formal definitions and
notions of module privacy from \cite{DKM+11}, and then extend them
to the context studied in this paper.\footnote{The example in this section is also taken from \cite{DKM+11}.} Readers familiar with the
definitions and results in \cite{DKM+11} can 
%skip Sections \ref{sec:relations} and \ref{sec:priv} and 
move directly to Section~\ref{sec:prelims-propagation}.

\subsection{Modules, Workflows  and Relations} \label{sec:relations}
\vspace{-3mm}
\paragraph*{Modules} A module $m$ with a set %$I_m$ 
$I$ of input data and a set
%$O_m$ 
$O$ of (computed) output data is modeled as a relation %$R_m$
$R$.
$R$ has the % over a 
set of attributes %$A_m=I_m \cup O_m$ 
$A = I \cup O$, and % that 
satisfies the
functional dependency %$I_m \rightarrow O_m$.
$I \rightarrow O$.  
%In other words, $I$ serves as a (not necessarily minimal) key for $R$.  
We assume that
%$I_m \cap O_m = \emptyset$ 
$I \cap O = \emptyset$ 
and will refer to %$I_m$ 
$I$ and $O$ as the
\emph{input attributes} and \emph{output attributes} of $R$ respectively.
%$R_m$ and to $O_m$ as its \emph{output attributes}.

We assume that the values of each attribute %$a \in A_m$
$a \in A$
 come from a
finite but arbitrarily large domain $\Delta_{a}$, and let $\Dom =
\prod_{a \in I}\Delta_{a}$ and $\Codomain = \prod_{a \in
O}\Delta_{a}$ denote the {\em domain} and {\em co-domain} of the
module $m$ respectively.\footnote{We distinguish between the possible
range $O$ of the function $m$ that we call \emph{co-domain} and
the {\em actual range} $\{\tup{y}: \exists \tup{x} \in I ~s.t.~ \tup{y}
= m(\tup{x})\}$}
The relation $R$ thus
represents the (possibly partial) function $m: \Dom \rightarrow
\Codomain$ and tuples in $R$ describe executions of $m$, namely
for every $t \in R$, $\proj{O}{\tup{t}}=m(\proj{I}{\tup{t}})$.
We overload the standard notation for projection, $\proj{A}{R}$,
and use it for a tuple $\tup{t} \in R$. Thus $\proj{A}{\tup{t}}$,
for a set $A$ of attributes, denotes the projection of $\tup{t}$ to
the attributes in $A$.

\paragraph*{Workflows} A workflow $W$ consists of a set of modules $m_1, \cdots, m_n$,
connected as a DAG (see, for instance, the workflow in
Figure~\ref{fig:wf-view}).
\eat{
				\scream{Rev-1 said why not cyclic workflows,\\
				 should we mention that acyclic wfs are \\
				 fundamental, and commonly found case, and \\
				 dynamic workflows cannot be modeled with \\
				 fixedrelations?}
				\scream{You should mention this in the introduction:  it is not only a common case but is already difficult.}
}
\eat{Each module $m_i$ has a set $I_i$ of
input attributes and a set $O_i$ of output attributes, and $A_i = I_i \cup O_i$
is the set of all attributes.}
We assume  that
\eat{(1) for each module, its input and output attributes are
disjoint, i.e. $I_i \cap O_i = \emptyset$, }
(1) the output attributes of distinct modules are disjoint, namely $O_i \cap O_j =
\emptyset$, for $i\neq j$ (i.e. each data item is produced by a
unique module); and (2) whenever an output of a module $m_i$ is fed
as input to a module $m_j$ the corresponding output and input
attributes of $m_i$ and $m_j$ are the same. The DAG shape of the
workflow guarantees that these requirements are not contradictory.

We model executions of $W$ as a relation $R$ over the set of
attributes $A=\cup_{i=1}^{n} A_i$, %(I_i \cup O_i)$,
satisfying the set of
functional dependencies $F = \{I_i \rightarrow O_i : i \in [1,
n]\}$. Each tuple in $R$ describes an execution of the workflow $W$.
In particular, for every $t \in R$, and every $i \in [1, n]$,
$\proj{O_i}{\tup{t}}=m_i(\proj{I_i}{\tup{t}})$. One can think of
$R$ as containing (possibly a subset of) the join of the individual
module relations.

\begin{example}\label{ex:module}
Figure~\ref{fig:wf-view} shows a  workflow involving three
modules $m_1,m_2,m_3$ with boolean input and output attributes
implementing the following functions:
\eat{
Module $m_1$ takes as input two data items, $a_1$ and $a_2$, and
computes $a_3\! =\! a_1\!\vee\! a_2$, $a_4\! =\! \neg({a_1\!\wedge\!
a_2})$ and $a_5\! =\! \neg({a_1\!\oplus \!a_2})$. (The symbol
$\oplus$ denotes XOR).
}
(i) $m_1$ computes $a_3\! =\! a_1\!\vee\! a_2$, $a_4\! =\! \neg({a_1\!\wedge\!
a_2})$ and $a_5\! =\! \neg({a_1\!\oplus \!a_2})$, where $\oplus$ denotes XOR;
(ii) $m_2$ computes $a_6\! =\! \neg({a_3\! + \!a_4})$; and (iii) $m_3$ computes
 $a_7\! =\! a_4\!\wedge\! a_6$.
The relational representation (functionality) $R_1$
of module $m_1$  %as relation $R_1$,
with the functional dependency $a_1 a_2 \longrightarrow a_3 a_4
a_5$ is shown in Figure~\ref{fig:m1}. For clarity, we have added $I$ (input) and $O$ (output) above
the attribute names to indicate their role. The relation $R$
describing the workflow executions is shown in Figure~\ref{fig:wf-prov}
which has the functional dependencies $a_1 a_2 \longrightarrow a_3 a_4
a_5$, $a_3 a_4 \longrightarrow a_6$, $a_4 a_5 \longrightarrow a_7$
from modules $m_1, m_2, m_3$ respectively.
\eat{
 The
input and output attributes of modules $m_1, m_2, m_3$ respectively
are (i) $I_1 = \{a_1, a_2\}$, $O_1 = \{a_3, a_4, a_5\}$, (ii) $I_2 =
\{a_3, a_4\}$, $O_2 = \{a_6\}$ and (iii) $I_3 = \{a_4, a_5\}$, $O_3
= \{a_7\}$. The underlying functional dependencies in the relation
$R$ in Figure~\ref{fig:wf-prov} reflect the keys of the constituent
modules, e.g. from $m_1$ we have $a_1 a_2 \longrightarrow a_3 a_4
a_5$, from $m_2$ we have $a_3 a_4 \longrightarrow a_6$, and from
$m_3$ we have $a_4 a_5 \longrightarrow a_7$.
}
\end{example}

\begin{figure}[t]
\centering
{%\scriptsize
~\subfloat[{\scriptsize $R_1$: Functionality of $m_1$} \vspace{-5.3mm}]{\label{fig:m1}
\begin{tabular}{|cc|ccc|}
\hline \multicolumn{2}{|c|}{$I$} & \multicolumn{3}{c|}{$O$}\\ \hline
$a_1$ & $a_2$ & $a_3$ & $a_4$ & $a_5$ \\ \hline  \hline 0 & 0 & 0 & 1
& 1 \\ \hline 0 & 1 & 1 & 1 & 0 \\ \hline 1 & 0 & 1 & 1 & 0 \\
\hline 1 & 1 & 1 & 0 & 1 \\ \hline
\end{tabular}
}~
 \subfloat[{\scriptsize $R$:  Workflow executions}]{
  % \vspace{-2mm}
\begin{tabular}{|lllllll|} \hline
$a_1$ & $a_2$ & $a_3$  & $a_4$ & $a_5$ & $a_6$& $a_7$\\  \hline
\hline 0 & 0 & 0 & 1 & 1 & 1& 0\\ \hline 0 & 1 & 1 & 1 & 0 & 0 & 1\\
\hline 1 & 0 & 1 & 1 & 0 & 0 &1\\ \hline 1 & 1 & 1 & 0 &1 & 1 &1
\\\hline
\end{tabular}\label{fig:wf-prov} }
 %\vspace{-10mm}
 ~\subfloat[{\scriptsize \!\! %$R_V \!=\!\proj{V}{R_1}$}]{
 $R' \!=\!\proj{A \setminus H}{R_1}$}]{
 \label{fig:m1'}
  %\vspace{-6mm}
\begin{tabular}{|c|cc|}
\hline \multicolumn{1}{|c|}{%$I \cap V$} 
$I \setminus H$} & \multicolumn{2}{c|}{ %$O\cap V$}
$O \setminus H$} \\ \hline $a_1$ & $a_3$ & $a_5$ \\ \hline  \hline 0 & 0 & 1
\\ \hline 0 & 1 & 0 \\ \hline 1 & 1 & 0 \\ \hline 1 & 1 & 1 \\
\hline
\end{tabular}
} ~\subfloat{
\begin{tabular}{c}
    \includegraphics[scale=.15]{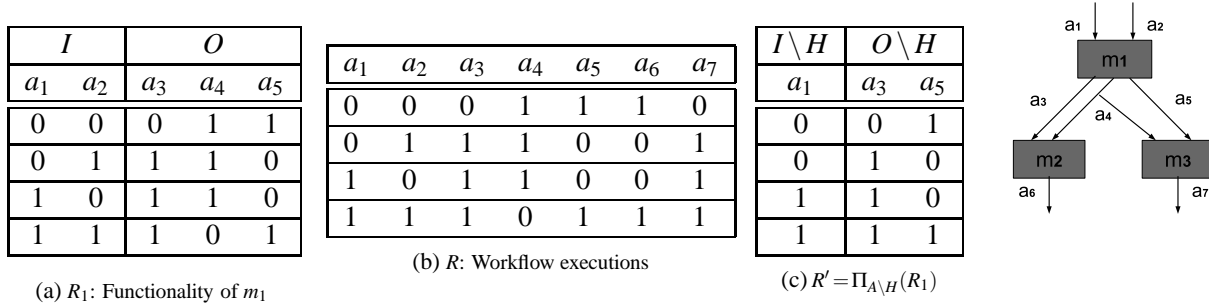}
    %\epsfig{file=figures/workflow.eps,width=0.48\linewidth,clip=}
%    \vspace{-7mm}
\end{tabular}
  }
 %\vspace{-2mm}
 \caption{Module and workflow executions as
relations, and view} 
%\vspace{-4mm} 
\label{fig:wf-view}
}
\end{figure}

\textbf{Data sharing} refers to an output attribute of a module acting as input to more than one module
(hence $I_i\cap I_j \neq \emptyset$ for $i \neq j$).  In the example above, attribute $a_4$ is shared by both $m_2$ and $m_3$.

%. We call this {\em data sharing}.

%
%Note that the output of a module may be input to several modules. It
%is therefore possible that $I_i\cap I_j \neq \emptyset$ for $i \neq
%j$. We call this {\em data sharing}.

% and  define the degree of data
%sharing in a workflow as follows:
%
%\begin{definition}
%A workflow $W$ is said to have \emph{$\gamma$-bounded data sharing}
%if every attribute in $W$ can appear in the left hand side of at
%most $\gamma$ functional dependencies $I_i \rightarrow O_i$.
%\end{definition}
%
%In the workflow of our running example, $\gamma = 2$. Intuitively,
%if a workflow has $\gamma$-bounded data sharing then a data item can
%be fed as input to at most $\gamma$ different modules. In the
%following sections we will see the implication of such a bound on
%the complexity of the problems studied.

\subsection{Module Privacy}
\label{sec:priv} We consider the privacy of a single module, which is
called {\em standalone module privacy}, then privacy of modules when
they are connected in a workflow, which is called \emph{workflow
module privacy}. We study this given two types of modules, {\em
private} modules (the focus of \cite{DKM+11}) and {\em public}
modules (the focus here).

\paragraph*{Standalone module privacy}
Our approach to ensuring standalone module privacy, for a module
represented by the relation $R$, is to hide a carefully chosen
subset $H$ of $R$'s attributes (called \emph{hidden attributes}). 
In other words, we  project $R$ on a
restricted subset $A \setminus H$, where $A$
is the set of all attributes of $m$. %$V$ 
The set $A \setminus H$ is called \emph{visible attributes}.
%				of attributes (called the {\em visible
%				attributes} of $R$), allowing users 
The users are allowed access only to the view
%				$R_V=\proj{V}{R_m}$.
$R' =\proj{A \setminus H}{R}$.
%				The remaining, non visible, attributes of $R_m$
%				are called {\em hidden attributes}.

One may distinguish two types of modules. (1) {\em Public modules}
whose behavior is fully known to users. Here users have a prior
knowledge about the full content of $R$ and, even if given only
the view %$R_V$, 
$R'$, they are able to fully (and exactly) reconstruct
$R$. Examples include reformatting or sorting modules.
  (2) \emph{Private modules} where such a priori knowledge does not
exist. Here, the only information available to users, on the
module's behavior, is the one given by %$R_V$. 
$R'$. Examples include
proprietary software,  e.g. a genetic disorder susceptibility
module.
%\eat{
%\footnote{We discuss in Section \ref{sec:conclusions} how
%partial prior knowledge can be handled by our approach.}.
%}

Given a view (projected relation) %$R_V$ 
$R'$
of a private module $m$, the
\emph{possible worlds} of $m$ are all the possible full relations
(over the same schema as $R$) that are consistent with 
%$R_V$ \wrt the visible attributes. 
the view $R'$.
Formally,
\begin{definition}\label{def:pos-worlds-standalone}
Let $m$ be a private module with a corresponding relation $R$,
having input and output attributes $I$ and $O$ respectively.
Let $A = I \cup O$ be the set of all attributes. %Let $V \subseteq A_m$.
Given a set of hidden attributes $H$, the set of \emph{\bf possible worlds} for $R$ \wrt 
%$V$, 
$H$, denoted
%$\Worlds(R_m, V)$, 
$\Worlds(R, H)$, consists of all relations $R'$ over the same
schema as $R$ that satisfy the functional dependency $I
\rightarrow O$, and where 
$\proj{A \setminus H}{R'} = \proj{A \setminus H}{R}$.
%$\proj{V}{R'} = \proj{V}{R}$.
\end{definition}

To guarantee privacy of a module $m$, the view 
$R'$ %$R_V$ 
should ensure some level of uncertainly \wrt the value of the output
$m(\proj{I}{\tup{t}})$, for tuples $t\in R$. To define this, we
introduce the notion of $\Gamma$-standalone-privacy, for a given
parameter $\Gamma \geq 1$. Informally, %$R_V$ 
a view $R'$ is
$\Gamma$-standalone-private if for every $t \in R$,  %$\Worlds(R_m, V)$ 
$\Worlds(R, H)$
contains at least $\Gamma$ distinct output
values that could be the result of $m(\proj{I}{\tup{t}})$.
\begin{definition}\label{def:standalone-privacy}
Let $m$ be a private module with a corresponding relation $R$
having input and output attributes $I$ and $O$ resp. Then $m$ is
\emph{\bf $\Gamma$-standalone-private} \wrt a set of %visible attributes $V$, 
hidden attributes $H$, if for every tuple $\tup{x} \in \proj{I}{R}$, \mbox{
%$|\Out_{\tup{x}, m, V}| 
$|\Out_{\tup{x}, m, H}| \geq \Gamma$}, where
  $\Out_{\tup{x}, m, H} = \{\tup{y} \mid  \exists R' \in \Worlds(R, H),~~\exists \tup{t'} \in R'~~ s.t~~
  \tup{x} = \proj{I}{\tup{t'}} \wedge \tup{y}=\proj{O}{\tup{t'}} \}$.\footnote{In \cite{DKM+11}, we (equivalently) defined 
privacy \wrt\ visible attributes $V$ instead of hidden attributes $H$, and 
we used the notation ``$\Out_{\tup{x}, m}$ with respect to $V$'' instead of $\Out_{\tup{x}, m, H}$.}
\par
If $m$ is \emph{$\Gamma$-standalone-private} \wrt %$V$
hidden attributes $H$, then we call %$V$ 
$H$ a \emph{safe subset} for
$m$ and $\Gamma$.
\end{definition}

A module cannot be differentiated from its possible worlds with respect to the visible attributes,
and therefore, whether the original module, or one from its possible worlds 
is being used cannot be recognized. Hence,
$\Gamma$-standalone-privacy
implies that for \emph{any} input
the adversary cannot guess $m$'s output with probability
$> \frac{1}{\Gamma}$, even if the module is executed an arbitrary
number of times. 

\begin{example}\label{eg:sa-worlds}
Returning to module $m_1$, suppose the %visible attributes are $V = \{a_1, a_3, a_5\}$
hidden attributes are $H = \{a_2, a_4\}$
 resulting in the view %$R_V$ 
 $R'$ in
Figure~\ref{fig:m1'}. For clarity, we have added %$I \cap V$ 
$I \setminus H$ (visible
input) and %$O \cap V$ 
$O \setminus H$ (visible output) above the attribute names to
indicate their role. Naturally, $R_1 \in \Worlds(R_1, H)$,
%Figure~\ref{fig:possible-worlds} shows four additional sample
%relations $R_1^1,R_1^2,R_1^3, R_1^4$ in $\Worlds(R_1, V)$, such that
%$\forall i \in [1, 4], \proj{V}{R_1^i} = \proj{V}{R_1}= R_V$.
and we can check that overall there are 64 relations in $\Worlds(R_1, H)$.

Furthermore, it can be verified that, if %$V = \{a_1, a_3, a_5\}$
$H = \{a_2, a_4\}$,  then for all $\tup{x} \in \proj{I}{R_1}$,
$|\Out_{\tup{x}, m_1, H}| \geq 4$, so
$\{a_1, a_3, a_5\}$ is safe for $m_1$ and $\Gamma = 4$.
As an example, 
%from Figure~\ref{fig:possible-worlds}, 
when $\tup{x} = (0, 0)$,
 $\Out_{\tup{x},m, H} \supseteq \{ (0, \underline{0}, 1)$, $(0, \underline{1}, 1)$, $(1, \underline{0}, 0)$, $(1, \underline{1}, 0) \}$
 (hidden attributes are underlined) -- we can define four possible worlds that map $(0, 0)$ to these outputs (see \cite{DKM+11} for details).
Also, hiding any two output attributes from $O = \{a_3, a_4, a_5\}$
ensures standalone privacy for $\Gamma = 4$,
\eg\ if $H = \{a_2, a_4\}$, 
%$V = \{a_1, a_2, a_3\}$ 
%(i.e. $\{a_4, a_5\}$ are hidden), 
then the input $(0, 0)$ can be mapped to one of $(0,
\underline{0}, \underline{0})$, $(0, \underline{0}, \underline{1})$,
$(0, \underline{1}, \underline{0})$ and $(0, \underline{1},
\underline{1})$; this holds for other assignments of input
attributes as well. However,
%$V = \{a_3, a_4, a_5\}$ (i.e. when only
%input attributes are hidden) 
$H = \{a_1, a_2\}$ (input attributes) is not safe for $\Gamma = 4$: for
any input $\tup{x}$, $\Out_{\tup{x}, m, H} = \{(0,1,1)$, $(1,1,0)$, $(1,0,1)\}$,
containing only three possible output tuples.
\end{example}

\eat{
}

\vspace{-4mm}
\paragraph*{Workflow Module Privacy} To define privacy in the context of a
workflow, we first extend the notion of {\em possible worlds} to a
workflow view. Consider the view %$R_V=\proj{V}{R}$ 
$R' =\proj{A \setminus }{R}$
of the relation $R$ of a workflow
$W$, where $A$ is the set of all attributes across all modules in $W$. Since %the workflow 
$W$ may contain private as well as
public modules, a possible world for %$R_V$ 
$R'$ is a full relation that
not only agrees with %$R_V$ 
$R'$ on the content of the visible attributes and satisfies the functional dependency,
but is also consistent \wrt the expected behavior of the public
modules. In the following definitions, $m_1, \cdots, m_n$ %denote 
are the modules in %the workflow 
$W$ and $F = \{I_i\! \rightarrow \!O_i : 1 \leq i \leq n\}$ %denotes the 
%set of 
is the set of functional
dependencies in $R$.
%$F$ denotes the set of functional
%dependencies $I_i\! \rightarrow \!O_i$, $i \in [1, n]$ in the
%relation $R$.

\begin{definition}\label{def:pos-worlds-workflow}
The set of \emph{\bf possible worlds for the workflow relation $R$}
\wrt %$V$, 
hidden attributes $H$ (denoted by \\
$\Worlds(R, H)$)
 consists of all %the
relations $R'$ over the same attributes as $R$ that satisfy 
(1) the functional dependencies in $F$, (2) $\proj{A \setminus H}{R'}$ =
$\proj{A \setminus H}{R}$, and (3) $\proj{O_i}{\tup{t'}}$ =
$m_i(\proj{I_i}{\tup{t'}})$ for every public module $m_i$ in $W$ and every
tuple $\tup{t'} \in R'$.
\end{definition}

We can now define the notion of $\Gamma$-workflow-privacy, for a
given parameter $\Gamma \geq 1$. Informally, a view %$R_V$ 
$R'$ is
$\Gamma$-workflow-private if for every tuple $t \in R$, and every
private module $m_i$ in the workflow, the possible worlds
$\Worlds(R, H)$ contain at least $\Gamma$ distinct output values
that could be the result of $m_i(\proj{I_i}{\tup{t}})$.

\begin{definition}\label{def:workflow-privacy}
A private module $m_i$ in $W$ is \emph{\bf $\Gamma$-workflow-private}
\wrt a set of %visible attributes $V$, 
hidden attributes $H$, if for every tuple
%\mbox{
$\tup{x} \in \proj{I_i}{R}$,
  $|\Out_{\tup{x}, W, H}| \geq \Gamma$, where
  $\Out_{\tup{x}, W, H} =$ $\{\tup{y} \mid  \exists R'$ $\in \Worlds(R, H),$ $s.t., $ 
  $\forall$ $\tup{t'}$ $\in R',~ $  $\tup{x} = \proj{I_i}{\tup{t'}}$ $\Rightarrow \tup{y}$ $=\proj{O_i}{\tup{t'}} \}$.

$W$ is called \emph{$\Gamma$-private} if every private module $m_i$
in $W$ is $\Gamma$-workflow-private. If $W$ (resp. $m_i$) is
\emph{$\Gamma$-private} ($\Gamma$-workflow-private) \wrt $H$, then
we call $H$ a \emph{safe subset} for $\Gamma$-privacy of $W$
($\Gamma$-workflow-privacy of $m_i$).
\end{definition}

Similar to standalone module privacy, $\Gamma$-workflow-privacy ensures
that for any input to a module $m_i$, the output cannot be 
guessed with probability $\geq \frac{1}{\Gamma}$ even if $m_i$
belongs to a workflow with arbitrary DAG structure and interacts with other modules
with known or unknown functionality, and even the workflow is
executed an arbitrary number of times.
For simplicity, the above definition assume that the privacy
requirement of every module $m_i$ is the same $\Gamma$. The results
and proofs in this paper remain unchanged when different modules
$m_i$ have different privacy requirements $\Gamma_i$.
Note that there is a subtle difference in workflow privacy of a module defined as above and
 standalone-privacy (Definition~\ref{def:standalone-privacy}); the former 
 uses the logical implication operator ($\Rightarrow$) for defining $\Out_{\tup{x}, W, H}$ 
 while the latter uses
 conjunction ($\wedge$) for defining $\Out_{\tup{x}, m, H}$.
 This is due to the fact that some modules are not \emph{onto}\footnote{For a function $f: D \rightarrow C$, $D$ is the \emph{domain}, 
$C$ is the \emph{co-domain}, and $R = \{y \in C : \exists x \in D, f(x) = y\}$ is the \emph{range}. The function $f$ is \emph{onto}
if $C = R$.};  and as a result the input $x$ itself may not appear in any execution of the possible world $R'$.
Nevertheless, there is an alternative definition of module $m_i$ that maps $x$ to $y$
and can be used in the workflow for $R'$ consistently with the visible data.

% THE WORK IN >>> FOCUSED ON .. AND SHOW THAT PRIVACY CAN BE LIFTED FROM MODULES TO WORKFLOWS
%Note that when a workflow consists only of private modules, the
%second constraint does not need to be enforced. \scream{ In
%\cite{DKM+11},  we called these {\em all-private workflows} and
%extensively studied these workflows. }

%\vspace*{-2mm}
%\paragraph*{Optimal private  %secure
%views}
\subsection{Composability Theorem and Optimization}\label{sec:privacy-opt}
 Given a workflow $W$ and 
parameter $\Gamma$, there may be several incomparable (in terms of
set inclusion) safe subsets %$V$ 
$H$ for the (standalone) modules in $W$
and for the workflow as a whole. Some of the corresponding %$R_V$
$R'$ views may be preferable to others, e.g. they provide users with more
useful information, allow more common/critical user
queries  to be answered, etc. 
%Let $\widebar{V}= (I \cup O) \setminus V$ denote the
%attributes of $R$ that do not belong to the view. 
If $\cost(H)$
%$\cost(\widebar{V})$ 
denotes the penalty of hiding the attributes in
%$\widebar{V}$, 
$H$, a natural goal is to choose a safe subset %$V$ 
$H$ that
minimizes %$\cost(\widebar{V})$. 
$\cost(H)$. A particular instance of the problem
is when the cost function is additive: each attribute $a$ has some
penalty value $\cost(a)$ and the penalty of hiding %$\widebar{V}$  
$H$ is
%$\cost(\widebar{V}) = \Sigma_{a\in \widebar{V}} \cost(a)$. This
$\cost(H) = \Sigma_{a\in H} \cost(a)$. 
%				This problem was called in \cite{DKM+11} \emph{the \secureview problem}.
\par
On the negative side, it was shown in \cite{DKM+11} that the
corresponding decision problem is hard in the number of
attributes, even for a single module and even in the presence of an
oracle that tests whether a given attribute subset is safe. On the
positive side, however, it was 
%			observed that the number of
%			attributes of individual modules is fairly small, and 
shown that {\em when the
workflow consists only of private modules} (we call these ``\emph{all-private}'' workflows), once privacy has been
analyzed for the individual modules, the results can be lifted to
the whole workflow. %, using the following theorem.
In particular, the following theorem says that,
hiding the union of %the corresponding 
hidden attributes of standalone-private solutions of the individual modules
in an all-private workflow
guarantees $\Gamma$-workflow-privacy for all of them.

%The formal statement of this \emph{composability theorem} is given below.

%\red{The LAST LINE IN THEOREM 1 WAS WRONG, I CHANGED IT. Should be
%$\overline{V} \supseteq \bigcup_{i = 1}^n \widebar{V_i}$ (and not
%$V= \bigcup_{i = 1}^n V_i$ as previously written)}

\begin{theorem}\label{thm:privacy-private} {\bf(Composability Theorem for %Privacy Theorem for
All-private Workflows \cite{DKM+11})} Let $W$ be a workflow
consisting only of private modules $m_1, \cdots, m_n$. 
%Given a parameter $\Gamma \geq 1$, 
For each %$m_i$, 
$i \in [1, n]$,
let %$V_i \subseteq (I_i \cup O_i)$ 
$H_i \subseteq A_i$ %I_i \cup O_i$ 
be a set of safe hidden attributes %\wrt which $m_i$, $i \in [1, n]$, is
for $\Gamma$-standalone-privacy of $m_i$. 
Then the workflow $W$ is
$\Gamma$-private \wrt %the set of %visible attributes $V$ s.t.
hidden attributes $H = \bigcup_{i = 1}^n H_i$.
% such that \overline{V} \supseteq \bigcup_{i = 1}^n \widebar{V_i}$
 %$V= \bigcup_{i = 1}^n V_i$.
\end{theorem}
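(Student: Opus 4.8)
The plan is to reduce $\Gamma$-workflow-privacy of the whole workflow to the $\Gamma$-standalone-privacy that we are already given for each module. Concretely, I would establish the pointwise containment
\[
\Out_{\tup{x}, m_i, H_i} \subseteq \Out_{\tup{x}, W, H}
\qquad\text{for every private } m_i \text{ and every } \tup{x}\in\proj{I_i}{R}.
\]
Since $H_i\subseteq H$ and $m_i$ is $\Gamma$-standalone-private \wrt $H_i$, the left-hand side has at least $\Gamma$ elements, so the containment immediately gives $|\Out_{\tup{x}, W, H}|\ge\Gamma$, i.e. $\Gamma$-workflow-privacy of $m_i$; ranging over all private $m_i$ then yields $\Gamma$-privacy of $W$. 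The whole theorem therefore rests on this containment.

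To prove it, fix an output $\tup{y}\in\Out_{\tup{x}, m_i, H_i}$ together with a witnessing standalone possible world $R_i^{\ast}\in\Worlds(R_i, H_i)$ in which $m_i$ maps $\tup{x}$ to $\tup{y}$. By definition $R_i^{\ast}$ agrees with $R_i$ on the visible attributes $A_i\setminus H_i$, so $\tup{y}$ differs from the original output of $m_i$ on $\tup{x}$ only on hidden attributes and its visible part is forced. I would then build a single workflow possible world $\tilde{R}\in\Worlds(R, H)$ realizing $m_i(\tup{x})=\tup{y}$, which places $\tup{y}$ in $\Out_{\tup{x}, W, H}$. The construction keeps every visible value exactly as in $R$, sets $m_i(\tup{x})=\tup{y}$ as guaranteed by $R_i^{\ast}$, and fixes the remaining hidden values by sweeping the modules in topological order.

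The device that makes this go through is that every attribute we may touch is hidden and is the output of a \emph{private} module: because the workflow is all-private, condition~(3) of Definition~\ref{def:pos-worlds-workflow} is vacuous, so we are under no obligation to reproduce any known public behavior. Exploiting this, I would reassign the hidden output attributes of each module, on all of its \emph{non-target} inputs, to fresh pairwise-distinct domain values --- legal because each $\Delta_a$ is finite but arbitrarily large. Chosen this way, the partition of executions induced by the full input to a module $m_j$ in $\tilde{R}$ refines the partition induced by the full input to $m_j$ in $R$. Hence on any block of $\tilde{R}$-executions sharing a full input to $m_j$, the (preserved) visible output of $m_j$ is constant, so a consistent output value exists for the block and the dependency $I_j\rightarrow O_j$ holds; together with the preserved visible data this certifies $\tilde{R}\in\Worlds(R,H)$, while $m_i(\tup{x})=\tup{y}$ holds by construction.

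The main obstacle is exactly this functional-dependency bookkeeping under downstream propagation: overwriting the hidden outputs of $m_i$ feeds changed values into the inputs of its successors, and a careless change could make two formerly distinct successor-inputs coincide, forcing equal outputs where the fixed visible outputs disagree --- an irreparable conflict, since visible outputs cannot be altered. The crux is to show that the fresh-value reassignment separates all such collisions globally and consistently across the DAG, so that the target tuple $\tup{y}$ (whose hidden coordinates are \emph{not} free to change) is the only one carrying its particular hidden values into the downstream modules. It is precisely here that the all-private assumption is indispensable; as the introduction's chain $m_1\rightarrow m_2\rightarrow m_3$ shows, once a public module must be matched the same reassignment is blocked and the statement fails.
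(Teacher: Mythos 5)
Your overall reduction is the right one, and it matches the strategy this paper uses for its own composability results (Lemma~\ref{lem:main-private-single-pred} behind Theorem~\ref{thm:privacy-downward}, and Lemma~\ref{lem:main-private-general} for general workflows): prove the pointwise containment $\Out_{\tup{x}, m_i, H_i} \subseteq \Out_{\tup{x}, W, H}$ and invoke standalone privacy. (Note the paper does not reprove Theorem~\ref{thm:privacy-private} itself; it imports it from \cite{DKM+11}.) But your construction of the witnessing possible world has a genuine gap, in two places. The more fundamental one: Theorem~\ref{thm:privacy-private} allows $H_i \subseteq A_i$, i.e.\ hidden \emph{input} attributes, and your claim that ``$\tup{y}$ differs from the original output of $m_i$ on $\tup{x}$ only on hidden attributes and its visible part is forced'' fails in that case. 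That claim is exactly Lemma~\ref{lem:out-x-output}, whose hypothesis $h_i \subseteq O_i$ is essential: when inputs are hidden, a standalone possible world may re-pair visible outputs with different hidden inputs, so a candidate output $\tup{y}$ can disagree with $\tup{z} = m_i(\tup{x})$ on \emph{visible} output attributes. Example~\ref{eg:sa-worlds} shows this concretely: with $H = \{a_1, a_2\}$ hidden, the input $(0,0)$ has candidate outputs $(1,1,0)$ and $(1,0,1)$, which differ from $m_1(0,0)=(0,1,1)$ on attributes that are all visible. Your world $\tilde{R}$ keeps every visible value as in $R$ and rewrites only hidden outputs, so it cannot realize such a $\tup{y}$; realizing it requires rewriting hidden input values across executions and redefining the upstream private modules that produce them, which your downstream-only sweep never does. (The rest of this paper deliberately restricts to output-only safe subsets precisely so that this equivalence holds; Theorem~\ref{thm:privacy-private} has no such restriction.)

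The second problem is your collision-avoidance device: reassigning, on every non-target input, the hidden output coordinates to ``fresh pairwise-distinct domain values'' presupposes that unused domain values exist. The assumption that each $\Delta_a$ is ``finite but arbitrarily large'' is a statement about the generality of the model, not a guarantee of spare values; in all of the paper's examples the attributes are boolean and both values already occur, so no fresh values are available. What your refinement invariant actually needs is that the rewriting of hidden coordinates be \emph{injective}, and the way the paper achieves injectivity without spare values is an involution rather than fresh values: the function $\Flip_{\tup{y},\tup{z}}$, which swaps the two existing tuples $\tup{y}$ and $\tup{z}$ and fixes everything else. One sets $\widehat{m}_i = \Flip_{\tup{y},\tup{z}} \circ m_i$ and $\widehat{m}_j = m_j \circ \Flip_{\tup{y},\tup{z}}$ for every other (private) module $m_j$, as in Case~I of the proof of Lemma~\ref{lem:main-private-single-pred}; since the flip is a bijection that alters only hidden coordinates, distinct successor inputs stay distinct and visible projections are preserved, which is exactly what your fresh-value argument was trying to buy. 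Substituting the flip repairs this second gap, but the hidden-input gap above still requires a separate argument.
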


%
%This theorem says that if a set of visible attributes guarantees
%$\Gamma$-standalone-privacy for a private module, then if the module
%is placed in an all-private workflow where only a subset of those
%attributes is made visible, then $\Gamma$-workflow-privacy is
%guaranteed for the module in this workflow. In other words, in an
%all-private workflow, hiding the union of the corresponding hidden
%attributes of the individual modules guarantees
%$\Gamma$-workflow-privacy for all of them. The importance of this
%privacy theorem is that it enables a variety of PTIME optimization
%algorithms that combine standalone solutions for the private modules
%to an optimal solution (up to some approximation) for the full
%workflow (see \cite{DKM+11} for details).

\eat{
This theorem says that,
if a set of visible attributes guarantees
$\Gamma$-standalone-privacy for a private module, then if the module
is placed in an all-private workflow where only a subset of those
attributes is made visible, then $\Gamma$-workflow-privacy is
guaranteed for the module in this workflow. In other words,
 in an
all-private workflow, hiding the union of the corresponding hidden
attributes of standalone-private solutions of the individual modules
guarantees $\Gamma$-workflow-privacy for all of them.
}
\par
It was also observed in \cite{DKM+11}
that the number of attributes of individual modules can be %fairly small,
much smaller than the total number of attributes in a workflow,
and %even otherwise, 
that a  proprietary module %is expected to 
may be used in many different workflows.
Therefore, the obvious brute-force algorithm, which is essentially the best possible, 
can be used (possibly as a pre-processing step)
to find all standalone-private solutions of individual modules.
Then  any set of ``local solutions'' for each module can be composed 
to give a global feasible solution.
Moreover, the composability theorem %allows 
ensure that the private solutions are valid even with respect to future workflow executions which have not yet been 
recorded in the workflow relation.

\par
Given Theorem~\ref{thm:privacy-private},
 \cite{DKM+11} focused on a modified optimization problem:
combine standalone-private solutions optimally to get a workflow-private solution.
This optimization problem, which we refer to as \textbf{optimal composition problem},
 remains NP-hard even in the simplest scenario,
and therefore, \cite{DKM+11} proposed efficient approximation algorithms.

\eat{
				The importance of the privacy theorem is that it allows us to efficiently
				find a feasible private solution for the entire workflow with $n$ modules
				by solving smaller problems for each module locally.
				In turn, the composability theorem 
				enables a variety
				of optimization problems (with different representation of the
				standalone solutions\footnote{There may be multiple hidden subsets of
				attributes for individual modules that guarantee their
				$\Gamma$-standalone-privacy (see Example~\ref{eg:sa-worlds}) and
				different representations may be used to describe them.}, with or
				without data sharing, etc) that aim to optimally combine standalone solutions for the
				private modules to get %an optimal 
				a private solution for the full workflow.
				 While even the most restricted case turned out to be
				NP-hard, PTIME (in number of modules, number of attributes, and the
				size to represent the standalone solutions) approximation algorithms
				for these problems are possible and were studied in \cite{DKM+11}.
				%\red{One reviewer pointed out that optimal combination may not give optimal solution. Mention this here?}
				Note that, an optimal combination of the standalone solutions
				may or may not give an optimal private view for the workflow. 
				However, directly finding the optimal solution is hard even for the simplest possible workflows
				(\ie, workflows containing only one module), the optimal combination of private solutions
				is an acceptable solution for finding a private solution for the entire workflow.
}

\section{Privacy via propagation}\label{sec:prelims-propagation}
%Our goal here is to study the \secureview problem for the common
%case of  workflows that contain both private and public modules. 
Workflows with both public and private modules
are harder to handle than workflows with all private modules.
In particular, the composability theorem (Theorem~\ref{thm:privacy-private})
does not hold any more.  %as illustrated by the following example.
To see why, we revisit the example mentioned in the introduction.

\begin{example}\label{eg:public-challenge}
Consider a workflow with three modules $m_1, m_2$ and $m_3$ as shown in Figure~\ref{fig:wf-propagation-need}.
For simplicity, assume that all modules have a boolean input and a boolean output, and 
 implement the equality function (\ie, $a_1 = a_2 = a_3 = a_4$).
Module $m_2$ is private, and the modules $m_1, m_3$ are public. 
When the private module $m_2$ is standalone, it can be verified that either hiding its input $a_2$ or hiding its output $a_3$
guarantees $\Gamma$-standalone-privacy for $\Gamma = 2$. 
However, in the workflow, if $a_1$ and $a_4$ are visible
then the actual values of $a_2$ and $a_3$ can be found exactly since it is known that the public modules $m_1, m_3$ are equality modules.
\end{example}

\eat{
				Our goal here is to optimally choose attributes to hide in
				the common case of workflows that contain both private and public
				modules. We
				call them public/private workflows. We have seen in the previous
				section that when a set of hidden attributes guarantees
				$\Gamma$-standalone-privacy for a private module, then the same set
				of attributes can  be used to guarantee $\Gamma$-workflow-privacy
				{\em in an all-private workflow}. Unfortunately this is no longer
				the case for general workflows. To see why, we revisit the following
				example mentioned in the introduction.

				\begin{example}\label{eg:public-challenge}
				 Consider a private module $m_2$ implementing a one-one function
				 %% TOVA: THIS WAS WRONG
				 %%%the equality function (in general, any one-one function)
				 with $k$
				 boolean inputs and $k$ boolean outputs. Hiding any $\log \Gamma$
				input attributes guarantees $\Gamma$-standalone-privacy for $m_2$
				even if all output attributes of $m_2$ are visible. However, if
				$m_2$ gets all its inputs from a public module $m_1$ that implements
				 an  equality function (in general, any one-one function)
				 %a one-one function, and whose input attributes happen to be visible,
				then for any value $y$ and every input $\tup{x}=m_1(y)$ to $m_2$,
				$m_2(\tup{x})$ is revealed.
				
				Similarly, hiding any $\log \Gamma$ output attributes of $m_2$
				guarantees $\Gamma$-standalone-privacy for it, even if all input
				attributes of $m_2$ are visible. But if $m_2$ sends its outputs to a
				public module $m_3$ that also implements the equality function (in
				general, a one-one invertible function), and whose output attributes
				happen to be visible, then for any input $\tup{x}$ to $m_2$,
				$m_2(\tup{x})$ can be immediately inferred (using the inverse
				function of $m_3$).
				\end{example}
}

One intuitive way to overcome this problem is to  propagate  the
hiding of data through the problematic public modules, i.e., to hide
the attributes of public models that may disclose information about
hidden attributes of private modules. To continue with the above
example, if we choose to hide input $a_2$ (respectively, output $a_3$) to protect the privacy of module
$m_2$, then we propagate the hiding {\em upstream} (resp. {\em downstream}) to the
public modules and hide the
input attribute $a_1$ of $m_1$ (respectively, the output attribute $a_4$ of $m_3$).

\eat{
		choose to protect the privacy of $m_2$ by hiding $\log
		\Gamma$ of its input (resp. output) attributes, then we also need to
		propagate the hiding {\em upward} (resp. {\em downward}) to the
		public module supplying the input (receiving the output) and hide
		the %same number of
		corresponding input (output) attributes of the equality module $m_1$ ($m_3$).
}

The workflow in the above example has a simple structure, and the functionality of its component modules
is also simple. In general, three main issues arise when employing such a propagation model:
(1) upward vs. downward propagation; (2) repeated propagation; and (3) choosing
which attributes to hide.  We discuss these issues next.

\subsection{Upstream vs. Downstream propagation} 
\eat{
			\scream{Sue: I tried to simplify this, see if it is correct. I found the whole "vector  of attributes" confusing, 
			and the argument had nothing to do with the
			specific structure of the workflow in Fig. 2a.}
			
			%% LOOKS GOOD
}
%Whether or not propagation 
Which form of propagation
can be used depends on the safe subsets chosen for the private
modules as well as properties of the public modules. To see
this, consider again Example \ref{eg:public-challenge}, and assume
now that public module $m_1$ computes some constant function %(i.e.
%$\forall \tup{x}, m_1(\tup{x}) = \tup{a}$, for some constant
%$\tup{a}$). 
(\eg, $m_1(0) = m_1(1) = 0$).
If input attribute $a_2$ for module $m_2$ is hidden, then using upward propagation to hide %{\em all} 
the input attribute $a_1$ of $m_1$ does not preserve the
$\Gamma$-workflow-privacy of $m_2$ for %any non-trivial value of 
$\Gamma > 1$. This is
because it suffices to look at the (visible) output attribute $a_3 = 0$ of
$m_2$ to know %the value $m_2(\tup{x})$ for $x=\tup{a}$. 
that $m_2(0) = 0$.
\eat{
In general, consider arbitrary attribute vectors $\tup{a_1}, \cdots, \tup{a_4}$
in the workflow in Figure~\ref{fig:wf-propagation-need}. Suppose hiding a subset of input attributes of 
the private module $m_2$ gives $\Gamma_1$-standalone-privacy. However, 
hiding the same attribute subset in the workflow gives $\Gamma_2$-workflow-privacy where $\Gamma_1 \geq \Gamma_2$
even if the entire input vector $\tup{a_1}$ to the predecessor public module 
$m_1$ is hidden. 
It is possible that $\Gamma_1 >> \Gamma_1$ 
unless $m_1$ is an onto function
\footnote{For a function $f: D \rightarrow C$, $D$ is the \emph{domain}, 
$C$ is the \emph{co-domain}, and $R = \{y \in C : \exists x \in D, f(x) = y\}$ is the \emph{range}. The function $f$ is \emph{onto}
if $C = R$.};  
in the worst case, if $m_1$ is a constant function, then $\Gamma_2 = 1$ whereas $\Gamma_1$ can be arbitrarily large. 
}
In general, upward propagation from a subset of input attributes 
which gives $\Gamma_1$-standalone-privacy for a private module $m$ will only yield 
$\Gamma_2$-workflow-privacy for $m$, where $\Gamma_1 \geq \Gamma_2$.  
It is possible that $\Gamma_1 >> \Gamma_1$ 
unless upstream public modules are onto functions;
%\footnote{For a function $f: D \rightarrow C$, $D$ is the \emph{domain}, 
%$C$ is the \emph{co-domain}, and $R = \{y \in C : \exists x \in D, f(x) = y\}$ is the \emph{range}. The function $f$ is \emph{onto}
%if $C = R$.};  
in the worst case, if upstream modules are constant functions, then $\Gamma_2 = 1$ whereas $\Gamma_1$ can be arbitrarily large. 
Unfortunately, it is not common for modules to be onto functions (e.g. some
output values %like an invalid gene sequence 
may be well-known to be non-existent).  

\par
In contrast,  when the privacy of a private module %$m_2$
is achieved by {\em hiding output
attributes only}, using downstream propagation it is possible to  achieve the same
privacy guarantee in the workflow as with the standalone case without imposing any restrictions on
the public modules. 
Observe that safe subsets of output attributes always exist for all private modules --
one can always hide {\em all} the output attributes. They may incur higher cost
than that of an optimal subset of both input and
output attributes, but, in terms of privacy, by hiding only output
attributes one does not harm its maximum achievable privacy. 
%guarantee of the private module. 
In particular, it is not hard to see
that hiding all input attributes can give a maximum of
$\Gamma_1$-workflow-privacy, where $\Gamma_1$ is the size of the
range of the module. On the other hand hiding all output attributes
can give a maximum of $\Gamma_2$-workflow-privacy, where $\Gamma_2$ is
the size of the co-domain of the module, which can be much larger
than the actual range.
{\em We therefore focus in the rest of this paper on
safe subsets that contain only output attributes.}

\eat{
!!!! PODS SUBMISSION
				More generally, we show in Appendix~\ref{sec:app-prelims-propagation}
				that if hiding a subset of input attributes
				gives $\Gamma$-standalone privacy for a standalone module $m_2$,
				then hiding the same subset of input attributes in the simple chain workflow 
				($m_1 \longrightarrow m_2$) may not give $\Gamma$-workflow-privacy for $m_2$ unless
				$m_1$ corresponds to an onto function.
				
				\par
				 This unfortunately is not very common for public modules (e.g. some
				output values like an invalid gene sequence may be well-known to be
				non-existent). In contrast, we will show below that when the privacy
				of a private module $m_2$ is achieved by {\em hiding output
				attributes only}, downward propagation that achieves the same
				privacy guarantees is possible  without imposing any restrictions on
				the public modules.
				{\em We therefore focus in the rest of this paper on
				safe subsets that contain only output attributes.}
				
				Observe that such safe subsets always exist for all private modules --
				one can always hide all the output attributes. They may incur higher cost
				than that of an optimal subset of both input and
				output attributes, but, in terms of privacy, by hiding only output
				attributes one does not harm the maximum achievable privacy
				guarantee of the private module. In particular it is not hard to see
				that hiding all input attributes can give a maximum of
				$\Gamma_1$-workflow-privacy, where $\Gamma_1$ is the size of the
				range of the module. On the other hand hiding all output attributes
				can give a maximum of $\Gamma_2$-workflow-privacy, where $\Gamma_2$ is
				the size of the co-domain of the module, which can be much larger
				than the actual range.
}
\begin{figure}[t]
%\vspace{-4mm}
\centering
\subfloat %[\vspace{-2mm}] %{\scriptsize Propagation model}]
{\label{fig:wf-propagation-need}
%\raisebox{0.5in}
{\includegraphics[scale=0.25]{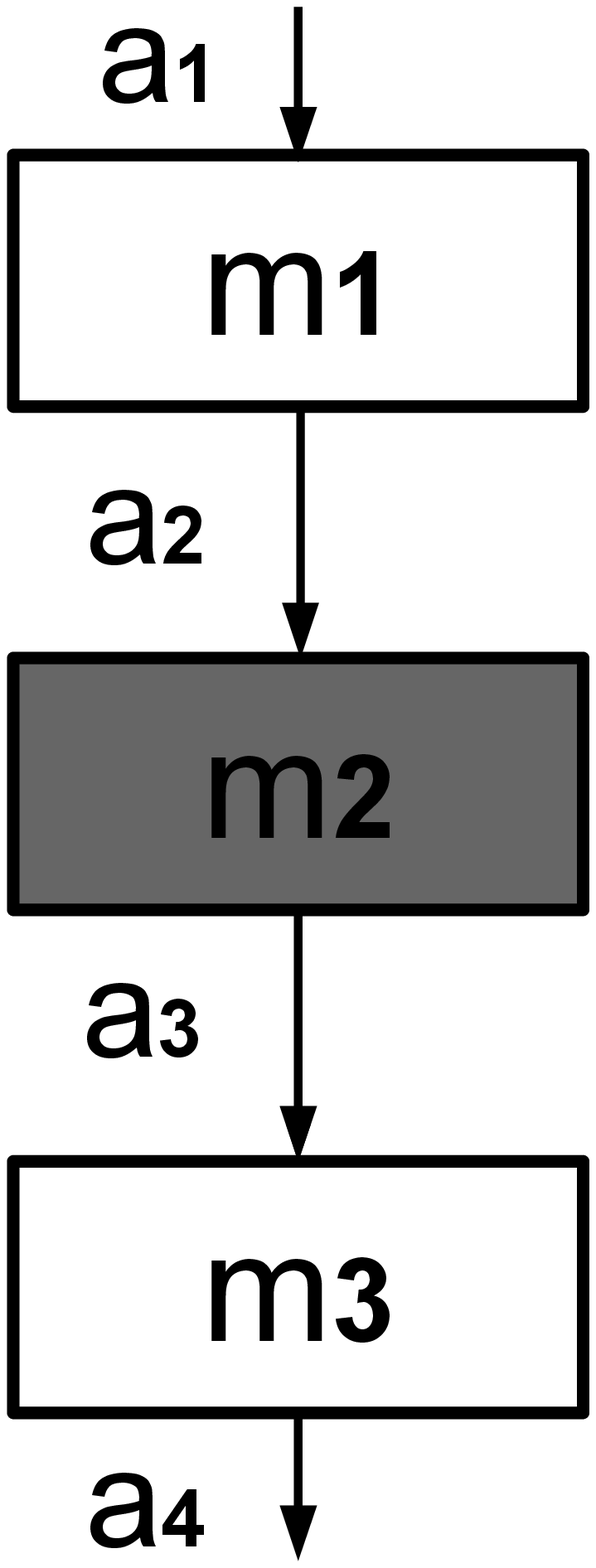} %,width=0.47\linewidth,clip=}
}
%\vspace{-10mm}
}~~~~~~~~
\subfloat [\vspace{-2mm}] %{\scriptsize Necessity of single predecessor}]
{\label{fig:wf-example}
\includegraphics[scale=0.3]{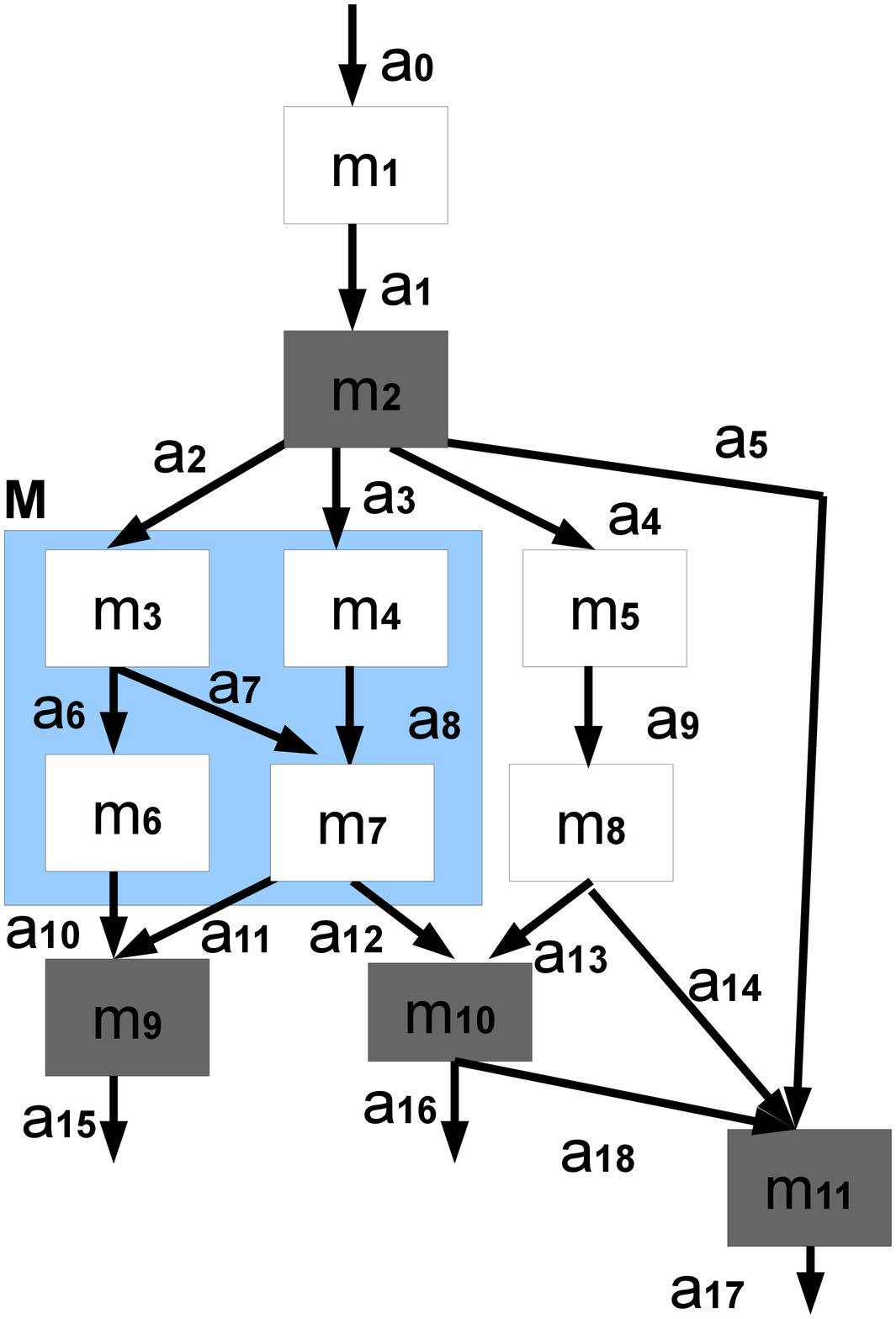}
%\vspace{-10mm}
}
%\vspace{-1mm}
\caption{(a) Propagation model, (b) A single-predecessor workflow. White modules are public, grey are private; the %blue
 box denotes
the composite module $M$ for %\widebar{V_2} 
$H_2 = \{a_3\}$.}
\label{fig:wf-example-propagation}
\vspace{-5mm}
\end{figure}

\subsection{Repeated %Recursive 
Propagation} Consider again Example
\ref{eg:public-challenge}, and assume now that public module $m_3$
sends its output to another public module $m_4$ that  implements
an equality function (or a  one-one invertible function). Even if
the output of $m_3$ is hidden as described above, if the output of
$m_4$ remains visible, the privacy of $m_2$ is again jeopardized
since the output of $m_3$ can be inferred using the inverse
function of $m_4$. We thus need to propagate the attribute hiding
to $m_4$ as well. More generally, %we will see below that 
we need to propagate the attribute hiding %recursively, 
repeatedly, through all adjacent public
modules, until we reach another private module.
\par
% \scream{Sue:  \ "(un)directed" is confusing at this point.} -- loos good
To formally define the {\em closure} of public modules to which
attributes hiding must be propagated, we use the
notion of %an (un)directed 
a {\em public path}. Intuitively, there is a  %(un)directed 
public path from a public module $m_i$ to a public
module $m_j$ if we can reach $m_j$ from $m_i$ by %an (un)directed path
a path comprising only public modules.
 % In the following definition, we
 % assume that both $m_i, m_j$ are public that is needed in our
 % analysis.
In what follows, we define both directed and undirected public paths; recall that $A_i = I_i \cup O_i$ denotes the set of input
 and output attributes of module $m_i$.

%\red{CORRECTED. PLEASE CHECK}

\begin{definition}\label{def:publicpath}
A public module $m_1$ has \emph{\textbf{a directed (resp. an undirected) public
path}} to a public module $m_2$ if there is a sequence of public
modules $m_{i_1},m_{i_2}, \cdots, m_{i_j}$ such that $m_{i_1}=m_1$,
$m_{i_j}=m_2$, and for all $1\leq k < j$, $O_{i_k} \cap I_{i_{k+1}}
\neq \emptyset$ (resp. $A_{i_k} \cap A_{i_{k+1}} \neq \emptyset$).

%
% (1) $A_i \cap A_{i_1} \neq
%\emptyset$, (2) $A_{i_\ell} \cap A_{i_{\ell+1}} \neq \emptyset$, for
%$1 \leq \ell \leq k-1$, and (3) $A_{i_k} \cap A_j \neq \emptyset$.
%%%%%%%%%%%%% OLD VERSITION %%%%%%%%%%%%%%%%%
%A public module $m_i$ has an \emph{\textbf{undirected public path}}
%to a public module $m_j$ if there is a sequence of public modules
%$m_{i_1}, \cdots, m_{i_k}$ such that (1) $A_i \cap A_{i_1} \neq
%\emptyset$, (2) $A_{i_\ell} \cap A_{i_{\ell+1}} \neq \emptyset$, for
%$1 \leq \ell \leq k-1$, and (3) $A_{i_k} \cap A_j \neq \emptyset$.
%
%
%$O_i \cap
%I_{i_1} \neq \emptyset$, (2) $O_{i_\ell} \cap I_{i_{\ell+1}} \neq
%\emptyset$ (or $I_{i_\ell} \cap O_{i_{\ell+1}} \neq \emptyset$), for
%$1 \leq \ell \leq k-1$, and (3) $O_{i_k} \cap I_j \neq \emptyset$.
\end{definition}

This notion naturally extends to module attributes. We say that an
input attribute $a \in I_1$ of a public module $m_1$ has an
(un)directed public path to a public module $m_2$ (and also to %resp. to its 
any output attribute $b \in O_2$), if there is an (un)directed public path from
$m_1$ to $m_2$.  The set of public modules to which attribute
hiding will be propagated can now be defined as follows.

\noindent
\begin{definition}\label{def:closure}
Given a private module $m_i$ and a set of hidden output 
attributes $h_i \subseteq O_i$ of $m_i$, the \emph{\textbf{public-closure}} $C(h_i)$ of
$m_i$ \wrt $h_i$ %denoted by $C(h_i)$, 
is the %smallest
set of
public modules 
reachable from some attribute in $h_i$ by an undirected public path.

\eat{
%%!!!! OLD DEFINITION
			Given a private module $m_i$ and a set of hidden output attributes
			$h_i \subseteq O_i$ of $m_i$, the \emph{\textbf{public-closure}} $C(h_i)$ of
			$m_i$ \wrt $h_i$, %denoted by $C(h_i)$, 
			is the %smallest
			set of
			public modules consisting of (1) all public modules $m_j$ s.t. $h_i
			\cap I_j \neq \emptyset$, and (2) all public modules to which there
			exists a undirected public path from the modules in $C(h_i)$.
}
\end{definition}
 %%% TOVA: NOT TRUE
 %%%It is easy to see that for any two hidden subsets $h_{i}$ and
 %%%$h_i'$, either $C(h_i) = C(h_i')$, or $C(h_i) \cap C(h_i') =
 %%%\emptyset$.

\eat{
%\scream{WE NEED AN EXAMPLE HERE}
\begin{figure}[ht!]
\centering
%\vspace{-5mm}
\includegraphics[scale=0.16]{wf-propagation-need.pdf}  %{figures/wf-propagation-need.pdf}
\includegraphics[scale=0.25]{wf-example.eps}
\vspace{-3mm}
\caption{A single-predecessor workflow. White modules are public, grey are private; the %blue 
box denotes
the composite module $M$ for %$\widebar{V_2} 
$H_2 = \{a_2\}$.}
\vspace{-5mm}
\label{fig:wf-example}
\end{figure}
}

\begin{example} We illustrate these notions using Figure~\ref{fig:wf-example}.
 % that shows a single-predecessor workflow with no data sharing (all attributes are input to at most one module).
 %The concerned private module $m_i$ is $m_2$.
 The public module $m_4$ has an undirected public path to the public
 module $m_6$ through the modules $m_7$ and  $m_3$. For private module $m_2$,
 if hidden output attributes $h_2$  %$= \widebar{V_2}$ 
$= \{a_2\}$, $\{a_3\}$, or $\{a_2, a_3\}$, %that is a subset of $\{a_2, a_3, a_5\}$, 
the public closure %$C(\widebar{V_2})
$C(h_2) = \{m_3, m_4, m_6, m_7\}$.
%(taking the undirected public closure of the attributes $a_2$ or $a_3$).
For %$\widebar{V_2} = 
$h_2 = \{a_4\}$, % or $\{a_4, a_5\}$,
%$C(\widebar{V_2}) = 
$C(h_2) = \{m_5, m_8\}$. In our subsequent analysis,                                                                                                     it will be convenient to view the public-closure as a virtual
\textbf{composite module} that encapsulates the sub-workflow and
behaves like it. For instance, the %blue 
box in
Figure~\ref{fig:wf-example} denotes the composite module $M$
representing $C(\{a_2\})$, that has input attributes $a_2, a_3$, and
output attributes $a_{10}, a_{11}$ and $a_{12}$.
\end{example}

%\subsection{Which attributes to hide}
\subsection{Selection of hidden attributes}
 In Example
\ref{eg:public-challenge}, it is fairly easy to see which attributes
of $m_1$ or $m_3$ need to be hidden to preserve the privacy of
$m_2$. For the general case, where the public modules are not as simple
as equality functions,
 to determine which attributes of a given public
module need to be hidden we use the notions of {\em upstream} and {\em
downstream} safety. To define them we use the following notion of
tuple equivalence \wrt a given %view (set of visible attributes).
set of hidden attributes.
Recall that $A$ denotes the set of all attributes in the workflow; we also use
bold-faced letters $\tup{x}, \tup{y}, \tup{z},$ etc. to denote
tuples in the workflow or module relations with one or more attributes.

\begin{definition}\label{def:equiv}
Given two tuples $\tup{x}$ and $\tup{y}$ on a subset of attributes 
$B \subseteq A$, and a subset of %visible attributes $V \subseteq A$, 
hidden attributes $H \subseteq A$, we say that
%$\tup{x} \equiv_V \tup{y}$ iff $\proj{V \cap B}{\tup{x}} = \proj{V \cap B}{\tup{y}}$.
$\tup{x} \equiv_H \tup{y}$ iff $\proj{B \setminus H}{\tup{x}} = \proj{B \setminus H}{\tup{y}}$.
\end{definition}

\begin{definition}\label{def:uds}
Given a subset of %visible attributes $V \subseteq A_i$ 
hidden attributes $H \subseteq A_i$ of a public module
$m_i$, $m_i$ is called
\begin{itemize}
\item \texttt{\textbf{downstream-safe}} (or, \textbf{\DS} in short) % \wrt $V$}
\wrt\ $H$ 
if for any two equivalent input tuples $\tup{x}, \tup{x'}$ to $m_i$ \wrt $H$, their outputs are
also equivalent:
        $$\left[\tup{x} \equiv_{H} \tup{x'} \right] \Rightarrow \left[m_i(\tup{x}) \equiv_{H} m_i(\tup{x'})\right],$$
        
\item \texttt{\textbf{upstream-safe}} (or, \textbf{\US} in short) % \wrt $V$} 
\wrt\ $H$
if for any two equivalent outputs %tuples 
$\tup{y}, \tup{y'}$ of $m_i$
        %(on output attributes $O$)
        \wrt $H$,  all of their preimages %(on input attributes $I$)
        are also equivalent:
        $$\left[(\tup{y} \equiv_{H} \tup{y'}) \wedge (m_i(\tup{x}) = \tup{y}, m_i(\tup{x'}) = \tup{y'})\right] \Rightarrow \left[\tup{x} \equiv_{H} \tup{x'}\right],$$
\item \texttt{\textbf{upstream-downstream-safe}} (or, \textbf{\UDS} in short) %\wrt $V$} 
\wrt\ $H$ if it is both \US\ and \DS. %upstream-safe and downstream-safe \wrt $V$.
\end{itemize}
\end{definition}

Note that if %$V$ contains {\em none of} the attributes of $m_i$ 
%$V = \emptyset$ 
$H = A$ (i.e. all attributes are hidden) then
$m_i$ is clearly \UDS\ \wrt to $H$. We call this the \emph{trivial}
\UDS\ subset for $m_i$.

 %\red{TOVA: UNLESS I AM MISSING SOMETHING, THIS EXAMPLE IS WRONG. CAN
 %YOU PLEASE CHECK? (OR EXPLAIN TO ME...)}

 %\red{SUDEEPA: CORRECTION: two \UDS\ visible subsets are $\{a_1, a_3\}$ and $\{a_2, a_4\}$.}
\begin{example}
Figure~\ref{fig:UDS} shows some example module relations. For an
(identity) module having relation $R_1$ in Figure~\ref{tab:m1-UDS},
%two \UDS\ %visible 
the hidden subsets $\{a_1, a_3\}$ and $\{a_2, a_4\}$ are \UDS.
Note that $H = \{a_1, a_4\}$ is not a \UDS\ subset:  for tuples having the
same values of visible attribute $a_2$, say 0, the values of $a_3$ are not the same.
For a module having relation $R_2$ in Figure~\ref{tab:m2-UDS}, a \UDS\
%visible subset is $\{a_1, a_3, a_4\}$, 
hidden subset is $\{a_2\}$, but there is no %non-trivial
\UDS\ subset that does not include %includes
$a_2$. It can also be checked that the module
$m_1$ %given 
in Figure~\ref{fig:m1} does not have any non-trivial \UDS\
subset.
\end{example}

\begin{figure}[t] %[ht!]
{%\scriptsize
\begin{center}
%\vspace{-3mm}
\centering \subfloat[{\scriptsize $R_1$}]{\label{tab:m1-UDS}
\begin{tabular}{|cc|cc|}
\hline $a_1$ & $a_2$ & $a_3$ & $a_4$ \\ \hline
\hline0 & 0 & 0 & 0 \\
\hline 0 & 1 & 0 & 1\\
\hline 1 & 0 & 1 & 0\\
\hline 1 & 1 & 1 & 1\\ \hline
\end{tabular}
}
~~~~~~~~~~~~
\subfloat[{\scriptsize $R_2$}]{\label{tab:m2-UDS}
\begin{tabular}{|cc|cc|}
\hline $a_1$ & $a_2$ & $a_3$ & $a_4$ \\ \hline
\hline 0 & 0 & 1 & 0\\
\hline 0 & 1 & 1 & 0\\
\hline 1 & 0 & 0 & 1 \\
\hline 1 & 1 & 0 & 1\\ \hline
\end{tabular}
}
\end{center}
 %\vspace{-5mm} 
 \caption{\UDS\ solutions for modules} \vspace{-5mm} \label{fig:UDS}
}
\end{figure}

%\scream{WE NEED AN EXAMPLE HERE}

The first question we attempt to answer is whether there is a composability
theorem analogous to Theorem~\ref{thm:privacy-private} that works in
the presence of public modules. 
In particular, we will show that for
a  class of workflows called \emph{single-predecessor
workflows} one can construct a private solution for the whole
workflow  by taking safe standalone solutions for the private modules, 
and then ensuring the \UDS\ properties of the public modules in the corresponding public-closure.
Next we define this class of workflows:
%In particular, we will show that for
%a  class of workflows called \emph{single-predecessor
%workflows}  one can construct a privacy solution for the whole
%network by taking solutions for the private modules, expanding them
%to the public-closure of the modules, following the \UDS\ safety
%requirement, then assembling the expanded solutions. Next we define this class
%of workflows:

%\scream{Sue:  This definition looks really convoluted.  I simplified a touch, please check that it is still ok.} -- Looks good

\begin{definition}\label{def:single-pred-wf}
A workflow $W$ is called a \emph{\textbf{single-predecessor workflow}}, if
\begin{enumerate}
    \item $W$ has no data-sharing, i.e. for  $m_i \neq m_j$,
    $I_i \cap I_j = \emptyset$, and,
    \item for every public module $m_j$  that belongs to a public-closure \wrt some output attribute(s) of a
    private module $m_i$, %has exactly one predecessor private module $m_i$ with a directed public path from $m_i$ to $m_j$.
    $m_i$ is the only private module that has a directed public path to $m_j$
    (i.e. $m_i$ is the single private predecessor of $m_j). $% having a directed public path to it). %with a directed public path from $m_i$ to $m_j$.
    %namely, $m_i$ is the unique \emph{private-predecessor} of
    %$m_j$ in the sense that
    %there is no other private module $m_k$ such that there is an undirected public path from any of the outputs in $O_k$
    %of $m_k$ to $m_j$.
     %%there is \emph{exactly} one private module $m_i$ such that there is a
%public path from $m_i$ to $m_j$
\end{enumerate}
\end{definition}

%\scream{Should we even mention data sharing in this paper, or simply assume no data sharing in Sec 2?}

%The following theorem says that this is achievable for workflows
%without data sharing and, where for every public module $m_j$ there
%is \emph{exactly} one private module $m_i$ such that there is a
%public path from $m_i$ to $m_j$; we call this class of workflows
%\emph{the Single-Predecessor Workflows}.

%\scream{WE NEED AN EXAMPLE HERE}
\begin{example}
Again consider Figure~\ref{fig:wf-example} which shows a
single-predecessor workflow. Modules $m_3, m_4$, $ m_6, m_7$ have
undirected public paths from $a_2 \in O_2$ (output attribute of
$m_2$), whereas $m_5, m_8$ have an %(un)
undirected (also directed) public path from $a_4
\in O_2$; also $m_1$ is the %unique 
single private-predecessor of
$m_3,...,m_8$ that has a directed path to each of module. 
The public module $m_1$ does not have any private
predecessor, but $m_1$ does not belong to the public-closure \wrt
the output attributes of any private module.
\end{example}

%Single-predecessor workflows include common workflow structures such as
%chains and trees. 
Although single-predecessor workflows are more restrictive than general
workflows, the above example illustrates that they can still capture
fairly intricate workflow structures, and more importantly, they can 
capture commonly found chain and tree workflows \cite{myexpt}.
 %For instance, the private
 % module $m_{11}$ gets inputs directly from $m_2$ and also from $m_8$
 % and $m_{10}$ that have paths from $m_2$. \red{``paths from..'' has
 %not been defined}.
Next in Section~\ref{sec:single-pred-wf}, we %will %focus first on 
%study privacy and optimization for 
focus on single-predecessor workflows;
 then we explain in Section \ref{sec:general-wf} how general workflows can be handled.

\section{Single-Predecessor Workflows}\label{sec:single-pred-wf}
The main motivation behind the study of single-predecessor workflows is to
obtain a composability theorem similar to Theorem~\ref{thm:privacy-private}
combining solutions of standalone private and public modules. 
In Section~\ref{sec:privacy-theorems}, we show that
such a composability theorem indeed exists for this class of workflows.
Then we study how to optimally compose
the standalone solutions in Section~\ref{sec:optimization}.

%\subsection{Privacy for Single-Predecessor workflows}\label{sec:privacy-theorems}
\subsection{Composability Theorem for Privacy}\label{sec:privacy-theorems}

\eat{
			We start in Section~\ref{sec:privacy-single-pred} by proving
			Theorem~\ref{thm:privacy-downward} and explaining the
			role of the \UDS\ requirement. Then, in
			Section~\ref{sec:privacy-thm-eg}, we prove
			Proposition~\ref{prop:single-pred-needed} by illustrating the
			necessity of the restriction to single-predecessor workflows in
			Theorem~\ref{thm:privacy-downward}.
}

%\paragraph*{Privacy-theorem for single-predecessor workflows}
The following composability theorem says that, for each private module $m_i$,
it suffices to (i) find a safe hidden subset of output attributes % $H_i \subseteq O_i$, 
% such that no input attributes are hidden 
(downstream propagation), (ii)
find a superset of these hidden attributes 
such that each public module in their public closure
is \UDS, and (iii) no attributes outside the public closure and $m_i$
are hidden (\ie\ no unnecessary hiding). 
Then union of these subsets of hidden attributes is workflow-private for each private module
in the workflow.
Theorem~\ref{thm:privacy-downward} stated below formalizes these three conditions.
\eat{
					Let $M^+$ be the set of indices for public modules
					($M^+ = \{i : m_i$ is public $\}$) and $M^-$ be the set of indices for private
					modules.
					 Recall that $I_i, O_i$ denote the subset of input and
					output attributes of module $m_i$
					and $A_i = I_i \cup O_i$; given a set of visible attributes $V_i \subseteq A_i$
					of $m_i$, the set of hidden attributes is $\widebar{V_i} = A_i \setminus V_i$
					(and \emph{not} $A \setminus V_i$ where $A$ is the set of all attributes in a workflow).
}
\begin{theorem}\label{thm:privacy-downward} ({\bf %Privacy 
Composability Theorem
for Single-predecessor Workflows})
Let $W$ be a single-predecessor
workflow. For each private module $m_i$ in $W$, let  %$V_i$ 
$H_i$ be a subset of hidden attributes such that 
(i) $h_i = H_i \cap O_i$ is %subset for $\Gamma$-standalone-privacy such that 
safe for $\Gamma$-standalone-privacy of $m_i$, 
%only output attributes of $m_i$ are hidden (i.e. $\widebar{V_i} \subseteq O_i$). 
%$\widebar{V_i} \subseteq O_i$. 
(ii) each public module $m_j$ in the public-closure %$C(H_i \cap O_i)$
$C(h_i)$
is \UDS\ \wrt\ $A_j \cap H_i$, 
and (iii) $H_i \subseteq O_i \cup \bigcup_{j: m_j \in
C(h_i)} A_j$. Then the workflow $W$ is
$\Gamma$-private \wrt\ %$\bigcup_{i \in M^-}H_i$.
$H = \bigcup_{i: m_i \textrm{ is private}}H_i$.
%the set of visible attributes $V = A
%\setminus (\bigcup_{i \in M^-} H_i)$.
\end{theorem}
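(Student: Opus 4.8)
The plan is to prove the theorem one private module at a time: it suffices to show that every private module $m_i$ is $\Gamma$-workflow-private \wrt $H$, \ie\ that $|\Out_{\tup{x}, W, H}| \geq \Gamma$ for each $\tup{x} \in \proj{I_i}{R}$. Fix such an $m_i$ and $\tup{x}$. Since $h_i = H_i \cap O_i$ is safe for $\Gamma$-standalone-privacy of $m_i$ (condition~(i)), the standalone relation $R_i = \proj{A_i}{R}$ admits $\Gamma$ distinct candidate outputs $\tup{y}^{(1)}, \dots, \tup{y}^{(\Gamma)} \in \Out_{\tup{x}, m_i, h_i}$, each arising from a standalone possible world $R_i^{(k)} \in \Worlds(R_i, h_i)$ in which $\tup{x}$ maps to $\tup{y}^{(k)}$. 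By definition of $\Worlds(R_i, h_i)$ these outputs agree on the visible output attributes $O_i \setminus h_i$ and differ only on the hidden attributes $h_i$. The goal is to lift each $R_i^{(k)}$ to a full workflow possible world $R'_k \in \Worlds(R, H)$ that uniformly maps every execution with $m_i$-input $\tup{x}$ to $\tup{y}^{(k)}$; this yields $\tup{y}^{(k)} \in \Out_{\tup{x}, W, H}$ and hence the $\Gamma$ distinct witnesses we need.

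I would build $R'_k$ by processing the DAG in topological order and redefining only three kinds of values. First, I replace the function of $m_i$ by the standalone possible-world function from $R_i^{(k)}$, so that its visible outputs are unchanged but its hidden outputs $h_i$ are reassigned. Second, for every public module I keep its true function and simply recompute its output on the (possibly perturbed) input; this makes condition~(3) of Definition~\ref{def:pos-worlds-workflow} hold automatically. Third, for every private module other than $m_i$ I define a new function that reproduces the original visible output recorded in $R$ on its (possibly perturbed) input, which is legitimate precisely because such modules are private and therefore unconstrained by prior user knowledge. The central structural fact driving this construction is that the public closure $C(h_i)$, viewed as a single composite module $M$, is itself \UDS: I would first prove a composition lemma stating that a sub-DAG of public modules, each \UDS\ \wrt\ $A_j \cap H_i$ (condition~(ii)), composes into a \UDS\ composite module, using the single-predecessor restriction and the absence of data sharing to guarantee that the components can be ordered and glued without their hidden/visible interfaces conflicting.

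With $M$ in hand the verification that $R'_k \in \Worlds(R, H)$ splits into the three defining conditions. Condition~(2), agreement of the visible projection with $R$, is where \DS ty does the work: because the perturbation introduced at $m_i$ touches only attributes in $h_i \subseteq H$, every input fed into a closure module is $\equiv_H$ to the original one, so by \DS ty its output is $\equiv_H$ to the original, \ie\ the visible attributes are preserved; iterating in topological order through $M$ shows that no visible attribute in the closure changes, and the downstream private modules were explicitly defined to keep their visible outputs fixed. Condition~(3) holds by construction, and condition~(1), the functional dependencies, reduces to checking that the reassignment never forces one module input to carry two different required outputs.

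The step I expect to be the main obstacle is exactly this last consistency check at the private modules lying immediately downstream of the closure. A hidden output of a closure module (\eg\ $a_3$ in the identity example of Figure~\ref{fig:UDS}) may feed into such a private module $m_{i'}$, so the perturbation does escape $M$ on hidden attributes; I must ensure that two originally distinct executions carrying different visible $m_{i'}$-outputs are never collapsed onto a single perturbed $m_{i'}$-input. This is precisely where \US ty is needed: the injectivity of the visible-input-to-visible-output map of the \UDS\ composite $M$, together with the single-predecessor property (so that the perturbation of every closure module is driven by the unique private predecessor $m_i$ and not confounded by a second private source) and the no-data-sharing assumption (so that each input attribute is claimed by a unique module), guarantees that perturbed inputs collide only when the original inputs already coincided, in which case the required output is the same and no functional dependency is violated. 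I would isolate this as the key lemma; once it is established, assembling the $\Gamma$ worlds $R'_1, \dots, R'_\Gamma$ and taking the union over all private modules completes the proof, and the necessity results for both the \UDS\ condition and the single-predecessor restriction (addressed later in Section~\ref{sec:privacy-theorems}) indicate that these hypotheses are essential.
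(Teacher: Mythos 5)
Your overall architecture---lift each standalone candidate output to a workflow possible world, treat the public closure as a composite module $M$, and prove a composition lemma showing $M$ is \UDS\ (this matches the paper's Lemma~\ref{lem:composite-UDS})---is the right shape, but the core of your construction has a genuine gap. You redefine $m_i$ to be ``the standalone possible-world function from $R_i^{(k)}$'' and then claim that upstream-safety of $M$ guarantees that perturbed inputs to a downstream private module collide only when the original inputs already coincided. That claim is false for your construction, even when every closure module is \UDS. Take the chain $m_1 \rightarrow m_2 \rightarrow m_3$ of Figure~\ref{fig:wf-uds} (Table~\ref{tab:reln-uds}) with $H_1=\{a_3,a_4,a_5\}$, so that the public OR-module $m_2$ is \UDS\ \wrt\ $A_2\cap H_1$ (the trivial \UDS\ subset) and all hypotheses of the theorem hold. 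The standalone possible world of $m_1$ that swaps the outputs of $(0,0)$ and $(1,0)$ and fixes the outputs of $(0,1)$ and $(1,1)$ maps $\tup{x}=(0,0)$ to $\tup{y}=(1,0)$, exactly as your first step prescribes. Propagating through $m_2$, rows $r_1$ and $r_2$ then both feed $a_5=1$ into the private module $m_3$, yet $r_1$ requires visible output $a_6=0$ while $r_2$ requires $a_6=1$: no function $\widehat{m}_3$ can satisfy the dependency $a_5\rightarrow a_6$. So two originally distinct $m_3$-inputs ($a_5=0$ and $a_5=1$) do collide after perturbation despite $M$ being \UDS; upstream-safety is a property of the module $M$, not of your perturbation map, and nothing in your argument connects the two.

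What is actually needed---and what the paper's \EFlip\ construction does---is to choose the redefinition of $m_i$ in a way that is \emph{coordinated with the closure}: $\widehat{m}_i$ sends every input whose original image under $M\circ m_i$ equals $w_z=M(\proj{I}{\tup{z}})$ to $\tup{y}$, and every input whose image equals $w_y=M(\proj{I}{\tup{y}})$ to $\tup{z}$ (in the example, $\widehat{m}_1$ maps $(0,0)$ to $(1,0)$ and all of $(0,1),(1,0),(1,1)$ to $(0,0)$). This makes the induced perturbation of $M$'s outputs the involution swapping $w_y \leftrightarrow w_z$, and it is this bijectivity---not upstream-safety---that rules out downstream collisions and lets one set $\widehat{m}_j(\tup{u}) = m_j(\Flip_{\tup{Y},\tup{Z}}(\tup{u}))$ for downstream private modules. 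Upstream-safety enters elsewhere: because $\widehat{m}_i$ regroups whole $M$-preimage classes, one must check that this does not alter $m_i$'s \emph{own} visible outputs, and that is exactly what upstream-safety provides (all $M$-preimages of $w_y$ and $w_z$ are equivalent to $\proj{I}{\tup{y}}$ and $\proj{I}{\tup{z}}$). Note that this coordinated $\widehat{m}_i$ is in general a \emph{different} standalone possible world from the $R_i^{(k)}$ you fixed, so ``lift each $R_i^{(k)}$'' must be weakened to ``realize each candidate output $\tup{y}^{(k)}$.'' Finally, your closing step (take the union over all private modules) silently assumes that \UDS ty survives enlarging the hidden set from $A_j\cap H_i$ to $A_j\cap H$; since \UDS ty is not monotone, this requires showing the $H_i$ are pairwise disjoint, which is where condition~(iii) and the single-predecessor/no-data-sharing hypotheses are used in the paper's derivation of the theorem from Lemma~\ref{lem:main-private-single-pred}.
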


%then % We state and prove the theorem, and 
First, in Section~\ref{sec:privacy-thm-eg}, we argue why the conditions and assumptions in the above theorem are necessary;
then we prove the theorem in Section~\ref{sec:privacy-single-pred}.

\eat{
				In other words, in a single-predecessor workflow, taking solutions
				for the standalone private modules, expanding them to the
				public-closure of the modules, following the \UDS\ safety
				requirements, then hiding the union of the attributes in these sets,
				guarantees $\Gamma$-workflow-privacy for all of the private modules.
}

%
%\scream{IS THE ABOVE CORRECT? DOES THE THEOREM HOLDS IF WE HIDE ALL
%ATTRIBUTES - INPUT AND OUTPUT ?}
\eat{
			The next section is dedicated to proving these two results. 
			In Section~\ref{sec:optimization} 
			we then show how this refined privacy theorem can be
			used to choose hidden attributes optimally in single-predecessor workflows.
			Solutions for
			workflows that are not single-predecessor are considered in Section
			\ref{sec:general-wf}.
}

%\subsection{Examples Motivating the Assumption for Single-Predecessor Workflows with No Data-Sharing}\label{sec:privacy-thm-eg}
%\subsection{Necessity of Assumptions}\label{sec:privacy-thm-eg}
%\subsection{Necessity of the Single-Predecessor \mbox{restriction}}\label{sec:privacy-thm-eg}
\subsubsection{Necessity of the Assumptions in Theorem~\ref{thm:privacy-downward}}\label{sec:privacy-thm-eg}
%The next proposition shows that single-predecessor workflows constitute
%in a sense the largest class of workflows for which such assembly is
%guaranteed to succeed.

Theorem~\ref{thm:privacy-downward} has two non-trivial conditions:
(1) the workflows are single-predecessor workflows, and (2)
the public modules in the public closure must be \UDS\ \wrt\ the
hidden subset; the third condition that there is no unnecessary data hiding
is required since the property \UDS ty of public modules is not valid with respect to set inclusion.
 The necessity of the first two conditions are discussed
in Propositions~\ref{prop:single-pred-needed} and \ref{prop:uds-need}
respectively.
\par
In the proof of these propositions %and in other places in the paper we
we will consider the different possible worlds of the workflow view and
focus on the behavior (input-to-output mapping) $\widehat{m}_i$ of
the module $m_i$ as seen in these worlds. This may be different than
its true behavior recorded in the actual workflow relation $R$, and
we will say that $m_i$ is {\em redefined} as $\widehat{m}_i$ in the
given world. Note that $m_i$ and $\widehat{m}_i$, viewed as
relations, agree on the visible attributes of the the view but may
differ in the non visible ones.

\paragraph*{Necessity of Single-Predecessor Workflows}
The next proposition shows that single-predecessor workflows constitute
i%n a sense 
the largest class of workflows for which %such assembly is
%guaranteed to succeed.
a composability theorem involving both public and private modules can succeed.

\begin{proposition}\label{prop:single-pred-needed}
There is a workflow $W$, which is not a single-predecessor workflow,
% 			(either because it has data sharing or because more (or fewer) than
%					one such private-predecessor exists for some public module), 
and a private module $m_i$ in $W$, where even hiding all output attributes
%$O_i$ 
of $m_i$ and all attributes of all the public modules in $W$
%$A_j$ of the public modules $m_j \in C(O_i)$ 
does not give $\Gamma$-privacy for \emph{any} $\Gamma > 1$.
\end{proposition}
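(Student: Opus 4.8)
The plan is to exhibit one fixed counterexample workflow that violates clause (2) of Definition~\ref{def:single-pred-wf} and to show directly that, under the prescribed hiding, the target private module has a single possible output value in every world. First I would build a small workflow $W$ with boolean attributes and four modules: a private module $m_1$ with input $a_1$ and output $a_2$ (the module whose privacy we break); a second private module $m_2$ with input $a_3$ and output $a_4$; a public module $m_3$ that takes both $a_2$ and $a_4$ as input and computes $a_5 = a_2 \wedge a_4$; and a private module $m_4$ with input $a_5$ and output $a_6$, where $m_2$ and $m_4$ act as the identity. Since both $m_1$ and $m_2$ feed the public module $m_3$, the module $m_3$ lies in the public closure $C(\{a_2\})$ of $m_1$ yet has \emph{two} private predecessors, so $W$ is not a single-predecessor workflow. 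I would then take exactly the hidden set named in the statement: all outputs of $m_1$ together with all attributes of the single public module, i.e. $H = \{a_2\} \cup A_3 = \{a_2, a_4, a_5\}$, leaving $a_1, a_3, a_6$ visible.

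Next I would specify a two-row workflow relation $R$ in which both executions feed the same input $a_1 = 1$ to $m_1$ (so $a_2 = 1$ in the true run), while $m_2$ receives $a_3 = 1$ and $a_3 = 0$ respectively, giving $a_4 = 1, 0$, then $a_5 = 1 \wedge 1 = 1$ and $a_5 = 1 \wedge 0 = 0$, and finally $a_6 = 1$ and $a_6 = 0$. The heart of the argument is to analyze $\Worlds(R, H)$ for the input $\tup{x} = (1)$ to $m_1$. Because $a_1 \rightarrow a_2$ is a functional dependency in $F$ and $a_1 = 1$ is visible and identical in both rows, every $R' \in \Worlds(R,H)$ must assign the \emph{same} value $v = \widehat{m}_1(1)$ to $a_2$ in both rows. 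The public constraint $\proj{O_3}{\tup{t'}} = m_3(\proj{I_3}{\tup{t'}})$, i.e. $a_5 = a_2 \wedge a_4$, holds in every world, so if $v = 0$ then $a_5 = 0$ in both rows regardless of how the private modules are redefined; but $a_6$ is visible and differs across the two rows ($1$ versus $0$), contradicting the functional dependency $a_5 \rightarrow a_6$ of $m_4$. Hence $v = 0$ admits no possible world, and over the boolean domain the only surviving value is $v = 1$, which is realized by the true relation $R$ itself. Therefore $\Out_{(1), W, H} = \{1\}$, so $m_1$ is not $\Gamma$-workflow-private for any $\Gamma > 1$.

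The step I expect to require the most care is showing that $v = 0$ is excluded under \emph{every} admissible redefinition of the private modules, not merely under the identity: I must argue that the collapsing behaviour of the public $\wedge$-gate destroys all freedom in $a_4$ (hence in $a_5$) exactly when $v = 0$, forcing a clash with the downstream dependency $a_5 \rightarrow a_6$. This is precisely where the two-predecessor structure is essential — the second private module $m_2$ supplies the companion input $a_4$ whose influence a non-injective public function can annihilate, which is exactly what a single-predecessor workflow forbids; with a lone private predecessor the corresponding public module would have to transmit the target's value injectively and no such forced collision could arise. I would also remark that the construction generalizes to any $\Gamma$ by replacing the boolean gate with a public function over a larger domain that collapses all but one admissible value of $m_1$'s output, though the boolean instance already suffices for the stated claim. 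Finally, I should verify that $R$ is consistent with all four functional dependencies and with the public behaviour of $m_3$, so that $R \in \Worlds(R,H)$ and the forced value $v = 1$ is genuinely attainable, completing the argument.
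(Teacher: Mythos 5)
Your proof is correct and follows essentially the same route as the paper: a non-injective boolean gate (your AND, the paper's OR) fed by the target private module, whose forced-constant output clashes with the visible output of a downstream private module through the functional dependency $a_5 \rightarrow a_6$, thereby pinning $\widehat{m}_1$'s output to a single value and giving only trivial $1$-privacy. In fact your construction is a streamlined version of the paper's own multiple-private-predecessor example (the case-(ii) instance in its appendix), differing only in inessential details: AND instead of OR, no intermediate public module between the second private module and the gate, and a two-row rather than four-row execution relation.
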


%Here we prove Proposition~\ref{prop:single-pred-needed}, and show that the single-predecessor assumption
 %in Theorem~\ref{thm:privacy-downward} is necessary.
 \begin{proof}
By Definition~\ref{def:single-pred-wf}, a  workflow $W$ is
\emph{not} a single-predecessor workflow if one of  the following
holds: (i) there is a public
module $m_j$ in $W$ that belongs to a public-closure of a private module
$m_i$ but has no directed path from $m_i$, or, (ii) such a public
module $m_j$ has a directed path from more than one private module, or (iii)
$W$ has data sharing.
We now show an example for condition (i).  Examples for the remaining conditions can be found
in Appendix~\ref{sec:proof-prop:single-pred-needed}.

\eat{
We provide three example workflows where exactly one of the
violating conditions (i), (ii), (iii) holds, and
Theorem~\ref{thm:privacy-downward} does not hold in those workflows.
%We now show that the assumptions in Theorem~\ref{thm:privacy-downward} -- exactly one private predecessor for each public module, and no data sharing -- are necessary.
%First we give an example showing that if a public module has {\em no} private predecessor then propagation only through public modules does not work, then an example showing the same for the case in which a public module has more than one private predecessor.
For space constraints, case (ii) is shown below whereas cases (1)
and (iii) are deferred to Appendix
\ref{sec:proof-prop:single-pred-needed}.
}

%\paragraph*{No private predecessor} Next we discuss an example
%where a public module belonging to a public-closure has no private
%predecessors.

%\begin{example}\label{eg:single-pred-zero}
Consider the workflow $W_a$ in Figure~\ref{fig:wf-no-pred}.  %\ref{fig:wf-datashare-uds} (b)
%Consider the workflow $W_a$ in Figure~\ref{fig:wf-private-pred} (a). %, with modules $m_1, m_2, m_3, m_4$,  initial inputs  $a_1, a_2$, and final output  $a_6$.
%Note that  modules $m_1$ and $m_4$ are private (indicated by shading), and that $m_2$ has no private predecessor.
Here the public module $m_2$ belongs to the public-closure $C(\{a_3\})$ of $m_1$, but there is no directed public path from $m_1$ to $m_2$,
thereby violating the condition of single-predecessor workflows (though there is no data sharing).
Module functionality is as follows:
(i) $m_1$ takes $a_1$ as input and produces $a_3 = m_1(a_1) = a_1$.
(ii) $m_2$ takes $a_2$ as input and produces $a_4 = m_2(a_2) = a_2$.
(iii) $m_3$ takes $a_3, a_4$ as input and produces $a_5 = m_3(a_3, a_4) = a_3 \vee a_4$ (OR).
(iv) $m_4$ takes $a_5$ as input and produces $a_6 = m_4(a_5) = a_5$.
All attributes take values in $\{0, 1\}$.
%
%Now focus on module $m_1$:  The attributes reachable from its output $ \{a_3\}$ by public paths in the undirected sense are
%$\{a_2, a_3, a_4, a_5\}$.  The input to $m_1$ ($a_1$) and the output from private module $m_4$ ($a_6$)
%are not reachable by public paths from output $a_3$ of $m_1$.

Clearly, hiding output %$h_1 = 
$\{a_3\}$ of $m_1$ gives $2$-standalone privacy.
%For this hidden attribute, $H_1 \subseteq \{a_2, a_3, a_4, a_5\}$.
%However, hiding
We claim that hiding all output attributes of $m_1$ and all attributes of all public modules
(\ie\ $\{a_2, a_3, a_4, a_5\}$) gives only trivial 1-workflow-privacy for $m_1$,
although it satisfies the \UDS\ condition of $m_2, m_3$.
 To see this, consider the relation $R_a$ of all executions of  $W_a$ given in Table~\ref{tab:reln-no-pred},
 where the hidden values are in Grey. %, where the hidden values are between the $|$.
 The rows (tuples) here are numbered $r_1,\ldots,r_4$ for later reference.

\eat{
\begin{table}[ht]
\centering
\begin{tabular} {r|c|cccc|c|}
\hline
& $a_1$ & \cellcolor[gray]{0.6}$a_2$ & \cellcolor[gray]{0.6}$a_3$ & \cellcolor[gray]{0.6}$a_4$ & \cellcolor[gray]{0.6}$a_5$ & $a_6$ \\ \hline \hline
$r_1$ &  0 & \cellcolor[gray]{0.6}0 & \cellcolor[gray]{0.6}0 & \cellcolor[gray]{0.6}0 & \cellcolor[gray]{0.6}0 & 0 \\\hline
$r_2$ &  0 & \cellcolor[gray]{0.6}1 & \cellcolor[gray]{0.6}0 & \cellcolor[gray]{0.6}1 & \cellcolor[gray]{0.6}1 & 1 \\\hline
 $r_3$ & 1 & \cellcolor[gray]{0.6}0 & \cellcolor[gray]{0.6}1& \cellcolor[gray]{0.6}0 & \cellcolor[gray]{0.6}1 &  1 \\\hline
 $r_4$ & 1 & \cellcolor[gray]{0.6}1 & \cellcolor[gray]{0.6}1 & \cellcolor[gray]{0.6}1 &\cellcolor[gray]{0.6}1 & 1\\\hline
\end{tabular}
 %\caption{Relation $R_a$ for workflow $W_a$ given in Figure~\ref{fig:wf-private-pred} (a)}
 \caption{Relation $R_a$ for workflow $W_a$ given in Figure~\ref{fig:wf-datashare-uds} (b)}
\label{tab:reln-no-pred}
\end{table}
}

\begin{table}[t]
{%\scriptsize
\centering
\begin{tabular} {r|c|cccc|c|}
\cline{2-7}
& $a_1$ & \cellcolor[gray]{0.6}$a_2$ & \cellcolor[gray]{0.6}$a_3$ & \cellcolor[gray]{0.6}$a_4$ & \cellcolor[gray]{0.6}$a_5$ & $a_6$ \\ \hline \hline
$r_1$ &  0 & \cellcolor[gray]{0.6}0 & \cellcolor[gray]{0.6}0 & \cellcolor[gray]{0.6}0 & \cellcolor[gray]{0.6}0 & 0 \\ \hline
$r_2$ &  0 & \cellcolor[gray]{0.6}1 & \cellcolor[gray]{0.6}0 & \cellcolor[gray]{0.6}1 & \cellcolor[gray]{0.6}1 & 1 \\ \hline
 $r_3$ & 1 & \cellcolor[gray]{0.6}0 & \cellcolor[gray]{0.6}1& \cellcolor[gray]{0.6}0 & \cellcolor[gray]{0.6}1 &  1 \\ \hline
 $r_4$ & 1 & \cellcolor[gray]{0.6}1 & \cellcolor[gray]{0.6}1 & \cellcolor[gray]{0.6}1 &\cellcolor[gray]{0.6}1 & 1\\ \hline
\end{tabular}
 %\caption{Relation $R_a$ for workflow $W_a$ given in Figure~\ref{fig:wf-private-pred} (a)}
 \caption{Relation $R_a$ for workflow $W_a$ given in Figure~\ref{fig:wf-no-pred}}  %\ref{fig:wf-datashare-uds} (b)}
% \vspace{-2mm}
\label{tab:reln-no-pred}
}
\end{table}

When $a_3$ is hidden, a possible candidate output
of input $a_1 = 0$ to $m_1$ is $1$. So we need to have a possible world
where $m_1$ is redefined as $\widehat{m}_1(0) = 1$. This would restrict $a_3$ to 1 whenever $a_1 = 0$.
But note that whenever $a_3 = 1$, $a_5 = 1$, irrespective of the value of $a_4$ ($m_3$ is an OR function).

This affects the %first and second rows, 
rows $r_1$ and $r_2$ in $R$.
Both these rows must have $a_5 = 1$, however $r_1$ has $a_6 = 0$, and $r_2$ has $a_6 = 1$.
But this is impossible since, whatever the new definition $\widehat{m}_4$ of private module $m_4$ is,
it cannot map $a_5$ to both 0 and 1; $\widehat{m}_4$ must be a function and maintain the functional dependency $a_5 \rightarrow a_6$.
%Hence there is no such possible world, and $\Gamma = 1$.
Hence all possible worlds of $R_a$ must map $\widehat{m}_1(0)$ to 0, and therefore $\Gamma = 1$.
%\end{example}
\end{proof}
%\begin{figure}[ht!]
%\centering
%\includegraphics[scale=0.2]{figures/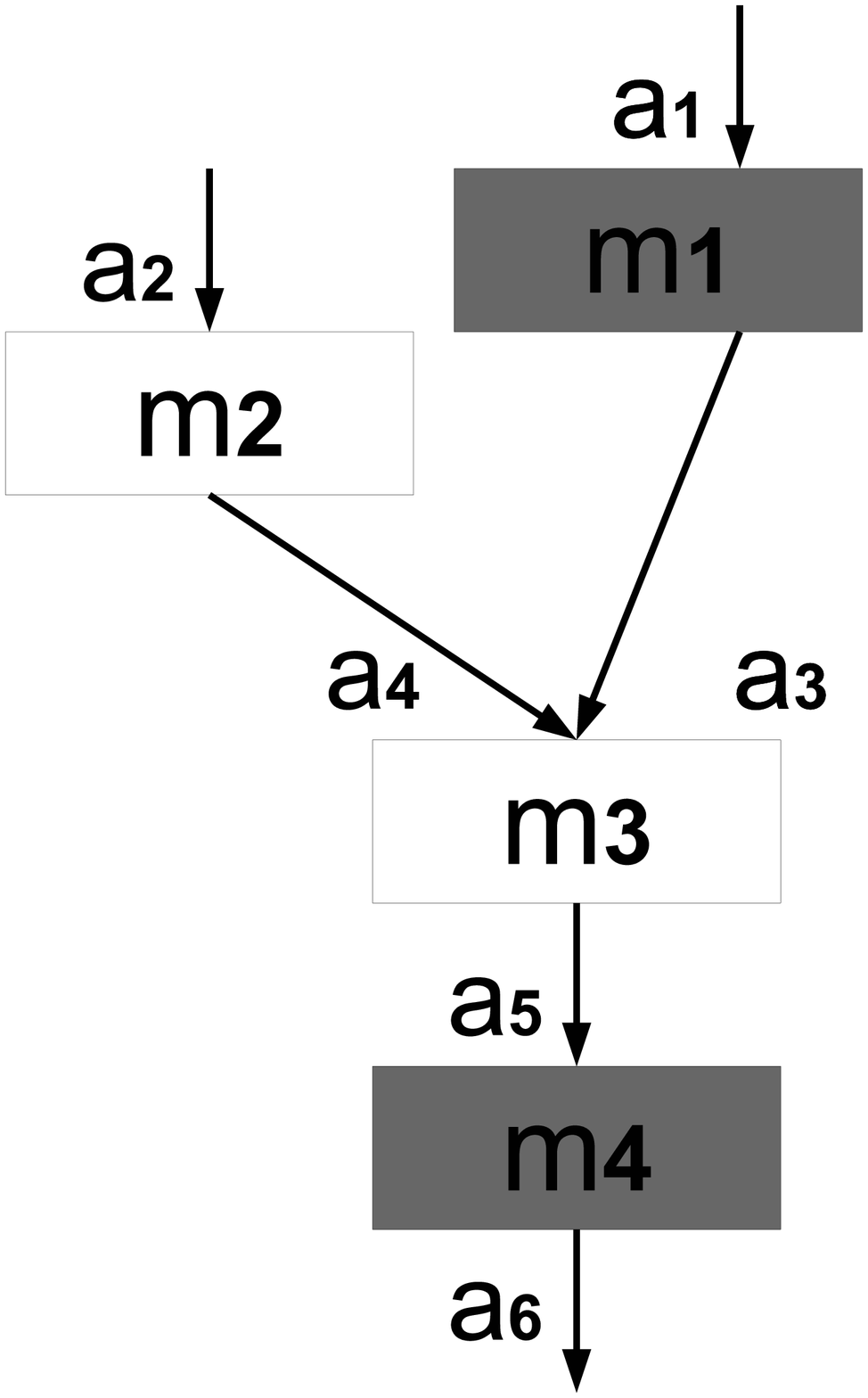}
%\includegraphics[scale=0.2]{figures/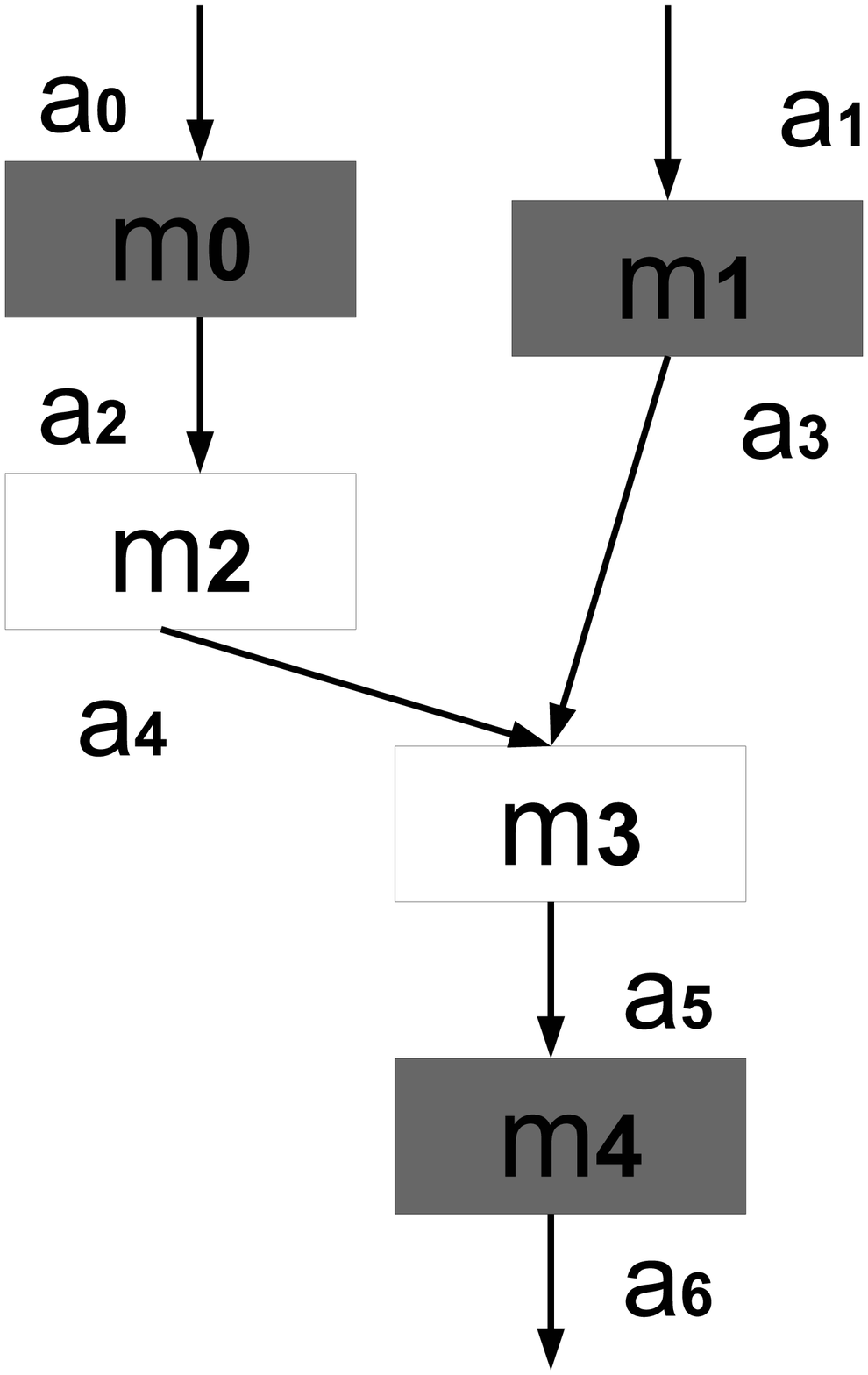}
%\caption{White modules are public, Grey are private (a) $m_2$ is in
%public-closure of $m_1$ but does not have any directed path from
%$m_1$, (b) $m_2$ is in public-closure of $m_1$, but has directed
%paths from both $m_0, m_1$. } \label{fig:wf-private-pred}
%\end{figure}

\eat{
\red{Mention that the restrictions arise mainly if there are private successors of the considered private module}
\begin{figure}[ht!]
\centering
\includegraphics[scale=0.2]{figures/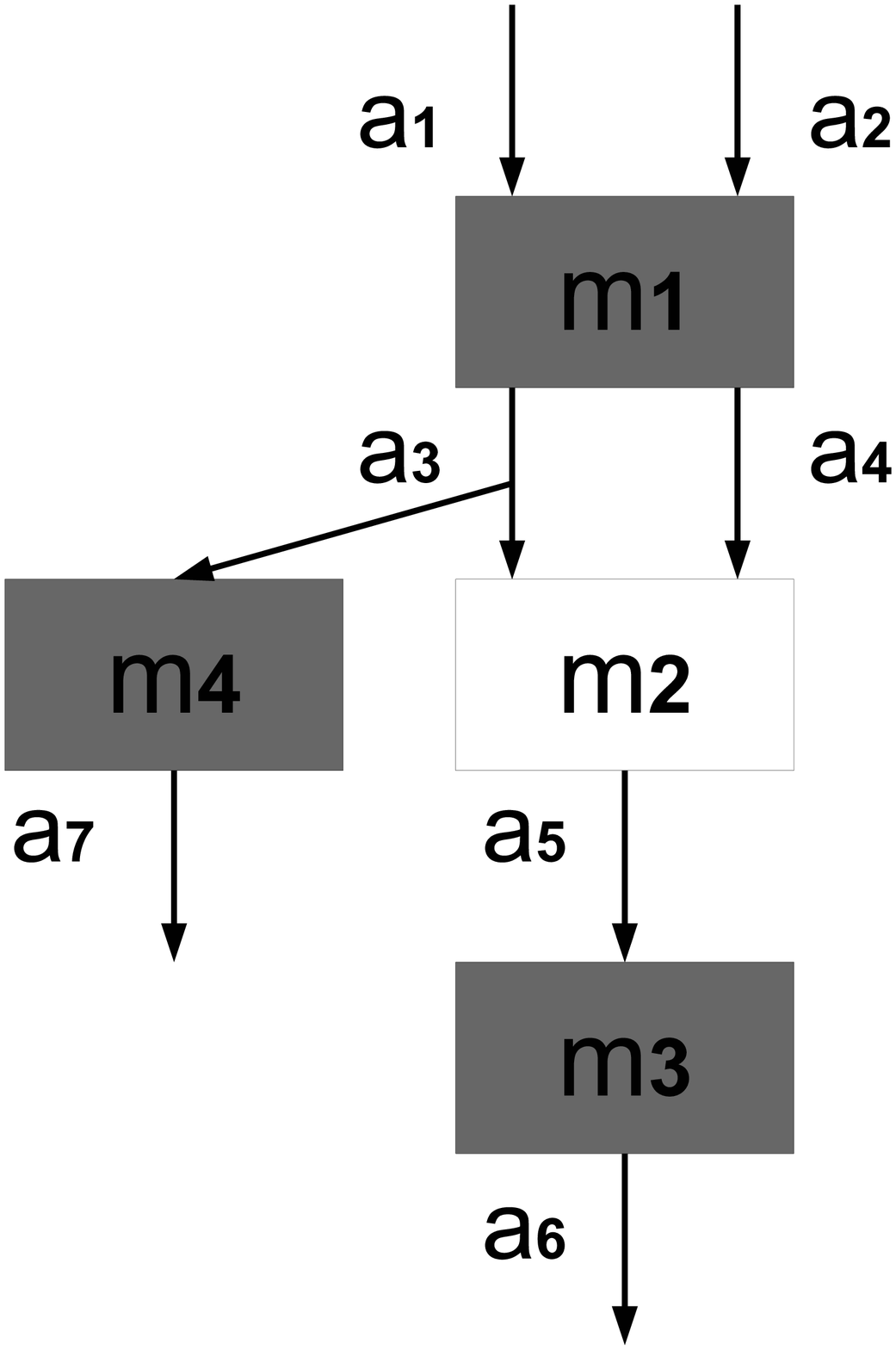} %,width=0.47\linewidth,clip=}
\includegraphics[scale=0.2]{figures/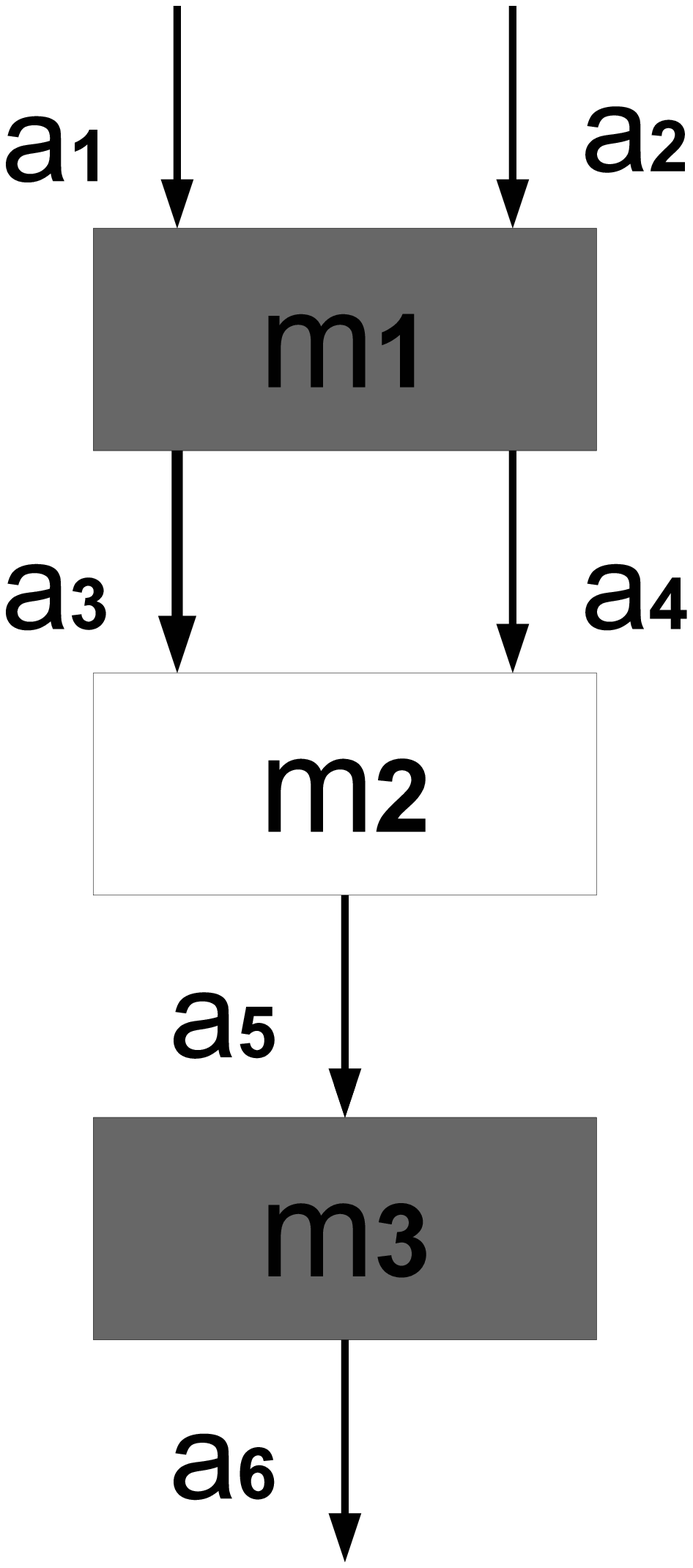} %,width=0.47\linewidth,clip=}
\caption{White modules are public, Grey are private (a) $a_3$ is shared as input to $m_2, m_3$,
(b) necessity of having \UDS\ condition.
}
\label{fig:wf-datashare-uds}
\end{figure}
}

%$m_i$, check again the arguments in proofs that use these restrictions}.

%\input{privacy-chain}

%\input{privacy-general}

\paragraph*{Necessity of \UDS ty for public modules} %\ condition}
Example~\ref{eg:public-challenge} in the previous section motivated why the downward-safety condition is necessary and natural.
%for instance, to handle a chain of successor equality modules. 
%Otherwise, if there is a subsequent equality module, the value of the hidden attributes $\widebar{V_i} \subseteq O_i$ may be revealed.
%But it may not be obvious why we need the 
The following proposition illustrates the need for the  additional upward-safety condition in Theorem~\ref{thm:privacy-downward},
even when we consider downstream-propagation. % (\ie\ the safe attributes of each private module belong to its output attributes).
%restrict the hidden attributes to be a subset of the
%output attributes $O_i$ (downstream-propagation).  
%The following proposition illustrates the necessity of having both  upstream and downstream conditions
%in Theorem~\ref{thm:privacy-downward}.

\begin{figure}[t]


%\vspace{-4mm}
\centering
\subfloat[{%Necessity of single predecessor
Workflow $W_a$}]{\label{fig:wf-no-pred}
\includegraphics[scale=0.25]{wf-no-pred.eps}
%\vspace{-8mm}
}~~~~~
\subfloat[{%Necessity of \UDS\ condition
Workflow $W_b$}]{\label{fig:wf-uds}
\includegraphics[scale=0.25]{wf-uds-need.eps} %,width=0.47\linewidth,clip=}
%\vspace{-8mm}
}
 %\subfloat[{\small $a_3$ is shared as input to $m_2, m_3$}]{\label{fig:wf-datashare}
 %\includegraphics[scale=0.2]{figures/wf-data-sharing.eps} %,width=0.47\linewidth,clip=}
%\epsfig{file=figures/R__VS__P.eps,width=0.32\linewidth,clip=}
%\vspace{-1mm}
\caption{Necessity of the conditions in Theorem~\ref{thm:privacy-downward}: (a) Single-predecessor workflows,
(b) \UDS ty for public modules; White modules are public, grey are private.}
%\vspace{-5mm}
\end{figure}

\begin{proposition}\label{prop:uds-need}
There is a workflow $W$ with a private module $m_i$, and a safe subset of hidden %output
 attributes $h_i$ % \subseteq O_i$
guaranteeing $\Gamma$-standalone-privacy for $m_i$ %for some 
($\Gamma > 1$),
such that satisfying only the downstream-safety condition for the public modules in %$C(\widebar{V_i})$
$C(h_i)$ does not give $\Gamma$-workflow-privacy for $m_i$ for \emph{any} $\Gamma > 1$.

%					There is a workflow $W$, a private module $m_i$ in $W$, and a safe subset of (hidden) output attributes $h_i \subseteq O_i$
%					that guarantees $\Gamma$-standalone-privacy for $m_i$ for some $\Gamma > 1$,
%					such that satisfying only the downstream-safety condition for the public modules in %$C(\widebar{V_i})$
%					$C(H_i)$ does not give $\Gamma$-workflow-privacy for $m_i$ for \emph{all} $\Gamma > 1$.
\end{proposition}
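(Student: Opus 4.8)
The plan is to build a single-predecessor workflow in which the downstream propagation of a safe output subset renders every public module of the relevant public closure downstream-safe but not upstream-safe, and then to show that the target private module is only $1$-workflow-private. Keeping $W$ single-predecessor is essential here: it guarantees that the collapse cannot be blamed on a violation of the single-predecessor restriction (whose necessity is the content of Proposition~\ref{prop:single-pred-needed}), so the failure must instead be charged to the missing \US\ condition. I would reuse the OR-gate collision idea from the proof of Proposition~\ref{prop:single-pred-needed}, but feed the gate's second argument from a workflow input rather than from a second private module, which is exactly what preserves single-predecessor-ness while still producing two rows that the gate forces to agree on a hidden output.

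Concretely, I would take $W_b$ of Figure~\ref{fig:wf-uds} to be the boolean chain $m_1 \rightarrow m_2 \rightarrow m_3$, with $m_i = m_1$ the target private module: $m_1$ has input $a_1$ and output $a_2 = a_1$; the public module $m_2$ has inputs $a_2$ (from $m_1$) and $a_3$ (a visible workflow input) and output $a_4 = a_2 \vee a_3$; and the private module $m_3$ has input $a_4$ and output $a_5 = a_4$. There is no data sharing and $m_1$ is the unique private predecessor of $m_2$, so $W_b$ is single-predecessor and $C(\{a_2\}) = \{m_2\}$. Hiding $h_1 = \{a_2\}$ already gives $2$-standalone-privacy for $m_1$. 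Propagating this hiding downstream to make $m_2$ downstream-safe forces us also to hide $m_2$'s output $a_4$; the resulting set $H = \{a_2, a_4\}$ leaves $m_2$ with no visible output, so $m_2$ is (vacuously) \DS\ \wrt\ $A_2 \cap H$, yet it fails \US, since the inputs $(a_2=1, a_3=0)$ and $(a_2=1, a_3=1)$ disagree on the visible attribute $a_3$ but both map to the hidden (hence self-equivalent) output $a_4 = 1$.

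The decisive step is to verify that this $H$ leaves $m_1$ only $1$-workflow-private. Tabulating $R$ over the four inputs $(a_1, a_3)$ and its view on $\{a_1, a_3, a_5\}$, I would argue that the input $a_1 = 0$ admits no possible world with output $a_2 = 1$. Indeed, by the implication-based definition of $\Out_{\tup{x}, W, H}$, realizing $a_2 = 1$ for $a_1 = 0$ forces $a_2 = 1$ in both rows with $a_1 = 0$; the public, hence fixed, module $m_2$ then outputs $a_4 = 1$ in both, while the visible values of $a_5$ in those two rows are $0$ and $1$, so the redefined private $\widehat{m}_3$ would have to send the single input $a_4 = 1$ to two distinct outputs, violating the functional dependency $a_4 \rightarrow a_5$. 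Hence $\Out_{0, W, H} = \{0\}$, and $m_1$ is not $\Gamma$-workflow-private for any $\Gamma > 1$, even though $h_1$ is $2$-standalone-safe and every public module of $C(h_1)$ is \DS.

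The main obstacle, and the real content of the proposition, is to make the example isolate \US\ precisely. The delicate point is that \DS\ is obtained here ``for free'' by hiding $m_2$'s output, which is exactly what makes \DS\ deceptive: it suppresses $m_2$'s own visible output, while the hidden value $a_4$ still flows into $m_3$ and is pinned down through $m_3$'s visible output $a_5$. I would then confirm robustness in two ways. First, restoring \US\ by additionally hiding $a_3$ merges the two offending rows and reopens the alternative world, verifying that it is the \US\ condition, and not some incidental feature, that the example tests. Second, because the contradiction is driven through a visible output of the downstream private module, if $m_3$ itself required protection one could equip it with a spare output and hide only that, keeping $a_5$ visible, so the counterexample survives taking unions of per-module solutions. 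The one place that genuinely needs care is establishing that the functional-dependency clash is unavoidable across \emph{all} possible worlds, not merely for the natural redefinition; this follows because $m_2$ is public and thus identical in every world, which forces $a_4 = 1$ in both rows regardless of how the private modules are redefined.
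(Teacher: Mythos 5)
Your construction does establish the literal existential statement: in your chain, $h_1=\{a_2\}$ is $2$-standalone-safe, $H=\{a_2,a_4\}$ makes $m_2$ \DS, and the functional-dependency collision you describe correctly forces $\widehat{m}_1(0)=0$ in every possible world. But the example fails at exactly the point you yourself flag as ``the real content of the proposition'': it does not isolate the upstream-safety condition. Check what happens when you ``restore \US'' by also hiding $a_3$: since $1 \vee a_3 = 1$ regardless of the (now hidden) value of $a_3$, the two executions with $a_1=0$ still both get $a_4=1$, and they still must exhibit the two distinct \emph{visible} values $a_5=0$ and $a_5=1$; the collision with the functional dependency $a_4\rightarrow a_5$ is untouched. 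So your claim that hiding $a_3$ ``merges the two offending rows and reopens the alternative world'' is false: in your workflow \emph{no} subset of $O_1\cup A_2=\{a_2,a_3,a_4\}$, including the trivial \UDS\ subset, gives $\Gamma>1$. The example therefore proves too much -- it is a workflow in which even \UDS ty fails, i.e.\ a Proposition~\ref{prop:single-pred-needed}-type example, not one showing that upstream-safety is the missing ingredient.

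This also explains why your single-predecessor claim cannot stand. If your workflow were single-predecessor, then $H=\{a_2,a_3,a_4\}$ would satisfy all three conditions of Theorem~\ref{thm:privacy-downward} (safe $h_1$, trivially \UDS\ closure, no hiding outside $O_1\cup A_2$) while yielding only $1$-privacy, contradicting that theorem. The resolution is that the intended reading of Definition~\ref{def:single-pred-wf} -- used explicitly in Step-1 of the proof of Lemma~\ref{lem:main-private-single-pred}, where it is asserted that the composite closure module ``gets all its inputs from $m_i$'' -- excludes a closure module receiving an input ($a_3$) from outside $m_i$; your raw workflow source plays exactly the role of the forbidden second predecessor. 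The paper's example avoids this by making \emph{both} OR-gate inputs outputs of $m_1$, one hidden and one visible: then distinct executions feed distinct inputs to $m_1$, the only obstruction to building the alternative world is the visible attribute $a_4$, and hiding it (i.e.\ imposing upstream-safety) demonstrably reopens that world (Example~\ref{eg:main-lemma}, Table~\ref{tab:relation}) -- which is what makes it a genuine witness for the necessity of \US. If you want to keep your one-step collision argument, the second OR input must itself be an output of $m_1$ that can be hidden; as soon as it bypasses $m_1$, the collision becomes unfixable and the example no longer tests the \US\ condition.
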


%\red{sudeepa: added, please check. how to shorten?}
%\red{TOVA: I rephrased. See if you you like it.}

\eat{ In the proof of this proposition and in other places in the
paper we will frequently use \emph{``$\widehat{m}_i$ as redefined
$m_i$''} while constructing a possible world for the workflow
relation $R$.
 %\wrt some visible attributes.
 By the definition of possible world (Definition~\ref{def:pos-worlds-workflow}),
every possible world $R'$ of $R$ must satisfy all the functional dependencies $I_i \rightarrow O_i$, $i = 1$ to of $n$
(and the projection of $R, R'$ on the attributes of public module and on the visible attributes should be the same).
%,
%the projections on the attributes $I_i \cup O_i$ of public modules $m_i$ must be the same, and projections of $R, R'$
%on visible attributes must be the same.
Hence it suffices to redefine the private module $m_i$-s to
$\widehat{m}_i$-s, keep public modules $m_j$-s unchanged
($\widehat{m}_j = m_j$), and then showing that the relation $R'$
formed by the join of the standalone relations of $\widehat{m}_i$-s
is a possible world provided $R, R'$ have the same projections on
visible attributes. The redefined modules $\widehat{m_i}$-s, also
being functions, will satisfy the dependencies $I_i \rightarrow
O_i$, so their join $R$ will satisfy the functional dependencies. }

\begin{proof} %[ of Proposition~\ref{prop:uds-need}] %\label{eg:up-down-needed}
Consider the chain workflow $W_b$ %as 
given in Figure~\ref{fig:wf-uds}
with three modules $m_1, m_2, m_3$ defined as follows.
(i) $(a_3, a_4) = m_1(a_1, a_2)$ where $a_3 = a_1$ and $a_4 = a_2$, % = (a_3 = a_1, a_4 = a_2)$,
(ii) $a_5 = m_2(a_3, a_4) = a_3 \vee a_4$ (OR),
(iii) $a_6 = m_3(a_5) = a_5$.
$m_1, m_3$ are private whereas $m_2$ is public.
All attributes take values in $\{0, 1\}$.
Clearly hiding output $a_3$ of $m_1$ gives $\Gamma$-standalone privacy for $\Gamma = 2$.
Now suppose $a_3$ is hidden in the workflow.
Since %the functionality of $m_2$ is known, 
$m_2$ is public (known to be OR function), $a_5$ must be hidden (downstream-safety condition). Otherwise
%looking at output 
from visible output $a_5$ and input $a_4$, some values of hidden input $a_3$ can be uniquely
determined (eg. if $a_5 = 0, a_4 = 0$, then $a_3 = 0$ and if $a_5 = 1, a_4 = 0$, then $a_3 = 1$).
On attributes $(a_1, a_2, \underline{a_3}, a_4, \underline{a_5}, a_6)$,
the original relation $R$ %will take the form 
is shown in Table~\ref{tab:reln-uds}
(the hidden attributes and their values are underlined in the text and in grey in the table).

%$(0, 0, \underline{0}, 0, \underline{0}, 0), (0, 1, \underline{0}, 1, \underline{1}, 1),
%(1, 0, \underline{1}, 0, \underline{1}, 1), (1, 1, \underline{1}, 1, \underline{1}, 1)$,
\begin{table}[ht]
{%\scriptsize
\centering
\begin{tabular} {|cc|c|c|c|c|}
\hline
$a_1$ & $a_2$ & \cellcolor[gray]{0.6}$a_3$ & $a_4$ & \cellcolor[gray]{0.6}$a_5$ & $a_6$ \\ \hline \hline
0 &  0 & \cellcolor[gray]{0.6}0  & 0 & \cellcolor[gray]{0.6}0 & 0\\\hline
0 & 1 & \cellcolor[gray]{0.6} 0 & 1 & \cellcolor[gray]{0.6}1 & 1\\\hline
1 &  0 & \cellcolor[gray]{0.6}1 &  0 & \cellcolor[gray]{0.6}1 & 1\\\hline
1 &1 & \cellcolor[gray]{0.6} 1 &1 &\cellcolor[gray]{0.6}1 & 1\\\hline
\end{tabular}
\caption{Relation $R$ for workflow given in Figure~\ref{fig:wf-uds}}
%\vspace{-5mm}
\label{tab:reln-uds}
}
\end{table}

%\red{TOVA: Use $\bar{m_i}$ instead of $\widehat{m}_i$ ? In any case, we need
%to explain what does it mean to ''redefine" a module.}

Let us first consider an input $(0,0)$ to $m_1$.  When $a_3$ is
hidden, a possible candidate output $y$ of input tuple $x = (0, 0)$
to $m_1$ is $(\underline{1}, 0)$. So we need to have a possible
world where $m_1$ is redefined as $\widehat{m}_1(0, 0) = (1, 0)$. To be
consistent on the visible attributes, this forces us to redefine
$m_3$ to $\widehat{m}_3$ where $\widehat{m}_3(1) = 0$; otherwise the row $(0, 0,
\underline{0}, 0, \underline{0}, 0)$ in $R$ changes to $(0, 0,
\underline{1}, 0, \underline{1}, 1)$. This in turn forces us to
define $\widehat{m}_1(1, 0) = (0, 0)$ and $\widehat{m}_3(0) = 1$. (This is because if we
map $\widehat{m}_1(1, 0)$ to any of $\{(1, 0), (0, 1), (1, 1)\}$, either we
have inconsistency on the visible attribute $a_4$, or $a_5 = 1$, and
$\widehat{m}_3(1) = 0$, which gives a contradiction on the visible attribute
$a_6 = 1$.)

Now consider the input $(1, 1)$ to $m_1$. For the sake of consistency on the visible attribute $a_3$, $\widehat{m}_1(1, 1)$
can take value $(1, 1)$ or $(0, 1)$. 	But if $\widehat{m}_1(1, 1) = (1, 1)$ or $(0, 1)$, we have an inconsistency on the visible attribute $a_6$.
For this input in the original relation $R$, $a_5 = a_6 = 1$.
Due to the redefinition of $\widehat{m}_3(1) = 0$, we have inconsistency on $a_6$.
But note that the downstream-safety condition has been satisfied so far by hiding $a_3$ and $a_5$.
To have consistency on the visible attribute $a_6$ in the row $(1, 1, \underline{1}, 1, \underline{1}, 1)$,
we must have $a_5 = 0$ (since $\widehat{m}_3(0) = 1$).  The pre-image of $a_5 = 0$ is $a_3 = 0, a_4 = 0$,
hence we have to redefine $\widehat{m}_1(1, 1) = (\underline{0}, 0)$. But $(\underline{0}, 0)$ is not equivalent to
original $m_1(1, 1) = (\underline{1}, 1)$ \wrt the visible attribute $a_4$. So the only solution
in this case for $\Gamma > 1$, assuming that we do not hide output $a_6$ of private module $m_3$,  is to hide $a_4$, which makes the public module $m_2$ both upstream and downstream-safe.
\end{proof}

\noindent
This example also suggests that upstream-safety is needed
only when a private module gets input from %the workflow contains private modules that get input from
a module in the public-closure. % $C(\widebar{V_i})$; 
We will see later the proof of Lemma~\ref{lem:main-private-single-pred} (Section~\ref{sec:privacy-single-pred}) that this is indeed the case.

\subsubsection{Proof of %Theorem~\ref{thm:privacy-downward}
Composability Theorem}\label{sec:privacy-single-pred}
To prove $\Gamma$-privacy, we need to show the existence of %many possible 
at least $\Gamma$ possible outputs for each input to each private module,
originating from the possible worlds of the workflow relation \wrt the visible attributes.
First we present a crucial lemma, which shows the existence of many possible outputs for 
any fixed input to any fixed private module in the workflow, 
when the conditions in Theorem~\ref{thm:privacy-downward} are satisfied.
In particular, this lemma 
%relates standalone-privacy of a module to its workflow privacy showing 
shows that any candidate output for a given input for standalone privacy remains a candidate output 
for  workflow-privacy, even when the private module interacts with
other private and  public module in a (single-predecessor) workflow.
Therefore, if there are $\geq \Gamma$ candidate outputs for standalone-privacy, there will be
$\geq \Gamma$ candidate outputs for workflow-privacy.
Later in this section we will formally prove Theorem~\ref{thm:privacy-downward}
using this lemma.

\begin{lemma}\label{lem:main-private-single-pred}
Consider a standalone private module $m_i$, a set of %visible attributes $V_i$,
hidden attributes $h_i$, any input $\tup{x}$ to $m_i$, % \in \proj{I_i}{R}$; 
and any candidate output $\tup{y} \in \Out_{\tup{x}, m_i, h_i}$ of $\tup{x}$.
 % \wrt a set of visible attributes $V_i$.
Then $\tup{y} \in \Out_{\tup{x}, W, H_i}$ when $m_i$ belongs to a single-predecessor workflow $W$,  
and a set attributes $H_i \subseteq A$ is hidden 
such that (i) %the hidden attributes 
$h_i \subseteq H_i$, (ii) only output attributes from $O_i$ are included in %$\widebar{V_i}$
$h_i$ (i.e. $h_i \subseteq O_i$),
and (iii) every module $m_j$ in the public-closure $C(h_i)$ is \UDS\ \wrt $A_j \cap H_i$.
%Then $\tup{y} \in \Out_{\tup{x}, W, H_i}$. %\wrt visible attributes 
%where %$V = A \setminus H_i$.
\end{lemma}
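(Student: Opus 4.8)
The plan is to take the standalone witness for $\tup{y} \in \Out_{\tup{x}, m_i, h_i}$ and extend it to a full possible world $R' \in \Worlds(R, H_i)$ of the workflow in which every execution whose $m_i$-input is $\tup{x}$ has $m_i$-output $\tup{y}$; by Definition~\ref{def:workflow-privacy} this immediately gives $\tup{y} \in \Out_{\tup{x}, W, H_i}$. Since $h_i \subseteq O_i$ (condition (ii)), all input attributes of $m_i$ are visible, so the witness forces $\tup{y}$ to agree with $m_i(\tup{x})$ on the visible outputs $O_i \setminus h_i$. Hence I may fix a redefinition $\widehat{m}_i$ that equals $m_i$ everywhere except that $\widehat{m}_i(\tup{x}) = \tup{y}$; this $\widehat{m}_i$ is still a function realizing the view of $m_i$ on the visible attributes.

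First I would keep every public module equal to its true function and recompute the hidden attribute values along a topological order of the DAG, starting from the changed output of $m_i$. The only attributes whose value can change are those reachable from $h_i$, i.e. attributes of $m_i$ together with attributes of the public modules in the closure $C(h_i)$. Using \DS ty of each $m_j \in C(h_i)$ \wrt $A_j \cap H_i$ (condition (iii)), I would argue by induction along the closure that a hidden-only perturbation of a public module's input yields an output that is equivalent \wrt $H_i$, so that \emph{every} visible attribute of the workflow is preserved and all changes are confined to hidden attributes inside $O_i \cup \bigcup_{m_j \in C(h_i)} A_j$.

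The delicate step is the downstream frontier of the closure. Any public consumer of a closure output would itself lie in $C(h_i)$, so the closure feeds only private modules; for each such private module $m_k$ I would redefine $\widehat{m}_k$ to map its new (hidden-perturbed) input to exactly the original output of that execution, so that propagation halts at $m_k$ and no downstream visible attribute is disturbed. The crux is that $\widehat{m}_k$ is \emph{well defined} as a function: if two executions are driven to the same new input of $m_k$, they must be assigned the same output. Here I would invoke \US ty of the closure \wrt $A_j \cap H_i$, which forces visibly-equivalent closure outputs to have visibly-equivalent preimages, together with the single-predecessor hypothesis, which guarantees that $m_i$ is the only private module feeding the closure and (with no data sharing) that each perturbed attribute enters a unique module. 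Together these are meant to prevent two executions that require different visible $m_k$-outputs from being collapsed onto a single input of $m_k$ --- precisely the failures exhibited in Propositions~\ref{prop:single-pred-needed} and~\ref{prop:uds-need}.

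Finally I would verify that the constructed $R'$ is a legitimate possible world in the sense of Definition~\ref{def:pos-worlds-workflow}: each module (redefined or not) is a genuine function, so the functional dependencies in $F$ hold; the previous two paragraphs give $\proj{A \setminus H_i}{R'} = \proj{A \setminus H_i}{R}$; and every public module was left unchanged, so condition (3) is met. Since $\widehat{m}_i(\tup{x}) = \tup{y}$ in $R'$, this establishes $\tup{y} \in \Out_{\tup{x}, W, H_i}$. I expect the main obstacle to be exactly the well-definedness argument on the frontier: making precise how \US ty together with the single-predecessor restriction rules out conflicting output requirements for the downstream private modules is the technical heart, whereas the visible-preservation and functional-dependency checks are routine.
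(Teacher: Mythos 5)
Your high-level plan (build a possible world by redefining $m_i$ and the private modules downstream of the closure, while leaving all public modules untouched, and use Lemma~\ref{lem:out-x-output} to get $\tup{y} \equiv_{h_i} m_i(\tup{x})$) is the same as the paper's. But the concrete construction has a fatal flaw: you redefine $\widehat{m}_i$ \emph{only at the single point} $\tup{x}$, and then ask the downstream private modules to absorb the perturbation by mapping their ``new'' inputs to the original outputs. This cannot work, and the paper's own running example kills it. Take the workflow of Figure~\ref{fig:wf-uds} (Table~\ref{tab:reln-uds}): $m_1$ is the identity, $m_2$ is public OR, $m_3$ is the identity, $h_1=\{a_3\}$, $H_1=\{a_3,a_4,a_5\}$, $\tup{x}=(0,0)$, $\tup{z}=(0,0)$, $\tup{y}=(1,0)$. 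Your $\widehat{m}_1$ changes only row $r_1$, whose closure output becomes $a_5 = 1\vee 0 = 1$, while rows $r_2,r_3,r_4$ keep $a_5=1$. Now $\widehat{m}_3$ must map $a_5=1$ to $a_6=0$ (to match the visible $a_6$ of $r_1$) and also to $a_6=1$ (to match the visible $a_6$ of $r_2$); no function does both. Note that $m_2$ \emph{is} \UDS\ \wrt\ $A_2\cap H_1$ here (all of $A_2$ is hidden), so your appeal to \US ty plus the single-predecessor property cannot rescue the argument: the obstruction is not ill-definedness of preimages of the closure, but conflicting \emph{output} requirements on $m_3$ created by collapsing two execution classes (the $\tup{w_z}$-class and the $\tup{w_y}$-class) into one. \US ty constrains the closure $M$, not what the private successor does on visibly equivalent but distinct hidden inputs.

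The missing idea is that $\widehat{m}_i$ must be changed on \emph{more} inputs than $\tup{x}$, via a swap rather than a one-point override. The paper's proof (via the \Flip\ and \EFlip\ functions, Definitions~\ref{def:flip} and~\ref{def:eflip}) exchanges the two full preimage classes: every $\tup{u}$ with $M(\proj{I}{m_i(\tup{u})})=\tup{w_z}$ is remapped to $\tup{y}$, and every $\tup{u}$ with $M(\proj{I}{m_i(\tup{u})})=\tup{w_y}$ is remapped to $\tup{z}$; correspondingly each private successor is redefined by the swap $\widehat{m}_k(\tup{w_y})=m_k(\tup{w_z})$, $\widehat{m}_k(\tup{w_z})=m_k(\tup{w_y})$ --- not by ``new input $\mapsto$ original output.'' Because this is an involution on executions, no two executions ever collapse onto one downstream input, so well-definedness is automatic. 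This is also where \US ty is genuinely used: it guarantees that all $m_i$-outputs in the two swapped classes are equivalent \wrt\ $H_i$ (hence to $\tup{y}$ and $\tup{z}$), so reassigning them wholesale still projects correctly onto the visible attributes of $O_i$. In the example above this yields $\widehat{m}_1:(0,1),(1,0),(1,1)\mapsto(0,0)$ and $(0,0)\mapsto(1,0)$, with $\widehat{m}_3(1)=0$, $\widehat{m}_3(0)=1$, giving exactly the possible world of Table~\ref{tab:relation}. Without this swap structure your proof cannot be completed.
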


\eat{
			\begin{lemma}\label{lem:main-private-single-pred}
			Consider a single-predecessor workflow $W$,  any private module $m_i$ in $W$, a set of visible attributes $V_i$,
			any input $\tup{x} \in \proj{I_i}{R}$; and any candidate output $\tup{y} \in \Out_{\tup{x}, m_i, V_i}$. 
			% \wrt a set of visible attributes $V_i$.
			Given a set of hidden attributes $H_i \subseteq A$,
			such that (i) the hidden attributes $\widebar{V_i} \subseteq H_i$, % and $H_i \subseteq A_i \cup \bigcup_{j \in C(\widebar{V_i})} A_j$,
			(ii) only output attributes from $O_i$ are included in $\widebar{V_i}$ (i.e. $\widebar{V_i} \subseteq O_i$),
			and (iii) every module $m_j$ in the public-closure $C(\widebar{V_i})$ is \UDS\ \wrt $A_j \setminus H_i$.
			Then $\tup{y} \in \Out_{\tup{x}, W, V}$ %\wrt visible attributes 
			where $V = A \setminus H_i$.
			\end{lemma}
}

To prove the lemma, we will (arbitrarily) fix a private module $m_i$, %as given in the lemma, in the remainder of this subsection. 
an input $\tup{x}$ to $m_i$, a  hidden subset $h_i$, and a candidate output $\tup{y} \in \Out_{\tup{x}, m_i, h_i}$
for $\tup{x}$.
The proof %of Lemma~\ref{lem:main-private-single-pred}
comprises two steps:

%\paragraph*{Proof of Lemma~\ref{lem:main-private-single-pred}}
\begin{enumerate}
    \item[(Step-1)] Consider the connected subgraph $C(h_i)$
    as a single \emph{composite} public module $M$, or equivalently assume that
    $C(h_i)$ contains a single public module.
By the properties of single-predecessor workflows, $M$ gets all its
inputs from $m_i$, but can send its outputs to 
one, multiple, or  zero (for final output) private
modules. %, or to final output. Let %us denote the input and outputs of
Let $I$  (respectively $O$) be the input (respectively output) attribute sets of $M$.
%$M$ by $I$ and $O$ respectively. 
%For example, 
In
Figure~\ref{fig:wf-example}, the %blue
 box is $M$, $I = \{a_2, a_3\}$
and $O = \{a_{10}, a_{11}, a_{12}, a_{13}\}$. We argue that when $M$
is \UDS\ \wrt visible attributes $(I \cup O) \cap H_i$, and the
other conditions of Lemma~\ref{lem:main-private-single-pred} are
satisfied,
%such that $\widebar{V_i} \subseteq A \setminus V$,
then %$m_i$ is workflow-private \wrt $V$.
$\tup{y} \in \Out_{\tup{x}, W, H_i}$.

\item[(Step-2)] We show that if every public module in the composite module $M = C(h_i)$
is \UDS, then $M$ is \UDS. To continue with our example, in
Figure~\ref{fig:wf-example}, assuming that $m_3$, $m_4$, $m_6$, $m_7$ are
\UDS\ \wrt the hidden attributes, we have to show that $M$ is \UDS.
\end{enumerate}

\eat{
					The proof of Step-1 uses the following lemma
					(proved in Appendix~\ref{sec:proof-lem:out-x-output}) which %states a relation between 
					relates the actual  
					image $\tup{z} = m_i(\tup{x})$ of an input $\tup{x}$ to $m_i$, with a candidate output $\tup{y} \in \Out_{\tup{x}, m_i, V_i}$
					of $\tup{x}$.

					\begin{lemma}\label{lem:out-x-output}
					Let $m_i$ be a standalone private module with relation $R_i$, let $\tup{x}$
					be an input to $m_i$, and let $V_i$ be a
					subset of visible attributes such that  $\widebar{V_i} \subseteq O_i$ (only output attributes are hidden).
					If $\tup{y} \in \Out_{\tup{x}, m_i, V_i}$ then $\tup{y} \equiv_{V_i} \tup{z}$
					where $\tup{z} = m_i(\tup{x})$.
					\end{lemma}

}
\eat{
\begin{proof}
If $\tup{y} \in \Out_{x, m_i, V_i}$, %\wrt visible attributes $V_i$, 
then from Definition~\ref{def:standalone-privacy},
\begin{equation}
\exists R' \in \Worlds(R, {V_i}),~~\exists t' \in R'~~ s.t~~
  \tup{x} = \proj{I_i}{\tup{t'}} \wedge \tup{y}=\proj{O_i}{\tup{t'}}
  \end{equation}
  Further, from Definition~\ref{def:pos-worlds-standalone}, $R' \in \Worlds(R, {V_i})$
  only if $\proj{V_i}{R_i} = \proj{V_i}{R'}$. Hence there must exist a tuple $\tup{t} \in R_i$
  such that
  \begin{equation}
  \proj{V_i}{\tup{t}} = \proj{V_i}{\tup{t'}} \label{equn:out-x-1}
  \end{equation}
  Since $\widebar{V_i} \subseteq O_i$, $I_i \subseteq V_i$.
  From~(\ref{equn:out-x-1}), $\proj{I_i}{\tup{t}} = \proj{I_i}{\tup{t'}} = \tup{x}$.
  Let %$\tup{x'} = \proj{I_i}{\tup{t'}}$ and
  $\tup{z} = \proj{O_i}{\tup{t}}$, i.e. $\tup{z} = m_i(\tup{x})$.
  From~(\ref{equn:out-x-1}), $\proj{V_i \cap O_i}{\tup{t}} = \proj{V_i \cap O_i}{\tup{t'}}$, then
  $\proj{V_i \cap O_i}{\tup{z}} = \proj{V_i \cap O_i}{\tup{y}}$.
  Tuples $\tup{y}$ and $\tup{z}$ are defined on $O_i$. Hence from Definition~\ref{def:equiv},
  $\tup{y} \equiv_{V_i} \tup{z}$.
\end{proof}

\red{Add $y \equiv_{V} z$}.
}

\begin{figure}[t]
\centering
\subfloat[{\scriptsize %Necessity of single predecessor
Original modules $m_1, m_3$}]{\label{fig:wf-proof-lemma}
\includegraphics[scale=0.3]{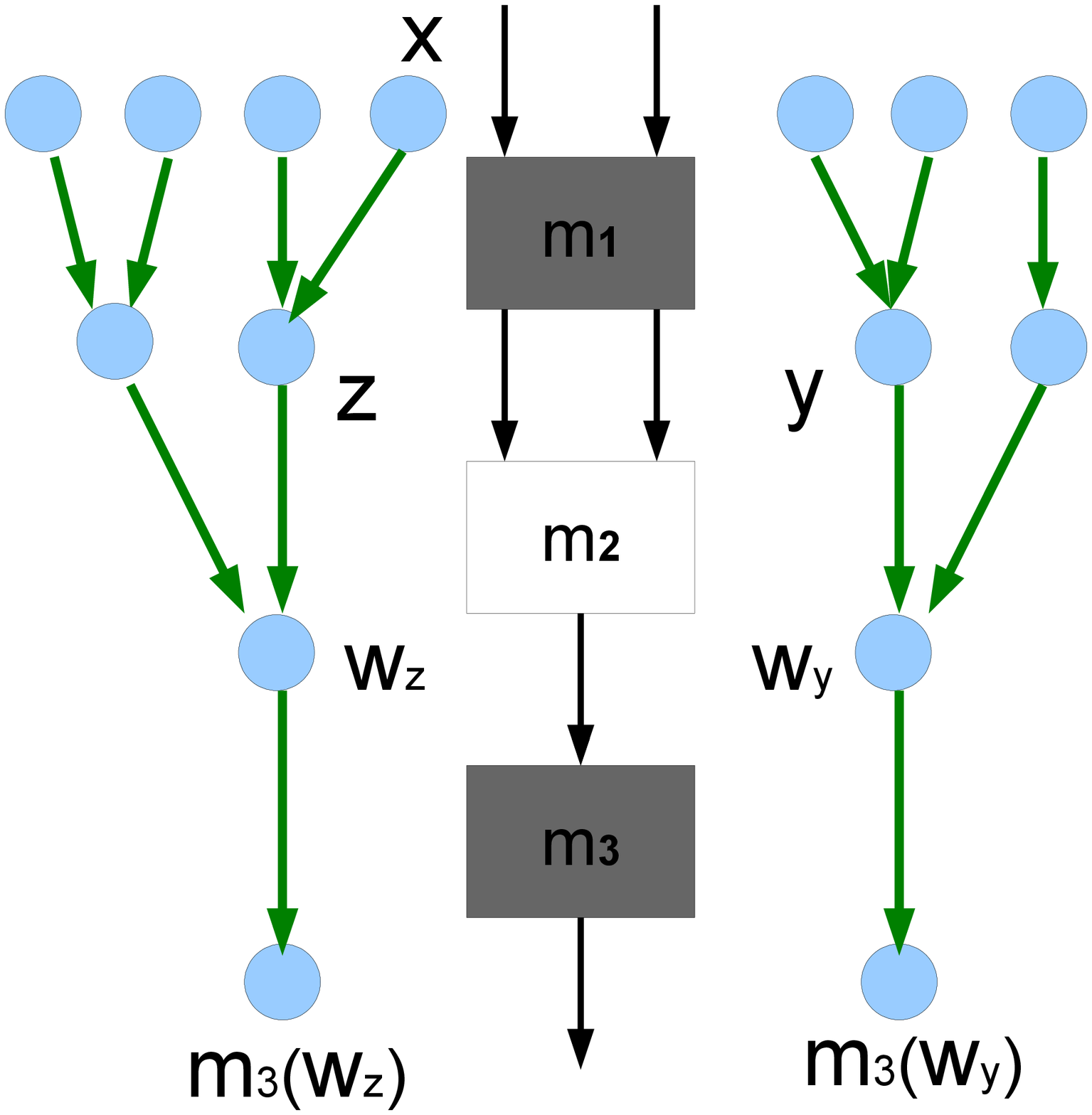}
%\vspace{-8mm}
}~~~~~~~~~~~~~~~
\subfloat[{\scriptsize %Necessity of \UDS\ condition
Redefined $\widehat{m_1}, \widehat{m_3}$}]{\label{fig:wf-proof-lemma-modified}
\includegraphics[scale=0.3]{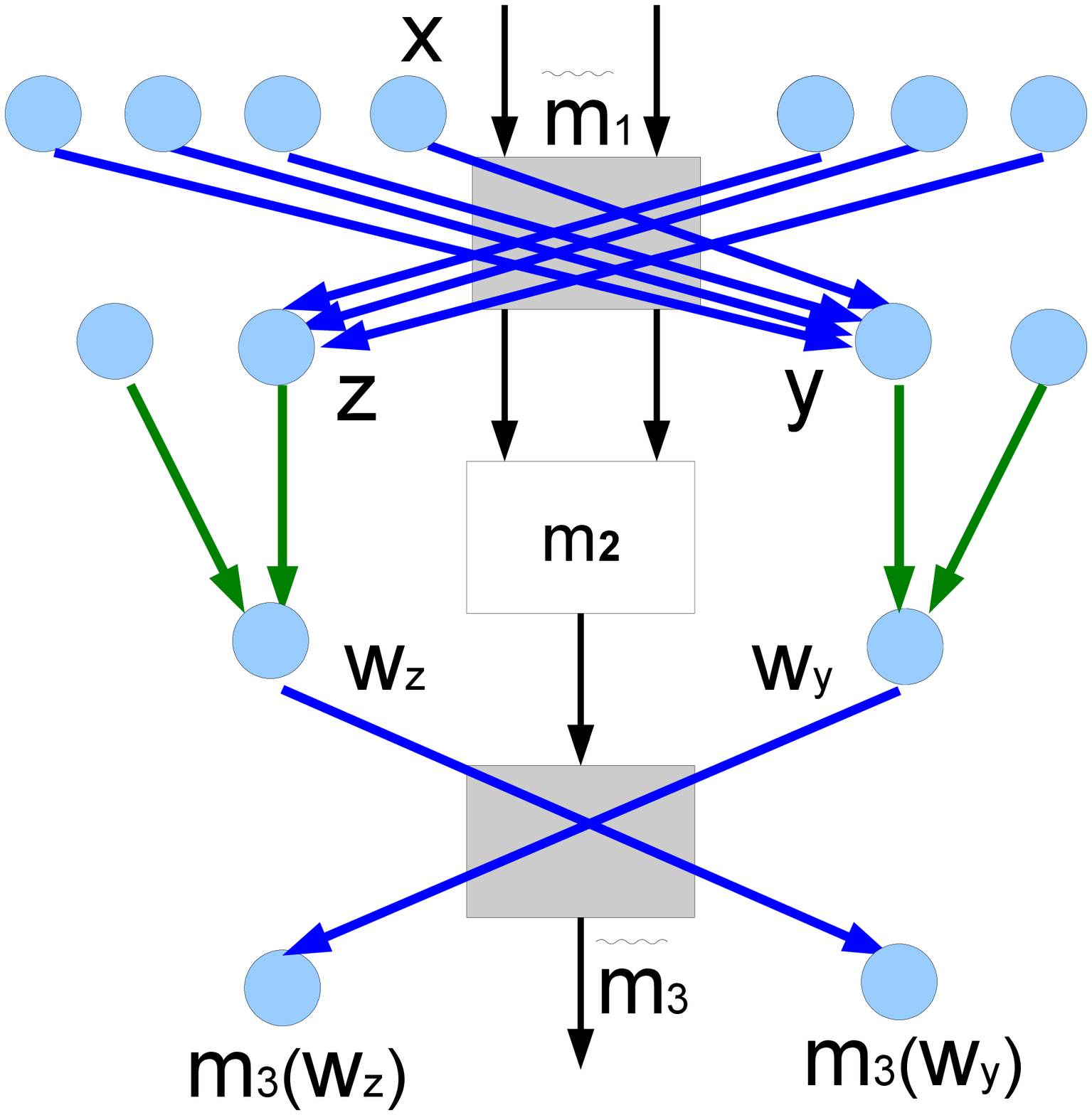} %,width=0.47\linewidth,clip=}
}
\caption{Illustration of Example~\ref{eg:main-lemma}: Input-output relationship in (a) original workflow,
(b) possible world mapping $\tup{x}$ to $\tup{y}$.}
%\vspace{-5mm}
\end{figure}

\noindent
\textbf{Proof of Step-1.~~}
The proof of Lemma~\ref{lem:main-private-single-pred} is involved
even for the restricted scenario in Step-1, in which $C(h_i)$
contains a single public module.  Due to space constraints, the proof is
given in  Appendix~\ref{sec:proof-lem:main-private-single-pred}, and we illustrate
here the key ideas using a simple example of a
chain workflow.

\begin{example}\label{eg:main-lemma}
Consider a chain workflow, for instance, the one given in Figure~\ref{fig:wf-uds} with the relation
in Table~\ref{tab:reln-uds}. Fix module $m_i = m_1$. 
Hiding its output $h_1 = \{a_3\}$ gives $\Gamma$-standalone-privacy for $\Gamma = 2$.
Fix input $\tup{x} = (0, 0)$, with original
output $\tup{z} = m_1(\tup{x}) = (\ul{0}, 0)$ (hidden attribute $a_3$ is underlined).
Also fix a candidate output %with  with $a_3$ hidden, 
$\tup{y} = (\ul{1}, 0) \in \Out_{\tup{x}, m_1, h_1}$. 
Note that $\tup{y}$ and $\tup{z}$ are equivalent on the visible attribute $\{a_4\}$.
% is a possible output. 
%Note that, as Lemma~\ref{lem:out-x-output}
%says, $\tup{y}$ and $\tup{z}$ are equivalent on the visible attributes.

First, consider the simpler case when $m_3$ does not exist, \ie\ $W$ contains only two modules $m_1, m_2$,
and the column for $a_6$ does not exist in Table~\ref{tab:reln-uds}.
As we mentioned before, when the composite public module does not have any private successor, we only need the downstream-safety property for
modules in $C(h_i)$; in this case,  $C(h_i)$ comprises a single public module, $m_2$.
We construct a possible world $R'$ of
$R$ by redefining module $m_1$ to $\widehat{m}_1$ as follows:
$\widehat{m}_1$ simply maps all pre-images of $\tup{y}$ to $\tup{z}$,
and all pre-images of $\tup{z}$ to $\tup{y}$. In this case, both $\tup{y}, \tup{z}$ have single pre-image.
So $\tup{x} = (0, 0)$ gets mapped to $(\ul{1}, 0)$ and input $(1, 0)$ gets mapped to $(\ul{0}, 0)$.
To make $m_2$ %has to be 
downstream-private, we %also 
%need to 
hide output $a_5$ of $m_2$.
Therefore, the set of hidden attributes $H_1 = \{a_3, a_5\}$.
Finally $R'$ is formed by the join of relations for $\widehat{m}_1$ and $m_2$.
Note that the projection of $R, R'$, will be the same on visible attributes $a_1, a_2, a_4$
(in $R'$, the first row will be $(0, 0, \ul{1}, 0, \ul{0})$ and the third row will be $(1, 0, \ul{0}, 0, \ul{0})$).
%The proof of this simpler case borrows the idea presented in \cite{DKM+11}.
\par
Next consider the more complicated case, when the modules in
$C(h_i)$ have private successors (in this example, %consider the workflow in Figure~\ref{fig:wf-uds}
when the private module $m_3$ is present). We already argued in the proof of
Proposition~\ref{prop:uds-need} that we also need to hide the input
$a_4$ to ensure workflow privacy for $\Gamma > 1$ (\UDS ty). 
Let us now describe the proof strategy when $a_4$ is hidden, \ie\ $H_1 = \{a_3, a_4, a_5\}$. 
%Here we consider
%the outputs of $m_3$ on inputs $\tup{y}, \tup{z}$ to $m_2$; 
\par
Let $\tup{w_y} =
m_2(\tup{y})$ and $\tup{w_z} = m_2(\tup{z})$ (see Figure~\ref{fig:wf-proof-lemma}). 
We redefine $m_1$ to
$\widehat{m}_1$ as follows (see Figure~\ref{fig:wf-proof-lemma-modified}). For all input $\tup{u}$ to $m_1$ such that
$\tup{u} \in m_1^{-1}m_2^{-1}(\tup{w_{z}})$ (respectively $\tup{u} \in
m_1^{-1}m_2^{-1}(\tup{w_{y}})$), we define $\widehat{m}_1(\tup{u}) = \tup{y}$
(respectively $\widehat{m}_1(\tup{u}) = \tup{z}$).
Note that the mapping of tuples $\tup{u}$
that are not necessarily 
$m_1^{-1}(\tup{y})$ or $m_1^{-1}(\tup{z})$ are being redefined  under $m_1$ (see Figure~\ref{fig:wf-proof-lemma-modified}). 
For $\widehat{m}_3$, we define,
$\widehat{m}_3(\tup{w_y}) =m_3(\tup{w_z})$ and $\widehat{m}_3(\tup{w_z}) =
m_3(\tup{w_y})$. %By Lemma~\ref{lem:out-x-output}, 
Recall that $\tup{y} \equiv_{H_1} \tup{z}$ ($\tup{y}, \tup{z}$ have the same values of visible attributes).
 Since $m_2$ is downstream-safe $\tup{w_y} \equiv_{H_1} \tup{w_z}$.
  Since $m_2$ is also upstream-safe, for all input
$\tup{u}$  to $m_1$ that are being redefined by $\widehat{m}_1$, their %images
%original outputs 
images
under $m_1$ are equivalent \wrt $H_1$ (and therefore with $\tup{y}$ and $\tup{z}$). 
In our example, $\tup{w_y} =
m_2(\ul{1}, \ul{0}) = (\ul{1})$, and $\tup{w_z} = m_3(\ul{0}, \ul{0}) = (\ul{0})$.
$m_1^{-1}m_2^{-1}(\tup{w_{z}}) = \{(0, 0)\}$ and
$m_1^{-1}m_2^{-1}(\tup{w_{y}}) = \{(0, 1)$, $(1, 0)$, $(1, 1)\}$. So
$\widehat{m}_1$ maps $(0, 0)$ to $(\ul{1}, \ul{0})$ and all of $\{(0, 1)$, $(1, 0)$, $(1,
1)\}$ to $(\ul{0}, \ul{0})$; $\widehat{m}_3$ maps $(\ul{0})$ to $(1)$ and $\ul{1}$ to $(0)$. 
%With these
%observations, it can be shown that the resulting relation $R'$
%formed by joining the relations of $\widehat{m}_1, m_2, \widehat{m}_3$ is a possible world of
%$R$ %In this example, when $a_3, a_4, a_5$ are hidden, $R'$ has the form in 
%(see Table \ref{tab:relation}). 
%Note that t
\par
Consider the relation $R'$ formed by joining the relations of $\widehat{m}_1$, $m_2$, $\widehat{m}_3$
(see Table \ref{tab:relation}).
The relation $R'$ has the same projection on visible
attributes $\{a_1, a_2, a_6\}$ as $R$ in Table~\ref{tab:reln-uds},
and the public module $m_2$ is unchanged. So $R'$ is a possible
world of $R$ that maps $\tup{x} = (0, 0)$ to $\tup{y} = (1, 0)$ as
desired, i.e. $\tup{y} \in \Out_{\tup{x}, W, H_1}$. \hfill $\Box$
%, and these observations can be
%verified.

\begin{table}[t]
{%\scriptsize
\centering
\begin{tabular} {|cc|ccc|c|}
\hline
$a_1$ & $a_2$ & \cellcolor[gray]{0.6}$a_3$ & \cellcolor[gray]{0.6}$a_4$ & \cellcolor[gray]{0.6}$a_5$ & $a_6$ \\ \hline \hline
0 &  0 & \cellcolor[gray]{0.6} 1  & \cellcolor[gray]{0.6}0 & \cellcolor[gray]{0.6}1 & 0\\ \hline
0 & 1 & \cellcolor[gray]{0.6} 0 & \cellcolor[gray]{0.6}0 & \cellcolor[gray]{0.6}0 & 1\\ \hline
1 &  0 & \cellcolor[gray]{0.6}0 &  \cellcolor[gray]{0.6}0 & \cellcolor[gray]{0.6}0 & 1\\ \hline
1 &1 & \cellcolor[gray]{0.6} 0 & \cellcolor[gray]{0.6}0 &\cellcolor[gray]{0.6}0 & 1\\ \hline
\end{tabular}
\caption{Relation $R'$, a possible world of the relation $R$ for the workflow in Figure~\ref{fig:wf-uds} \wrt $H_1 = \{a_3, a_4, a_5\}$.} \label{tab:relation}
%\vspace{-2mm}
}
\end{table}

\end{example}

The argument for more general single-predecessor workflows, like the one given in Figure~\ref{fig:wf-example},
is more complex. Here a private module (like $m_{11}$) can get inputs from $m_i$ (in Figure~\ref{fig:wf-example}, $m_2$), from its public-closure
 $C(h_i)$ (in the figure, $m_8$), and also from the private successors of the modules in $C(h_i)$
 (in the figure, $m_{10}$).
   % \red{successor is undefined, mention in prelims?}.
 In this case, the tuples $\tup{w_y}, \tup{w_z}$ are not well-defined, and redefining the private modules is
 more complex. In the %full 
 proof  of the lemma we give the formal argument using an \emph{extended flipping function}, that selectively
 changes part of inputs and outputs of the private module based on their connection with the private module $m_i$
considered in the lemma.\\

 % \red{SUDEEPA: CAN THE PROOF IDEA BE UNDERSTOOD? TO BE REVISED AGAIN.}

\noindent
\textbf{Proof of Step-2.~~}
%For Step-2 we prove 
The following lemma formalizes the claim in Step-2:
\begin{lemma}\label{lem:composite-UDS}
Let $M$ be a composite module consisting only of public modules.
Let $H$ be a subset of hidden
attributes such that every public module $m_j$ in $M$ is \UDS\ \wrt
$A_j \cap H$. Then $M$ is \UDS\ \wrt $(I \cup O) \cap H$.
\end{lemma}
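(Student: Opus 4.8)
The plan is to prove the two halves of \UDS ty separately, since by Definition~\ref{def:uds} it suffices to show that $M$ is \DS\ \wrt\ $(I\cup O)\cap H$ and \US\ \wrt\ $(I\cup O)\cap H$. I would establish downstream-safety by a forward induction along a topological order of the public modules of $M$, and upstream-safety by the dual backward induction along the reverse topological order. Throughout, the key translation is that, for input tuples $\tup{x},\tup{x'}$ on $I$, the relation $\tup{x}\equiv_{(I\cup O)\cap H}\tup{x'}$ is exactly agreement on the visible input attributes $I\setminus H$, and equivalence of outputs on $O$ is agreement on $O\setminus H$; likewise, for any internal module $m_j$, equivalence of its input (resp.\ output) tuples \wrt\ $A_j\cap H$ is agreement on $I_j\setminus H$ (resp.\ $O_j\setminus H$). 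The per-module \UDS\ hypothesis, applied to these restricted equivalences, is the engine of both inductions.

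For downstream-safety I would fix two inputs $\tup{x}\equiv_{(I\cup O)\cap H}\tup{x'}$ and consider the two executions of $M$ they induce. Fixing a topological order $m_{j_1},\dots,m_{j_r}$ of the modules of $M$, I maintain the invariant that after processing $m_{j_k}$, every visible output attribute of $m_{j_k}$ (i.e.\ in $O_{j_k}\setminus H$) carries the same value in both executions. For the inductive step, each visible input attribute of $m_{j_k}$ is either an element of $I$, where the values agree by hypothesis, or an internal wire produced by an earlier module in the order, where the values agree by the induction hypothesis; hence the two input tuples to $m_{j_k}$ are equivalent \wrt\ $A_{j_k}\cap H$, and since $m_{j_k}$ is \DS\ \wrt\ $A_{j_k}\cap H$ its two outputs agree on $O_{j_k}\setminus H$. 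Since every attribute of $O$ is produced by some $m_{j_k}$, the invariant at the end yields $M(\tup{x})\equiv_{(I\cup O)\cap H}M(\tup{x'})$.

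For upstream-safety I would fix outputs $\tup{y}\equiv_{(I\cup O)\cap H}\tup{y'}$ with preimages $\tup{x},\tup{x'}$ under $M$ and traverse the modules in reverse topological order, maintaining the invariant that when $m_{j_k}$ is reached, all of its visible input attributes $I_{j_k}\setminus H$ already agree in the two executions. The crucial sub-step is to first establish that all visible output attributes of $m_{j_k}$ agree: each such attribute either lies in $O$, agreeing by the hypothesis $\tup{y}\equiv_{(I\cup O)\cap H}\tup{y'}$, or feeds a module $m_{j_l}$ with $l>k$ that was already processed and whose visible inputs---including this attribute---agree by the invariant. With the two output tuples of $m_{j_k}$ thus equivalent \wrt\ $A_{j_k}\cap H$, the \US ty of $m_{j_k}$ forces its two preimage tuples to agree on $I_{j_k}\setminus H$, closing the induction. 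Finally, every attribute of $I$ is an input attribute of some module in $M$, so all of $I\setminus H$ agree, giving $\tup{x}\equiv_{(I\cup O)\cap H}\tup{x'}$ and hence \US ty; together with the previous paragraph this establishes $M$ is \UDS.

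I expect the main obstacle to be the upstream direction: unlike the forward pass, where each module's inputs are settled by its predecessors, here I must argue that a module's \emph{outputs} are settled before I can invert it, which requires combining the externally visible outputs in $O$ with the already-inverted inputs of all downstream consumers---this is precisely why the reverse topological order and the two-sided \UDS\ hypothesis are both needed. A secondary point to handle carefully is the bookkeeping of attribute restrictions: $M$'s hidden set is $(I\cup O)\cap H$ whereas each module is assumed \UDS\ only \wrt\ $A_j\cap H$, so I must verify that $M$-level equivalence restricts correctly to module-level equivalence on each $I_j,O_j$, and conversely that module-level agreement on internal wires and on $O$-attributes reassembles into $M$-level equivalence.
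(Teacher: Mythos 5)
Your proposal is correct and takes essentially the same approach as the paper: both split \UDS ty into downstream- and upstream-safety, prove the former by forward induction along a topological order of $M$'s modules and the latter by the dual backward induction in reverse topological order, applying each module's own safety property to the restricted equivalences on $A_j\cap H$. The only cosmetic difference is that the paper phrases the induction via partial composite modules $M^j$ with inputs $I^j$ and outputs $O^j$, while you track agreement of visible attribute values along the two executions wire by wire.
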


\begin{proof}[Sketch]
The formal proof of this lemma is given in
Appendix~\ref{sec:proof-lem:composite-UDS}. We sketch here the main
ideas. To prove the lemma, we show that if every module in the
public-closure is downstream-safe (respectively upstream-safe), then $M$ is
downstream-safe (respectively upstream-safe). For downstream-safety, we
consider the modules in $M$ in topological order, say $m_{i_1},
\cdots, m_{i_k}$ (in Figure~\ref{fig:wf-example}, $k = 4$ and the
modules in order may be $m_3, m_6, m_4, m_7$). Let $M^j$ be the
(partial) composite public module formed by the union of modules
$m_{i_1}, \cdots, m_{i_j}$, and let $I^j, O^j$ be its input and
output (the attributes that are either from a module not in $M^j$ to
a module in $M^j$, or to a module not in $M^j$ from a module in
$M^j$. Clearly, $M^{1} = \{m_{i_1}\}$ and $M^k = M$. Then by
induction from $j = 1$ to $k$, we show that $M^j$ is downstream-safe
\wrt $(I^j \cup O^j) \cap H$ if all of $m_{i_\ell}$, $1 \leq
\ell \leq j$ are downstream-safe \wrt $(I_{i_\ell} \cup
O_{i_{\ell}}) \cap H = A_{i_{\ell}} \cap H$. For
upstream-safety, we consider the modules in \emph{reverse
topological order}, $m_{i_k}, \cdots, m_{i_1}$, and give a similar
argument by an induction on $j = k$ down to 1.
\end{proof}

%\red{TODO: check if we can omit the full proof in the appendix, as pointed out by a reviewer.}

%\smallskip
\noindent
\paragraph*{Proof of Theorem~\ref{thm:privacy-downward}}
Now we complete the proof of Theorem~\ref{thm:privacy-downward} using Lemma~\ref{lem:main-private-single-pred}.
%First we show that the above lemma implies Theorem~\ref{thm:privacy-downward}.
%\smallskip
%\noindent
\begin{proof}[ of Theorem~\ref{thm:privacy-downward}]
We first argue that if $H_i$ satisfies the conditions in Theorem~\ref{thm:privacy-downward}
then $H_i' = \bigcup_{\ell: m_{\ell} \textrm{ is private}} H_{\ell}$ satisfies the
conditions in Lemma~\ref{lem:main-private-single-pred}.
Since $h_i = H_i \cap O_i$, (i) $h_i \subseteq H_i \subseteq \bigcup_{\ell: m_{\ell} \textrm{ is private}}H_{\ell} = H_i'$ ; 
and (ii)  $h_i \subseteq O_i$. Next we argue that the third requirement in the lemma,
(iii) every module $m_j$ in the public-closure
$C(h_i)$ is \UDS\ \wrt $H_i' \cap A_j$, also holds.
\par
To see (iii),
observe that the Theorem~\ref{thm:privacy-downward} has an additional condition on $H_i$:
$H_i \subseteq O_i \cup \bigcup_{j: m_j \in
C(h_i)} A_j$.
Since $W$ is a
single-predecessor workflow,
%and $H_i \subseteq O_i \cup \bigcup_{k\in C(\widebar{V_i})} A_k$,
for two private modules $m_i, m_{\ell}$, 
the public closures $C(h_i) \cap
C(h_\ell) = \emptyset$ (this follows directly %can be proved directly
from the definition of single-predecessor workflows).
Further, since $W$ is
single-predecessor, $W$ has no data-sharing by definition.  So for
any two modules $m_j, m_\ell$ in $W$ (public or private), the set of
attributes $A_j \cap A_\ell = \emptyset$. Clearly, when $m_i$ is a
private module, $m_i \notin C(h_\ell)$ for any private module $m_{\ell}$ in $W$, by the definition of public-closure. % and vice versa.
Hence for any two private modules $m_i, m_{\ell}$,
$$\left(O_i \cup \bigcup_{j: m_j \in C(h_i)} A_j\right) \cap \left(O_\ell \cup \bigcup_{j: m_j \in C(h_\ell)} A_j\right) = \emptyset.$$
In particular, for two private modules $m_i \neq  m_{\ell}$, $H_i \cap H_\ell = \emptyset$.
Hence, 
 for a public module $m_j \in C(h_i)$, and for any other private module $m_\ell$, % \in M^-, \ell \neq i$,
$A_j \cap H_{\ell} = \emptyset$. Therefore, $A_j \cap H_i'$ = $A_j \cap (\bigcup_{\ell: m_{\ell} \textrm{ is private}} H_{\ell})$
$= A_j \setminus H_i$. Since $m_j$ is \UDS\ \wrt $A_j \cap H_i$ from the condition in the theorem, $m_j$
is also \UDS\ \wrt $A_j \cap H_i'$. This shows that $H_i'$ satisfies the conditions stated in the lemma.
\par
Theorem~\ref{thm:privacy-downward} also states that
each private module $m_i$  is $\Gamma$-standalone-private
\wrt %visible attributes $V_i$, 
$h_i$, i.e., $|\Out_{\tup{x}, m_i, h_i}| \geq \Gamma$ for all input
$\tup{x}$ to $m_i$ %, for $i \in M^-$ 
(see
Definition~\ref{def:standalone-privacy}). From
Lemma~\ref{lem:main-private-single-pred}, using $H_i'$ in place of $H_i$, this implies that for all input
$\tup{x}$ to private modules $m_i$, $|\Out_{\tup{x}, W, H_i'}| \geq \Gamma$ %\wrt 
where %$V = A \setminus H_i' = 
$H_i' = \bigcup_{\ell: m_{\ell} \textrm{ is private}} H_{\ell}$.
From Definition~\ref{def:workflow-privacy}, this implies that each private module $m_i$ is $\Gamma$-workflow-private
%\wrt $V = A \setminus \bigcup_{i \in M^-} H_{i}$; 
$H_i'$ which is the same as $H$ in Theorem~\ref{thm:privacy-downward}.
Since this is true for all private module $m_i$ in $W$,
$W$ is $\Gamma$-private \wrt $H$.
\end{proof}
%
%We can thus focus from now on a single private module $m_i$ and only
%need to prove Lemma~\ref{lem:main-private-single-pred} for $m_i$.
%But before we do so, let us first discuss the necessity of the \UDS\
%assumption in Theorem~\ref{thm:privacy-downward}.

%\red{I changed the definition of the problem that we study. We are
%NOT solving here the secure-view optimization problem, but rather
%finding the optimal solution of a particular type (one constructed
%following the refined privacy theorem)}
%\red{TODO1: Add communication complexity proofs in the appendix.}
%\red{TODO2: See if approx algo exists.}
\subsection{Optimal Composition for Single Predecessor Workflows}\label{sec:optimization} %{Optimization}\label{sec:optimization}

\eat{
				Given a workflow $W$, represented by a relation $R$, and a privacy
				parameter $\Gamma$, we want to find a safe visible subset $V$ with
				minimum cost s.t. all the modules in the workflow are
				$\Gamma$-workflow-private \wrt. $V$. Recall that each attribute $a$
				in $R$  has a cost $c_a$ and ideally one would like to find a set
				$V$ s.t. the cost of hidden attributes $c(\widebar{V}) = \sum_{a \in
				\widebar{V}} c_a$ is minimized. The difficulty however is that, as
				mentioned in Section~\ref{sec:prelims-PODS11}, even for workflows that contain only
				private modules the problem is NP-hard (and in EXP-time) in the
				number of attributes in the relation $R$, which may be very large.
				
				To avoid this exponential dependency on the overall number of
				attributes, we will use here Theorem \ref{thm:privacy-downward} to
				assemble a cost efficient solution for the whole workflow out of
				solutions for the {\em individual modules}. As we show below,
				computing solutions for the individual modules  may still take time
				exponential in the number of the module's attributes. But this
				number is much smaller than that of the whole workflow, and the
				computation may be performed as a pre-processing step with the cost
				being amortized over possibly many uses of the module in different
				workflows.
				
				\paragraph*{The optimization problem}
}
Recall the \emph{optimal composition problem} mentioned in Section~\ref{sec:privacy-opt}.
This problem focused on optimally combining the safe solutions for private modules in an all-private
workflow in order to minimize the cost of hidden attributes. 
\eat{
		Let us use $M^-$ (resp. $M^+$) to
		denote the set of private (resp. public) modules $m_i$ in $W$
		(or their indices $i$ by overloading the notation).
}
In this section, we consider optimal composition %problem 
for a single-predecessor workflow $W$ with private and public modules.
Our goal is to find subsets  $H_i$ for each private module $m_i$ in $W$  
 satisfying the conditions given
 in Theorem \ref{thm:privacy-downward}
 such that $\cost(H)$ is minimized for 
 $H = \bigcup_{i: m_i \textrm{ is private}} H_i$.
%		of minimal cost\footnote{i.e. where $c(\widebar{V}) = \sum_{a \in
%		\widebar{V}} c_a$ is minimized}, such that $V = A \setminus
%		(\bigcup_{i \in M^-} H_i)$ for some sets $H_i$ that satisfy the
%		requirements in the Theorem. Namely, such that there exist some safe
\eat{
			subsets $V_i$ for the private modules $m_i$ in $W$ where only output
			attributes of $m_i$ are hidden, $ \widebar{V_i} \subseteq H_i
			\subseteq O_i \cup \bigcup_{k \in C(\widebar{V_i})} A_k$, and for
			every public module $m_j \in C(\widebar{V_i})$, $m_j$ is \UDS\ \wrt.
			$A_j \setminus H_i$. We call this problem the \textbf{optimum-view}
			problem.
			
			In Section~\ref{sec:four-step-opt}, we discuss how the %optimum-view problem
			optimal-composition problem can be solved in four steps, which also allows us to study the complexity of the problem
			that arises from the respective steps. In Section~\ref{sec:opt-standalone} and Section~\ref{sec:opt-single-private},
			we discuss two of these four steps in more detail.
}
%\subsubsection{Optimization in Four Steps}\label{sec:four-step-opt}
%For single-predecessor workflows, the optimum view problem  is
%solved in four steps: 
This we solve in four steps: (I) find the safe solutions for
standalone-privacy for individual private modules; (II) find the
\UDS\ solutions for individual public modules; (III) find the
optimal hidden subset $H_i$ for the public-closure of every private
module $m_i$ using the outputs of the first two steps; and
(IV) combine $H_i$-s to find the final optimal solution $H$. 
We next
consider each of these steps. 

 %the first three step
 %ensures that the several conditions given in the theorem are ``optimally'' satisfied, whereas the last step outputs the final solution $V$.
 %. We can
 %satisfy the conditions stated in the theorem in three stages.\\

%\paragraph*{First step: safe (standalone-private) solutions for individual private modules}
\paragraph*{I. %Safe (Standalone-private) 
Private Solutions for Individual Private Modules}

%First,  we find the set of safe subsets for all private modules in
%$M^-$. 
For each private module $m_i$ %\in M^-$, 
%in the first step, 
we compute the set of safe subsets 
$\safe_i = \{S_{i1}, \cdots, S_{ip_i}\}$, where 
%$\widebar{S_{i\ell}} = A_i \setminus S_{i\ell} \subseteq O_i$, 
each $S_{i\ell} \subseteq O_i$ %(and all their supersets)
is standalone-private for $m_i$.
%and $m_i$ is $\Gamma$-standalone-private \wrt\ each of the $S_{i\ell}$. 
Here $p_i$ is the number of %such 
safe subsets for $m_i$.
Recall from Theorem \ref{thm:privacy-downward} that the choice of
safe subset for $m_i$ determines its public-closure (and
consequently the possible $H_i$ sets and the cost of the overall
solution). It is thus not sufficient to consider only the safe
subsets that have the minimum cost; we need to keep {\em all}
 safe subsets for $m_i$, to be examined by  subsequent steps.
\par
The complexity of finding safe subsets for individual private
modules has been thoroughly studied in \cite{DKM+11} under the name
\emph{standalone \secureview\ problem}. It was shown that deciding
whether a given %visible subset $V$ 
hidden subset of attributes is safe for a private module is
NP-hard in the number of attributes of the module.
\eat{
				 an exponential
				(again in the number of attributes of the module) lower bound on
				the communication complexity under different settings for this
				problem were also given. 
}
It was further shown that the set of
\emph{all} safe subsets for the module can be computed in time
exponential in the number of attributes assuming constant domain
size, which almost matches the lower bounds.

% Further, with or without access
%to an oracle that answers whether a given subset is safe, to find the solution $V$ that minimizes the cost $C(\widebar{V})$
%has communication complexity in terms of the number of oracle calls that is exponential in the number of attributes of the module.
\par
Although the lower and upper bounds are somewhat disappointing, as
argued in \cite{DKM+11}, the number of attributes of an individual
module is fairly small. The assumption of constant domain is
reasonable for practical purposes, assuming that the integers and
reals are represented in a fixed number of bits. In these cases the
individual relations can be big, however this computation can be
done only once as a pre-processing step and the cost can be
amortized over possibly many uses of the module in different
workflows. Expert knowledge (from the module designer) can also be 
used to help find the safe subsets.

%\red{We had a review with concern about constant domain size.
%We should say that we assume this wlog., considering reals and integers represented in fixed number of bits.
%I think instead of claiming that individual relations are small, we should say that this is one-time procedure,
%and can be suggested by experts in practice}
%
%The safe subsets $\safe_i$ for all $m_i \in M^-$ and the \UDS\ subsets $\uds_{j}$ for all $m_j \in M^+$
%will be sent to the next stage as input.

%\paragraph*{Second Step: \UDS\ solutions for individual public modules}
\paragraph*{II. %\UDS\ 
Safe Solutions for Individual Public Modules}
%Every set of such hidden attributes $A_{i\ell}$ uniquely defines a public-closure $C(\widebar{A_{i\ell}})$.
This step focuses on finding the set of all \UDS\ solutions for the
individual public modules.
%To satisfy the condition in the theorem, the public modules in the closure should be \UDS\ \wrt. the hidden attributes.
%So we analyze the individual public modules $m_j \in M^+$ and compute all possible \UDS\ solutions of $m_j$.
We denote the \UDS\ solutions for a public module %$m_j \in M^+$ 
$m_j$ by
$\uds_j = \{U_{j1}, \cdots, U_{jp_{j}}\}$, where each \UDS\ subset
$U_{j\ell} \subseteq A_j$, and $p_j$ denotes the number of \UDS\
solutions for the public module $m_j$. 
%We analyze the complexity of this step in detail in Section~\ref{sec:opt-standalone}. 
We will see below in Theorem~\ref{thm:nphard-uds}
that even deciding whether a given subset is \UDS\ for a module is
coNP-hard in the number of attributes (and that the set of all such
subsets can be computed in exponential time). However, as argued 
in the first step, this computation can be done once as a pre-processing step with
its cost amortized over possibly many workflows where the module is
used. In addition, %we only need the 
it suffices to compute the \UDS\ subsets for only those public
modules that belong to some public-closure %\wrt. 
for some private module.
%some output
%attribute(s) of a private module, so computing the \UDS\ subsets for
%only these modules suffices.
\eat{
		\par
		We show that verifying whether a standalone module $m_j$ is
		upstream-downstream-safe (\UDS) \wrt a given subset $V$ is coNP-hard.
}

\begin{theorem}\label{thm:nphard-uds}
Given public module $m_j$ with $k$ attributes, and
a subset of hidden attributes %$V$, 
$H$, deciding whether $m_j$ is \UDS\ \wrt %$V$
$H$ is coNP-hard in $k$.
Further, all \UDS\ subsets can be found in EXP-time in $k$.
\end{theorem}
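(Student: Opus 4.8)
The plan is to place the complementary problem in NP, which puts \UDS-checking in coNP, and then to prove coNP-hardness by reducing from the (coNP-complete) \textsc{Unsat} problem. Throughout I assume the module functionality is given compactly, i.e.\ each output attribute is specified by a Boolean formula (or circuit) over the input attributes, so that the relation $R$ -- which may contain up to $2^{|I|}$ rows -- is encoded in size polynomial in $k$. This is the representation under which ``hardness in $k$'' is the meaningful measure, matching the standalone results of \cite{DKM+11}.

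Membership of the complement in NP follows directly from Definition~\ref{def:uds}. A module fails to be \DS\ exactly when there exist input tuples $\tup{x}, \tup{x'}$ with $\tup{x} \equiv_{H} \tup{x'}$ but $m_j(\tup{x}) \not\equiv_{H} m_j(\tup{x'})$, and it fails to be \US\ exactly when there exist inputs $\tup{x}, \tup{x'}$ with $m_j(\tup{x}) \equiv_{H} m_j(\tup{x'})$ but $\tup{x} \not\equiv_{H} \tup{x'}$. In either case a polynomial-size witness (the pair of input tuples) is verified in polynomial time by evaluating the formulas, so deciding whether $m_j$ is \emph{not} \UDS\ is in NP, and hence deciding whether it is \UDS\ is in coNP.

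For hardness I reduce from \textsc{Unsat}. Given a Boolean formula $\phi(z_1,\dots,z_n)$, introduce a fresh variable $z_{n+1}$ and set $\phi' = \phi \wedge z_{n+1}$; then $\phi'$ is satisfiable iff $\phi$ is, while $\phi'$ is always falsifiable (take $z_{n+1}=0$), so $\phi'$ is constant iff $\phi$ is unsatisfiable. Build a public module $m_j$ with hidden input attributes $I = \{z_1,\dots,z_{n+1}\}$ and a single visible output attribute $o$ computed as $o = \phi'(z_1,\dots,z_{n+1})$, and take $H = I$. Since no input attribute is visible ($I \setminus H = \emptyset$), any two input tuples are equivalent \wrt\ $H$, so the conclusion of the \US\ condition holds vacuously and $m_j$ is always \US. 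Consequently $m_j$ is \UDS\ iff it is \DS, and by the \DS\ condition (all inputs hidden) $m_j$ is \DS\ iff every input yields the same visible value of $o$, i.e.\ iff $\phi'$ is constant. Chaining these equivalences, $m_j$ is \UDS\ \wrt\ $H$ iff $\phi$ is unsatisfiable, which completes the reduction and establishes coNP-hardness in $k = n+2$.

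Finally, the EXP-time upper bound for enumerating all \UDS\ subsets is routine: under constant domain size there are at most $2^{O(k)}$ input tuples, so the full table of $m_j$ can be materialized in time $2^{O(k)}$, and for each of the $2^{k}$ candidate subsets $H \subseteq A_j$ the \DS\ and \US\ conditions are checked by inspecting all pairs of rows in time $2^{O(k)}$, for a total of $2^{O(k)}$. The main point requiring care -- and the only genuine obstacle -- is the representation model: were the relation handed over explicitly as a table, the pairwise check would be polynomial in the (exponential) table size and the problem would be easy, so the coNP-hardness relies essentially on the compact formula/circuit encoding, exactly as in the standalone analysis of \cite{DKM+11}.
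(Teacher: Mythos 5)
Your proof is correct, and at the top level it follows the same strategy as the paper: coNP-hardness via a polynomial reduction from UNSAT using a small gadget module with one Boolean output, plus the same brute-force $2^{O(k)}$ enumeration over subsets and tuple pairs for the EXP-time upper bound. The gadget, however, is genuinely different. The paper keeps one input visible: it sets $z = f(x_1,\ldots,x_n) \vee y$, hides $\{x_1,\ldots,x_n\}$, and leaves $y,z$ visible; when $f$ is satisfiable the violation manifests as a failure of \emph{upstream}-safety (the preimages of $z=1$ disagree on the visible $y$), and it is precisely the \US\ condition that handles the case where $f$ is a tautology, since downstream-safety would then still hold. You instead hide \emph{all} inputs, which makes \US\ hold vacuously, and you recover the needed ``satisfiable $\Leftrightarrow$ non-constant'' dichotomy by conjoining a fresh variable ($\phi' = \phi \wedge z_{n+1}$), so the violation is purely one of \emph{downstream}-safety. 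Each variant buys something: yours is slightly more elementary (no reasoning about preimages) and additionally establishes that deciding downstream-safety (\DS) alone is coNP-hard, which is exactly the condition relevant for general workflows in Section~\ref{sec:general-wf}, whereas the paper's gadget shows the hardness persists even when an input attribute remains visible and localizes the difficulty in the \US\ condition. Two extras in your write-up are also worth noting: the explicit coNP membership argument (the theorem claims only hardness), and the explicit observation that hardness ``in $k$'' presupposes a compact formula/circuit encoding of the module --- an assumption the paper's own reduction makes implicitly.
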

\begin{proof}[Sketch of NP-hardness]
The reduction is from the UNSAT problem, where given $n$  variables
$x_1, \cdots, x_n$, and a 3NF formula $f(x_1, \cdots, x_n)$, the
goal is to check whether $f$ is \emph{not} satisfiable. In our
construction, $m_i$ has $n+1$ inputs $x_1, \cdots, x_n$ and $y$, and
the output is $z = m_i(x_1, \cdots, x_n, y) = f(x_1, \cdots, x_n)
\vee y$ (OR). The set of %visible attributes is $\{y, z\}$; 
hidden attributes is $x_1, \cdots, x_n$ (\ie\ $y, z$ are visible). 
We claim that $f$ is not satisfiable
if and only if $m_i$ is \UDS\ \wrt %$V$. 
$H$.
%The complete argument is
%given in Appendix~\ref{sec:proof-lem:nphard-uds}.
\end{proof}

The same construction in the NP-hardness proof, with attributes $y$ and $z$ assigned cost
zero and all other attributes assigned some higher constant cost,
can be used to show that testing whether a safe subset with cost
smaller than a given threshold exists is also coNP-hard. 

\par
Regarding the upper bound, the trivial algorithm of going  over all
$2^k$ subsets $h$ of $A_j$, %$V \subseteq A_j$,  
and checking
if %$V$ 
$h$ is \UDS\ for $m_j$, can be done in EXP-time %exponential 
in $k$ when
the domain size is constant.
%(see Appendix~\ref{sec:opt-uds-upper-bound} for details).
Since the \UDS\ property is \emph{not monotone} \wrt further deletion of attributes,
if %$V$ 
$h$ is \UDS, its %subsets 
supersets may not be \UDS. 
Recall however that the trivial solution $h = A_j$ (deleting all attributes) %(i.e. the empty subset $\emptyset \subseteq A_j$)
is always \UDS\ for $m_j$.
So for practical purposes, when the public-closure 
%(to be studied in the third step) 
for a private module involves a small number of
attributes of the public modules in the closure, or if the
attributes of those public modules have small cost, % $c_a$, 
this solution
%trivial solution %($V = \emptyset$ or $\widebar{V} = A_j$) 
%(\ie\ all attributes in $A_j$ are hidden)
can be used.
The complete proof of the theorem is given in Appendix~\ref{sec:proof-thm:nphard-uds}.

\eat{
						
						\par
						
						In the
						appendix we also show that the communication complexity to solve
						these problems is exponential in the number of attributes.
						
						\red{Sudeepa: I am thinking of removing the \\
						discussion on communication complexity since \\
						the proof is very similar to the PODS paper \\
						and the reviewers complained about the\\
						appendix being too long.}
						
						\scream{Sue:  Yes, I think this would be good.  You can then change the paragraph above to
						"It can also be shown that the communication complexity of
						these problems is exponential in the number of attributes."}
}

%\paragraph*{Third Step: optimal $H_i$ for each private module}
\paragraph*{III. Optimal $H_i$ for Each Private Module}
The third step aims to find a set $H_i$ of hidden attributes, of
minimum cost, for every private module $m_i$. % \in M^-$. 
As per the
theorem statement, this set $H_i$ should satisfy the conditions: 
(a) $H_i \cap O_i %\supseteq 
= S_{i\ell}$, for some safe subset $S_{i\ell} \in
\safe_{i}$; (b) %for this $S_{i\ell}$, 
for every public
module $m_j$ in the closure %$C(\widebar{S_{i\ell}})$, 
$C(S_{i\ell})$, there exists a
\UDS\ subset $U_{jq} \in \uds_{j}$ such that $U_{jq} = A_j \cap H_i$; % A_j \setminus H_i$. 
and (c) $H_i$ does not include any attribute outside $O_i$ and $C(S_{i\ell})$.
\eat{
We will discuss solving this problem below; 
%in Section~\ref{sec:opt-single-private}, 
the motivation will be clear
from the next step.
}

\eat{
			\par
			Given a private module $m_i$, the set of safe subsets  $\safe_i$ of
			$m_i$, and the set of \UDS\ subsets $\uds_j$ for all public modules
			$m_j$, the goal of this step is to find the hidden subset $H_i$ with
			minimum cost $c(H_i)$such that $H_i \supseteq \comple{S_{i\ell}}$
			for some $S_{i\ell} \in \safe_i$ and for all $m_j \in
			C(\comple{S_{i\ell}})$, there is a $U_{jq} \in \uds_j$ such that
			$U_{jq} = A_j \setminus H_j$. We call this problem the
			\textbf{single-module} problem.
}

We show that, %fortunately 
for the important class of chain and tree
workflows, this optimization problem is solvable in time polynomial
in the number of modules $n$, the total number of attributes in the workflow $|A|$, and the
maximum number of sets in $\safe_i$ and $\uds_j$ (denoted by
$L = \max_{i \in [1, n]} p_i$):
\begin{theorem}\label{prop:step3-chain-wf}
%The single-subset problem can be solved in 
For each private module $m_i$ in a tree workflow (and therefore, in a chain workflow), the optimal subset $H_i$
can be found in %PTIME (
polynomial time in $n$, $|A|$ and
$L$. % for chain workflows.
\par
On the other hand, the problem is NP-hard when the workflow has arbitrary DAG structure
even when both the number of attributes and the number of safe and \UDS\
subsets of the individual modules are bounded by a small constant.
\end{theorem}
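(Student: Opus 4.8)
The plan is to recast Step~III, for a fixed private module $m_i$, as a minimum-cost labeling problem on the \emph{module graph} of the public closure, and then to separate the two regimes. First I would fix a safe subset $S_{i\ell}\in\safe_i$ (there are at most $L$ of them); this determines the closure $C(S_{i\ell})$, forces the entry attributes in $S_{i\ell}$ to be hidden and the remaining output attributes of $m_i$ to be visible, and reduces the task to choosing, for each public module $m_j\in C(S_{i\ell})$, one \UDS\ subset $U_{jq}\in\uds_j$ so that any two modules sharing an attribute agree on whether that attribute is hidden, minimizing $\cost$ of the hidden attributes (each counted once). I would build the graph $G$ whose vertices are the public modules of the closure and whose edges join two modules sharing an attribute; since a single-predecessor workflow has no data sharing, every attribute is produced by one module and consumed by at most one, so each edge carries a single boolean hidden/visible agreement constraint. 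Running over all $S_{i\ell}$ and keeping the cheapest feasible solution (plus $\cost(S_{i\ell})$) yields the optimal $H_i$.

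For the tree (hence chain) case, the workflow being a tree forces $G$ to be a forest. I would root each tree of $G$ at the module adjacent to $m_i$ and run a bottom-up dynamic program with state $(m_j,q)$ meaning that $m_j$ uses configuration $U_{jq}$. Assigning each attribute's cost to its producing module to avoid double counting, the value $DP[m_j][q]$ is the own-cost of $U_{jq}$ plus, over each child $m_k$, the minimum of $DP[m_k][q']$ taken over configurations $q'$ that agree with $q$ on the shared attribute (and $+\infty$ when no agreeing $q'$ exists, or when $U_{jq}$ violates the forced boundary at the root). The answer is $\min_q DP[r][q]$ summed over the roots. Each transition inspects at most $L$ child configurations and $O(|A|)$ attributes, so the computation is $O(nL^{2}|A|)$ per safe subset and $O(nL^{3}|A|)$ overall, i.e.\ polynomial in $n$, $|A|$, and $L$, as claimed.

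For the DAG case I would give a reduction from a standard NP-complete problem such as 3-SAT to the decision version ``is there a consistent choice of \UDS\ subsets of $\cost\le t$?''. The key point is that a DAG closure makes $G$ an arbitrary (undirected-cyclic) graph even without data sharing -- a module may broadcast outputs that later re-converge -- so the agreement constraints are no longer tree-structured. I would encode each variable by a copy-chain ``wire'' whose two \UDS\ states (all attributes hidden versus all visible, the only \UDS\ options for an invertible copy) represent its truth value, route and fan out these wires, and attach small \emph{constraint gadgets}. For example, a broadcast module $b_1=b_2=a$ can be checked to have \UDS\ subsets exactly $\emptyset,\{b_1\},\{b_2\}$ and the full set, so ``both outputs hidden'' is forbidden unless the input is hidden as well, giving a genuine non-product choice. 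Costs would be set so that a satisfying assignment corresponds exactly to a feasible labeling of cost $\le t$. Every gadget uses $O(1)$ attributes and $O(1)$ \UDS\ subsets, matching the ``small constant'' requirement, so that only the global DAG structure carries the hardness; correctness in both directions follows by reading off the gadget configurations.

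The main obstacle is the hardness direction, and specifically the realizability of the constraint gadgets. Because \UDS ty is highly correlated and tends toward ``all-or-nothing'' behaviour (e.g.\ an OR module admits only the empty and the full \UDS\ subset, and pure equality constraints alone would make the problem trivial), one must exhibit \emph{genuine} public module functions whose \UDS\ families realize the non-trivial relations the reduction needs, while keeping the number of attributes and of \UDS\ subsets constant, and then verify correctness of the reduction in both directions. Establishing that such gadgets exist and compose into a valid closure -- rather than silently assuming arbitrary constraint relations are available -- is where the real work lies; the tree algorithm and the bookkeeping of shared-attribute costs are routine by comparison.
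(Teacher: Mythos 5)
Your first half is essentially the paper's own algorithm: for each of the at most $L$ safe subsets, the public closure in a tree workflow is a forest hanging off $m_i$, and both you and the paper run a bottom-up dynamic program whose states are pairs (module, chosen \UDS\ subset), whose transitions require parent and child to agree on the attributes they share, and whose bookkeeping charges each attribute to exactly one endpoint to avoid double counting. (One small imprecision: the absence of data sharing does not make the constraint on an edge ``a single boolean'' --- two adjacent modules may share several attributes --- but your transition rule, read as agreement on \emph{all} shared attributes, is exactly the paper's condition $U_{j-1,q}\cap O_{j-1}=U_{j,\ell}\cap I_{j}$, so nothing breaks there.)

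The gap is in the hardness half, and you have named it yourself: you never construct the clause gadget, and you state that exhibiting genuine module functions whose \UDS\ families realize the needed constraints ``is where the real work lies.'' The paper's reduction --- also from 3SAT, with the same architecture you sketch (per-variable modules $y_i$ with a blue/red choice, fan-out chains $x_{i,j}$, clause modules requiring at least one hidden input, unit cost only on the $y_i$ outputs, threshold $n$) --- closes this loop differently: it exhibits no concrete functions at all. It simply \emph{declares} the \UDS\ families of the gadget modules; for instance, the clause module's family is ``the output together with at least one of the three inputs,'' and $y_i$'s family consists of the three sets (input plus blue output, input plus red output, input plus both). This is legitimate for the problem as the theorem formulates it, because in Step~III the families $\safe_i$ and $\uds_j$ are part of the \emph{input} (they are handed over from Steps~I and~II), so a reduction is free to specify them; no realizing Boolean function is required. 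Your realizability worry is a genuine strengthening question --- whether hardness persists when restricted to families that actually arise from modules --- and the paper does not address it either. But by treating it as a prerequisite, you leave the reduction unfinished: no clause gadget, no cost calibration, no two-directional correctness argument. As written, your NP-hardness claim is a plan rather than a proof; adopting the paper's ``abstract families'' reading of the problem would let you complete it with exactly the wire-and-gadget construction you describe.
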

%$$L = \max_{m_i \in M^- \cup M^+} p_i$$
%Next we show that fortunately this is indeed $\safe_i$ and
%possible for the common class of chain and tree workflows, but
In contrast, the problem becomes NP-hard in $n$ when the
public-closure forms an arbitrary directed acyclic subgraph, even when $L$ is a constant
and the number of attributes of the individual modules is bounded by
a small constant.

%\textbf{Chain Workflows.~~} 
Chain workflows are the simplest class of
tree-shaped workflow, hence clearly any algorithm for trees will
also work for chains. However, for the sake of simplicity,
 %first show how chain workflows can be handled:
 we %prove Theorem~\ref{prop:step3-chain-wf} 
 give the optimal algorithm for chain workflows first;
 then we discuss how it can be proved for tree workflows.
 %Then we sketch the NP-hardness proof for arbitrary DAG workflows.
 %explain how
%chains are processed.
% then extend to general trees. We start by
%proving the following.

%\begin{proof}
\eat{
					Note that to obtain an algorithm of time polynomial in $L$, for a
					given module $m_i$, we can go over all choices of safe subsets
					$S_{i\ell} \in \safe_i$ of $m_i$, compute the public-closure
					$C(\comple{S_{i\ell}})$, and choose a minimal cost subset $H_i =
					H_i(S_{i\ell})$ that satisfies the \UDS\ properties of all modules in
					the public-closure. Then, output, among them, a subset having the
					minimum cost. Consequently, it suffices to explain how, given a safe
					subset $S_{i\ell} \in \safe_i$, one can solve, in PTIME, the problem
					of finding a minimum cost hidden subset $H_i$ that satisfies the \UDS ty
					property of all modules in a subgraph formed by a given
					$C(\comple{S_{i\ell}})$.
}

\textbf{Optimal algorithm for chain workflows.~~}
Consider any private module $m_i$. Given a safe
subset $S_{i\ell} \in \safe_i$, %one can solve, in PTIME, the problem
we show below how an optimal subset $H_i$ in $C(S_{i\ell})$ satisfying the desired properties
can be obtained. We then repeat this process for all safe subsets
(bounded by $L$) $S_{i\ell} \in \safe_i$, and output the subset $H_i$ with minimum cost.
%to get an algorithm that runs in $n, |A|, L$.
We drop the subscripts to simplify the notation (\ie\ use $S$ for $S_{i\ell}$, 
$C$ for $C(S_{i\ell})$, and $H$ for $H_i$).
\eat{
			, the given safe subset $S_{i\ell}$ will be
			denoted below by $S_{i*}$, the closure $C(\widebar{S_{i\ell}})$ will
			be denoted by $C_W$, and the output hidden subset will be denoted by
			$H$.
}

%\red{mention that even if the entire wf is not chain, individual closures can be solved optimally by the following procedure?}

Our poly-time algorithm employs dynamic programming to find the optimal
$H$. First note that since $C$ is the public-closure of output
attributes for a chain workflow, $C$ should be a chain itself. Let
the modules in $C$ be renumbered as $m_1, \cdots, m_k$ in order.
Now we solve the problem by dynamic programming as follows. Let $Q$
be an $k \times L$ two-dimensional array, where $Q[j, \ell]$ denotes
the cost of minimum cost hidden subset $H^{j\ell}$ that satisfies
the \UDS\ condition for all public modules $m_1$ to $m_j$ and %$A_{j} \setminus H^{j\ell} = U_{j\ell}$ 
$A_{j} \cap H^{j\ell} = U_{j\ell} \in \uds_j$.
Here $j \leq k$, $\ell \leq p_j \leq L$, and $A_{j}$ is the attribute set of $m_j$; the actual solution can be stored easily by standard
argument.
 %\red{the constructive should to be verified}
%For simplicity in presentation, we assume that all $p_j = L$, in reality, the $j$-th row of $Q$ will have $p_j$
%entries.
\par
The initialization step is , for $1 \leq \ell \leq p_1$,
\begin{eqnarray*}
Q[1, \ell] & = & c(U_{1, \ell})~~~~ \text{if } U_{1, \ell} \supseteq S \\ 
%\comple{U_{1, \ell}} \supseteq \comple{S_{i*}}
& = & \infty~~~~ \text{otherwise}
\end{eqnarray*}
\eat{
				\begin{eqnarray*}
				Q[1, \ell] & = & c(\comple{U_{1, \ell}})~~~~ \text{if } \comple{U_{1, \ell}} \supseteq \comple{S} \\ %\comple{S_{i*}}
				& = & \infty~~~~ \text{otherwise}
				\end{eqnarray*}
}
Recall that for a chain, $O_{j-1} = I_j$, for $j = 2$ to $k$.
Then for $j = 2$ to $k$, $\ell = 1$ to $p_j$,
\begin{eqnarray*}
Q[j, \ell] & = & \infty ~~~~ \text{if there is no } 1 \leq q \leq p_{j-1}\\
& & ~~~ \text{ such that } U_{j-1, q} \cap O_{j-1} = U_{j, \ell} \cap I_{j}\\
& = & c(O_j \cap U_{j\ell}) + \min_{q} Q[j-1, q] \\
& & \text{ where the minimum is taken over all such  } q
\end{eqnarray*}
\eat{
			\begin{eqnarray*}
			Q[j, \ell] & = & \infty ~~~~ \text{if there is no } 1 \leq q \leq p_{j-1}\\
			& & ~~~ \text{ such that } \comple{U_{j-1, q}} \cap O_{j-1} = \comple{U_{j, \ell}} \cap I_{j}\\
			& = & c(O_j \cap \comple{U_{j\ell}}) + \min_{q} Q[j-1, q] \\
			& & \text{ where the minimum is taken over all such  } q
			\end{eqnarray*}
}

%\begin{eqnarray*}
%Q[j, \ell] & = & c(O_j \cap \comple{U_{j\ell}}) + \\
%& & ~~\min\left[\infty, \min_{1 \leq q \leq p_{j-1}: \comple{U_{j-1, q}} \cap O_{j-1} = \comple{U_{j, \ell}} \cap I_{j}}Q[j-1, q]\right]
%\end{eqnarray*}

%& & ~~\min\left[\infty, \min_{1 \leq q \leq p_{j-1}: \comple{U_{j-1, q}} \cap O_{j-1} = \comple{U_{j, \ell}} \cap I_{j}}Q[j-1, q]\right]
 %& & ~~\min_{1 \leq q \leq p_{j-1}: \comple{U_{j-1, q}} \cap O_{j-1} = \comple{U_{j, \ell}} \cap I_{j}}Q[j-1, q]
 %\red{widebar does not look good, bar is too short, macro ``comple''}
 It is interesting to note that such a $q$ always exists for at least one $\ell \leq p_j$: while defining \UDS\ subsets in Definition~\ref{def:uds},
we discussed that any public module $m_j$ is \UDS\ when its entire attribute set $A_j$ is hidden. Hence
%$\emptyset \in \uds_{j-1}$ and $\emptyset \in \uds_{j}$, and the respective complement subsets are $A_{j-1}$ and $A_j$, %is $\emptyset$
$A_{j-1} \in \uds_{j-1}$ and $A_j \in \uds_{j}$,
which will make the equality check true (for a chain $O_{j-1} = I_j$).
In Appendix~\ref{sec:proof-lem:correct-chain-dp} we show that shows that 
$Q[j,\ell]$ correctly stores the desired value. 
 Then the optimal solution
			$H$ has cost $\min_{1 \leq \ell \leq p_k} Q[k, \ell]$; the
			corresponding solution $H$ can be found by standard procedure, which %completes the proof.
			proves Theorem~\ref{prop:step3-chain-wf} for chain workflows.\\
 %\end{proof}
\eat{
			The following lemma (whose proof is given in
			Appendix~\ref{sec:proof-lem:correct-chain-dp}) shows that $Q[j,
			\ell]$ correctly stores the desired value. Then the optimal solution
			$H$ has cost $\min_{1 \leq \ell \leq p_k} Q[k, \ell]$; the
			corresponding solution $H$ can be found by standard procedure.
			
			\begin{lemma}\label{lem:correct-chain-dp}
			For $1 \leq j \leq k$, the entry $Q[j, \ell]$, $1 \leq \ell \leq
			p_j$,  stores the minimum cost of the hidden attributes $\cup_{x =
			1}^j A_x \supseteq H^{j\ell}  \supseteq \widebar{S_{i*}}$ such that
			$A_{j} \setminus H^{j\ell} = U_{j\ell}$, and every module $m_x, 1
			\leq x\leq j$ in the chain is \UDS\ \wrt $A_x \setminus
			H^{j\ell}$.\end{lemma}
			
			This concludes our proof.
}

 %\red{please check the proof when you have time}

%\red{TOVA: I added the following comment. It is correct, right?}

Observe that, more generally, the algorithm may also be used for 
%arbitrary 
non-chain workflows, if the public-closures of the safe
subsets for private modules have chain shape. This observation also
applies to the following discussion on tree workflows.

%\red{TOVA: We need to define more carefully tree-shaped workflows.
%For the proof to work, brunching should only happen in private
%modules.}

\textbf{Optimal algorithm for tree workflows.~~} Now consider tree-shaped workflows, where every
module in the workflow has at most one immediate predecessor (for
all modules $m_i$, %\in W$, 
if $I_i \cap O_j \neq \emptyset$ and $I_i \cap O_k
\neq \emptyset$, then $j = k$), but a module can have one or more
immediate successors.
 %It is
 %not hard to see that the above algorithm for chain workflows can be
 %used for tree workflows where branching happens only at private
 %module.

 The treatment of tree-shaped workflows is similar to what we have seen  for
 chains. Observe that, here again, since %$C_W$ 
 $C$ is the public-closure of output
 attributes for a tree-shaped workflow, %$C_W$ %should have a tree-shape itself.
 $C$
 will be a collection of trees all rooted at $m_i$.
 As for the case of chains, the processing of the public closure is based
 on dynamic-programming. The key difference is that
 the modules in the tree are processed bottom up (rather than top down as in what we
 have seen above) to handle branching. The %details are 
 proof of Theorem~\ref{prop:step3-chain-wf} for tree workflows 
 is given in Appendix~\ref{sec:step3-tree-wf}.

\eat{
			 The details of the algorithm are given in Appendix~\ref{sec:step3-tree-wf}, proving
			 the following theorem.
			\begin{theorem}\label{prop:step3-tree-wf}
			The single-subset problem can be solved in PTIME (in $n$, $|A|$ and
			$L$) for tree-shaped workflows.
			\end{theorem}
}

%Since we are only considering the class of single-predecessor workflows,
%it can be easily checked that single-predecessor workflows, it is easy to check that branching

%\red{INCOMPLETE for CHAIN!! If branching happens at a public module, which is ok, as long as none of the branches have a subsequent private module, then the
%algorithm for chain won't work directly}
\textbf{NP-hardness for public-closure of arbitrary shape.~~}
Finding the minimal-cost solution for public-closure with arbitrary DAG shape %we can show the
%following.
is NP-hard.
%\red{TOVA: REPHRASED AGAIN (CO-NP IN $n$). please check}
\eat{
				\begin{theorem}\label{lem:single-priv-soln-nphard}
				Given a private module $m_i$ in a workflow, 
				determining whether a subset $H_i$ with cost smaller than a given bound
				exists satisfying the
				required conditions is NP-hard. This is the case
				even when both the number of attributes and the number of safe and \UDS\
				subsets of the individual modules are bounded by a small constant.
				\end{theorem}
}
%The hardness proof works by a reduction from 3SAT and is given in
%The proof is by 
We give a reduction from 3SAT (see
Appendix~\ref{sec:proof-lem:single-priv-soln-nphard}). The NP
algorithm simply guesses a set of attributes and checks whether it
forms a legal solution and has cost lower than the given bound;
%\red{TOVA: ADDED, PLEASE CHECK} 
a corresponding EXP-time algorithm
that iterates over all subsets can be used to find the optimal
solution.
\par

The NP-completeness here is in $n$, the number of modules in the
public closure. We note, however, that in practice the number of
public modules that process the output on an individual private
module is small.
%\red{Should we retain this comment?}
%\scream{Sue: yes}
% \cite{...}. 
So the obtained
solution to the \textbf{optimum-view} problem is still better than
the naive one, which is exponential in the size of the {\em full}
workflow. 

%\paragraph*{Fourth step: optimal $V$ for the workflow}
\paragraph*{IV. Optimal %$V$ 
Hidden Subset $H$ for the Workflow}
According to Theorem~\ref{thm:privacy-downward}, 
%$V = A \setminus \bigcup_{i \in M^-} H_i$  
$H = \bigcup_{i: m_i \textrm{ is private}}H_i$
is a $\Gamma$-private solution for the
workflow. Observe that finding the optimal (minimum cost) such
solution %$V$ 
$H$ for single-predecessor workflows is straightforward,
once the minimum cost $H_i$-s are found: 
%The
%proof of Lemma~\ref{lem:main-private-single-pred} shows that 
Due to the condition in Theorem~\ref{thm:privacy-downward} that no unnecessary data are hidden, 
it can be easily checked that for any
 two private modules $m_i, m_k$ in a single predecessor workflow,  %\in M^-$, 
 $H_i \cap H_k = \emptyset$. %\red{Have a lemma for this claim (Rev 5)}
 %\red{make this claim a separate lemma?}
 This implies that 
 %the visible subset $V$ with the minimum cost $\widebar{V} = A \setminus
%V = \bigcup_{i \in M^-}H_i$ 
the optimal solution $H$ can be obtained taking the union of the optimal hidden
subsets $H_i$ for individual private modules obtained in the previous step.

\eat{
		As we mentioned above, the first step has been discussed in \cite{DKM+11}  and the fourth step is
		straightforward for single-predecessor workflows.  We now focus on describing solutions to the second step
		 (Section~\ref{sec:opt-standalone}) and the third step (Section~\ref{sec:opt-single-private}).
}

\section{General Workflows}\label{sec:general-wf}
The previous sections focused on single-predecessor workflows. In
particular, we presented a privacy theorem for such workflows and
studied optimization \wrt this theorem. 
The following two observations
%, from the discussion in previous sections, 
highlight how this privacy theorem 
%for single-predecessor workflows 
can be extended to general workflows.
For lack of space the discussion is informal; the proof techniques are similar to
single-predecessor workflows and are given in Appendix~\ref{sec:app-general-wf}.\\

%In this section we discuss a sufficient condition to ensure
%$\Gamma$-privacy of the workflow by hiding a subset of attributes
%that is not the trivial set of all attributes in the workflow.

%\scream{Sue: I found the first sentence quite confusing.  I reworded slightly, please check that it is correct.}
%\begin{description}
%\item[Need for propagation through private modules]
\textbf{Observation 1: Need for propagation through private modules.~~}
All examples in the previous sections that showed the necessity of the single-predecessor assumption
for private module $m_i$ had another private module $m_k$ as %its successor. 
which is a successor of one public module in the public closure of $m_i$.
For instance, in
%Example~\ref{eg:single-pred-zero}, 
the proof of Proposition~\ref{prop:single-pred-needed} (see Figure~\ref{fig:wf-no-pred})
$m_i = m_1$ and $m_k = m_4$. If
we had continued hiding output attributes of $m_4$, 
%in Example~\ref{eg:single-pred-zero}, 
we could obtain the required possible
worlds leading to a non-trivial privacy guarantee $\Gamma > 1$.  This implies that
for general workflows, the propagation of attribute hiding should
continue outside the public closure and through the descendant private
modules.\\
\eat{
All examples in the previous sections that showed the necessity of the single-predecessor assumption
had another private module $m_k$ as a
successor of the private module $m_i$ being considered. For instance, in
%Example~\ref{eg:single-pred-zero}, 
the proof of Proposition~\ref{prop:single-pred-needed} (see Figure~\ref{fig:wf-no-pred})
$m_i = m_1$ and $m_k = m_4$. If
we had continued hiding output attributes of $m_4$, 
%in Example~\ref{eg:single-pred-zero}, 
we could obtain the required possible
worlds leading to a non-trivial privacy guarantee $\Gamma > 1$.  This implies that
for general workflows, the propagation of attribute hiding should
continue outside the public closure and through the descendant private
modules.}

%\item[\DS ty  suffices (instead of \UDS ty)] The proof of
\textbf{Observation 2: \DS ty  suffices (instead of \UDS ty).~~} The proof of
Lemma~\ref{lem:main-private-single-pred} shows that the \UDS ty property
of  modules in the public-closure is needed only when some
public module in the public-closure has a private successor whose
output attributes are visible.   If all modules in the public closure
have no such private successor, then a downstream-safety
property (called the \textbf{\DS ty  property}) is sufficient. More
generally, if attribute  hiding is propagated through private
modules (as discussed above), then it suffices to require the hidden
attributes to satisfy the \DS ty  property rather than the stronger \UDS ty 
property.\\
%\end{description}

The intuition from the above two observations is formalized in a
\emph{privacy theorem for general workflows}, analogous to Theorem
\ref{thm:privacy-downward}. First, instead of public-closure, it
uses \emph{downward-closure}: for a private module $m_i$, and a set
of hidden attributes $h_i$, the downward-closure $D(h_i)$ consists
of all modules (public or private) $m_j$, that are reachable from
$m_i$ by a directed path. Second, instead of requiring the sets
$H_i$ of hidden attributes to ensure \UDS ty , it requires them to only
ensure \DS ty.

The proof of the revised theorem is similar to that of
Theorem \ref{thm:privacy-downward}, with the added complication
% due to the fact that, unlike in the previous case, here 
that the $H_i$
subsets are no longer guaranteed to be disjoint. This is resolved by
proving that \DS subsets are closed under union, allowing for the
(possibly overlapping) $H_i$ subsets computed for the individual
private modules to be %assembled together.
unioned.

%
%The complexity results from the previous section transfer to the
%case of general workflow, hence for arbitrary workflow structures
%the complexity of the resulting algorithm is exponential in the size
%of the downward closure. While the downward closure is larger than
%the public closure, it is still likely to be smaller than the
%overall network size.

The hardness results from the previous section transfer to the
case of general workflows. Since the $H_i$-s in this case may be overlapping,
the union of optimal $H_i$ solutions  for individual modules $m_i$
may not give the optimal solution for the workflow.  Whether or not there exists a non-trivial
approximation  is an interesting  open problem.

%, hence for arbitrary workflow structures
%the complexity of the resulting algorithm is exponential in the size
%of the downward closure. While the downward closure is larger than
%the public closure, it is still likely to be smaller than the
%overall network size.

\eat{
 As given in
Lemma~\ref{lem:main-private-single-pred}, for a single private
module $m_i$, and a set of hidden attributes $h_i$,
%we
%
%, we treat the private modules similar to the public modules
%and guarantee their UDS properties.
%\par
%\red{add abbreviations DS, US in defn~\ref{def:uds}}
%The idea is to
we consider the \emph{downward-closure} $D(h_i)$ %of hidden output attributes o%f a private module $m_i$,
that consists of all modules (public or private) $m_j$, that are \emph{reachable from $m_i$ by a directed path}.
Then the hidden attribute set for $m_i$, i.e. $H_i$, should ensure the downstream-safety property (\DS ty) of the
 modules $m_j \in D(h_i)$ irrespective of $m_j$ being private or public (upstream, downstream and upstream-downstream properties
 defined in Definition~\ref{def:uds} for public module, can also be defined for private module).
 The privacy theorem for general workflows is obtained by these two changes, and is formally stated in Theorem~\ref{thm:privacy-general}
 in Appendix~\ref{sec:app-general-wf}.

 \par
 Similar to Lemma~\ref{lem:main-private-single-pred} that proves Theorem~\ref{thm:privacy-downward},
 Theorem~\ref{thm:privacy-general} is proved using Lemma~\ref{lem:main-private-general}, that analyzes the hidden attributes
 $H_i$ for a single module. The key difference is that for single-predecessor workflows, the sets
 $H_i$-s for different private modules were disjoint, and therefore the proof of the privacy theorem
 is simpler. For general workflows this is not the case, and
 $H_i$ sets may overlap. Further upstream and downstream properties are
 not monotone in general \wrt inclusion of
 hidden attributes. However, we show in Lemma~\ref{lem:ds-union} that the downstream-safe subsets are
 closed under union, and therefore proving Lemma~\ref{lem:main-private-general}
 suffices to prove Theorem~\ref{thm:privacy-general}.
 \par

 }

 To conclude the discussion, note that for single-predecessor workflows, we now have two options to ensure
workflow-privacy: (i) to consider public-closures and ensure
\UDS ty  properties for their modules (following the privacy theorem for
single-predecessor workflows); or (ii) to consider
downward-closures and ensure \DS ty  properties for their modules
(following the privacy theorem for general workflows). Observe that
these two options are incomparable: Satisfying \UDS ty  properties may
require hiding more attributes than what is needed for
satisfying \DS ty  properties. On the other hand, the downward-closure
includes more modules than the public-closure (for instance the
reachable private modules), and additional attributes must be hidden to
satisfy their \DS ty  properties. One could therefore run both algorithms, and choose the lower cost solution.

%}

% \begin{theorem}\label{thm:privacy-general} ({\bf Privacy Theorem
%for General workflows}) Let $W$ be any workflow. For a private module
%$m_i$ in $W$, let  $V_i$ be a safe
%subset for $\Gamma$-standalone-privacy s.t. only output attributes
%of $m_i$ are hidden (i.e. $\widebar{V_i} \subseteq O_i$). Let
%$D(\widebar{V_i})$ be the downward-closure of $m_i$ \wrt hidden
%attributes $\widebar{V_i}$. Let $H_i$ be a set of hidden attributes
%s.t. $ \widebar{V_i} \subseteq H_i \subseteq O_i \cup \bigcup_{k \in
%D(\widebar{V_i})} A_k$
%and where for every module $m_j \in D(\widebar{V_i})$ ($m_j$ can be private or public),
%$m_j$ is downstream-safe (DS) \wrt $A_j \setminus H_i$. Then the workflow $W$ is
%$\Gamma$-private \wrt the set of visible attributes $V = A
%\setminus (\bigcup_{i \in M^-} H_i)$.
%\end{theorem}
%
%The proof is given in Appendix~\ref{sec:app-general-wf} with the formal definitions of
%downward-closure.

%\red{REMOVE PUBLIC FROM THE DEF OF UDS}.
%We may note that all examples that show the necessity of the assumption of single-predecessor
%workflows, had another private module $m_k$ as a successor of the private module $m_i$
%considered (in Example~\ref{eg:single-pred-zero}, $m_i = m_1$ and $m_k = m_4$).
%If we had continued hiding output attributes of $m_4$ in Example~\ref{eg:single-pred-zero}, we could
%obtain required possible worlds leading to privacy guarantee $\Gamma > 1$.
%To formalize this intuition, we treat the private modules similar to the public modules
%and guarantee their UDS properties.
\par

\section{Related Work}\label{sec:related}

Privacy concerns with respect to provenance were articulated in
\cite{DBLP:conf/icdt/DavidsonKRSTC11}, in the context of scientific
workflows, and in \cite{DeutchM11}, in the context of business
processes. Preserving module privacy in all-private workflows was
studied in~\cite{DKM+11} and the idea of privatizing (hiding the
``name'' of) public modules to achieve privacy in public/private
workflows was proposed. Unfortunately this is
not realistic for many common scenarios. This paper thus presents a
novel {\em propagation model} for attribute hiding which
does not place any assumptions on the user's prior knowledge 
about public modules. 
 %Privacy concerns on general workflows (that includes
 %business processes
 %\\red{CITE which D-V paper?}) has been addressed in the recent
 %keynote \cite{DeutchM11}. \red{Should we move this after the first
 %line?}
%Although privatizing public modules to achieve privacy in public/private networks was proposed in
%\cite{DKM+11} it was not systematically studied; furthermore, the idea of {\em propagating} attribute hiding is novel to this paper.

Recent work by other authors includes the development of
fine-grained access  control languages for
provenance~\cite{NiXBSH09,TanGMJMTM06,CadenheadKT11,CadenheadKKT11},
and a graph grammar approach for rewriting redaction policies over
provenance~\cite{DBLP:conf/sacmat/CadenheadKKT11}. 
The approach in \cite{DBLP:journals/pvldb/BlausteinCSAR11} provides users with
informative graph query results using {\em surrogates}, which give
less sensitive versions of nodes/edges, and proposes a utility
measure for the result.
%The paper does not address the {\em functional} privacy of modules.
% this paper is more related to structure of
%graphs and not to module functionality.
A framework to output a {\em partial} view of  a workflow that conforms to a given
set of access permissions on the connections and input/output ports
was proposed in~\cite{CCL+08}. Although related to module privacy,
the approach may disconnect connections between modules rather than just
hiding the data which flows between them; furthermore, it may hide more provenance information
than our mechanism. More importantly, the
notion of privacy is informal and no guarantees on the quality of
the solutions are provided. 
%Furthermore, it may hide more provenance information
%than our mechanism of hiding only data values. In this work, the
%notion of privacy is informal and no guarantees on the quality of
%the solutions are provided. 

A related area
%that has received considerable attention
is that of {\em privacy-preserving data mining} (see
surveys~\cite{DataM, VBF+04}, and the references therein). Here, the goal is
to hide individual data attributes while retaining the suitability of the data for
mining patterns.
%\eat{For example, the technique of \emph{anonymizing data}
%makes each record indistinguishable from a large enough set of other
%records in certain identifying attributes~\cite{Swee02, MKG+07, GFK+06}.}
Privacy preserving approaches have been studied for
\emph{social networks} (e.g. \cite{BDK07}),  %~\cite{KMN+08, BDK07, RHM+09, CT08},
\emph{auditing queries} (e.g. \cite{NMK+06}), %\cite{NMK+06, MNT08}, 
{\em network routing} \cite{gurney-2011-pvr},
and several other contexts. %\red{Remove some of these references?}
\par
Our notion of module privacy is closest to the notion
of $\ell$-diversity considered in \cite{MKG+07} which addresses some shortcomings of  
$\kappa$-anonymity \cite{Swee02}.
The notion of $\ell$-diversity tries to generalize
the values of the \emph{non-sensitive attributes} so that for
every such generalization, there are at least $\ell$ different values of
\emph{sensitive attributes}. The view-based approach for $k$-anonymity along with its complexity has been studied in \cite{Yao+05-view}.
Leakage of information due to knowledge on the techniques for minimizing data loss has been studied in \cite{Wong+07-minimality, Fang+08leakage, Cormode+10-attacks, Xiao+10-transparent}; however, our privacy guarantees are information theoretic under our assumptions.

%We defer the discussion on
%\emph{differential privacy},
%which gives a very strong privacy guarantee and is one of the most popular notion of privacy,
%to the next section where we will also
%discuss applicability of this privacy definition for our purposes.

%\textbf{Differential Privacy.~~}
Nevertheless, the privacy notion of $\ell$-diversity is susceptible to %different 
attack when the user has %certain 
background knowledge \cite{GKS+08, Kifer09}.
%As discussed  before, 
%We adopted the notion of $\ell$-diversity\cite{MKG+07} to define
%$\Gamma$-privacy for modules, which addresses some shortcomings of  
%$\kappa$-anonymity~\cite{Swee02}. 
%There are several important direction of research for this problem. 
%Ideally, we would like 
% to guarantee privacy in the presence of arbitrary prior knowledge on module functionality within the workflow.
  \emph{Differential privacy} \cite{DMN+06, Dwork08, Dwork09},
  which
 requires that the output distribution is {\em almost} invariant to
the inclusion of any particular record, %   is a popular privacy notion and 
gives a stronger privacy guarantee.
   Although it was first proposed for {\em statistical databases} and \emph{aggregate queries}, 
   %over recent years 
   it has since been studied in domains such as
   mechanism design~\cite{MT07}, data streaming~\cite{DNP+10},  %private approximations for optimization problems~\cite{GLM+10},
   and several database-related applications (e.g. \cite{KM11, XWG10, CPS+11, CMF+11}).
%   is typically quantified using {\bf differential privacy},
%which requires that the output distribution is {\em almost} invariant to
%the inclusion of any particular record
%(see~\cite{DMN+06, Dwork08, Dwork09}). 
%\red{Add a line here for database applications.}
%(see~\cite{DMN+06, GRS09, DL09, DNR+09} and
%surveys~\cite{Dwork08, Dwork09}).
%Although this is the strongest notion of privacy known to date,
%\par
However, it is well-known that no {\em deterministic} algorithm can guarantee differential privacy,
%and the standard approaches to guarantee differential privacy that involve random noise are
and the standard approach of including random noise is
% (e.g. the Laplace Mechanism)
not suitable for our purposes --- provenance queries are
typically not aggregate queries, and we need the output views to be consistent (e.g. the same module must map
the same input to the same output in all executions of the workflow).
%A more suitable mechanism for our purposes 
%can be the generic is 
\eat{
					The \emph{Exponential Mechanism} \cite{MT07}, which implements differential
					privacy in non-numeric domains, can be applied to our problem
					by suitably defining the feasible output space.
					%can handle an
					%arbitrary range. %, and outputs elements from the range with probability proportional to exponential of an 
					However, our initial research shows that even if we ignore the time complexity of implementing this mechanism,
					%and define \emph{neighboring workflows} and \emph{consistent outputs} suitable to our application,
					the utility guarantee provided is often trivial due to the nature of provenance queries.
}
Defining an appropriate notion of differential privacy for module functionality
% (standalone or in a workflow)
\wrt provenance queries is an interesting open problem. It would also be interesting to study 
natural attacks for our application, and (theoretically or empirically) 
study the effectiveness of various notions of privacy under these attacks 
(see e.g. \cite{Cormode11}).
%\par
%Since our privacy concern is on the $\angb{\tup{x}, m_j(\tup{x})}$
%pairs, we can try to employ differential privacy by defining \emph{a neighboring workflow}
%of the given workflow $W$ as a workflow formed by the same modules (join of the corresponding relations)
%except that for a single module $m_j$, for a single input $\tup{x}$ to $m_j$, 
%the value of $m_j(\tup{x})$ is changed. Instead of the original relation $R$ for workflow $W$,
%we can output the relation $R'$ of a valid workflow (formed by the join of valid module relations)
%with certain probability. Then the generic exponential mechanism can be employed 
%(output a valid relation $R'$ with probability $\propto \epsilon d(R, R')$, where $d(R, R')$
%is the utility of $R'$ \wrt $R$) to guarantee $\epsilon$-differential privacy. 
%Even if we ignore the time complexity to implement exponential mechanism, 
%the utility guarantee of exponential mechanism has a factor logarithmic in the size of the range of the mechanism,
%and it is not hard to see that this makes the utility guarantee trivial. 	Though this does
%not show any lower bound for differential privacy in our application, nevertheless, this argument
%shows that any meaningful mechanism for workflow provenance that uses differential privacy definition
%needs new ideas which is a very interesting direction for future work.

\eat{
\red{Include the references pointed out by reviewer 3}.

\scream{Add discussions/acknowledge on (pointed \\
out by Rev-3): (1) how can the ``unrealistic''\\
assumptions handle our motivations proprietary\\
 gene sequencing and  medical diagnosis,\\
  (2) how can large unbounded data  values \\
  be modeled using relations enumerating all\\
  possible executions, (3) ``user has no knowledge''.\\
Discuss in conclusions?}
}

%%%%%%%%%%%%%%%%%%%%%%%%%% END OF FILE %%%%%%%%%%%%%%%%%%%%%%%%%%%%%%%%%%%%%%%

\eat{
Privacy in {\em statistical databases} is typically quantified using {\em differential privacy},
which requires that the output distribution is {\em almost} invariant to
the inclusion of any particular record
(see~\cite{DMN+06, GRS09, DL09, DNR+09} and
surveys~\cite{Dwork08, Dwork09}).
\eat{Although this is the strongest notion of privacy known to date,
no {\em deterministic} algorithm can guarantee differential privacy.
Furthermore,  queries ask for aggregate information over the data items
in the databases (e.g. statistical queries).}
It is not clear how to extend this work to provenance queries, in which users are interested in
which modules and data values were responsible for producing a particular data item.
\eat{Since provenance in scientific workflows is also used to ensure reproducibility of experiments,
these data values must be accurate; adding random noise to provenance information may render it
useless. }
Nevertheless, it would be indeed interesting to see if the notion of
\emph{privacy of functions} that we introduce in this paper lends itself to
a new notion of {\em differential privacy of functions}.

\begin{itemize}
    \item Add secure-multiparty computation, learning boolean functions, Miklau-Suciu paper
    to hide private views etc. in the related work.
\end{itemize}

\scream{Address reviewer's concerns:} (i) non-additive cost, hard to specify and optimize,
hardness results even for additive, (ii) hiding function not the algo to implement the function,
(iii) public-private already existing, prior belief changes will be a possible future work.

\eat{
Privacy in statistical databases
is often quantified using the framework of {\em differential privacy},
which requires that the output distribution is {\em almost} invariant to
the inclusion of any particular record
(see~\cite{DMN+06, GRS09, DL09, DNR+09} and
surveys~\cite{Dwork08, Dwork09}).
However, in the context of data provenance, the goal is to hide module
functionality rather than particular data items, hence there does not seem
to be an obvious extension of the notion of differential privacy to our context.
%Further, for provenance
%information to
%be useful, individual data items that are revealed must be accurate,
%especially for applications such as ensuring reproducibility of
%experiments. Thus, adding noise to provenance information renders it
%useless.
It would be interesting to see if the notion of
\emph{privacy of functions} that we introduce in this paper lends itself to
a new notion of {\em differential privacy of functions}.
}

%%%%%%%%%%%%% Descriptions of related papers

\eat{NiXBSH09
Argues that need an access control language that supports both fine-grained policies and personal preferences, and decision aggregation from different applicable policies. Define provenance of a piece of data as the documentation of messages, operations, actors, preferences and context that led to that piece of data. 1. Provenance access control must be fine-grained, certain portions of the provenance records may be only accessible to certain parties. 2. Provenance access control may have to constrain data accesses to address both security and privacy. E.g. electronic healthcare records should comply with organizational security policies (e.g. "need to know") as well as privacy regulations (e.g. HIPAA). 3. Provenance access control may need both originator control and usage control. Proposed language consists of target (subject and record), condition (context requirements), effect (consequence for a "true" evaluation of policy), and obligations (operation that should be executed before the condition in the policy is evaluated, in conjunction with the enforcement of tan authorization decision, or after the execution of the access). Applicable policy: a policy is applicable to a query iff the component-wise intersection of the target space of the policy and the query space generated neither an empty user set nor an empty records set. Influenced by the XACML language. Also references their earlier model called Privacy-aware Role-based Access Control.}

}
\eat{DBLP:conf/ipaw/TanGMJMTM06
Provenance within an SOA: p-assertion is an assertion made by an actor pertaining to any aspect of a process. The documentation of a process is a set of p-assertions made by all the actors involved in the process. 1. Need to enforce access control over process documentation. It is likely that a single p-assertion could belong within many groups, so a user may be able to gather together a set of p-assertions by their inclusion in multiple groups, each of which can see one of the p-assertions. 2. Trust framework for actors and provenance stores. 3. Accountability and liability for p-assertions. 4. Sensitivity of information in p-assertions (e.g. the contents of a message). 5. Long-term storage of p-assertions. 6. Creating authorizations for new p-assertions. There are also scalability related security issues
}

\eat{CadenheadKT11,CadenheadKKT11
Need to protect the relationships between data and their sources, so traditional access control is not enough. Example: want to give access to everything in a patient's record that was updated by processes controlled only by the patient's physician and surgeon. Augment traditional access control policy language with regular expressions. Access control policy authorizes a set of users to perform a set of actions on a set of resources within an environment. Gives nice overview of access control policy groupings. Number of "resources" in a provenance graph is exponential in the number of nodes in the graph. Extends Ni et al's language with regular expressions. Nice summary of related work.
\par
Redaction completely or partially removes sensitive attributes of information before sharing it. Propose a graph grammar approach for rewriting redaction policies over provenances. Converts a high level specification of a redaction policy into a graph grammar rule that transforms a provenance graphs. Current commercially available redaction tools block our (or delete( sensitive parts of a document which are available as text and images, but are not applicable to DAGs the model relationships as well. A policy specifies how to replace a sensitive subset of the graph with another graph in order to redact sensitive content. Model provenance as a restricted RDF graph that has causality and is acyclic. References work by Koch. Doesn't use a schema.
}

\eat{gurney-2011-pvr
Internet Service Providers typically do not reveal details of
their interdomain routing policies due to security concerns,
or for commercial or legal reasons. As a result, it is difficult
to hold ISPs accountable for their contractual agreements.
Existing solutions can check basic properties, e.g., whether
route announcements correspond to valid routes, but they do
not verify how these routes were chosen. In essence, today’s
Internet forces us to choose between per-AS privacy and verifiability.
In this paper, we argue that making this difficult tradeoff
is unnecessary. We propose private and verifiable routing
(PVR), a technique that enables ISPs to check whether their
neighbors are fulfilling their contractual promises to them,
and to obtain evidence of any violations, without disclosing
information that the routing protocol does not already reveal.
As initial evidence that PVR is feasible, we sketch a PVR
system that can verify some simple BGP policies. We conclude
by highlighting several research challenges as future
work.
}

\eat{DBLP:journals/pvldb/BlausteinCSAR11,
Many applications, including provenance and some analyses of
social networks, require path-based queries over graphstructured
data. When these graphs contain sensitive
information, paths may be broken, resulting in uninformative
query results. This paper presents innovative techniques that
give users more informative graph query results; the techniques
leverage a common industry practice of providing what we call
surrogates: alternate, less sensitive versions of nodes and edges
releasable to a broader community. We describe techniques for
interposing surrogate nodes and edges to protect sensitive graph
components, while maximizing graph connectivity and giving
users as much information as possible. In this work, we
formalize the problem of creating a protected account G' of a
graph G. We provide a utility measure to compare the
informativeness of alternate protected accounts and an opacity
measure for protected accounts, which indicates the likelihood
that an attacker can recreate the topology of the original graph
from the protected account. We provide an algorithm to create a
maximally useful protected account of a sensitive graph, and
show through evaluation with the PLUS prototype that using
surrogates and protected accounts adds value for the user, with
no significant impact on the time required to generate results for
graph queries.
}

\eat{
\scream{Any use of possible world model beyond\\
 $\Gamma$-privacy? (Rev-3)}
 
\scream{How hiding data values can be used in\\
 workflow provenance that randomization for\\
  differential-privacy cannot achieve? \\
  Mention some relevant tasks. \\
  How cost can be assigned apriori? (Rev-3)}
}
\section{Conclusion}\label{sec:conclusion}
In this paper, we addressed the problem of preserving module privacy in public/private workflows
(called workflow-privacy), 
by providing a view of provenance information in which the input to output mapping of private 
modules remains hidden. 
%Using the notion of $\ell$-diversity,
%we define $\Gamma$-standalone-privacy when a module is standalone and $\Gamma$-workflow-privacy 
%when the module is in a workflow.
As several examples in this paper show, the workflow-privacy of a module critically depends on the structure (connection patterns) of the workflow, the behavior/functionality of other modules in the workflow, and the selection of hidden attributes.
We showed how workflow-privacy can be achieved by propagating %the requirement of 
data hiding
through public modules in both single-predecessor and general workflows.
\eat{
%We give a strong 
%Our main technical contribution is a privacy theorem for an %restricted, but common, 
We show that for an important class of workflows called single-predecessor workflows,
workflow-privacy can be achieved via \emph{propagation} through public modules only, provided 
we maintain an invariant on the propagating modules called the UDS property.
%which says that $\Gamma$-workflow-privacy can be achieved by taking solutions for the standalone private modules and expanding them to a public-closure following the UDS condition.  Furthermore, this class of workflows is the largest  class for which assembly using expansion which is limited to a public closure will succeed.    
%In addition, we show that the theorem does not hold outside this class. 
On the other hand, for general workflows, we show that even though %continued 
propagation 
through both public and private modules is necessary, a weaker invariant (called the DS property)
on the propagating modules suffices. We also study related optimization problems.
}

\eat{
%From Susan's changes:
which says that $\Gamma$-workflow-privacy can be achieved by taking solutions for the standalone private modules and expanding them to a public-closure following the UDS condition.  Furthermore, this class of workflows is the largest  class for which assembly using expansion which is limited to a public closure will succeed.    We then show that for general workflows, continued propagation through all public and private module successors is necessary, but that only the DS property is needed.   We also study the related optimization problems.
}
\par
\par
Several interesting future research directions related to the application of differential privacy 
were discussed in Section~\ref{sec:related}. 
We assumed certain assumptions in the paper (constant domain size, acyclic nature of workflows, analysis using
relations of executions, etc.). Even with these assumptions, the problem is highly non-trivial and
large and important classes of workflows can be captured even under these assumption.
However, it would be immensely important to have models and solutions that can be used in scientific experiments
in practice. We have also mentioned the shortcomings of the $\Gamma$-privacy and the difficulty in 
using stronger privacy notions like differential privacy in the previous section.
It will be interesting to see if the possible world model thoroughly studied in this
paper can be used to facilitate the use of other privacy models under provenance queries.
\eat{
Another interesting problem is to develop PTIME approximation algorithms for module privacy 
(that can handle non-monotonicity of UDS and DS subsets) in single-predecessor and 
general workflows.
}
\eat{
The DS subsets for a module is not monotone w.r.t. inclusion of input attributes, and the UDS subsets are not monotone
w.r.t. inclusion of both input and output attributes which makes the natural choices (for instance, the greedy algorithm) 
non-applicable, and the study of approximation algorithms for this type of non-monotone problem may also be of independent interest.
}

%Furthermore,  queries ask for aggregate information over the data items
%in the databases (e.g. statistical queries).}
%It is not clear how to extend this work to provenance queries, in which users are interested in
%which modules and data values were responsible for producing a particular data item.
%\eat{Since provenance in scientific workflows is also used to ensure reproducibility of experiments,
%these data values must be accurate; adding random noise to provenance information may render it
%useless. }
%Nevertheless, it would be indeed interesting to see if the notion of
%\emph{privacy of functions} that we introduce in this paper lends itself to
%a new notion of {\em differential privacy of functions}.

%{\scriptsize
%\bibliographystyle{abbrv}
%\bibliography{secview-propagation}
%}

{\small

}

\newpage

\appendix
\section{Proofs from Section~4}
\subsection{Proof of Proposition~\ref{prop:single-pred-needed}}\label{sec:proof-prop:single-pred-needed}

By Definition~\ref{def:single-pred-wf}, a workflow $W$ is \emph{not}
a single-predecessor workflow, if one of the following holds: (i) there is a public module $m_j$ in $W$
that belongs to the public-closure of a private module $m_i$ but has
no directed path from $m_i$, or, (ii) such a public module $m_j$
has directed path from more than one private modules, or,
(iii) $W$ has data sharing. 

To prove the proposition we provide three example workflows where
exactly one of the violating conditions (i), (ii), (iii) holds, and
Theorem~\ref{thm:privacy-downward} does not hold in those workflows.
Case (i) was shown in Section \ref{sec:privacy-thm-eg}. To complete
the proof we demonstrate here cases (ii) and (iii).
%We start by considering data sharing and give an example in which
%the second condition of Definition~\ref{def:single-pred-wf} holds,
%but in which the propagation only through public modules does not
%suffice due to data sharing.

\paragraph*{Multiple private predecessor} We give an example
where Theorem~\ref{thm:privacy-downward} does not hold when 
a public module belonging to a public-closure has more than
one private predecessors.
\begin{example}\label{eg:single-pred}
Consider the workflow $W_a$ in Figure~\ref{fig:wf-twopaths}, %{fig:wf-private-pred} (b),
which is a modification of $W_a$ by the addition of  private module
$m_0$, that takes $a_0$ as input and produces $a_2 = m_0(a_0) = a_0$
as output. The public module $m_3$ is in public-closure of $m_1$,
but has directed public paths from both $m_0$ and $m_1$. The
relation $R_a$ for $W_a$ in given in Table~\ref{tab:reln-mult-pred}
where the
 hidden attributes $\{a_2, a_3, a_4, a_5\}$ are colored in grey.

%\begin{tabular} {cc|cccc|c}
%$a_0$ & $a_1$ & $a_2$ & $a_3$ & $a_4$ & $a_5$ & $a_6$ \\ \hline
%0&  0 &  0 & 0 & 0 & 0 & 0 \\
%1&  0 & 1 & 0 & 1 & 1 & 1 \\
%0&  0 &1 &  0 & 1 & 1 & 1 \\
%1 &1 & 1 &1 &1 &1 & 1
%\end{tabular}

\begin{table}[t]
\centering
\begin{tabular} {r|cc|cccc|c|}
\cline{2-8}
&$a_0$ & $a_1$ & \cellcolor[gray]{0.6}$a_2$ & \cellcolor[gray]{0.6}$a_3$ & \cellcolor[gray]{0.6}$a_4$ & \cellcolor[gray]{0.6}$a_5$ & $a_6$ \\ \hline\hline
 $r_1$ & 0&  0 &  \cellcolor[gray]{0.6}0 & \cellcolor[gray]{0.6}0 & \cellcolor[gray]{0.6}0 &\cellcolor[gray]{0.6} 0 & 0 \\\hline
 $r_2$ & 1&  0 & \cellcolor[gray]{0.6}1 & \cellcolor[gray]{0.6}0 & \cellcolor[gray]{0.6}1 & \cellcolor[gray]{0.6}1 & 1 \\\hline
 $r_3$ & 0&  0 &\cellcolor[gray]{0.6}1 &  \cellcolor[gray]{0.6}0 & \cellcolor[gray]{0.6}1 & \cellcolor[gray]{0.6}1 & 1 \\\hline
 $r_4$ & 1 &1 & \cellcolor[gray]{0.6}1 &\cellcolor[gray]{0.6}1 &\cellcolor[gray]{0.6}1 &\cellcolor[gray]{0.6}1 & 1\\\hline
\end{tabular}
\caption{Relation $R_a$ for workflow $W_a$ given in
Figure~\ref{fig:wf-twopaths}} \label{tab:reln-mult-pred}
\end{table}

%\scream{Check this}
Now we have exactly the same problem as before: When $\widehat{m}_1$ maps 0 to
1, $a_5 = 1$ irrespective of the value of $a_4$.  In the first row
$a_6 = 0$, whereas  in the second row  $a_6 = 1$.  However, whatever
the new definitions of $\widehat{m}_0$ are for $m_0$ and $\widehat{m}_4$ for $m_4$,
$\widehat{m}_4$ cannot map 1 to both 0 and 1. Hence $\Gamma = 1$. $\Box$
\end{example}

\begin{figure}[t]


\centering
\subfloat[{\small $m_3$ has paths from $m_0,
m_1$}]{\label{fig:wf-twopaths}
\includegraphics[scale=0.25]{wf-mult-pred.eps}
}~~~~~~~
 \subfloat[{\small $a_3$ is shared as input to $m_2, m_3$}]{\label{fig:wf-datashare}
\includegraphics[scale=0.25]{wf-data-sharing.eps} %,width=0.47\linewidth,clip=}
} 
\caption{Proof of Proposition~\ref{prop:single-pred-needed}: (a) Multiple private predecessors, (b) Data sharing. 
White modules are public, Grey are private.} 
\label{fig:wf-private-pred}
\end{figure}

\paragraph*{Data sharing} Now we give an example
where Theorem~\ref{thm:privacy-downward} does not hold when 
the workflow has data sharing.
\begin{example}\label{eg:data-share}
Consider the workflow, say $W_b$, %with four modules $m_1, m_2, m_3, m_4$ as
 given in Figure~\ref{fig:wf-datashare}. %{fig:wf-private-pred}(a). 
 All attributes take
 %given in Figure~\ref{fig:wf-datashare-uds}(b). All attributes take
 values in $\{0, 1\}$. The initial inputs are $a_1, a_2$, and  final
outputs are $a_6, a_7$; only $m_4$ is public.  The functionality of
modules is as follows: (i) $m_1$ takes $a_1, a_2$ as input and
produces $m_1(a_1, a_2) = (a_3 = a_1, a_4 = a_2)$. (ii) $m_2$ takes
$a_3, a_4$ as input and produces $a_5 = m_2(a_3, a_4) = a_3 \vee
a_4$  (OR). (iii) $m_3$  takes $a_5$ as input and produces $a_6 =
m_3(a_5) = a_5$. (iv) $m_4$ takes $a_3$ as input and produces $a_7 =
m_4(a_3) = a_3$. Note that data $a_3$ is input to both $m_2, m_4$,
hence the workflow has data sharing.

Now focus on private module $m_1 = m_i$.  Clearly hiding output
$a_3$ of $m_1$ gives $2$-standalone privacy. and for hidden
attribute $h_i = \{a_3\}$, the public-closure $C(h_i) = \{m_2\}$. As
given in the theorem, $H_i \subseteq O_i \cup \bigcup_{j: m_j \in
C(h_i)} A_j$
%\bigcup_{m_{\ell} \in M^-}A_{\ell}$ 
$= \{a_3, a_4, a_5\}$ in this case.
%The attributes reachable from
%$O_1 = \{a_3, a_4\}$ by public paths in the undirected sense are $\{a_3, a_4, a_5\}$.
%The input to $m_1$, i.e. $a_1, a_2$, and the output from private modules $m_3, m_4$,  i.e. $a_6, a_7$
%are not reachable by public paths from outputs $a_3$ or  $a_4$ of $m_1$.

 We claim that hiding even all of
$\{a_3, a_4, a_5\}$ gives only trivial 1-workflow-privacy of $m_1$,
although the \UDS\ condition is satisfied for $m_2$ (actually hiding
$a_3, a_4$ gives 4-standalone-privacy for $m_1$).
Table~\ref{tab:reln-data-share} gives the relation $R_b$, where the
hidden attribute values are in Grey.

\begin{table}[t]
{\scriptsize
\centering
\begin{tabular} {r|cc|ccc|cc|}
\cline{2-8}
& $a_1$ & $a_2$ & \cellcolor[gray]{0.6}$a_3$ &
\cellcolor[gray]{0.6}$a_4$ & \cellcolor[gray]{0.6}$a_5$ & $a_6$ &
$a_7$ \\ \hline \hline
$r_1$ & 0 & 0 & \cellcolor[gray]{0.6}0 &
\cellcolor[gray]{0.6}0 & \cellcolor[gray]{0.6}0 & 0 & 0\\\hline
$r_2$ & 0 & 1 & \cellcolor[gray]{0.6}0 & \cellcolor[gray]{0.6}1 &
\cellcolor[gray]{0.6}1 & 1 & 0 \\\hline $r_3$ & 1 & 0 &
\cellcolor[gray]{0.6}1 & \cellcolor[gray]{0.6}0 &
\cellcolor[gray]{0.6}1 & 1 & 1 \\\hline $r_4$ & 1 & 1 &
\cellcolor[gray]{0.6}1 & \cellcolor[gray]{0.6}1 &
\cellcolor[gray]{0.6}1 & 1 & 1 \\\hline
\end{tabular}
\caption{Relation $R_b$ for workflow $W_b$ given in
 %Figure~\ref{fig:wf-datashare-uds} (b)
 Figure~\ref{fig:wf-private-pred}. %{fig:wf-private-pred}(a)
 } \label{tab:reln-data-share}
 }
\end{table}

When $a_3$ (and also $a_4$) is hidden, a possible candidate output
of input tuple $x = (0, 0)$ to $m_1$ is $(\ul{1}, \ul{0})$. So we
need to have a possible world where $m_1$ is redefined as $\widehat{m}_1(0, 0)
= (1, 0)$. Then $a_5$ takes value 1 in the first row, and this is
the only row with visible attributes $a_1 = 0, a_2 = 0$. So this
requires that $\widehat{m}_3(a_5 = 1) = (a_6 = 0)$ and $\widehat{m}_4(a_3 = 1) = (a_7 =
0)$, to have the same projection on visible $a_6, a_7$.

The second, third and fourth rows, $r_2, r_3, r_4$, have $a_6 = 1$,
so to have the same projection, we need $a_5 = 0$ for these three
rows, so we need $\widehat{m}_3(a_5 = 0) = (a_6 = 1)$ (since we had to already
define $\widehat{m}_3(1) = 0$). When $a_5$ is 0, since the public module $m_2$
is an OR function, the only possibility of the values of $a_3, a_4$
in rows $r_2, r_3, r_4$ are $(0, 0)$. Now we have a conflict on the
value of the visible attribute $a_7$, which is 0 for $r_2$ but 1 for
$r_3, r_4$, whereas for all these rows the value of $a_3$ is 0.
$\widehat{m}_4$ being a function with dependency $a_3 \rightarrow a_7$, cannot
map $a_3$ to both 0 and 1. Similarly we can check that if $\widehat{m}_1(0, 0)
= (0, 1)$ or $\widehat{m}_1(0, 0) = (1, 1)$ (both $a_3, a_4$ are hidden), we
will have exactly the same problem. Hence all possible worlds of
$R_{d}$ with these hidden attributes must map $\widehat{m}_1(0, 0)$ to $(0,
0)$,
%Hence there will not be any possible world
and therefore $\Gamma = 1$. $\Box$
\end{example}

\eat{

\begin{figure}[t!]
\centering
 %\includegraphics[scale=0.2]{figures/wf-no-pred.eps}
 %\subfloat[{\small $a_3$ is shared as input to $m_2, m_3$}]{\label{fig:wf-datashare}
\includegraphics[scale=0.2]{figures/wf-data-sharing.eps} %,width=0.47\linewidth,clip=}
\includegraphics[scale=0.2]{figures/wf-mult-pred.eps}
\caption{White modules are public, Grey are private
  % $m_2$ is in public-closure of $m_1$ but does not have any directed path from $m_1$,
   (a) $a_3$ is shared as input to $m_2, m_3$
   (b) $m_2$ is in public-closure of $m_1$, but has directed paths from both $m_0, m_1$.
} \label{fig:wf-private-pred}
\end{figure}

}

%\begin{table}[ht]
%\centering
%\begin{tabular} {cc|cccc|c}
%$a_0$ & $a_1$ & \cellcolor[gray]{0.6}$a_2$ & \cellcolor[gray]{0.6}$a_3$ & \cellcolor[gray]{0.6}$a_4$ & \cellcolor[gray]{0.6}$a_5$ & $a_6$ \\ \hline
%0&  0 &  \cellcolor[gray]{0.6}0 & \cellcolor[gray]{0.6}0 & \cellcolor[gray]{0.6}0 &\cellcolor[gray]{0.6} 0 & 0 \\
%1&  0 & \cellcolor[gray]{0.6}1 & \cellcolor[gray]{0.6}0 & \cellcolor[gray]{0.6}1 & \cellcolor[gray]{0.6}1 & 1 \\
%0&  0 &\cellcolor[gray]{0.6}1 &  \cellcolor[gray]{0.6}0 & \cellcolor[gray]{0.6}1 & \cellcolor[gray]{0.6}1 & 1 \\
%1 &1 & \cellcolor[gray]{0.6}1 &\cellcolor[gray]{0.6}1 &\cellcolor[gray]{0.6}1 &\cellcolor[gray]{0.6}1 & 1
%\end{tabular}
%\caption{Relation for workflow given om }
%\end{table}

\subsection{Proof of Lemma~\ref{lem:main-private-single-pred}}\label{sec:proof-lem:main-private-single-pred}
%\subsection{Proof of Lemma~\ref{lem:out-x-output}}\label{sec:proof-lem:out-x-output}
%\red{Redo this section}
The proof of Lemma~\ref{lem:main-private-single-pred} uses the following lemma.
%(proved in Appendix~\ref{sec:proof-lem:out-x-output}) 
It %states a relation between 
states that the 
if $\tup{y}$ is a candidate output of an input $\tup{x}$ to module $m_i$ with respect to
hidden attributes $h_i$ (\ie\ $\tup{y} \in \Out_{\tup{x}, m_i, h_i}$),
then 
$\tup{y}$ and the actual output of $\tup{x}$, $\tup{z} = m_i(\tup{x})$, 
must be equivalent.

\begin{lemma}\label{lem:out-x-output}
Let $m_i$ be a standalone private module with relation $R_i$, let $\tup{x}$
be an input to $m_i$, and let $h_i \subseteq O_i$ be a
%subset of visible attributes such that  $\widebar{V_i} \subseteq O_i$ (only output attributes are hidden).
subset of hidden attributes. 
If $\tup{y} \in \Out_{\tup{x}, m_i, h_i}$ then $\tup{y} \equiv_{h_i} \tup{z}$
where $\tup{z} = m_i(\tup{x})$.
\end{lemma}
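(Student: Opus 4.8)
The plan is to unwind the definition of $\Out_{\tup{x}, m_i, h_i}$ and exploit the one hypothesis that distinguishes this setting, namely $h_i \subseteq O_i$ (only output attributes are hidden, so every input attribute of $m_i$ is visible). First I would use the assumption $\tup{y} \in \Out_{\tup{x}, m_i, h_i}$ to obtain, via Definition~\ref{def:standalone-privacy}, a possible world $R' \in \Worlds(R_i, h_i)$ together with a witness tuple $\tup{t'} \in R'$ such that $\proj{I_i}{\tup{t'}} = \tup{x}$ and $\proj{O_i}{\tup{t'}} = \tup{y}$.

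Next I would pull this witness back into the real relation. Since $R' \in \Worlds(R_i, h_i)$, Definition~\ref{def:pos-worlds-standalone} gives $\proj{A_i \setminus h_i}{R'} = \proj{A_i \setminus h_i}{R_i}$, so there is a tuple $\tup{t} \in R_i$ agreeing with $\tup{t'}$ on all visible attributes, i.e.\ $\proj{A_i \setminus h_i}{\tup{t}} = \proj{A_i \setminus h_i}{\tup{t'}}$. This is exactly where $h_i \subseteq O_i$ is essential: it forces $I_i \subseteq A_i \setminus h_i$, so agreement on the visible attributes already pins down the input, giving $\proj{I_i}{\tup{t}} = \proj{I_i}{\tup{t'}} = \tup{x}$. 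Setting $\tup{z} = \proj{O_i}{\tup{t}}$ and invoking the functional dependency $I_i \rightarrow O_i$ satisfied by $R_i$ then identifies $\tup{z}$ with $m_i(\tup{x})$, the actual output named in the statement.

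Finally I would restrict the visible-attribute agreement to the output attributes. From $\proj{A_i \setminus h_i}{\tup{t}} = \proj{A_i \setminus h_i}{\tup{t'}}$ I project onto $O_i$ to obtain $\proj{O_i \setminus h_i}{\tup{z}} = \proj{O_i \setminus h_i}{\tup{y}}$, and since both $\tup{y}$ and $\tup{z}$ are tuples over $O_i$, Definition~\ref{def:equiv} (taking $B = O_i$ and $H = h_i$) yields precisely $\tup{y} \equiv_{h_i} \tup{z}$, as required.

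Since this is essentially a chain of definitional substitutions, I do not anticipate any serious obstacle. The only point requiring genuine care is the use of $h_i \subseteq O_i$: it is exactly what guarantees the inputs are visible, so that matching $\tup{t'}$ to a real tuple $\tup{t}$ preserves the input $\tup{x}$ and lets the functional dependency identify the true output $\tup{z} = m_i(\tup{x})$. Without this hypothesis a hidden input attribute could let $\tup{t}$ carry an input different from $\tup{x}$, and the conclusion would break down; so I would flag that assumption explicitly at the step where the inputs are matched.
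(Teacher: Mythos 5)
Your proposal is correct and follows essentially the same argument as the paper's proof: unwinding Definition~\ref{def:standalone-privacy} to get a witness tuple $\tup{t'}$ in a possible world, pulling it back to a tuple $\tup{t} \in R_i$ via Definition~\ref{def:pos-worlds-standalone}, using $h_i \subseteq O_i$ to match the inputs, and projecting onto $O_i \setminus h_i$ to conclude $\tup{y} \equiv_{h_i} \tup{z}$. Your explicit flagging of where the hypothesis $h_i \subseteq O_i$ is indispensable matches the paper's use of that assumption exactly.
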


Note that, in Example~\ref{eg:main-lemma}, $\tup{y} = (\ul{1}, 0)$ and $\tup{z} = (\ul{0}, 0)$ are equivalent on the visible attributes
as Lemma~\ref{lem:out-x-output} says (hidden attributes are underlined). 
\eat{
				\textsc{Lemma}~\ref{lem:out-x-output}.~~ %\label
				\emph{Let $m_i$ be a standalone private module with relation $R_i$, let $\tup{x}$
				be an input to $m_i$, and let $V_i$ be a
				subset of visible attributes such that  $\widebar{V_i} \subseteq O_i$ (only output attributes are hidden).
				If $\tup{y} \in \Out_{\tup{x}, m_i, V_i}$ then $\tup{y} \equiv_{V_i} \tup{z}$
				where $\tup{z} = m_i(\tup{x})$.}\\
}
\begin{proof}
A subset of output attributes of $m_i$, $h_i \subseteq O_i$, is hidden. Recall that $A_i = I_i \cup O_i$
denotes the set of attributes of $m_i$ and let $R_i$ be the standalone relation for $m_i$.
If $\tup{y} \in \Out_{\tup{x}, m_i, h_i}$, %\wrt visible attributes $V_i$, 
then from Definition~\ref{def:standalone-privacy},
\begin{equation}
\exists R' \in \Worlds(R_i, {h_i}),~~\exists \tup{t'} \in R'~~ s.t~~
  \tup{x} = \proj{I_i}{\tup{t'}} \wedge \tup{y}=\proj{O_i}{\tup{t'}}
  \end{equation}
  Further, from Definition~\ref{def:pos-worlds-standalone}, $R' \in \Worlds(R_i, {h_i})$
  only if $\proj{A_i \setminus h_i}{R_i} = \proj{A_i \setminus h_i}{R'}$. Hence there must exist a tuple $\tup{t} \in R_i$
  such that
  \begin{equation}
  \proj{A_i \setminus h_i}{\tup{t}} = \proj{A_i \setminus h_i}{\tup{t'}} \label{equn:out-x-1}
  \end{equation}
  Since $h_i \subseteq O_i$, $I_i \subseteq A_i \setminus h_i$.
  From~(\ref{equn:out-x-1}), $\proj{I_i}{\tup{t}} = \proj{I_i}{\tup{t'}} = \tup{x}$.
  Let %$\tup{x'} = \proj{I_i}{\tup{t'}}$ and
  $\tup{z} = \proj{O_i}{\tup{t}}$, i.e. $\tup{z} = m_i(\tup{x})$.
  From~(\ref{equn:out-x-1}), $\proj{O_i \setminus h_i}{\tup{t}} = \proj{O_i \setminus h_i}{\tup{t'}}$, then
  $\proj{O_i \setminus h_i}{\tup{z}} = \proj{O_i \setminus h_i}{\tup{y}}$.
  Tuples $\tup{y}$ and $\tup{z}$ are defined on $O_i$. Hence from Definition~\ref{def:equiv},
  $\tup{y} \equiv_{h_i} \tup{z}$.
\end{proof}

\begin{corollary}\label{cor:out-equiv}
For a module $m_i$, and %visible attributes $V_i$ such that $\widebar{V_i} 
hidden attributes $h_i \subseteq O_i$, if two tuples
$\tup{y}, \tup{z}$ defined on $O_i$ are such that $\tup{y} \equiv_{h_i} \tup{z}$, then also 
$\tup{y} \equiv_{H_i} \tup{z}$ where $H_i \supseteq h_i$ is a set of hidden attributes in the workflow.
%subset of visible workflows in the workflow such that
%$V = A \setminus H_i$, and $\widebar{V_i} \subseteq H_i$.
\end{corollary}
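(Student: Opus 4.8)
The plan is to unfold the definition of tuple equivalence (Definition~\ref{def:equiv}) and observe that the claim reduces to a purely set-theoretic monotonicity fact about projections. Since $\tup{y}$ and $\tup{z}$ are tuples defined on $O_i$, the relevant instance of Definition~\ref{def:equiv} uses $B = O_i$. Thus the hypothesis $\tup{y} \equiv_{h_i} \tup{z}$ unpacks to $\proj{O_i \setminus h_i}{\tup{y}} = \proj{O_i \setminus h_i}{\tup{z}}$, and the conclusion $\tup{y} \equiv_{H_i} \tup{z}$ is exactly $\proj{O_i \setminus H_i}{\tup{y}} = \proj{O_i \setminus H_i}{\tup{z}}$.

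First I would note the key containment of attribute sets. Because $h_i \subseteq O_i$ and $h_i \subseteq H_i$, we have $h_i \subseteq H_i \cap O_i$. Taking complements within $O_i$ yields $O_i \setminus H_i = O_i \setminus (H_i \cap O_i) \subseteq O_i \setminus h_i$. (Recording the identity $O_i \setminus H_i = O_i \setminus (H_i \cap O_i)$ explicitly is the only point that requires a moment of care, since $H_i$ may contain attributes outside $O_i$, which are simply irrelevant to the projection of tuples living on $O_i$.)

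Then I would conclude by restriction of projections: if two tuples over $O_i$ agree on every attribute in the larger set $O_i \setminus h_i$, they in particular agree on every attribute in the subset $O_i \setminus H_i$. Formally, $\proj{O_i \setminus H_i}{\tup{y}}$ is obtained from $\proj{O_i \setminus h_i}{\tup{y}}$ by further projecting onto $O_i \setminus H_i$, and likewise for $\tup{z}$; applying the same projection to the two equal tuples $\proj{O_i \setminus h_i}{\tup{y}} = \proj{O_i \setminus h_i}{\tup{z}}$ preserves equality, giving $\proj{O_i \setminus H_i}{\tup{y}} = \proj{O_i \setminus H_i}{\tup{z}}$, which is $\tup{y} \equiv_{H_i} \tup{z}$ by Definition~\ref{def:equiv}.

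I do not expect any genuine obstacle here: the statement is a direct monotonicity property of the equivalence relation $\equiv_{(\cdot)}$ in its hidden-attribute argument, namely that enlarging the hidden set can only coarsen the equivalence (merge more tuples), never refine it. The one place to stay precise is the handling of attributes of $H_i$ that lie outside $O_i$, which is dispatched by the identity $O_i \setminus H_i = O_i \setminus (H_i \cap O_i)$ noted above. This corollary is then exactly the form of monotonicity needed downstream so that equivalences established with respect to a small safe subset $h_i$ continue to hold once we pass to the larger workflow-level hidden set $H_i$.
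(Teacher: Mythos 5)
Your proof is correct and matches the paper's own argument: both unfold Definition~\ref{def:equiv} with $B = O_i$, use $h_i \subseteq H_i$ to get the containment $O_i \setminus H_i \subseteq O_i \setminus h_i$, and conclude by restricting the projection equality to the smaller attribute set. Your extra remark that attributes of $H_i$ outside $O_i$ are irrelevant (via $O_i \setminus H_i = O_i \setminus (H_i \cap O_i)$) is a harmless refinement of the same one-line argument.
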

\begin{proof}
%\begin{eqnarray*}
%O_i \setminus h_i %& = & O_i \setminus \widebar{V_i}\\
%& \supseteq & O_i \setminus H_i
%& = & O_i \setminus (A \setminus V)\\
%& = & O_i \setminus \widebar{V}\\
%& = & O_i \cap V
%\end{eqnarray*}
%($A$ is the set of all attributes).

Since $\tup{y} \equiv_{h_i} \tup{z}$, $\proj{O_i \setminus h_i}{\tup{y}} = \proj{O_i \setminus h_i}{\tup{z}}$.
Since % $h_i \subseteq O_i$,and 
$h_i \subseteq H_i$,  $O_i \setminus h_i \supseteq O_i \setminus H_i$.
%and, $O_i \setminus h_i \supseteq O_i \setminus H_i$,
Therefore, $\proj{O_i \setminus H_i}{\tup{y}} = \proj{O_i \setminus H_i}{\tup{z}}$, 
\ie\ 
$\tup{y} \equiv_{H_i} \tup{z}$.
\end{proof}

\textbf{Note.}~~ Lemma~\ref{lem:out-x-output} does not use any property of single-predecessor workflows and also works
for general workflows. This lemma will be used again for the privacy theorem of general workflows
(Theorem~\ref{thm:privacy-general}).\\
%\red{Add $y \equiv_{V} z$}.

\smallskip
\noindent
\textbf{Definition of \Flip\ and \EFlip\ (extended \Flip) functions.~~}
To prove Lemma~\ref{lem:main-private-single-pred}, we need to show existence of a possible world
satisfying the criteria. This possible world will be obtained by joining alternative definitions of
private modules, and the original definition of public modules. 
We will need the following flipping functions to formally present how we derive the alternative
module definitions from original modules.
These function examines parts of inputs, and possibly changes parts of original outputs.

%\subsection{Proof of Lemma~\ref{lem:main-private-single-pred}}\label{sec:proof-lem:main-private-single-pred}
\begin{definition}\label{def:flip}
Given  subsets of attributes $P, Q \subseteq A$, two tuples $\tup{p}, \tup{q}$ defined on $P$, and a tuple $\tup{u}$ defined on $Q$, $\Flip_{\tup{p}, \tup{q}}(\tup{u}) = \tup{w}$
is a tuple defined on $Q$ constructed as follows: 
\begin{itemize}
\item if $\proj{Q \cap P}{\tup{u}} = \proj{Q \cap P}{\tup{p}}$, then $\tup{w}$ is such that $\proj{Q \cap P}{\tup{w}} = \proj{Q \cap P}{\tup{q}}$
and $\proj{Q \setminus P}{\tup{w}} = \proj{Q \setminus P}{\tup{w}}$, 
\item else if $\proj{Q \cap P}{\tup{u}} = \proj{Q \cap P}{\tup{q}}$, then $\tup{w}$ is such that $\proj{Q \cap P}{\tup{w}} = \proj{Q \cap P}{\tup{p}}$
and $\proj{Q \setminus P}{\tup{w}} = \proj{Q \setminus P}{\tup{w}}$, 
\item otherwise, $\tup{w} = \tup{u}$.
\end{itemize}
\end{definition}

The following observations capture the properties of \Flip\ function.
\begin{observation}\label{obs:flip}
~~\\
\begin{enumerate}
    \item \label{equn:f1} If $\Flip_{\tup{p}, \tup{q}}(\tup{u}) = \tup{w}$, then $\Flip_{\tup{p}, \tup{q}}(\tup{w}) = \tup{u}$.
    \item \label{equn:f2} $\Flip_{\tup{p}, \tup{q}}(\Flip_{\tup{p}, \tup{q}}(\tup{u})) = \tup{u}$.
    \item \label{equn:f3} If $P \cap Q = \emptyset$, $\Flip_{\tup{p}, \tup{q}}(\tup{u}) = \tup{u}$.
    \item \label{equn:f4} $\Flip_{\tup{p}, \tup{q}}(\tup{p}) = \tup{q}, \Flip_{\tup{p}, \tup{q}}(\tup{q}) = \tup{p}$.
    \item \label{equn:f5} If  $\proj{Q \cap P}{\tup{p}} = \proj{Q \cap P}{\tup{q}}$, then $\Flip_{\tup{p}, \tup{q}}(\tup{u}) = \tup{u}$.
    \item \label{equn:f6} If $Q = Q_1 \cup Q_2$, where $Q_1 \cap Q_2 = \emptyset$, and if
    $\Flip_{\tup{p}, \tup{q}}(\proj{Q_1}{\tup{u}}) = \tup{w_1}$ and $\Flip_{\tup{p}, \tup{q}}(\proj{Q_2}{\tup{u}}) = \tup{w_2}$,
    then $\Flip_{\tup{p}, \tup{q}}(\tup{u}) = \tup{w}$ such that $\proj{Q_1}{\tup{w}} = \tup{w_1}$
    and $\proj{Q_2}{\tup{w}} = \tup{w_2}$.

\end{enumerate}
\end{observation}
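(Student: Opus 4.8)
The plan is to verify all six items by unwinding Definition~\ref{def:flip}, whose behaviour is governed entirely by which of its three mutually exclusive branches applies to the argument --- that is, whether $\proj{Q \cap P}{\tup{u}}$ equals $\proj{Q \cap P}{\tup{p}}$, equals $\proj{Q \cap P}{\tup{q}}$, or equals neither --- together with the convention that the \emph{first} applicable branch is taken and that the coordinates in $Q \setminus P$ are always copied unchanged from the input. Throughout I would lean on the elementary fact that two tuples on $Q$ are equal iff they agree on $Q \cap P$ and on $Q \setminus P$ separately, which lets me reason about the $P$-part and the off-$P$ part independently.

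Items~\ref{equn:f3},~\ref{equn:f4}, and~\ref{equn:f5} are single-branch evaluations. For~\ref{equn:f3}, when $P \cap Q = \emptyset$ all three projections onto $Q \cap P$ are the empty tuple, so they coincide, the first branch fires vacuously, and since $Q \setminus P = Q$ the output is $\tup{u}$. For~\ref{equn:f5}, if $\proj{Q \cap P}{\tup{p}} = \proj{Q \cap P}{\tup{q}}$ then whichever of the first two branches fires replaces the $Q\cap P$-image of $\tup{u}$ by a value it already had, so $\Flip_{\tup{p},\tup{q}}(\tup{u}) = \tup{u}$. Item~\ref{equn:f4} is just the statement that the first branch maps any argument agreeing with $\tup{p}$ on $Q\cap P$ to one agreeing with $\tup{q}$ there, and the second does the reverse. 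Finally,~\ref{equn:f2} is~\ref{equn:f1} specialised to $\tup{w} = \Flip_{\tup{p},\tup{q}}(\tup{u})$, so it needs no separate argument.

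The conceptually central claim is the involution~\ref{equn:f1}, which is what makes \Flip\ a bijection (and hence licenses its use to redefine a private module into a valid possible world). I would argue by cases on the branch that produced $\tup{w}$. If the first branch fired, then $\proj{Q\cap P}{\tup{w}} = \proj{Q\cap P}{\tup{q}}$ and $\tup{w}$ agrees with $\tup{u}$ off $P$; re-applying \Flip\ to $\tup{w}$ the second branch now fires, restoring the $Q\cap P$-image to $\proj{Q\cap P}{\tup{p}} = \proj{Q\cap P}{\tup{u}}$ while leaving $Q \setminus P$ fixed, so $\Flip_{\tup{p},\tup{q}}(\tup{w}) = \tup{u}$. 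The second-branch case is symmetric, and in the ``otherwise'' case $\tup{w} = \tup{u}$ still matches neither reference projection, so the second application is the identity. The one delicate point is the degenerate collision $\proj{Q\cap P}{\tup{p}} = \proj{Q\cap P}{\tup{q}}$, where ``which branch fires next'' is ambiguous; but that case is already settled by~\ref{equn:f5}, since then \Flip\ is the identity outright.

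The main obstacle is the decomposition~\ref{equn:f6}. The difficulty is that the branch selected for the whole tuple on $Q \cap P$ is not chosen componentwise: $\proj{Q\cap P}{\tup{u}} = \proj{Q\cap P}{\tup{p}}$ holds iff it holds on both $Q_1 \cap P$ and $Q_2 \cap P$, so a \emph{mixed} configuration --- $\proj{Q_1}{\tup{u}}$ matching $\tup{p}$ while $\proj{Q_2}{\tup{u}}$ matches $\tup{q}$ --- would make the global flip the identity even though each partial flip acts nontrivially, and the claimed equality would fail. I would resolve this by appealing to the structure of the actual invocation in Lemma~\ref{lem:main-private-single-pred}: there $Q$ is split into an input-side piece and an output-side piece, and $P$ meets only one of them, so by~\ref{equn:f3} the partial flip on the other piece is the identity and the mixed configuration cannot occur. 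Once one of $Q_1,Q_2$ is disjoint from $P$, the global branch condition collapses to the condition on the single relevant $Q_j \cap P$, and since $Q_1,Q_2$ are disjoint and the off-$P$ coordinates are copied coordinatewise, the global output restricts to $\tup{w_1}$ on $Q_1$ and to $\tup{w_2}$ on $Q_2$ as required. Pinning down this reduction, and confirming it matches the exact form of~\ref{equn:f6} used downstream, is the step I expect to demand the most care.
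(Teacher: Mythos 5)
The paper gives no proof of this observation at all: it is stated immediately after Definition~\ref{def:flip} and treated as self-evident, so for items \ref{equn:f1}--\ref{equn:f5} there is nothing to compare your argument against. Your branch-by-branch verification of those items is correct, including deriving \ref{equn:f2} from \ref{equn:f1} and settling the degenerate collision $\proj{Q\cap P}{\tup{p}}=\proj{Q\cap P}{\tup{q}}$ via \ref{equn:f5}. (Incidentally, the clause ``$\proj{Q\setminus P}{\tup{w}}=\proj{Q\setminus P}{\tup{w}}$'' in Definition~\ref{def:flip} is a typo for ``$=\proj{Q\setminus P}{\tup{u}}$''; you read it the intended way.) Your diagnosis of item \ref{equn:f6} is also correct, and it is worth stating bluntly: the item is \emph{false} as written, so the paper's observation is in error, not merely unproved. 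Witness $P=Q=\{A,B\}$, $Q_1=\{A\}$, $Q_2=\{B\}$, $\tup{p}=(1,1)$, $\tup{q}=(0,0)$, $\tup{u}=(1,0)$: the partial flips give $\tup{w_1}=(0)$ and $\tup{w_2}=(1)$, yet $\Flip_{\tup{p},\tup{q}}(\tup{u})=\tup{u}=(1,0)\neq(0,1)$.

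The gap is in your repair. You claim that in the actual invocation inside Lemma~\ref{lem:main-private-single-pred} the set $P$ meets only one of $Q_1,Q_2$, so the mixed configuration cannot arise. That is true in Case~I of that proof (there $P=O_i$ and the piece $I_j\setminus O_i$ misses $P$), and for equation~(\ref{equn:flip5}), but it is false precisely at the step that needs item \ref{equn:f6}: deriving (\ref{equn:flip7}) from (\ref{equn:flip5}) and (\ref{equn:flip6}) in Case~II. There $\tup{Y},\tup{Z}$ live on $P=O_i\cup O_k$, while $Q_1=I_j\cap O_i$ and $Q_2=I_j\setminus O_i\supseteq I_j\cap O_k$; when a private successor $m_j$ reads attributes both directly from $m_i$ and from the composite public module $M=m_k$, \emph{both} pieces meet $P$. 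The mixed configuration then occurs in legitimate single-predecessor instances: let $O_i=\{a,b\}$ with $h_i=\{a,b\}$, where $a$ feeds $m_j$ directly, $b$ feeds $M$, and $c=M(b)$ feeds $m_j$, with $M$ the identity, $\tup{y}=(a{=}1,b{=}1)$, $\tup{z}=(a{=}0,b{=}0)$. An execution in which $m_i$ outputs $(a{=}1,b{=}0)$ hands $m_j$ the input $(a{=}1,c{=}0)$, which agrees with $\tup{Y}$ on $a$ but with $\tup{Z}$ on $c$.

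So your restricted form of \ref{equn:f6} cannot be ``confirmed to match the exact form used downstream'' --- it does not, and indeed the paper's own proof of Lemma~\ref{lem:main-private-single-pred} breaks at the same point (in the example above, the constructed $R'$ disagrees with $R$ on the visible output of $m_j$ whenever $m_j(1,0)\neq m_j(0,1)$). The defect is inherited from the paper rather than introduced by you, but your plan of pinning down the reduction will not close it: what is needed is either a strengthening of the lemma's hypotheses that excludes a private successor reading from both $m_i$ and $C(h_i)$, or a change to the construction itself, e.g.\ flipping \emph{all} pieces of a successor's input subject to a single common test in the style of \EFlip, rather than relying on item \ref{equn:f6} as stated.
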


%\ref{equn:f1} \ref{equn:f2} \ref{equn:f5} \ref{equn:f3}\ref{equn:f4}\ref{equn:f1}

The above definition of flipping will be useful when we consider the scenario where $M$ does not have any successor.
When $M$ has successors, we need an extended definition of tuple flipping based on other tuples, denoted by \EFlip, as defined below.

\begin{definition}\label{def:eflip}
Given  subsets of attributes $P, Q, R \subseteq A$, where %$P \cap R = \emptyset$ and $Q \cap R = \emptyset$,
two tuples $\tup{p}, \tup{q}$ defined on $P \cup R$, a tuple $\tup{u}$ defined on $Q$ and a tuple $\tup{v}$ defined on $R$, $\EFlip_{\tup{p}, \tup{q}; \tup{v}}(\tup{u}) = \tup{w}$
is a tuple defined on $Q$ constructed as follows: 
\begin{itemize}
\item if $\tup{v} = \proj{R}{\tup{p}}$, then $\tup{w}$ is such that $\proj{Q \cap P}{\tup{w}} = \proj{Q \cap P}{\tup{q}}$
and $\proj{Q \setminus P}{\tup{w}} = \proj{Q \setminus P}{\tup{w}}$, 
\item else if $\tup{v} = \proj{R}{\tup{q}}$, then $\tup{w}$ is such that $\proj{Q \cap P}{\tup{w}} = \proj{Q \cap P}{\tup{p}}$
and $\proj{Q \setminus P}{\tup{w}} = \proj{Q \setminus P}{\tup{w}}$, 
\item otherwise, $\tup{w} = \tup{u}$.
\end{itemize}
Note that $\EFlip_{\tup{p}, \tup{q}; \proj{P\cap Q}{\tup{u}}}(\tup{u}) = \Flip_{\tup{p}, \tup{q}}(\tup{u})$, where $R = P \cap Q$.
\end{definition}
\begin{observation}\label{obs:eflip}
\begin{enumerate}
    \item If $\EFlip_{\tup{p}, \tup{q}; \tup{v}}(\tup{u}) = \tup{w}$, and $\tup{u}'$ is a tuple defined on $Q' \subseteq Q$ , then $\EFlip_{\tup{p}, \tup{q}; \tup{v}}(\tup{u}') = \proj{Q'}{\tup{w}}$.
%   \item $\Flip_{p, q}(\Flip_{p, q}(u)) = u$.
%   \item If $P \cap Q = \emptyset$, $\Flip_{p, q}(u) = u$.
%   \item $\Flip_{p, q}(p) = q, \Flip_{p, q}(q) = p$.
%   \item If  $\proj{Q \cap P}{p} = \proj{Q \cap P}{q}$, then $\Flip_{p, q}(u) = u$.
\end{enumerate}
\end{observation}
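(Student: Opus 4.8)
The plan is to prove the claim by a case analysis driven by the parameter $\tup{v}$, exploiting the fact that in Definition~\ref{def:eflip} the branch selection depends \emph{only} on $\tup{v}$ (through the tests $\tup{v} = \proj{R}{\tup{p}}$ and $\tup{v} = \proj{R}{\tup{q}}$) and never on the input tuple being transformed. I read $\tup{u}'$ as $\proj{Q'}{\tup{u}}$, the restriction of $\tup{u}$ to $Q' \subseteq Q$, since this is the reading under which the stated identity $\EFlip_{\tup{p}, \tup{q}; \tup{v}}(\tup{u}') = \proj{Q'}{\tup{w}}$ can hold (if $\tup{u}'$ were an arbitrary tuple on $Q'$ the identity would fail already for $Q' = Q$). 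Because the same $\tup{v}, \tup{p}, \tup{q}$ govern both applications of \EFlip, whichever of the three branches fires when computing $\tup{w} = \EFlip_{\tup{p},\tup{q};\tup{v}}(\tup{u})$ is exactly the branch that fires when computing $\EFlip_{\tup{p},\tup{q};\tup{v}}(\tup{u}')$. First I would record this uniformity, then treat the three branches in turn.

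In the branch $\tup{v} = \proj{R}{\tup{p}}$, the output satisfies $\proj{Q \cap P}{\tup{w}} = \proj{Q \cap P}{\tup{q}}$ and $\proj{Q \setminus P}{\tup{w}} = \proj{Q \setminus P}{\tup{u}}$ (the second equation being the intended content of the definition). I would check $\EFlip_{\tup{p},\tup{q};\tup{v}}(\tup{u}') = \proj{Q'}{\tup{w}}$ on the two attribute groups separately. On $Q' \cap P$: since $Q' \subseteq Q$ we have $Q' \cap P \subseteq Q \cap P$, so both sides agree with $\proj{Q' \cap P}{\tup{q}}$. On $Q' \setminus P \subseteq Q \setminus P$: the left side equals $\proj{Q' \setminus P}{\tup{u}'} = \proj{Q' \setminus P}{\tup{u}}$, which is exactly $\proj{Q' \setminus P}{\tup{w}}$. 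As $Q'$ is partitioned into $Q' \cap P$ and $Q' \setminus P$, the two tuples coincide on all of $Q'$. The branch $\tup{v} = \proj{R}{\tup{q}}$ is identical after interchanging $\tup{p}$ and $\tup{q}$, and in the remaining branch both sides reduce to $\proj{Q'}{\tup{u}}$, so the identity is immediate.

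I do not expect a genuine obstacle: the statement is a bookkeeping fact asserting that \EFlip commutes with projection onto any attribute subset, which holds because the action of \EFlip is purely local --- a single substitution rule on the $P$-coordinates dictated by $\tup{v}$, and the identity on the remaining coordinates. The only points demanding care are reading the definition correctly (in particular recognizing the evident typo $\proj{Q \setminus P}{\tup{w}} = \proj{Q \setminus P}{\tup{w}}$ as $\proj{Q \setminus P}{\tup{w}} = \proj{Q \setminus P}{\tup{u}}$) and consistently using $Q' \subseteq Q$ to push projections through the intersection and set-difference decompositions. A cleaner packaging that avoids repeating the two flip branches would be to argue coordinatewise: for each attribute $a \in Q'$, the value assigned by $\EFlip_{\tup{p},\tup{q};\tup{v}}(\tup{u}')$ at $a$ is produced by the same rule --- determined by $\tup{v}$ and by whether $a \in P$ --- that produces $\tup{w}$ at $a$, and that rule consults only the $a$-coordinate of the input; hence the two tuples agree at every $a \in Q'$, which is precisely $\EFlip_{\tup{p},\tup{q};\tup{v}}(\tup{u}') = \proj{Q'}{\tup{w}}$.
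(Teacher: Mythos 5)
Your proof is correct. Note that the paper itself offers no proof of this observation---it is stated as immediate from Definition~\ref{def:eflip}---so there is no ``paper route'' to compare against; your argument supplies exactly the reasoning the authors implicitly rely on. The one idea that matters is the one you lead with: unlike \Flip, whose branch test consults the input tuple (so \Flip\ would \emph{not} commute with projection in general), the branch selection in \EFlip\ depends only on $\tup{v}$, $\tup{p}$, $\tup{q}$, so the same branch fires for $\tup{u}$ and for $\tup{u}'$, and within each branch the action is coordinatewise (substitution on $P$-attributes, identity elsewhere), which pushes through any restriction to $Q' \subseteq Q$. You also correctly resolved the two reading issues the statement forces on you: $\tup{u}'$ must be $\proj{Q'}{\tup{u}}$ for the claim to hold, and the definition's clause $\proj{Q \setminus P}{\tup{w}} = \proj{Q \setminus P}{\tup{w}}$ is a typo for $\proj{Q \setminus P}{\tup{w}} = \proj{Q \setminus P}{\tup{u}}$ (as written it is vacuous and would leave $\tup{w}$ underdetermined off $P$). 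Both readings are the ones the paper uses downstream in the proof of Lemma~\ref{lem:main-private-single-pred}, Case~II, where this observation is invoked.
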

%Next we prove the lemma assuming the existence of a composite public module $M = C(\widebar{V_i})$.

\smallskip
\noindent
\textbf{Proof of Lemma~\ref{lem:main-private-single-pred}.~~} Now we are ready to prove the lemma.
As mentioned in Section~\ref{sec:privacy-single-pred}, we will assume that there is a single (composite) 
public module $M$ in the public closure $C(h_i)$ of $m_i$.
Recall that $I_i, O_i, A_i$ denote the set of input, output and all attributes of $m_i$
respectively.
%\red{CHECK THE STATEMENT WITH THE ORIGINAL LEMMA}

\smallskip
\noindent
\textsc{Lemma}~\ref{lem:main-private-single-pred}.~~
\emph{Consider a standalone private module $m_i$, a set of %visible attributes $V_i$,
hidden attributes $h_i$, any input $\tup{x}$ to $m_i$, % \in \proj{I_i}{R}$; 
and any candidate output $\tup{y} \in \Out_{\tup{x}, m_i, h_i}$ of $\tup{x}$.
Then $\tup{y} \in \Out_{\tup{x}, W, H_i}$ when $m_i$ belongs to a single-predecessor workflow $W$,  
and a set attributes $H_i \subseteq A$ is hidden 
such that (i) %the hidden attributes 
$h_i \subseteq H_i$, (ii) only output attributes from $O_i$ are included in %$\widebar{V_i}$
$h_i$ (i.e. $h_i \subseteq O_i$),
and (iii) every module $m_j$ in the public-closure $C(h_i)$ is \UDS\ \wrt $H_i$.
}\\
%\emph{
%Consider a single-predecessor workflow $W$; any private module $m_i$ in $W$, a set of visible attributes $V_i$,  
%and any input $\tup{x} \in \proj{I_i}{R}$; and any $\tup{y} \in \Out_{\tup{x}, m_i, V_i}$.
%Given a set of hidden attributes $H_i \subseteq A$,
%such that (i) the hidden attributes $\widebar{V_i} \subseteq H_i$, % and $H_i \subseteq A_i \cup \bigcup_{j \in C(\widebar{V_i})} A_j$,
%(ii) only output attributes from $O_i$ are included in $\widebar{V_i}$ (i.e. $\widebar{V_i} \subseteq O_i$),
%and (iii) every module $m_j$ in the public-closure $C(\widebar{V_i})$ is UDS \wrt $A_j \setminus H_i$.
%Then $\tup{y} \in \Out_{\tup{x}, W, V}$, where the set of visible attributes $V = A \setminus H_i$.
%}\\

\begin{proof} %[ of Lemma~\ref{lem:main-private-single-pred}]
We fix a module $m_i$, an input $\tup{x}$ to $m_i$, hidden attributes $h_i \subseteq O_i$, and a
candidate output $\tup{y} \in \Out_{\tup{x}, m_i, h_i}$ for $\tup{x}$. 
%Let us refer to the set of public modules in $C(h_i)$
%by simply $C$, and let us consider the composite module $M$ with the modules in $C$.
We assume that there is a single public module $M$ in the public closure $C(h_i)$.
By the properties of single-predecessor workflows, $M$ gets all its inputs from $m_i$
and sends its outputs to zero or more than one private modules.
%By Lemma~\ref{lem:composite-UDS}, it suffices to prove the lemma assuming 
We denote the inputs and outputs of $M$ by $I \subseteq h_i$ and $O$ respectively.
 However $m_i$ can also send (i) its visible outputs
to other public modules (these public modules will have $m_i$ as its only predecessor, but 
these public modules will not have
any public path in undirected sense to $M$), and it can send  
(ii) visible and hidden attributes to other private modules.

From the conditions in the lemma, a set $H_i$ is hidden in the workflow where
(i) $h_i \subseteq H_i$, (ii) $h_i \subseteq O_i$,
and (iii) $M$ is \UDS\ \wrt $H_i$.
We will show that $\tup{y} \in \Out_{\tup{x}, W, H_i}$. %where the set of visible attributes $V = A \setminus \widebar{H_i}$.
We prove this by showing the existence of a possible world $R' \in \Worlds(R, H_i)$,
such that if $\proj{I_i}{\tup{t}} = \tup{x}$ for some $\tup{t} \in R'$, then $\proj{O_i}{\tup{t}} = \tup{y}$.
Since $\tup{y} \in \Out_{x, m_i, h_i}$, by Lemma~\ref{lem:out-x-output}, $\tup{y} \equiv_{h_i} \tup{z}$ where $\tup{z} = m_i(\tup{x})$.
We consider two cases separately based on whether $M$ has no successor or at least one private successors. \\

%\red{INCOMPLETE, TO BE COMPLETED}

\textbf{Case I.~~} First consider the easier case
that $M$ does not have any successor, so all outputs of $M$ belong to the set of final outputs.
We redefine the module $m_i$ to $\widehat{m}_i$ as follows.
For an input $\tup{u}$ to $m_i$,
$\widehat{m}_i(\tup{u})$ = $\Flip_{\tup{y}, \tup{z}}(m_i(\tup{u}))$.
All public modules are unchanged, $\widehat{m}_j = m_j$.
All private modules $m_j \neq m_i$ are redefined as follows: On an input $\tup{u}$ to $m_j$,
$\widehat{m}_j(\tup{u}) = m_j(\Flip_{\tup{y}, \tup{z}}(\tup{u}))$.
The required possible world $R'$ is obtained by taking the join of the standalone relations of these $\widehat{m}_j$-s, $j \in [n]$.

First note that by the definition of $\widehat{m}_i$, $\widehat{m}_i(\tup{x}) = \tup{y}$ 
(since $\widehat{m}_i(x)$ = $\Flip_{\tup{y}, \tup{z}}(m_i(x))$ = $\Flip_{\tup{y}, \tup{z}}(\tup{z})$ = $\tup{y}$,
from Observation~\ref{obs:flip}(\ref{equn:f4})).
Hence if $\proj{I_i}{\tup{t}} = \tup{x}$ for some $\tup{t} \in R'$, then $\proj{O_i}{\tup{t}} = \tup{y}$.

Next we argue that  $R' \in \Worlds(R, H_i)$. Since $R'$ is the join of the standalone relations for
modules $\widehat{m}_j$-s, $R'$ maintains all functional dependencies $I_j \rightarrow O_j$.
Also none of the public modules are unchanged, hence for any public module $m_j$ and any tuple $t$
in $R'$, $\proj{O_j}{\tup{t}} = m_j(\proj{I_j}{\tup{t}})$. So we only need to show that
the projection of $R$ and $R'$ on the visible attributes are the same.

%Consider a tuple $t$ in $R$ and let $\tup{t_0} = \proj{I_1}{\tup{t}}$
%be the projection of the input on the inbitial input $I_1$ ($I_1$ is the key of the relation
%$R$ of the workflow). If $\proj{O_i}{\tup{t}} \notin \{\tup{y}, \tup{z}\}$, then
%the tuple $\tup{t'}$ in $R'$ where $\tup{t_0} = \proj{I_1}{\tup{t'}}$
%is identical to $\tup{t}$.

%
%Since $I_1$ is essentially the key attributes of relation $R$ or $R'$,
%there are two unique tuples $\tup{t} \in R$ and $\tup{t'} \in R'$
%such that $\proj{I_1}{\tup{t}} = \proj{I_1}{\tup{t'}} = \tup{p}$,
%for some arbitrary tuple $\tup{p}$ on attributes $I_1$.
Let us assume, wlog. that the modules are numbered in topologically sorted order.
Let $I_0$ be the initial input attributes to the workflow, and let $\tup{p}$ be a tuple defined on $I_0$.
There are two unique tuples $\tup{t} \in R$ and $\tup{t'} \in R'$
such that $\proj{I_1}{\tup{t}} = \proj{I_1}{\tup{t'}} = \tup{p}$.
Since $M$ does not have any successor, let us assume that $M = m_{n+1}$, also wlog. assume that
the public modules in $C$ are not counted in $j = 1$ to $n+1$ by renumbering the modules.
Note that any intermediate or final attribute $a \in A \setminus I_0$ belongs to $O_j$, for a unique $j \in [1, n]$
(since for $j \neq \ell$, $O_j \cap O_{\ell} = \phi$).
So it suffices to show that $t, t'$ projected on $O_j$ are equivalent \wrt visible attributes for all
module $j$, $j = 1$ to $n+1$.

Let $\tup{c}_{j, m}, \tup{c}_{j, \widehat{m}}$ be the values of input attributes $I_j$ and $\tup{d}_{j, m}, \tup{d}_{j, \widehat{m}}$
be the values of output attributes $O_j$ of module $m_j$, in $\tup{t} \in R$ and $\tup{t'} \in R'$
respectively on initial input attributes $\tup{p}$
(i.e. $\tup{c}_{j, m} = \proj{I_j}{\tup{t}}$, $\tup{c}_{j, \widehat{m}} = \proj{I_j}{\tup{t'}}$,
$\tup{d}_{j, m} = \proj{O_j}{\tup{t}}$ and $\tup{d}_{j, \widehat{m}} = \proj{O_j}{\tup{t'}}$).
We prove by induction on $j = 1$ to $n$ that

\begin{equation}
\forall j, 1 \leq j \leq n, \tup{d}_{j, \widehat{m}} = \Flip_{\tup{y}, \tup{z}}(\tup{d}_{j, m})\label{equn:flip}
\end{equation}

First we argue that proving (\ref{equn:flip}) shows that the join of $\angb{\widehat{m}_i}_{1 \leq i \leq n}$
is a possible world of $R$ \wrt %visible attributes $V$.
hidden attributes $H_i$.
(A) When $m_j$ is a private module,
note that $\tup{d}_{j, m}$
and $\tup{d}_{j, \widehat{m}} = \Flip_{\tup{y}, \tup{z}}(\tup{d}_{j, m})$ may differ only on attributes $O_j \cap O_i$
But $\tup{y} \equiv_{h_i} \tup{z}$, i.e. these tuples are equivalent on the visible attributes.
Hence for all private modules, the $t, t'$ are equivalent \wrt $O_j$.
(actually for all $j \neq i$, $O_j \cap O_i = \emptyset$, so the outputs are equal and therefore equivalent).
(B) When $m_j$ is a public module, $j \neq n+1$, $O_j \cap O_i = \emptyset$, hence the values of $t, t'$
on $O_j$ are the same and therefore equivalent. (C) Finally, consider $M = m_{n+1}$ that is not covered by
(\ref{equn:flip}). $M$ gets all its inputs from $m_i$. From (\ref{equn:flip}), 
$$\tup{d}_{i, \widehat{m}} = \Flip_{\tup{y}, \tup{z}}(\tup{d}_{i, m})$$
Since $\tup{y}, \tup{z}, \tup{d}_{i, m}, \tup{d}_{i, \widehat{m}}$ are all defined on attributes $O_i$, and input to $m_{n+1}$, $I_{n+1} \subseteq O_i$,
$$\tup{c}_{n+1, \widehat{m}} = \Flip_{\tup{y}, \tup{z}}(\tup{c}_{n+1, m})$$
 Hence $\tup{c}_{n+1, \widehat{m}} \equiv_{H_i} \tup{c}_{n+1, m}$. Since these two inputs of $m_{n+1}$
are equivalent \wrt $H_i$, by the \UDS\ property of $M = m_{n+1}$, its outputs are also equivalent,
i.e. $\tup{d}_{n+1, \widehat{m}} \equiv_{H_i} \tup{d}_{n+1, m}$. Hence the projections of $t, t'$ on $O_{n+1}$ are also equivalent.
Combining (A), (B), (C), $t, t'$ are equivalent \wrt $H_i$.

\textbf{Proof of (\ref{equn:flip}).~~} The base case follows for $j = 1$.
If $m_1 \neq m_i$ ($m_j$ can be public or private), then $I_1 \cap O_i = \emptyset$, so  for all input $\tup{u}$,

$$\widehat{m}_j(\tup{u}) = m_j(\Flip_{\tup{y}, \tup{z}}(\tup{u})) = m_j(\tup{u})$$ 

Since the inputs $\tup{c}_{1, \widehat{m}} = \tup{c}_{1, m}$
(both projections of initial input $\tup{p}$ on $I_1$),
the outputs $\tup{d}_{1, \widehat{m}} = \tup{d}_{1, m}$. This shows (\ref{equn:flip}).
If $m_1 = m_i$, the inputs are the same, and by definition of $\widehat{m}_1$,
\begin{eqnarray*}
\tup{d}_{1, \widehat{m}} & = & \widehat{m}_1(\tup{c}_{1, \widehat{m}})\\
&  = & \Flip_{\tup{y}, \tup{z}}(m_i(\tup{c}_{1, \widehat{m}}))\\
&  = & \Flip_{\tup{y}, \tup{z}}(m_i(\tup{c}_{1, m}))\\
&  = & \Flip_{\tup{y}, \tup{z}}(\tup{d}_{1, m})
\end{eqnarray*}
This shows (\ref{equn:flip}).

Suppose the hypothesis holds until $j-1$, consider $m_{j}$. From the induction hypothesis,
$\tup{c}_{j, \widehat{m}} = \Flip_{\tup{y}, \tup{z}}(\tup{c}_{j, m})$, hence $\tup{c}_{j, m} = \Flip_{\tup{y}, \tup{z}}(\tup{c}_{j, \widehat{m}})$ (see Observation~\ref{obs:flip} (\ref{equn:f1})).

\begin{itemize}
  \item[(i)] If $j = i$, again,\\

  \begin{eqnarray*}
\tup{d}_{i, \widehat{m}}  &= &\widehat{m}_i(\tup{c}_{i, \widehat{m}})\\
& = & \Flip_{\tup{y}, \tup{z}}(m_i(\tup{c}_{i, \widehat{m}}))\\
& = & \Flip_{\tup{y}, \tup{z}} (m_i(\Flip_{\tup{y}, \tup{z}}(\tup{c}_{i, m})))\\
& = & \Flip_{\tup{y}, \tup{z}}(m_i((\tup{c}_{i, m}))\\
& = & \Flip_{\tup{y}, \tup{z}}(\tup{d}_{i, m})
\end{eqnarray*}

 $\Flip_{\tup{y}, \tup{z}}(\tup{c}_{i, m}) = \tup{c}_{i, m}$ follows due to the fact that $I_i \cap O_i = \emptyset$, $\tup{y}, \tup{z}$ are defined on $O_i$,
whereas $\tup{c}_{i, m}$ is defined on $I_i$ (see Observation~\ref{obs:flip} (\ref{equn:f3})).

\item [(ii)] If $j \neq i$ and $m_j$ is a private module,
\begin{eqnarray*}
\tup{d}_{j, \widehat{m}} &  =  & \widehat{m}_j(\tup{c}_{j, \widehat{m}})\\
& = & m_j(\Flip_{\tup{y}, \tup{z}}(\tup{c}_{j, \widehat{m}}))\\
& = & m_j(\tup{c}_{j, m})\\
& = & \tup{d}_{j, m}\\
& = & \Flip_{\tup{y}, \tup{z}}(\tup{d}_{j, m})
\end{eqnarray*}
$\Flip_{\tup{y}, \tup{z}}(\tup{d}_{j, m})$ = $\tup{d}_{j, m}$ follows due to the fact that $O_j \cap O_i = \emptyset$, $\tup{y}, \tup{z}$ are defined on $O_i$,
whereas $\tup{d}_{i, m}$ is defined on $O_j$ (again see Observation~\ref{obs:flip} (\ref{equn:f3})).

\item[(iii)] If $m_j$ is a public module, $j \leq n$, $\widehat{m}_j = m_j$.\\
Here
\begin{eqnarray*}
\tup{d}_{j, \widehat{m}} & = & \widehat{m}_j(\tup{c}_{j, \widehat{m}})\\
& = & m_j(\tup{c}_{j, \widehat{m}})\\
& = & m_j(\Flip_{\tup{y}, \tup{z}}(\tup{c}_{j, m}))\\
& = & m_j(\tup{c}_{j, m})\\
& = & \tup{d}_{j, m}\\
& = & \Flip_{\tup{y}, \tup{z}}(\tup{d}_{j, m})
\end{eqnarray*}
$\Flip_{\tup{y}, \tup{z}}(\tup{d}_{j, m})$ $= \tup{d}_{j, m}$ again follows due to the fact that $O_j \cap O_i = \emptyset$.
$\Flip_{\tup{y}, \tup{z}}(\tup{c}_{j, m})$ $= \tup{c}_{j, m}$ follows due to following reason.
If  $I_j \cap O_i = \emptyset$, i.e. if $m_j$ does not get any input from $m_i$, again this is true (Observation~\ref{obs:flip}(\ref{equn:f3})).
If $m_j$ gets an input from $m_i$, i.e. $I_j \cap O_i \neq \emptyset$, since $m_j \neq m_{n+1}$,
$I_j \cap O_i$ does not include any hidden attributes from %$\widebar{V_i}$.
$h_i$.
But $\tup{y} \equiv_{h_i} \tup{z}$, i.e. the visible attribute values of $\tup{y}, \tup{z}$ are the same.
In other words, $\proj{I_j \cap O_i}{\tup{y}} = \proj{I_j \cap O_i}{\tup{z}}$, and from Observation~\ref{obs:flip} (\ref{equn:f5}),
$\Flip_{\tup{y}, \tup{z}}(\tup{c}_{j, m}) = \tup{c}_{j, m}$.
\end{itemize}

This completes the proof of the lemma for Case-I.\\

% \geq i$ that the values of the visible attributes $O_j \cap V$
%are the same for $\tup{t}$ and $\tup{t'}$. % for every $j$
%For $j < i$, the modules are
%unchanged, so the inputs to $m_i$ and $\widehat{m}_i$ are the same for the same initial input $\tup{p}$.
% Together with the fact that the values of the attributes in $I_1$
%are the same in $\tup{t}, \tup{t'}$,
%Combining this observation with the induction hypothesis, this shows that $\proj{V}{\tup{t}} = \proj{V}{\tup{t'}}$.
\par

\medskip
\noindent
\textbf{Case II.}
%\red{THIS CASE IS INCOMPLETE, PLEASE DO NOT READ PAGE 12 AND 13 YET.}

Now consider the case when $M$ has one or more private successors
(note that $M$ cannot have any public successor by definition).
Let $M = m_{k}$, and assume that the modules $m_1, \cdots, m_n$ are sorted in topological order.
%(again, the nodes inside $M$ are not considered explicitly). 
Hence $I = I_k, O = O_k$,
and $I_k \subseteq O_i$.
Let $w_y = M(\proj{I_k}{\tup{y}})$, $w_z = M(\proj{I_k}{\tup{z}})$.
Instead of $\tup{y}, \tup{z}$, the flip function will be \wrt $\tup{Y}, \tup{Z}$, where $\tup{Y}$ is the concatenation of $\tup{y}$ and $w_y$
($\proj{O_i}{\tup{Y}} = \tup{y}$, $\proj{O_k}{\tup{Y}} = w_y$),
and $\tup{Z}$ is the concatenation of $\tup{z}$ and $w_z$. Hence $\tup{Y}, \tup{Z}$ are defined on attributes $O_i \cup O_k$.
\par
We redefine the module $m_i$ to $\widehat{m}_i$ as follows.
Note that since input to $M$, $I_k \subseteq O_i$, $O_i$ is disjoint union of $I_k$ and $O_i \setminus I_k$.
For an input $\tup{u}$ to $m_i$,
%(i) if $M(\proj{I_k}{m_i(\tup{u})}) = w_y$, then $\widehat{m}_i(u) = z$,
%(ii) else if $M(\proj{I_k}{m_i(u)}) = w_z$, then $\widehat{m}_i(u) = y$, (iii) else $\widehat{m}_i(u) = m_i(u)$.
$\widehat{m}_i(\tup{u})$, defined on $O_i$ is such that 
$$\proj{O_i \setminus I_k}{\widehat{m}_i(\tup{u})} = \Flip_{\tup{Y}, \tup{Z}}(\proj{O_i \setminus I_k}{m_i(\tup{u})})$$
 and
$$\proj{I_k}{\widehat{m}_i(\tup{u})} = \EFlip_{\tup{Y}, \tup{Z}; M(\proj{I_k}{m_i(\tup{u})})}(\proj{I_k}{m_i(\tup{u})})$$
For the component with \EFlip, in terms of the notations in Definition~\ref{def:eflip}, $R = O_k$, $P = Q = O_i$. %hence $P \cap R= Q \cap R = \emptyset$.
$\tup{p} = \tup{Y}, \tup{q} = \tup{Z}$, defined on $P \cup R = O_i \cup O_k$. $\tup{v} = M(\proj{I_k}{m_i(\tup{u})})$, defined on $O_k$.
$\tup{u}$ in Definition~\ref{def:eflip} corresponds to $m_i(\tup{u})$.
%$\widehat{m}_i(u) = \Flip_{y, z}(m_i(u))$.
All public modules are unchanged, $\widehat{m}_j = m_j$.
All private modules $m_j \neq m_i$ are redefined as follows: On an input $\tup{u}$ to $m_j$,
$\widehat{m}_j(\tup{u}) = m_j(\Flip_{\tup{Y}, \tup{Z}}(\tup{u}))$.
The required possible world $R'$ is obtained by taking the join of the standalone relations of these $\widehat{m}_j$-s, $j \in [n]$.

First note that by the definition of $\widehat{m}_i$, $\widehat{m}_i(\tup{x}) = \tup{y}$ due to the following reason:

(i)  $M(\proj{I_k}{m_i(x)})$ = $M(\proj{I_k}{\tup{z}})$ = $w_z = \proj{O_k}{\tup{Z}}$,
so 
\begin{eqnarray*}
\proj{I_k}{\widehat{m}_i(x)} & =  & \EFlip_{\tup{Y}, \tup{Z}; M(\proj{I_k}{m_i(x)})}(\proj{I_k}{m_i(x)})\\
&  =  & \EFlip_{\tup{Y}, \tup{Z}; M(\proj{I_k}{\tup{z}})}(\proj{I_k}{\tup{z}})\\
&  = & \proj{I_k}{\tup{y}})
\end{eqnarray*}

(ii) 
\begin{eqnarray*}
\proj{O_i \setminus I_k}{\widehat{m}_i(x)} & = = & \Flip_{\tup{Y}, \tup{Z}}(\proj{O_i \setminus I_k}{m_i(x)})\\
&  = & \Flip_{\tup{Y}, \tup{Z}}(\proj{O_i \setminus I_k}{\tup{z}})\\
&  =  & \proj{O_i \setminus I_k}{\tup{y}}
\end{eqnarray*}

 Taking union of (i) and (ii), $\widehat{m}_i(x) = \tup{y}$.
Hence if $\proj{I_i}{\tup{t}} = \tup{x}$ for some $\tup{t} \in R'$, then $\proj{O_i}{\tup{t}} = \tup{y}$.

Again, next we argue that  $R' \in \Worlds(R, H_i)$,
\eat{ Since $R'$ is the join of the standalone relations for
modules $\widehat{m}_j$-s, $R'$ maintains all functional dependencies $I_j \rightarrow O_j$.
Also none of the public modules are unchanged, hence for any public module $m_j$ and any tuple $t$
in $R'$, $\proj{O_j}{\tup{t}} = m_j(\proj{I_j}{\tup{t}})$. So we only need to}
and it suffices to show that
the projection of $R$ and $R'$ on the visible attributes are the same.

%Consider a tuple $t$ in $R$ and let $\tup{t_0} = \proj{I_1}{\tup{t}}$
%be the projection of the input on the inbitial input $I_1$ ($I_1$ is the key of the relation
%$R$ of the workflow). If $\proj{O_i}{\tup{t}} \notin \{\tup{y}, \tup{z}\}$, then
%the tuple $\tup{t'}$ in $R'$ where $\tup{t_0} = \proj{I_1}{\tup{t'}}$
%is identical to $\tup{t}$.

%
%Since $I_1$ is essentially the key attributes of relation $R$ or $R'$,
%there are two unique tuples $\tup{t} \in R$ and $\tup{t'} \in R'$
%such that $\proj{I_1}{\tup{t}} = \proj{I_1}{\tup{t'}} = \tup{p}$,
%for some arbitrary tuple $\tup{p}$ on attributes $I_1$.
%Let us assume, wlog. that the modules are numbered in topologically sorted order.
Let $I_0$ be the initial input attributes to the workflow, and let $\tup{p}$ be a tuple defined on $I_0$.
There are two unique tuples $\tup{t} \in R$ and $\tup{t'} \in R'$
such that $\proj{I_1}{\tup{t}} = \proj{I_1}{\tup{t'}} = \tup{p}$.
%Since $M$ does not have any successor, let us assume that $M = m_{n+1}$, also wlog. assume that
%the public modules in $C$ are not counted in $j = 1$ to $n+1$ by renumbering the modules.
Note that any intermediate or final attribute $a \in A \setminus I_0$ belongs to $O_j$, for a unique $j \in [1, n]$
(since for $j \neq \ell$, $O_j \cap O_{\ell} = \phi$).
So it suffices to show that $t, t'$ projected on $O_j$ are equivalent \wrt visible attributes for all
module $j$, $j = 1$ to $n+1$.

Let $\tup{c}_{j, m}, \tup{c}_{j, \widehat{m}}$ be the values of input attributes $I_j$ and $\tup{d}_{j, m}, \tup{d}_{j, \widehat{m}}$
be the values of output attributes $O_j$ of module $m_j$, in $\tup{t} \in R$ and $\tup{t'} \in R'$
respectively on initial input attributes $\tup{p}$
(i.e. $\tup{c}_{j, m} = \proj{I_j}{\tup{t}}$, $\tup{c}_{j, \widehat{m}} = \proj{I_j}{\tup{t'}}$,
$\tup{d}_{j, m} = \proj{O_j}{\tup{t}}$ and $\tup{d}_{j, \widehat{m}} = \proj{O_j}{\tup{t'}}$).
We prove by induction on $j = 1$ to $n$ that

\begin{eqnarray}
\forall j \neq i, 1 \leq j \leq n, \tup{d}_{j, \widehat{m}} = \Flip_{\tup{Y}, \tup{Z}}(\tup{d}_{j, m})\label{equn:flip2}\\
\proj{I_k}{\tup{d}_{i, \widehat{m}}} = \EFlip_{\tup{Y}, \tup{Z}; M(\proj{I_k}{\tup{d}_{i,m}})}(\proj{I_k}{\tup{d}_{i, m}})\label{equn:flip3}\\
\proj{O_i \setminus I_k}{\tup{d}_{i, \widehat{m}}} = \Flip_{\tup{Y}, \tup{Z}}(\proj{O_i \setminus I_k}{\tup{d}_{i,m}})\label{equn:flip4}
%\tup{d}_{k, \widehat{m}} = \Flip_{w_y, w_z}(\tup{d}_{k, m})\label{equn:flip4}\\
\end{eqnarray}

First we argue that proving (\ref{equn:flip2}), (\ref{equn:flip3})  and (\ref{equn:flip4})
shows that the join of $\angb{\widehat{m}_i}_{1 \leq i \leq n}$
is a possible world of $R$ \wrt %visible attributes $V$.
hidden attributes $H_i$.
\begin{itemize}
\item [(A)] When $m_j$ is a private module,  $j \neq i$,
note that $\tup{d}_{j, m}$
and $\tup{d}_{j, \widehat{m}} = \Flip_{\tup{Y}, \tup{Z}}(\tup{d}_{j, m})$ may differ only on attributes $(O_k \cup O_i) \cap O_j$.
But for $j \neq i$ and $j \neq k$ ($m_j$ is private module whereas $m_k$ is the composite public module),
$(O_k \cup O_i) \cap O_j = \emptyset$.
%But $y \equiv_{V_i} z$, i.e. these tuples are equivalent on the visible attributes.
Hence for all private modules other than $m_i$, the $t, t'$ are equal \wrt $O_j$ and therefore equivalent.
%(actually for all $j \neq i$, $O_j \cap O_i = \emptyset$, so the outputs are equal and therefore equivalent).
\item [(B)] For $m_i$, from (\ref{equn:flip3}),\\
 $\proj{I_k}{\tup{d}_{i, \widehat{m}}} = \EFlip_{\tup{Y}, \tup{Z}; M(\proj{I_k}{\tup{d}_{i, m}})}(\proj{I_k}{\tup{d}_{i, m}})$.
Here $\proj{I_k}{\tup{d}_{i, m}}$ and $\proj{I_k}{\tup{d}_{i, \widehat{m}}}$
may differ on $I_k$ only if \\
$M(\proj{I_k}{\tup{d}_{i, m}}) \in \{w_y, w_z\}$.
By Corollary~{cor:out-equiv}, $\tup{y} \equiv_{H_i} \tup{z}$, i.e.
$\proj{I_k}{\tup{y}} \equiv_{H_i} \proj{I_k}{\tup{z}}$.
But since $M$ is UDS, by the downstream-safety property, $w_y \equiv_{H_i} w_z$.
Then by the upstream-safety property, all inputs $\proj{I_k}{\tup{d}_{i, m}} \equiv_{H_i} \tup{y} \equiv_{H_i} \tup{z}$ such that
$M(\proj{I_k}{\tup{d}_{i, m}}) \in \{w_y, w_z\}$. In particular, if $M(\proj{I_k}{\tup{d}_{i, m}}) = w_y$, then
$\proj{I_k}{\tup{d}_{i, \widehat{m}}} = \proj{I_k}{\tup{z}}$, and \\
$\proj{I_k}{\tup{z}}, \proj{I_k}{\tup{d}_{i, m}}$ will be equivalent \wrt $H_i$.
Similarly, if $M(\proj{I_k}{\tup{d}_{i, m}}) = w_z$, then
$\proj{I_k}{\tup{d}_{i, \widehat{m}}} = \proj{I_k}{\tup{y}}$, and $\proj{I_k}{\tup{y}}, \proj{I_k}{\tup{d}_{i, m}}$ will be equivalent \wrt $H_i$.
So $t, t'$ are equivalent \wrt %$V \cap I_k$.
$I_k \setminus H_i$.

Next we argue that $t, t'$ are equivalent \wrt %$V \cap (O_i \setminus I_k)$.
$(O_i \setminus I_k) \setminus H_i$.
From (\ref{equn:flip3}), 

$$\proj{O_i \setminus I_k}{\tup{d}_{i, \widehat{m}}} = \Flip_{\tup{Y}, \tup{Z}}(\proj{O_i \setminus I_k}{\tup{d}_{i,m}})$$

$\proj{O_i \setminus I_k}{\tup{d}_{i, \widehat{m}}}$ and $\proj{O_i \setminus I_k}{\tup{d}_{i, m}}$ can differ only if\\
$\proj{O_i \setminus I_k}{\tup{d}_{i, m}} = \proj{O_i \setminus I_k}{\tup{y}}$. Then \\
$\proj{O_i \setminus I_k}{\tup{d}_{i, \widehat{m}}} = \proj{O_i \setminus I_k}{\tup{z}}$,
or, $\proj{O_i \setminus I_k}{\tup{d}_{i, m}} = \proj{O_i \setminus I_k}{\tup{z}}$.
Therefore, $\proj{O_i \setminus I_k}{\tup{d}_{i, \widehat{m}}} = \proj{O_i \setminus I_k}{\tup{y}}$.
But $\proj{O_i \setminus I_k}{\tup{y}}$ and  $\proj{O_i \setminus I_k}{\tup{z}}$ are equivalent \wrt %visible attributes $V$.
$H_i$.
Hence $\proj{O_i \setminus I_k}{\tup{d}_{i, m}}$ and $\proj{O_i \setminus I_k}{\tup{d}_{i, \widehat{m}}}$ are equivalent \wrt %$V$.
$H_i$.

Hence $t, t'$ are equivalent on $O_i$.
\item[(C)] When $m_j$ is a public module, $\tup{d}_{j, \widehat{m}} = \Flip_{\tup{Y}, \tup{Z}}(\tup{d}_{j, m})$. Here $\tup{d}_{j, m}, \tup{d}_{j, \widehat{m}}$
can differ only on $(O_k \cup O_i) \cap O_j$. If $j \neq k$, the intersection is empty, and we are done.
If $j = k$, $\tup{d}_{j, m}, \tup{d}_{j, \widehat{m}}$ may differ only if $\tup{d}_{j, m} \in \{w_y, w_z\}$.
But note that $\tup{y} \equiv_{h_i} \tup{z}$, so  $\proj{I_k}{\tup{y}} \equiv_{h_i} \proj{I_k}{\tup{z}}$, and $\proj{I_k}{\tup{y}} \equiv_{H_i} \proj{I_k}{\tup{z}}$.
Since $m_k$ is UDS, for these two equivalent inputs the respective outputs $w_y, w_z$ are also equivalent.
Hence in all cases the values of $t, t'$ on $O_k$ are equivalent.
%$j \neq n+1$, $O_j \cap O_i = \emptyset$, hence the values of $t, t'$
%on $O_j$ are the same and therefore equivalent. (C) Finally, consider $M = m_{n+1}$ that is not covered by
%(\ref{equn:flip2}). $M$ gets all its inputs from $m_i$. From (\ref{equn:flip2}), $\tup{d}_{i, \widehat{m}} = \Flip_{y, z}(\tup{d}_{i, m})$.
%Since $y, z, \tup{d}_{i, m}, \tup{d}_{i, \widehat{m}}$ are all defined on attributes $O_i$, and input to $m_{n+1}$, $I_{n+1} \subseteq O_i$,
%$c_{n+1, \widehat{m}} = \Flip_{y, z}(c_{n+1, m})$. Hence $c_{n+1, \widehat{m}} \equiv_{V} c_{n+1, m}$. Since these two inputs of $m_{n+1}$
%are equivalent \wrt $V$, by the UDS property of $M = m_{n+1}$, its outputs are also equivalent,
%i.e. $\tup{d}_{n+1, \widehat{m}} \equiv_{V} \tup{d}_{n+1, m}$.
\end{itemize}
Combining (A), (B), (C), the projections of $t, t'$ on $O_{j}$ are equivalent for all $1 \leq j \leq n$;
i.e.  $t, t'$ are equivalent \wrt %$V$.
$H_i$\\

\medskip
\noindent

\textbf{Proof of (\ref{equn:flip2}), (\ref{equn:flip3})  and (\ref{equn:flip4}).~~} The base case follows for $j = 1$.
If $m_1 \neq m_i$ ($m_1$ can be public or private, but $k \neq 1$ since $m_i$ is its predecessor),
then $I_1 \cap (O_i \cup O_k) = \emptyset$, so  for all input $\tup{u}$,
$\widehat{m}_j(\tup{u}) = m_j(\Flip_{\tup{Y}, \tup{Z}}(\tup{u})) = m_j(\tup{u})$ (if $m_1$ is private) and
$\widehat{m}_j(\tup{u}) = m_j(\tup{u})$ (if $m_1$ is public). Since the inputs $\tup{c}_{1, \widehat{m}} = \tup{c}_{1, m}$
(both projections of initial input $\tup{p}$ on $I_1$),
the outputs $\tup{d}_{1, \widehat{m}} = \tup{d}_{1, m}$. This shows (\ref{equn:flip2}).
If $m_1 = m_i$, the inputs are the same, and by definition of $\widehat{m}_1$,
\begin{eqnarray*}
\proj{I_k}{\tup{d}_{1, \widehat{m}}} & = & \proj{I_k}{\widehat{m}_1(\tup{c}_{1, \widehat{m}})}\\
& = & \EFlip_{\tup{Y}, \tup{Z}; M(\proj{I_k}{m_1(\tup{c}_{1, \widehat{m}})})}(\proj{I_k}{m_1(\tup{c}_{1, \widehat{m}})})\\
& = & \EFlip_{\tup{Y}, \tup{Z}; M(\proj{I_k}{m_1(\tup{c}_{1, m})})}(\proj{I_k}{m_1(\tup{c}_{1, m})})\\
& = & \EFlip_{\tup{Y}, \tup{Z}; M(\proj{I_k}{\tup{d}_{1, m}})}(\proj{I_k}{\tup{d}_{1, m}})
\end{eqnarray*}

This shows (\ref{equn:flip3}) for $i = 1$.
Again, by definition of $\widehat{m}_1$,
\begin{eqnarray*}
\proj{O_1 \setminus I_k}{\tup{d}_{1, \widehat{m}}} & = & \proj{O_1 \setminus I_k}{\widehat{m}_1(\tup{c}_{1, \widehat{m}})}\\
&  = &\Flip_{\tup{Y}, \tup{Z}}(\proj{O_1 \setminus I_k}{m_1(\tup{c}_{1, \widehat{m}})}\\
& = & \Flip_{\tup{Y}, \tup{Z}}(\proj{O_1 \setminus I_k}{m_1(\tup{c}_{1, m})}\\
& = & \Flip_{\tup{Y}, \tup{Z}}(\proj{O_1 \setminus I_k}{\tup{d}_{1, m}})
\end{eqnarray*}
%
%$= \EFlip_{Y, Z; M(\proj{I_k}{m_1(c_{1, \widehat{m}})})}(m_1(c_{1, \widehat{m}})) = \EFlip_{Y, Z; M(\proj{I_k}{\tup{d}_{1, \widehat{m}}})}(\tup{d}_{1, \widehat{m}})$
%$ = \widehat{m}_1(c_{1, \widehat{m}}) = \Flip_{Y, Z}(m_i(c_{1, \widehat{m}}))$ $= \Flip_{Y, Z}(m_i(c_{1, m})) = \Flip_{Y, Z}(\tup{d}_{1, m})$.
This shows (\ref{equn:flip4}).

Suppose the hypothesis holds until $j-1$, consider $m_{j}$.
From the induction hypothesis, if $I_j \cap O_i = \emptyset$ ($m_j$ does not get input from $m_i$)
then $\tup{c}_{j, \widehat{m}} = \Flip_{\tup{Y}, \tup{Z}}(\tup{c}_{j, m})$, hence $\tup{c}_{j, m} = \Flip_{\tup{Y}, \tup{Z}}(\tup{c}_{j, \widehat{m}})$ (see Observation~\ref{obs:flip}(\ref{equn:f1})).

\begin{itemize}
  \item [(i)] If $j = i$, $I_i \cap O_i = \emptyset$, hence $\tup{c}_{i, \widehat{m}} = \Flip_{\tup{Y}, \tup{Z}}(\tup{c}_{i, m}) = \tup{c}_{i,m}$ ($I_i \cap (O_i \cup O_k) = \emptyset$,
  $m_k$ is a successor of $m_i$, so $m_i$ cannot be successor of $m_k$). By definition of $\widehat{m}_i$,
\begin{eqnarray*}
  \proj{I_k}{\tup{d}_{i, \widehat{m}}}  &= & \proj{I_k}{\widehat{m}_i(\tup{c}_{i, \widehat{m}})}\\
  &  = & \EFlip_{\tup{Y}, \tup{Z}; M(\proj{I_k}{m_i(\tup{c}_{i, \widehat{m}})})}(\proj{I_k}{m_i(\tup{c}_{i, \widehat{m}})})\\
  & =  & \EFlip_{\tup{Y}, \tup{Z}; M(\proj{I_k}{m_i(\tup{c}_{i, m})})}(\proj{I_k}{m_i(\tup{c}_{i, m})})\\
  & =  & \EFlip_{\tup{Y}, \tup{Z}; M(\proj{I_k}{\tup{d}_{i, m}})}(\proj{I_k}{\tup{d}_{i, m}})
\end{eqnarray*}
This shows (\ref{equn:flip3}).

Again,
\begin{eqnarray*}
\proj{O_i \setminus I_k}{\tup{d}_{i, \widehat{m}}}  &  = & \proj{O_i \setminus I_k}{\widehat{m}_i(\tup{c}_{i, \widehat{m}})}\\
&  =  & \Flip_{\tup{Y}, \tup{Z}}(\proj{O_i \setminus I_k}{m_i(\tup{c}_{i, \widehat{m}})})\\
& = & \Flip_{\tup{Y}, \tup{Z}}(\proj{O_i \setminus I_k}{m_i(\tup{c}_{i, m})})\\
&  = & \Flip_{\tup{Y}, \tup{Z}}(\proj{O_i \setminus I_k}{\tup{d}_{i, m}})
\end{eqnarray*}
This shows (\ref{equn:flip4}).

\item [(ii)] If $j = k$, $m_k$ gets all its inputs from $m_i$, so $\proj{I_k}{\tup{d}_{i, m}} = \tup{c}_{k, m}$.
Hence 
\begin{eqnarray*}
\tup{c}_{k, \widehat{m}} & =  & \EFlip_{\tup{Y}, \tup{Z}; M(\proj{I_k}{\tup{d}_{i, m}})}(\tup{c}_{k, m})\\
& = & \EFlip_{\tup{Y}, \tup{Z}; M(\tup{c}_{k, m})}(\tup{c}_{k, m})\\ 
& = & \EFlip_{\tup{Y}, \tup{Z}; \tup{d}_{k, m}}(\tup{c}_{k, m})
\end{eqnarray*}

Therefore, 
\begin{eqnarray*}
\tup{d}_{k, \widehat{m}} & = & \widehat{m}_k(\tup{c}_{k, \widehat{m}})\\
&  = & m_k(\tup{c}_{k, \widehat{m}})\\
& = & m_k(\EFlip_{\tup{Y}, \tup{Z}; \tup{d}_{k, m}}(\tup{c}_{k, m}))
\end{eqnarray*}

Lets evaluate the term $m_k(\EFlip_{\tup{Y}, \tup{Z}; \tup{d}_{k, m}}(\tup{c}_{k, m}))$.
This says that for an input to $m_k$ is $\tup{c}_{k, m}$, and its output $\tup{d}_{k, m}$, (a) if $\tup{d}_{k, m} = w_y$, then

$$\EFlip_{\tup{Y}, \tup{Z}; \tup{d}_{k, m}}(\tup{c}_{k, m}) = \proj{I_k}{\tup{z}},$$ 

and in turn 

$$\tup{d}_{k, \widehat{m}} = m_k(\EFlip_{\tup{Y}, \tup{Z}; \tup{d}_{k, m}}(\tup{c}_{k, m})) = w_z;$$

(b) if $\tup{d}_{k, m} = w_z$, then
$$\EFlip_{\tup{Y}, \tup{Z}; \tup{d}_{k, m}}(\tup{c}_{k, m}) = \proj{I_k}{\tup{y}},$$

 and in turn 
 
 $$\tup{d}_{k, \widehat{m}} = m_k(\EFlip_{\tup{Y}, \tup{Z}; \tup{d}_{k, m}}(\tup{c}_{k, m})) = w_y;$$
 
(c) otherwise 
\begin{eqnarray*}
\tup{d}_{k, \widehat{m}} & = & m_k(\EFlip_{\tup{Y}, \tup{Z}; \tup{d}_{k, m}}(\tup{c}_{k, m}))\\
& = & m_k(\tup{c}_{k, m}) = \tup{d}_{k, m}
\end{eqnarray*}
According to Definition~\ref{def:flip}, the above implies that 

\begin{eqnarray*}
\tup{d}_{k, \widehat{m}} & = & \Flip_{w_y, w_z}(\tup{d}_{k, m})\\
& = & \Flip_{\tup{Y}, \tup{Z}}(\tup{d}_{k, m})
\end{eqnarray*}

 This shows
(\ref{equn:flip2}).

\item[(iii)] If $j \neq i$ and $m_j$ is a private module, $m_j$ can get inputs from $m_i$.
(but since there is no data sharing %\red{since there is no data sharing ((ADD EXAMPLE))}
%(due to data sharing, $I_k \cap I_j$ may be non-empty)
$I_j \cap I_k = \emptyset$),
and other private or public modules $m_{\ell}, \ell \neq i$ ($\ell$ can be equal to $k$). Let us partition the input to $m_j$ ($\tup{c}_{j, m}$ and
$\tup{c}_{j, \widehat{m}}$ defined on $I_j$)
on attributes $I_j \cap O_i$
and $I_j \setminus O_i$ %(recall that $I_k \subseteq O_i$).
From (\ref{equn:flip2}),
using the induction hypothesis,

\begin{equation}
\proj{I_j \setminus O_i}{\tup{c}_{j, \widehat{m}}} = \Flip_{\tup{Y}, \tup{Z}}(\proj{I_j \setminus O_i}{\tup{c}_{j, m}})\label{equn:flip5}
\end{equation}

Now $I_k \cap I_j = \emptyset$, since there is no data sharing.
Hence $(I_j \cap O_i) \subseteq (O_i \setminus I_k)$.
From (\ref{equn:flip4}) using Observation~\ref{obs:eflip},

\begin{equation}
\proj{I_j \cap O_i}{\tup{c}_{j, \widehat{m}}} = \Flip_{\tup{Y}, \tup{Z}}(\proj{I_j \cap O_i}{\tup{c}_{j, m}})\label{equn:flip6}
\end{equation}

From (\ref{equn:flip5}) and (\ref{equn:flip6}), using Observation~\ref{obs:flip} (\ref{equn:f6}),
%$\proj{I_j}{\tup{c}_{j, \widehat{m}}} = \Flip_{Y, Z}(\proj{I_j}{\tup{c}_{j, m}})$, but 
and since $\tup{c}_{j, m}, \tup{c}_{j, \widehat{m}}$
are defined on $I_j$, so

\begin{equation}
\tup{c}_{j, \widehat{m}} = \Flip_{\tup{Y}, \tup{Z}}(\tup{c}_{j, m})\label{equn:flip7}
\end{equation}

From (\ref{equn:flip7}),
\begin{eqnarray*}
\tup{d}_{j, \widehat{m}} & = & \widehat{m}_j(\tup{c}_{j, \widehat{m}})\\
&  = & m_j(\Flip_{\tup{Y}, \tup{Z}}(\tup{c}_{j, \widehat{m}}))\\
&  = & m_j(\tup{c}_{j, m})\\
& = & \tup{d}_{j, m}\\
& = & \Flip_{\tup{Y}, \tup{Z}}(\tup{d}_{j, m})
\end{eqnarray*}

$\Flip_{\tup{Y}, \tup{Z}}(\tup{d}_{j, m}) = \tup{d}_{j, m}$ follows due to the fact that $O_j \cap (O_i \cup O_k) = \emptyset$ ($j \neq \{i, k\}$), $\tup{Y}, \tup{Z}$ are defined on $O_i \cup O_k$,
whereas $\tup{d}_{j, m}$ is defined on $O_j$ (again see Observation~\ref{obs:flip} (\ref{equn:f3})).

\item[(iv)] Finally consider  $m_j$ is a public module such that $j \neq k$.
$m_j$ can still get input from $m_i$, but none of the attributes in $I_j \cap O_i$ can be hidden by the definition of $m_k = M = C(h_i)$. %C(\widebar{V_i})$.
Further, by the definition of $M = m_k$, $m_j$ cannot get any input
from $m_k$ ($M$ is the closure of public module); so $I_j \cap O_k = \emptyset$.
Let us partition the inputs to $m_j$ ($\tup{c}_{j, m}$ and $\tup{c}_{j, \widehat{m}}$ defined on $I_j$)
into three two disjoint subsets: (a) $I_j \cap O_i$, and (b)
$I_j \setminus O_i$.
%\red{Since there is no data sharing}, 
Since there is no data sharing $I_k \cap I_j = \emptyset$,
and we again get (\ref{equn:flip7}) that
\begin{eqnarray*}
\tup{c}_{j, \widehat{m}} & = & \Flip_{\tup{Y}, \tup{Z}}(\tup{c}_{j, m})\\
&  = & \tup{c}_{j, m}
\end{eqnarray*}

 $\Flip_{\tup{y}, \tup{z}}(\tup{c}_{j, m}) = \tup{c}_{j, m}$ follows due to following reason.
If  $I_j \cap O_i = \emptyset$, i.e. if $m_j$ does not get any input from $m_i$, again this is true (then $I_j \cap (O_i \cup O_k) = (I_j \cap O_i) \cup (I_j \cap O_k) = \emptyset$).
If $m_j$ gets an input from $m_i$, i.e. $I_j \cap O_i \neq \emptyset$,
since $j \neq k$,
$I_j \cap O_i$ does not include any hidden attributes from %$\widebar{V_i}$
$h_i$
($m_k$ is the closure $C(h_i)$).%$C(\widebar{V_i})$).
But $\tup{y} \equiv_{h_i} \tup{z}$, i.e. the visible attribute values of $\tup{y}, \tup{z}$ are the same.
In other words, $\proj{I_j \cap O_i}{\tup{y}} = \proj{I_j \cap O_i}{\tup{z}}$, and again from Observation~\ref{obs:flip} (\ref{equn:f5}),
$$\Flip_{\tup{Y}, \tup{Z}}(\tup{c}_{j, m})  = \Flip_{\tup{y}, \tup{z}}(\tup{c}_{j, m}) = \tup{c}_{j, m}$$
 (again, $I_j \cap O_k = \emptyset$).

Therefore,
\begin{eqnarray*}
\tup{d}_{j, \widehat{m}} & = & \widehat{m}_j(\tup{c}_{j, \widehat{m}})\\
& = & m_j(\tup{c}_{j, \widehat{m}})\\
& = & m_j(\Flip_{\tup{Y}, \tup{Z}}(\tup{c}_{j, m}))\\
& = & m_j(\tup{c}_{j, m})\\
& = & \tup{d}_{j, m}\\
& = & \Flip_{\tup{Y}, \tup{Z}}(\tup{d}_{j, m})
\end{eqnarray*}

$\Flip_{\tup{Y}, \tup{Z}}(\tup{d}_{j, m}) = \tup{d}_{j, m}$ again follows due to the fact that $O_j \cap (O_i \cup O_k) = \emptyset$, since $j \notin \{i, k\}$.
\end{itemize}
Hence all the cases for the induction hypothesis hold true, and this completes the proof of the lemma for Case-II.
\end{proof}

\subsection{Proof of Lemma~\ref{lem:composite-UDS}}\label{sec:proof-lem:composite-UDS}
%\blue{This lemma applies for general workflows}
Recall that $I_i, O_i, A_i$ denote the set of input, output and all attributes of a module $m_i$.\\

\smallskip
\noindent
\textsc{Lemma}~\ref{lem:composite-UDS}.~~
\emph{
Let $M$ be a composite module consisting only of public modules.
Let $H$ be a subset of hidden
attributes such that every public module $m_j$ in $M$ is \UDS\ \wrt
$A_j \cap H$. Then $M$ is \UDS\ \wrt $(I \cup O) \cap H$.
}\\
%Let $M$ be a composite module consisting of public modules.
%% that get all its inputs from a private module $m_i$ or from initial input. Let $I$
%%and $O$ be input and output attributes of $M$.
%Let $H$ be a subset of hidden
%attributes such that every public module $m_j$ in $M$ is \UDS\ \wrt
%$A_j \setminus H$. Then $M$ is \UDS\ \wrt $(I \cup O) \setminus H$.
%}\\

%\end{lemma}
\begin{proof}
Let us assume, wlog., that the modules in $M$ are $m_1, \cdots, m_p$
where modules are listed in topological order.
%and $m_i = m_1$.
%Let the private modules getting inputs from $M$ be $m_{k+1}, \cdots, m_p$
%in topological order.
For $j = 1$ to $p$, let $M^j$ be the composite module comprising
$m_1, \cdots, m_j$, and let $I^j, O^j$ be its input and output.
Hence $M^p = M, I^p = I$ and $O^p = O$. We prove by induction on $2 \leq j \leq p$
that $M^j$ is \UDS\ \wrt $H \cap (I^j \cup O^j)$.
We present the proof without going through the notations for the sake of simplicity.
\par
The base case directly follows for $j = 1$, since $A_1 = I_1 \cup O_1 = I^1 \cup O^1$. Let the hypothesis hold until $M^j$ and consider $M^{j+1}$. 
By induction hypothesis, $M^j$ is \UDS\ \wrt $(I^j \cup O^j) \cap H$. The module $m_{j+1}$
may consume some outputs of $M^j$ ($m_2$ to $m_j$). Hence
\par
\begin{equation}
I^{j+1} = I^j \cup I_{j+1} \setminus O^j ~~\textit{and}~~O^{j+1} = O^j \cup O_{j+1} \setminus I_{j+1} \label{equn:1}
\end{equation}

Consider two equivalent inputs $x_1, x_2$ \wrt\ hidden attributes $H \cap (I^{j+1} \cup O^{j+1}))$.
Therefore their projection on visible attributes $I^{j+1} \setminus H = (I^{j+1} \cup O^{j+1}) \setminus H$ are the same -----------(A)\\

Partition $I^{j+1}$ into $I^j$ and $I^{j+1} \setminus I^j = I_{j+1} \setminus I^j$.
Projection of $x_1$ and $x_2$, let $x_{11}, x_{12}$, on $I^j \setminus H$ will be the same. 
Therefore, the inputs to $M^j$ are equivalent. By hypothesis, their outputs,
say $z_1, z_2$ will have same values on $O^j \setminus H = (I^{j+1} \cup O^{j+1}) \setminus H$ ----------- (B).\\

Again, on inputs $x_1, x_2$ to $M^{j+1}$, inputs to  $m_{j+1}$ will be concatenation of
(i) projection of output $z_1, z_2$ from $M^j$ on $O^j \cap I_{j+1}$
and (ii) projection of $x_1, x_2$ on $I_{j+1} \setminus I^j$.
From (A) and (B), they will be equivalent on visible attributes $(I^{j+1} \cup O^{j+1}) \setminus H$.
Therefore, the inputs to $m_{j+1}$ are equivalent \wrt\ $H$. 
Since $m_{j+1}$ is \UDS, the outputs, say $w_1, w_2$ are also equivalent ------------(C).\\

Now note that $y_1$ is defined on $O^{j+1} = (O^j \setminus I_{j+1}) \cup O_{j+1}$.
Its projection on $O^j \setminus I_{j+1}$ is projection of $z_1$ on $O^j \setminus I_{j+1}$, 
and its projection on $O_{j+1}$ is $z_1$.
Similarly $y_2$ can be partitioned.
From (B) and (C), the projections are equivalent, therefore the outputs $y_1$ and $y_2$ are equivalent.
\par
This shows that for two equivalent input the outputs are equivalent.
The other direction, for two equivalent outputs all of their inputs are
equivalent can be proved in similar way by considering modules in
reverse topological order from $m_{k}$ to $m_2$.
\end{proof}

\eat{
							Hence
							$M^j$ is \UDS\ \wrt $(I^j \cup O^j) \cap H$. The module $m_{j+1}$
							may consume some outputs of $M^j$ ($m_2$ to $m_j$). Hence
							\par
							\begin{equation}
							I^{j+1} = I^j \cup I_{j+1} \setminus O^j ~~\textit{and}~~O^{j+1} = O^j \cup O_{j+1} \setminus I_{j+1} \label{equn:1}
							\end{equation}
							First we show that for two equivalent inputs the outputs are equivalent.
							Consider two inputs $x_1, x_2$ to $M^{j+1}$ (on attributes $I^{j+1}$)
							and let $y_1 = M^{j+1}(x_1)$ and $y_2 = M^{j+1}(x_2)$
							(on attributes $O^{j+1}$).
							Let $x_{11}, x_{12}$ be projection of $x_1$ on $I^j$ and $I^{j+1} \setminus O^j$; define $x_{21}, x_{22}$
							from $x_2$ similarly. 
							Let $z_1 = M^j(x_{11})$ and $z_2 = M^j(x_{21})$.
							Let $z_1$
							Let $y_{11}, x_{12}$ be projection of $y_1$ on $O_{}$ and $I^{j+1} \setminus O^j$; define $x_{21}, x_{22}$
							from $x_2$ similarly. 
							
							\par
							Let %$x_1 \equiv_V x_2$ \wrt $V = (I^{j+1} \cup O^{j+1}) \setminus H$.
							$x_1 \equiv_{h} x_2$ \wrt $h = (I^{j+1} \cup O^{j+1}) \cap H$.
							Note that $I^{j+1}\cap O^{j+1} = \emptyset$ and $I^{j}\cap O^{j} = \emptyset$.
							Therefore, $\proj{I^{j+1} \setminus H}{x_1} = \proj{I^{j+1} \setminus H}{x_2}$.
							Hence by (\ref{equn:1}), $\proj{I^{j} \setminus H}{x_1} = \proj{I^{j} \setminus H}{x_2}$; equivalently,
							$x_1 \equiv_{h'} x_2$ where $h'  = (I^j \cup O^j) \cap H$.
							
							%	By (\ref{equn:1}), $x_1 \equiv_{V_1} x_2$ %and $x_1 \equiv_{V_2} x_2$,
							%	where $V_2  = (I^j \cup O^j) \setminus H$.
							%and $V_2 = (I_{j+1} \cup O_{j+1}) \setminus H$ ($A_{j+1} = I_{j+1} \cup O_{j+1}$).
							%This is because $O^j \cap I^j = \emptyset$ and therefore $I^j \subseteq I^{j+1}$,
							%and $O^{j+1} \cap I^{j+1} = \emptyset$ and therefore $O^j \subseteq O^{j+1}$.
							\par
							Since modules are sorted in topological order, all of the attributes in $O_{j+1}$
							belong to $O^{j+1}$. Also recall that every attribute is produced by a unique module; hence $O^j \cap O_{j+1} = \emptyset$.
							Hence $y_1$ can be partitioned into $y_{11}$ and $y_{12}$, where
							$y_{11}$ is defined on $O_{j+1}$ and $y_{12}$ is defined on $O^j$;
							let $y_{21}, y_{22}$ be the corresponding partitions of $y_2$.
							\par
							Similarly, let $x_{12}$ be projections of $x_1$ on $I^j$
							%(however note that, $I_{j+1}$ and $I^j$ may have overlap);
							similarly define %$x_{21},
							$x_{22}$ for $x_2$.
							%Since $x_1 \equiv_{V_2} x_2$, $x_{12} \equiv_{V_2} x_{22}$. 
							Since $x_1 \equiv_{h'} x_2$, $x_{12} \equiv_{h'} x_{22}$. 
							Since $y_{12} = M^j(x_{12})$ and $y_{22} = M^j(x_{22})$, by induction hypothesis
							\begin{equation}
							%y_{12} \equiv_{V_2} y_{22}\label{equn:2}
							y_{12} \equiv_{h'} y_{22}\label{equn:2}
							\end{equation}
							\par
							Let $x_{11}'$ (resp. $x_{21}'$) be projections of $x_1$ (resp. $x_{22}$)
							on $I_{j+1} \setminus O^j$. %Hence $x_{11}' \equiv_{V} x_{21}'$.
							Since $x_1 \equiv_{h} x_2$, $x_{11}' \equiv_{h} x_{21}'$.
							Let $x_{11}''$ (resp. $x_{21}''$) be projections of $y_{12}$ (resp. $y_{22}$)
							on $I_{j+1} \cap O^j$.
							%Since $y_{12} \equiv_{V_2} y_{22}$, $x_{11}'' \equiv_{V_2} x_{21}''$.
							Since $y_{12} \equiv_{h'} y_{22}$, $x_{11}'' \equiv_{h'} x_{21}''$.
							\par
							Let $x_{11}$ be concatenation of $x_{11}', x_{11}''$; similarly $x_{12}$.
							Note that these are defined on $I_{j+1}$.
							$h = (I^{j+1} \cup O^{j+1}) \cap H$ and $h' = (I^j \cup O^j) \cap H$.
							Therefore $x_{11} \equiv_{h''} x_{12}$, where $h'' = (I_{j+1} \cup O_{j+1}) \cap H$. Since $m_{j+1}$ is \UDS\ \wrt $h''$,
							\begin{equation}
							y_{11} \equiv_{h''} y_{21}\label{equn:3}
							\end{equation}
							(note that $y_{11} = m_{j+1}(x_{11}), y_{21} = m_{j+1}(x_{21})$).
							\par
							$V_1 \cup V_2 \supseteq V$. Hence from (\ref{equn:2}) and (\ref{equn:3}),
							concatenating $y_{11}, y_{12}$ to $y_1$ and $y_{21}, y_{22}$ to $y_2$,
							\begin{equation}
							y_{1} \equiv_{h} y_{2}\nonumber
							\end{equation}
							This shows that for two equivalent input the outputs are equivalent.
							The other direction, for two equivalent outputs all of their inputs are
							equivalent can be proved in similar way by considering modules in
							reverse topological order from $m_{k}$ to $m_2$.
							\end{proof}
}

\section{Proofs from Section~5}\label{sec:app-optimization}

%\section{Proof of Theorem~3} %\ref{lem:nphard-uds}}

\subsection{Proof of Theorem~\ref{thm:nphard-uds}} \label{sec:proof-thm:nphard-uds} %NP-hardness}
\emph{\textsc{Theorem}~\ref{thm:nphard-uds}.~~ 
Given public module $m_j$ with $k$ attributes, and
a subset of hidden attributes %$V$, 
$H$, deciding whether $m_j$ is \UDS\ \wrt %$V$
$H$ is coNP-hard in $k$.
Further, all \UDS\ subsets can be found in EXP-time in $k$.
}\\
%\emph{Given public module
%$m_i$ with total number of attributes $k$, and a subset of
%attributes $V$, checking whether $m_i$ is
%UDS \wrt $V$ is coNP-hard in $k$.}\\

%\end{lemma}
\smallskip
\noindent
\paragraph*{Proof of NP-hardness}
\begin{proof}
%First let us do the proof for upstream safety, where we reduce from UNSAT (that checks whether a boolean formula is \emph{not} satisfiable).
We do a reduction from UNSAT, where given $n$ variables $x_1, \cdots, x_n$, and a boolean formula $f(x_1, \cdots, x_n)$, the goal is to check
whether %there is an assignment of the variables that satisfies $f$.
$f$ is \emph{not} satisfiable.
In our construction, $m_i$ has $n+1$ inputs $x_1, \cdots, x_n$ and $y$, and the output is $z$ = $m_i(x_1, \cdots, x_n, y)$ = $f(x_1, \cdots, x_n) \vee y$ (OR).
%The construction is almost the same as in Lemma~\ref{lem:nphard-onto}, except, $z = m_i(x_1, \cdots, x_n, y) = f(x_1, \cdots, x_n) \vee y$
%(OR instead of AND).
%The set of visible attributes is $\{y, z\}$; so all of 
hidden attributes $H = \{x_1, \cdots, x_n\}$, so $y, z$ atr visible. We claim that $f$ is not satisfiable if and only
if $m_i$ is \UDS\ \wrt $H$.
\par
Suppose $f$ is not satisfiable, so for all assignments of $x_1, \cdots, x_n$, $ f(x_1, \cdots, x_n) = 0$.
%The upstream-safety condition says that if two outputs are equivalent \wrt $V$, then all of their preimages must be equivalent.
For output $z = 0$, then the visible attribute $y$ must have 0 value in all the rows of the relation of $m_i$.
Also for $z = 1$, the visible attribute $y$ must have 1 value, since in all rows $ f(x_1, \cdots, x_n) = 0$.
Hence for equivalent inputs \wrt $H$, the outputs are equivalent and vice versa. Therefore $m_i$ is \UDS\ \wrt $H$.
\par
Now suppose $f$ is satisfiable, then there is at least one assignment of $x_1, \cdots, x_n$, such that $ f(x_1, \cdots, x_n) = 1$.
In this row, for $y = 0$, $z = 1$. However for all assignments of $x_1, \cdots, x_n$, whenever $y = 1$, $z = 1$.
So for output $z = 1$, all inputs producing $z$ are not equivalent \wrt the visible attribute $y$, therefore $m_i$
is not upstream-safe and hence not \UDS.
\end{proof}

%\subsection{Upper Bound to Find All UDS Solutions}% for Individual Private Modules}
%\label{sec:opt-uds-upper-bound}
\paragraph*{Upper Bound to Find All \UDS\ Solutions}
The lower bounds %in Section~\ref{sec:lb_standalone}
studied for the second step of the four step optimization show that for a public module $m_j$, it is not possible
to have poly-time algorithms (in $|A_j|$) even to decide if a given subset $H \subseteq A_j$ is \UDS, unless $P = NP$.
Here we present Algorithm~\ref{algo:uds} that finds \emph{all} \UDS\ solutions of $m_j$ is time exponential
in $k_j = |A_j|$, assuming that the maximum domain size of attributes $\Delta$ is a constant.
\par
\begin{algorithm}[ht]
\caption{Algorithm to find all \UDS\ solutions $\uds_j$ for a public module $m_j$}
\begin{algorithmic}[1] \label{algo:uds}
\STATE{-- Set $\uds_j = \emptyset$.}
\FOR{every subset $H$ of $A_j$} %such that $\cost(\widebar{V}) \leq C$}
        \STATE{\textit{/*Check if $H$ is downstream-safe */}}
    \FOR{every assignment $\tup{x^+}$ of the visible input attributes in $I_j \setminus H$}
        \STATE{--Check if for every assignment $\tup{x^-}$ of the hidden input attributes in $I_j \cap V$, whether
                the value of $\proj{O_j \setminus H}{m_j(\tup{x})}$ is the same, where $\proj{I_j \setminus H}{\tup{x}} = \tup{x^+}$
                and $\proj{I_j \cap H}{\tup{x}} = \tup{x^-}$}
                \IF{not}
                    \STATE{-- \textbf{$H$ is \emph{not} downstream-safe.}}
                    \STATE{-- Go to the next $H$.}
                \ELSE
                   \STATE{-- \textbf{$H$ is downstream-safe.}}
                   \STATE{-- Let $\tup{y^+} = \proj{O_j \setminus  H}{m_j(\tup{x})}$ = projection of all such tuples that have projection = $\tup{x^+}$
                   on the visible input attributes}
                   \STATE{-- Label this set of input-output pairs $(\tup{x}, m_j(\tup{x}))$ by $\angb{\tup{x^+}, \tup{y^+}}$.}
                \ENDIF
                \STATE{}
            \STATE{\textit{/*Check if $H$ is upstream-safe */}}
            \STATE{-- Consider the pairs $\angb{\tup{x^+}, \tup{y^+}}$ constructed above.}
            \STATE{-- Let $n_1$ be the number of distinct $\tup{x^+}$ values, and let $n_2$ be the number of distinct $\tup{y^+}$ values/}
            \IF{$n_1 == n_2$}
                    \STATE{-- \textbf{$H$ is upstream-safe.}}
                \STATE{-- Add $H$ to $\uds_j$.}
            \ELSE
                    \STATE{-- \textbf{$H$ is \emph{not} upstream-safe.}}
                \STATE{-- Go to the next $H$.}
            \ENDIF
    \ENDFOR
\ENDFOR
\RETURN{The set of subsets $\uds_j$.}
\end{algorithmic}
\end{algorithm}

\textbf{Time complexity.~~} The outer for loop runs for all possible subsets of $A_j$, i.e. $2^{k_j}$
times. The inner for loop runs for maximum $\Delta^{|I_j \setminus H|}$ times (this is the maximum number of such tuples
$\tup{x^+}$), whereas the check if $H$ is a valid downstream-safe subset takes $O(\Delta^{|I_j \cap H|})$ time.
Here we ignore the time complexity to check equality of tuples which will take only polynomial in $|A_i|$
time and will be dominated by the exponential terms. For the upstream-safety check, the number of
$\angb{\tup{x^+}, \tup{y^+}}$ pairs are at most $\Delta^{|I_j \setminus H|}$, and to compute the distinct number of
$\tup{x^+}, \tup{y^+}$ tuples from the pairs can be done in $O(\Delta^{2|I_j \setminus H|})$ time by a naive search; the time complexity can be
improved by the use of a hash function. Hence the total time complexity is dominated by
$2^{k_j} \times O(\Delta^{|I_j \setminus H|}) \times O(\Delta^{|I_j \cap H|} + \Delta^{2|I_j \setminus H|})$
$= O(2^{k_j}\Delta^{3k^j})$ = $O(2^{4k_j})$. By doing a tighter analysis, the multiplicative factor in the exponents can be improved,
however,
we make the point here that the algorithm runs in time exponential in $k_j = |A_j|$.

\textbf{Correctness.~~} The following lemma proves the correctness of Algorithm~\ref{algo:uds}.
\begin{lemma}\label{lem:correct-uds-algo}
Algorithm~\ref{algo:uds} adds $H \subseteq A_j$ to $\uds_j$ if and only if $m_j$ is \UDS\ \wrt $H$.
\end{lemma}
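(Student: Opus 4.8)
The plan is to reduce the correctness of Algorithm~\ref{algo:uds} to the two defining conditions of \UDS ty separately, matching each condition with one phase of the algorithm. Recall that by Definition~\ref{def:uds}, $m_j$ is \UDS\ \wrt $H$ iff it is both \DS\ and \US\ \wrt $H$, so it suffices to establish two facts: (a) $H$ survives the downstream-safety phase (i.e.\ never triggers the ``go to the next $H$'' branch there) iff $m_j$ is \DS\ \wrt $H$; and (b) conditioned on $m_j$ being \DS, the test $n_1 = n_2$ succeeds iff $m_j$ is \US\ \wrt $H$. Since the algorithm adds $H$ to $\uds_j$ precisely when both phases succeed, combining (a) and (b) yields the claim. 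In particular, when \DS\ fails the algorithm discards $H$ without examining upstream-safety, which is correct because $m_j$ is then not \UDS\ regardless of the \US\ status.

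For part (a), I would first note that two inputs $\tup{x}, \tup{x'}$ satisfy $\tup{x} \equiv_H \tup{x'}$ iff $\proj{I_j \setminus H}{\tup{x}} = \proj{I_j \setminus H}{\tup{x'}}$, i.e.\ iff they share a common visible-input value $\tup{x^+}$ and differ only on the hidden inputs $I_j \cap H$. The downstream phase fixes each $\tup{x^+}$ and checks that $\proj{O_j \setminus H}{m_j(\tup{x})}$ stays constant as the hidden part $\tup{x^-}$ ranges over all assignments. This is exactly the requirement $\left[\tup{x} \equiv_{H} \tup{x'}\right] \Rightarrow \left[m_j(\tup{x}) \equiv_{H} m_j(\tup{x'})\right]$ of Definition~\ref{def:uds} quantified over all input equivalence classes, so the phase succeeds for every $\tup{x^+}$ iff $m_j$ is \DS\ \wrt $H$.

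For part (b), assume $m_j$ is \DS. Then for each visible-input value $\tup{x^+}$ there is a single associated visible-output value $\tup{y^+}$, so the labelled pairs $\angb{\tup{x^+}, \tup{y^+}}$ form the graph of a well-defined map $\phi$ sending each input class to its output class. Here $n_1$ is the size of the domain of $\phi$ and $n_2$ the size of its image, whence $n_2 \le n_1$ always, and $n_1 = n_2$ iff $\phi$ is injective. I would then argue that, under \DS, upstream-safety is equivalent to injectivity of $\phi$: if $\phi$ is injective and $\tup{y} \equiv_H \tup{y'}$ with $m_j(\tup{x}) = \tup{y}$, $m_j(\tup{x'}) = \tup{y'}$, then $\phi$ maps the classes of $\tup{x}$ and $\tup{x'}$ to the common value $\proj{O_j \setminus H}{\tup{y}}$, forcing $\tup{x} \equiv_H \tup{x'}$; conversely, any two distinct input classes mapped by $\phi$ to the same output class yield, through representative preimages, a violation of \US. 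Hence the $n_1 = n_2$ test succeeds iff $m_j$ is \US\ \wrt $H$, which completes part (b) and, with (a), the lemma.

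The step I expect to be the main obstacle is the bookkeeping in part (b): one must verify that the pairs accumulated during the downstream phase genuinely constitute the graph of a \emph{single-valued} map $\phi$ (which relies essentially on \DS ty already holding), and that the reformulation of \US\ as injectivity of $\phi$ respects the definition's quantification over \emph{all} preimages of equivalent outputs rather than a single chosen preimage. By contrast, the downstream analysis of part (a) and the final assembly from (a) and (b) are routine.
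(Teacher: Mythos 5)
Your proposal is correct and follows essentially the same route as the paper's proof: both establish that the downstream phase implements the \DS\ definition verbatim, and both reduce the $n_1 = n_2$ test to the observation that, once \DS\ holds, $n_1 > n_2$ occurs exactly when two distinct visible-input classes share a visible-output value, which is precisely a violation of upstream-safety. Your explicit map $\phi$ and the injectivity formulation are just a cleaner packaging of the paper's counting-by-contradiction argument, and your flagged concern (that $\phi$ is single-valued only because \DS\ already holds) is handled identically in the paper, which likewise only runs the upstream test after the downstream test has succeeded.
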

\begin{proof}
(if) Suppose $H$ is a \UDS\ subset for $m_j$. Then $V$ is downstream-safe, i.e. for equivalent inputs \wrt the visible attributes
$I_j \setminus H$, the projection of the output on the visible attributes 
$O_j \setminus H$ will be the same, so
$H$ will pass the downstream-safety test.
\par
Since $H$ is \UDS, $H$ is also upstream-safe. Clearly, by definition, $n_1 \geq n_2$.
Suppose $n_1 > n_2$. Then there are two $\tup{x_1^+}$ and $\tup{x_2^+}$
that pair with the same $\tup{y^+}$. By construction, $\tup{x_1^+}$ and $\tup{x_2^+}$ (and all input tuples $\tup{x}$
to $m_j$ that project on these two tuples)  have different value on the visible input attributes
$I_j \setminus H$, but they map to outputs $\tup{y}$-s that have the same value on visible output attributes $O_j \setminus H$.
Then $H$ is not upstream-safe, which is a contradiction. Hence $n_1 = n_2$, and $H$ will also pass the
test for upstream-safety and be included in $\uds_j$.\\

(only if) Suppose $H$ is not \UDS, then it is either not upstream-safe or not downstream-safe.
Suppose it is not downstream-safe. Then for at least one assignment $\tup{x^+}$, the values of $\tup{y}$
generated by the assignments $\tup{x^-}$ will not be equivalent \wrt the visible output attributes, and the downstream-safety test will fail.
\par
Suppose $H$ is downstream-safe but not upstream-safe. Then there are Then there are two $\tup{x_1^+}$ and $\tup{x_2^+}$
that pair with the same $\tup{y^+}$. This makes $n_1 > n_2$, and the upstream-safety test will fail.
\end{proof}

\eat{
									\subsection{Communication Complexity Lower Bound}
									% for To find the Optimal UDS Solution for Individual Modules}
									Given a public module $m_j$ %with $N$ rows in its relation, 
									and a subset $V \subseteq A_j$,  
									in this section we give a lower bound 
									on the communication complexity to verify if $V$ is a UDS subset for $m_j$.
									This also gives a lower bound for the optimization
									problem of finding the optimal UDS subset for $m_j$:
									assume that each attribute in $V$ has cost $>0$
									whereas all other attributes have cost zero; then the optimization problem has a solution of cost 0
									if and only if $V$ is a UDS subset.
									\par
									If the standalone relation $R_j$ of $m_j$ has $N$ rows, we show 
									that deciding whether $V$ is UDS needs $\Omega(N)$ time.
									The argument uses the similar ideas as given in \cite{DKM+11} for the standalone \secureview\ problem.
									% $N$ is the number of rows in the relation $R$.
									Note that simply reading the relation $R_j$ as input takes $\Omega(N)$ time. So
									the lower bound of $\Omega(N)$ does not make sense unless we assume
									the presence of a \emph{data supplier} 
									%(we avoid using the term
									%``oracle'' to distinguish it from \safeview oracle) 
									which supplies
									the tuples of $R_j$ on demand:  Given an assignment $\tup{x}$ of the
									input attributes $I_j$, the data supplier outputs the value $\tup{y} =
									m_j(\tup{x})$ of the output attributes $O_j$. The following theorem
									shows the $\Omega(N)$ communication complexity lower bound
									%of the \safeview problem,
									in terms of the number of calls to the data
									supplier; namely, that (up to a constant factor) one indeed needs to
									view the full relation.
									
									\begin{lemma}\label{lem:communication-uds}
									Given module $m_j$ and a subset of attributes $V$, deciding whether $m_j$ is UDS \wrt $V$
									requires $\Omega(N)$ calls to the data supplier, where
									$N$ is the number of tuples in the standalone relation $R_j$ of $m_j$.
									\end{lemma}
									\begin{proof}
									We prove the theorem by a communication complexity reduction
									from the set disjointness problem:
									Suppose Alice and Bob own two subsets $A$ and $B$ of a universe $U$, $|U| = N$. To decide whether
									they have a common element (i.e. $A \cap B \neq \phi$) takes $\Omega(N)$ communications \cite{NisanKBook}.
									\par
									We construct the following relation $R_j$ with $N+1$ rows for the module $m_j$.
									$m_j$ has three input attributes: $a, b, id$ and one output attribute $y$.
									The attributes $a, b$ and $y$ are boolean, whereas $id$ is in the range $[1, N+1]$.
									The input attribute $id$ denotes the identity of every row in $R$ and takes value $i \in [1, N+1]$ for the $i$-th
									row.
									The module $m_j$ computes the AND function of inputs $a$ and $b$, i.e., $y = a \wedge b$.
									\par
									Row $i$, $i \in [1, N]$, corresponds to element $i \in U$.
									In row $i$, value of $a$ is 1 iff $i \in A$; similarly,  value of $b$ is 1 iff $i \in B$.
									The additional $N+1$-th row has $a_{N+1} = 1$ and $b_{N+1} = 0$.
									The goal is to check if visible attributes $V = \{y\}$
									(with hidden attributes $\widebar{V} = \{id, a, b\}$) is UDS for $m_j$.
									\par
									Note that if there is a common element $i \in A \cap B$,
									then there are two $y$ values in the table:
									in the $i$-th row, $1 \leq i \leq n$, the value of  $y = a \wedge b$ will be 1, whereas, in the $N+1$-th row
									it is 0. Hence $\widebar{V} = \{id, a, b\}$ or $V = \{y\}$ is not UDS for $m_j$ -- all inputs are equivalent,
									but their outputs are not equivalent, so the downstream-safety property is not maintained.
									If there is no such $i \in A \cap B$, the value of $y$ in all rows $i \in [1, N+1]$ will be
									zero, and $V = \{y\}$ will be UDS.
									\par
									So $V = \{y\}$ is UDS if and only if the input sets are not disjoint. 
									Hence we need to look at $\Omega(N)$ rows to decide
									whether $V = \{y\}$ is  UDS.
									\end{proof}
									This proof also shows the lower bound for verifying if a subset is downstream-safe.
}

%\subsection
\subsection{Correctness of Optimal Algorithm for Chain Workflows} %Proof of Lemma~\ref{lem:correct-chain-dp}}
\label{sec:proof-lem:correct-chain-dp}
%\scream{Update notations: $S, C, H$}

%\scream{Should we use $a, b$ instead of $j, \ell$? also in the main body}
Recall that after renumbering the modules, $m_1, \cdots, m_k$
denote the modules in the public closure $C$ of a private module $m_i$.
The following lemma  shows that $Q[j,\ell]$ correctly stores the desired value:
the cost of minimum cost hidden subset $H^{j\ell}$ that satisfies
the \UDS\ condition for all public modules $m_1$ to $m_j$, and %$A_{j} \setminus H^{j\ell} = U_{j\ell}$ 
$A_{j} \cap H^{j\ell} = U_{j\ell} \in \uds_j$.
%
% Then the optimal solution
%			$H$ has cost $\min_{1 \leq \ell \leq p_k} Q[k, \ell]$; the
%			corresponding solution $H$ can be found by standard procedure.
	Recall that we use the simplified notations $S$ for the safe subset $S_{i\ell}$ of $m_i$, 
$C$ for public closure $C(S_{i\ell})$, and $H$ for $H_i$.

			\begin{lemma}\label{lem:correct-chain-dp}
			For $1 \leq j \leq k$, the entry $Q[j, \ell]$, $1 \leq \ell \leq
			p_j$,  stores the minimum cost of the hidden attributes $H^{j\ell}$
			such that $\cup_{x = 1}^j A_x \supseteq H^{j\ell}  \supseteq S$, % such that
			$A_{j} \cap H^{j\ell} = U_{j\ell}$, and every module $m_x, 1
			\leq x\leq j$ in the chain is \UDS\ \wrt $A_x \setminus
			H^{j\ell}$.
			\end{lemma}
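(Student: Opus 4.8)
The plan is to prove the claim by induction on $j$, establishing a cost-preserving correspondence between the feasible partial hidden sets ending at $m_j$ (those confined to $\bigcup_{x=1}^{j}A_x$, containing $S$, keeping every $m_x$ with $x\le j$ \UDS, and with prescribed trace $A_j\cap H^{j\ell}=U_{j\ell}$) and the feasible partial solutions ending at $m_{j-1}$. Two structural facts about a chain will be used repeatedly, so I would state them first. First, since $I_j=O_{j-1}$, the only attributes shared between $A_j$ and $\bigcup_{x=1}^{j-1}A_x$ are exactly $O_{j-1}=I_j$, hence $A_j\setminus\bigcup_{x=1}^{j-1}A_x=O_j$. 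Second, by Definition~\ref{def:uds} the \UDS ty of any module $m_x$ is determined solely by its local hidden set $A_x\cap H$, so it is unaffected by hidden attributes lying outside $A_x$; this locality is what makes restriction and extension preserve feasibility.

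For the base case $j=1$, we have $\bigcup_{x=1}^{1}A_x=A_1$, so the requirement $A_1\cap H^{1\ell}=U_{1\ell}$ forces $H^{1\ell}=U_{1\ell}$. This single candidate is feasible iff $U_{1\ell}\supseteq S$ (recall $S\subseteq O_i=I_1$ for a chain) and $m_1$ is \UDS\ \wrt\ $U_{1\ell}$, the latter holding since $U_{1\ell}\in\uds_1$; its cost is $c(U_{1\ell})$, matching the initialization.

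For the inductive step I would prove the two inequalities separately. For $Q[j,\ell]\ge c(O_j\cap U_{j\ell})+\min_q Q[j-1,q]$, take any feasible $H^{j\ell}$ and set $H'=H^{j\ell}\cap\bigcup_{x=1}^{j-1}A_x$. By the first structural fact $H^{j\ell}\setminus H'=O_j\cap U_{j\ell}$, so $c(H^{j\ell})=c(H')+c(O_j\cap U_{j\ell})$. I would then check that $H'$ is feasible up to $m_{j-1}$: it contains $S$, it satisfies $A_x\cap H'=A_x\cap H^{j\ell}$ for all $x\le j-1$ (so each such $m_x$ stays \UDS\ by the locality fact), and $U_{j-1,q}:=A_{j-1}\cap H'$ lies in $\uds_{j-1}$. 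The consistency test of the recurrence holds because $O_{j-1}=I_j$ gives $U_{j-1,q}\cap O_{j-1}=O_{j-1}\cap H^{j\ell}=I_j\cap H^{j\ell}=U_{j\ell}\cap I_j$. Since $c(H')\ge Q[j-1,q]$ by the induction hypothesis, the bound follows. For the reverse inequality, given any $q$ passing the consistency test I would take the optimal witness $H^{j-1,q}$ for $Q[j-1,q]$ and set $H^{j\ell}:=H^{j-1,q}\cup U_{j\ell}$; the overlap is exactly $H^{j-1,q}\cap U_{j\ell}=I_j\cap U_{j\ell}$, so $c(H^{j\ell})=Q[j-1,q]+c(O_j\cap U_{j\ell})$, and one verifies $A_j\cap H^{j\ell}=U_{j\ell}$ together with feasibility (with $m_j$ \UDS\ since $U_{j\ell}\in\uds_j$, and each $m_x$, $x\le j-1$, unchanged because $A_x\cap U_{j\ell}\subseteq A_x\cap A_j$ is absorbed into $H^{j-1,q}$ by consistency). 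When no $q$ is consistent, the forward argument shows no feasible $H^{j\ell}$ exists, so $Q[j,\ell]=\infty$, as in the algorithm.

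The main obstacle I anticipate is the careful attribute bookkeeping rather than any deep idea: one must confirm that the additive cost update $c(O_j\cap U_{j\ell})+\min_q Q[j-1,q]$ never double-counts the shared block $I_j=O_{j-1}$, which is already paid for inside $Q[j-1,q]$, and that the equality test $U_{j-1,q}\cap O_{j-1}=U_{j\ell}\cap I_j$ is precisely the condition for a global hidden set to restrict consistently onto the two adjacent modules. Making the locality of \UDS ty explicit before the induction is the ingredient that lets both the forward restriction and the reverse gluing preserve feasibility without re-examining modules outside the boundary.
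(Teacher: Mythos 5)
Your proof is correct, and it reaches the same recurrence-correctness claim by a genuinely different route than the paper. Both arguments share the same skeleton: induction on $j$, the chain facts $A_j\cap\bigcup_{x<j}A_x=O_{j-1}=I_j$ and $A_j\setminus\bigcup_{x<j}A_x=O_j$, and the cost split $c(H^{j\ell})=c(O_j\cap U_{j\ell})+c(H^{j\ell}\setminus O_j)$. The difference is the key ingredient linking level $j$ to level $j-1$. The paper invokes an auxiliary uniqueness property (Proposition~\ref{prop:uds-property}): two \UDS\ subsets of the same module with equal output traces are identical. This lets it conclude that the restriction $H'=H^{j\ell}\cap\bigcup_{x<j}A_x$ of the optimal solution has trace $A_{j-1}\cap H'$ equal to the particular $U_{j-1,q}$ singled out by the consistency test (indeed, the proposition forces the consistent $q$ to be unique), after which the induction hypothesis applies; the reverse, achievability direction is left largely implicit there. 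You instead prove both inequalities explicitly --- restriction of any feasible $H^{j\ell}$ produces some consistent $q$ with $c(H')\geq Q[j-1,q]$, and gluing $H^{j-1,q}\cup U_{j\ell}$ for any consistent $q$ yields a feasible solution whose cost is exactly $Q[j-1,q]+c(O_j\cap U_{j\ell})$, thanks to the overlap identity $H^{j-1,q}\cap U_{j\ell}=I_j\cap U_{j\ell}$ --- relying only on the locality of the \UDS\ property. Your route buys self-containment (no uniqueness proposition, whose proof the paper omits as ``simple'') and an explicit achievability half of the DP correctness, which also covers the $Q[j,\ell]=\infty$ case cleanly by contraposition; it would even survive if the uniqueness property failed. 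The paper's route buys brevity plus a structural insight your proof does not surface: the consistency test in the recurrence pins down at most one predecessor state, so the $\min_q$ ranges over a singleton whenever it is nonempty.
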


The following proposition will be useful to prove the lemma.
\begin{proposition}\label{prop:uds-property}
For a public module $m_j$, for two \UDS\ hidden subsets $U_1, U_2 \subseteq A_j$,
if $U_1 \cap O_j = U_2 \cap O_j$, then $U_1 = U_2$.
\end{proposition}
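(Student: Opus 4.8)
The plan is to prove the statement directly. Since $A_j = I_j \cup O_j$ is a disjoint union and every \UDS\ subset is a subset of $A_j$, each $U_r$ decomposes as $U_r = (U_r \cap I_j) \cup (U_r \cap O_j)$. As the output parts already coincide by hypothesis, $U_1 = U_2$ is equivalent to the single claim $U_1 \cap I_j = U_2 \cap I_j$, so I would reduce the proposition to showing that the two \UDS\ subsets agree on the \emph{input} attributes. I would establish this by contradiction: assume the input parts differ, and (by symmetry) fix an input attribute $a \in I_j$ that is hidden in $U_1$ but visible in $U_2$, i.e.\ $a \in U_1 \setminus U_2$. Write $V^O = O_j \setminus (U_1 \cap O_j) = O_j \setminus (U_2 \cap O_j)$ for the common set of visible output attributes; the hypothesis $U_1 \cap O_j = U_2 \cap O_j$ is precisely what makes $V^O$ unambiguous, and this coincidence is the hinge of the whole argument.

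The next step is to produce a witnessing pair of inputs. Taking the input domain of $m_j$ to be the full product $\Dom = \prod_{a \in I_j}\Delta_a$ (the standard assumption that a public module is defined on every input, together with $|\Delta_a| \ge 2$ so that $a$ is non-degenerate), I can choose two input tuples $\tup{x},\tup{x'}$ that differ exactly in the coordinate $a$ and agree on every other input attribute. Since $a$ is hidden in $U_1$, the tuples $\tup{x}$ and $\tup{x'}$ agree on all visible input attributes $I_j \setminus U_1$, so $\tup{x} \equiv_{U_1} \tup{x'}$ by Definition~\ref{def:equiv}. Applying downstream-safety of $m_j$ \wrt $U_1$ (Definition~\ref{def:uds}) gives $m_j(\tup{x}) \equiv_{U_1} m_j(\tup{x'})$, which by definition means $\proj{V^O}{m_j(\tup{x})} = \proj{V^O}{m_j(\tup{x'})}$, i.e.\ the two outputs agree on the common visible output set $V^O$. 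Because $O_j \setminus U_2 = V^O$ as well, this is exactly the statement $m_j(\tup{x}) \equiv_{U_2} m_j(\tup{x'})$.

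Now I would invoke upstream-safety of $m_j$ \wrt $U_2$: the outputs $m_j(\tup{x})$ and $m_j(\tup{x'})$ are $\equiv_{U_2}$-equivalent, and $\tup{x},\tup{x'}$ are respective preimages, so \US ty forces $\tup{x} \equiv_{U_2} \tup{x'}$. But $a$ is visible in $U_2$ and $\tup{x},\tup{x'}$ differ precisely on $a \in I_j \setminus U_2$, hence $\proj{I_j \setminus U_2}{\tup{x}} \neq \proj{I_j \setminus U_2}{\tup{x'}}$, i.e.\ $\tup{x} \not\equiv_{U_2} \tup{x'}$ --- a contradiction. Therefore no such $a$ exists; the symmetric case rules out $a \in U_2 \setminus U_1$, so $U_1 \cap I_j = U_2 \cap I_j$, and combined with the assumed equality of the output parts, $U_1 = U_2$.

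The argument is short, and the only genuine obstacle is guaranteeing that the witnessing pair $\tup{x},\tup{x'}$ differing solely in $a$ actually lies in the domain of $m_j$. This is exactly where the assumption that $m_j$ is total over a full-product domain with non-trivial attribute domains is used, and it cannot be dropped: for a partial module whose recorded inputs correlate $a$ with another coordinate, both input parts could be made visible/hidden interchangeably while preserving \UDS ty, so the input part would no longer be determined by the output part. I would therefore state this domain assumption explicitly before running the contradiction.
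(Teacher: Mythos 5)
Your proof is correct, given the totality assumption you state, and it is more direct than the argument the paper had in mind. The paper omits the proof as ``simple''; the version left in its source is a global partition argument: split the inputs of $m_j$ into equivalence classes \wrt the attributes visible under $U_1$, use the \UDS\ property of $U_1$ to obtain a matching partition of the outputs, repeat for $U_2$ (the two output partitions coincide exactly because $U_1 \cap O_j = U_2 \cap O_j$), deduce that the two input partitions are identical, and then contradict this with an attribute hidden in one subset and visible in the other. You reach the same contradiction from a single witnessing pair: flip only the disputed attribute $a$, push the pair forward through downstream-safety of the subset hiding $a$, transfer the resulting output equivalence to the other subset (this transfer is where the hypothesis $U_1 \cap O_j = U_2 \cap O_j$ does all its work, as you note), and pull back through upstream-safety of the subset in which $a$ is visible. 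Both proofs turn on the same hinge, but yours avoids the class-correspondence machinery and makes explicit exactly which half of the \UDS\ condition is used on each side.

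Your final paragraph is not a side remark; it identifies a real unstated hypothesis. The paper's model allows $m_j$ to be a ``(possibly partial) function,'' and for partial modules the proposition is false: with $I_j = \{a, b\}$, $O_j = \{c\}$, and only the two executions $m_j(0,0) = 0$ and $m_j(1,1) = 1$, both $U_1 = \{a\}$ and $U_2 = \{b\}$ are vacuously \UDS\ (every equivalence class of recorded inputs and of outputs is a singleton), and $U_1 \cap O_j = U_2 \cap O_j = \emptyset$ while $U_1 \neq U_2$. The paper's omitted proof implicitly assumes what you state explicitly: at the step where it claims that two tuples in a common equivalence class differing on a hidden attribute ``must exist,'' it is using exactly the full-product, nontrivial-domain assumption that fails in this example. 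Making that assumption explicit, as you do, is the correct way to state the proposition.
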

The proof of the proposition is simple, and therefore is omitted.\\
% (the inputs and outputs of $m_j$
%can be partitioned into $K$ equivalent classes having the same values on visible attributes for both
%$U_1$ and $U_2$. Since the output partition is 
%same values on the visible attributes
%parti)

\eat{
\begin{proof}
Assume the contradiction, and wlog. assume that there is an attribute $a \in U_2 \setminus U_1$.
\UDS\ condition says that if the inputs to $m_j$ are equivalent then the corresponding outputs are also equivalent \wrt
 the hidden attributes and vice versa. Let $U_1$ be the set of hidden attributes.
%This allows us to partition the set of outputs into $K$ equivalence classes $Y^1, \cdots, Y^K$,
Then the set of inputs can be partitioned into $K$ equivalent classes $X^1, \cdots, X^K$
such that all inputs in the same class have the same projection on the visible attributes.
Since $U_1$ is \UDS, the class of outputs are then partitioned into $K$ equivalent classes $Y^1, \cdots, Y^K$
having the same value on visible attributes
%, and 
%$Y^1, \cdots, Y^K$ \wrt
%the same projection on the visible input attribute values, 
such that
for all $1 \leq \ell \leq K$,
$$Y^\ell = \{\tup{y} = m_j(\tup{x}): \tup{x} \in X^\ell\}$$
Again for hidden attributes $U_2$, since $U_1 \cap O_j = U_2 \cap O_j$,
the set of outputs can be partitioned into the same $K$ equivalent classes $Y^1, \cdots, Y^K$,
and the set of inputs into $K$ equivalent classes $Z^1, \cdots, Z^K$ \wrt
the same projection on input attribute values, such that
for all $1 \leq \ell \leq K$,
$$Y^\ell = \{\tup{y} = m_j(\tup{x}): \tup{x} \in Z^\ell\}$$
Hence
$$\{\tup{y} : \exists \tup{x} \in X^\ell, m_j(\tup{x}) = \tup{y}\} = \{\tup{y} : \exists \tup{x} \in Z^\ell, m_j(\tup{x}) = \tup{y}\}$$
%Since $X^{\ell}, Y^{\ell}, Z^{\ell}$-s form equivalence classes,
and
$$X^{\ell} = Z^{\ell}$$
 for all $1 \leq \ell \leq K$. Suppose not, and wlog. there is a $\tup{x} \in X^{\ell}$, i.e. $m_j(\tup{x}) \in Y^{\ell}$.
 Suppose $\tup{x} \notin Z^{\ell}$, i.e. $m_j(\tup{x}) \notin Y^{\ell}$ which is a contradiction.
 \par
 For all $\ell$, $X^{\ell} = Z^{\ell}$. Consider any tuple $\tup{x} \in X^{\ell}$. Since $a \in U_1 \setminus U_2$, and both
 $U_1, U_2 \subseteq A_j$, $a \in \widebar{U_2} \setminus \widebar{U_1}$.
 Consider two tuples $\tup{x_1}, \tup{x_2} \in Z^{\ell}$, such that $\proj{a}{\tup{x_1}} \neq \proj{a}{\tup{x_2}}$.
 Since $a$ is hidden for $U_2$, two such tuples must exist.
 But $a$ is visible in $U_1$, and for all tuples in the same group $X^{\ell}$, value of $a$ must be the same
 (the reverse is not necessarily true). In other words, since the value of $a$ is different, $\tup{x_1}, \tup{x_2}$
 must belong to different groups \wrt $U_1$. But we argued that $X^\ell = Z^{\ell}$, so these two tuples belong to the same group.
 This is a contradiction, and therefore no such $a$, and $U_1 \subseteq U_2$. By the same argument, $U_2 \subseteq U_1$, and therefore $U_1 = U_2$.
\end{proof}
}

%\textsc{Lemma}~\ref{lem:correct-chain-dp}.~~
%\emph{
%For $1 \leq j \leq k$, the entry $Q[j, \ell]$, $1 \leq \ell \leq p_j$,  stores the minimum cost of the hidden attributes $\cup_{x = 1}^j A_x \supseteq H^{j\ell}  \supseteq \widebar{S_{i*}}$
%such that $A_{j} \setminus H^{j\ell} = U_{j\ell}$, and every module $m_x, 1 \leq x\leq j$ in the chain is \UDS\ \wrt $A_x \setminus H^{j\ell}$.
%}
%\end{lemma}
\smallskip
\noindent
\textbf{Proof of Lemma~\ref{lem:correct-chain-dp}~~}

\begin{proof}
We prove this by induction from $j = 1$ to $k$. The base case follows by the definition of $Q[1, \ell]$,
for $1 \leq \ell \leq p_1$. Here the requirements are
$A_1 \supseteq H^{1\ell} \supseteq S$, and  $H^{1\ell} = U_{1\ell}$. %, i.e. $\widebar{U_{1\ell}} = H^{1\ell}$.
So we set the cost at $Q[1, \ell]$ to $c(U_{1\ell}) = c(H^{1\ell})$, if $U_{1\ell} \supseteq S$ .
\par
Suppose the hypothesis holds until $j-1$, and consider $j$.
Let $H^{j\ell}$ be the minimum solution s.t. $A_j \cap H^{j\ell} = U_{j\ell}$
and satisfies the other conditions of the lemma.
\par
First consider the case when there is  no $q$
such that $U_{j-1, q} \cap O_{j-1} = U_{j, \ell} \cap I_{j}$, where we set the cost to be $\infty$.
If there is no such $q$. i.e. for all $q \leq p_{j-1}$,
%$\comple{U_{j-1, q}} \cap O_{j-1} \neq \comple{U_{j, \ell}} \cap I_{j}$, 
then clearly there cannot be any solution $H^{j\ell}$
that contains $U_{j, \ell}$ and also guarantees \UDS\ properties of all $x < j$ (in particular for $x = j-1$).
In that case the cost of the solution is indeed $\infty$.
\par
Otherwise (when such a $q$ exists), let us divide the cost of the solution $c(H^{j\ell})$
into two disjoint parts:
$$c(H^{j\ell}) = c(H^{j\ell} \cap O_j) + c(H^{j\ell} \setminus O_j)$$
\par
We argue that 
$c(O_j \cap H^{j\ell}) = c(O_j \cap U_{j\ell})$.
$A_j \cap H^{j\ell} = U^{j\ell}$. %So $\widebar{U_{j\ell}} = A_j \setminus U_{j\ell} = A_j \cap H^{j\ell}$.
Then $O_j \cap U_{j\ell}$ = $O_j \cap A_j \cap H^{j\ell}$ = $O_j \cap H^{j\ell}$, since $O_j \subseteq A_j$.
Hence $c(O_j \cap H^{j\ell}) = c(O_j \cap U_{j\ell})$. This accounts for the cost of the first part of $Q[j, \ell]$.
\par
Next we argue that $c(H^{j\ell} \setminus O_j) =$ minimum cost $Q[j-1, q]$, $1 \leq q \leq p_j$, where the minimum is over those
those $q$ where $U_{j-1, q} \cap O_{j-1} = U_{j, \ell} \cap I_{j}$.
Due to the chain structure of $C$, $O_j \cap \bigcup_{x = 1}^{j-1} A_j = \emptyset$, and $O_j \cup \bigcup_{x = 1}^{j-1} A_x = \bigcup_{x = 1}^{j} A_x$.
Since $\cup_{x = 1}^j A_x \supseteq H^{j\ell}$, $H^{j\ell} \setminus O_j = H^{j\ell} \cap \bigcup_{x = 1}^{j-1} A_x$.
\par
Consider $H' = H^{j\ell} \cap \bigcup_{x = 1}^{j-1} A_x$. By definition of $H^{j\ell}$, $H'$ must satisfy the \UDS\ requirement of all
$1 \leq x \leq j-1$. Further, $\bigcup_{x = 1}^{j-1} A_x \supseteq H'$.
$A_j \cap H^{j\ell} = U_{j, \ell}$, hence
 $U_{j, \ell}  %A_j \setminus U_{j, \ell} =
  %A_j \cap H^{j\ell} 
  \subseteq H^{j\ell}$.
  \par
We are considering the case where there is a $q$ such that
\begin{equation}
U_{j-1, q} \cap O_{j-1} = U_{j, \ell} \cap I_{j}\label{equn:uds-1}
\end{equation}
Therefore
 $$U_{j-1, q} \cap O_{j-1} \subseteq U_{j, \ell} \subseteq H^{j\ell}$$
 We claim that if $q$ satisfies (\ref{equn:uds-1}), then $A_{j-1} \cap H' = U_{j-1, q}$.
 Therefore, by induction hypothesis, $Q[j-1, \ell]$ stores the minimum cost solution $H'$ that includes $U_{j-1, q}$, and part of the
 the optimal solution cost $c(H^{j\ell} \setminus O_j)$ for $m_j$ is the minimum value of such $Q[j-1, q]$.
 \par
 So it remains to show that $A_{j-1} \cap H' = U_{j-1, q}$.
$A_{j-1} \cap H' = A_{j-1} \cap H^{j\ell} \in \uds_{j-1}$, since $H^{j\ell}$ gives \UDS\ solution for $m_{j-1}$.
 Suppose $A_{j-1} \cap H^{j\ell} = U_{j-1,y}$.
Then we argue that $U_{j-1, q} = U_{j-1, y}$, which will complete the proof.
\par
$U_{j-1}, y \cap O_{j-1}$  %(A_{j-1} \cap U_{j-1, y}) \cap O_{j-1} 
= $(A_{j-1} \cap H^{j\ell}) \cap O_{j-1} = H^{j\ell} \cap O_{j-1}$,
$= H^{j\ell} \cap I_j = (A_j \cap U_{j, \ell}) \cap I_j$,
\ie\
\begin{equation}
U_{j-1}, y \cap O_{j-1} = U_{j, \ell} \cap I_{j} \label{equn:uds-2}
\end{equation}
From (\ref{equn:uds-1}) and (\ref{equn:uds-2}),
$$U_{j-1}, q \cap O_{j-1} = U_{j-1}, y \cap O_{j-1} $$
since both $U_{j-1, q}, U_{j-1, y} \in \uds_{j-1}$, from Proposition~\ref{prop:uds-property}, $U_{j-1, q} = U_{j-1, y}$.
This completes the proof of the lemma.
%\red{INCOMPLETE}
\end{proof}

\subsection{Optimal Algorithm for Tree Workflows}\label{sec:step3-tree-wf}
%Single-Module Problem for 
\eat{
			\textsc{Theorem}~\ref{prop:step3-tree-wf}.~~ \emph{ The
			single-subset problem can be solved in PTIME (in $n$, $|A|$ and $L$)
			for tree workflows. }
}
Here we prove Theorem~\ref{prop:step3-chain-wf} for tree workflows.
%\begin{proof}

\paragraph*{Optimal algorithm for tree workflows}

Similar to the %proof of Theorem~\ref{prop:step3-chain-wf}, to obtain
algorithm for chain workflows, to obtain
an algorithm of time polynomial in $L$ for tree workflows, for a given module $m_i$, we
can go over all choices of safe subsets $S_{i\ell} \in \safe_i$ of
$m_i$, compute the public-closure $C(S_{i\ell})$, and
choose a minimal cost subset $H_i = H_i(S_{i\ell})$ that satisfies
the \UDS\ properties of all modules in the public-closure. Then,
output, among them, a subset having the minimum cost. Consequently,
it suffices to explain how, given a safe subset $S_{i\ell} \in
\safe_i$, one can solve, in PTIME, the problem of finding a minimum
cost hidden subset $H_i$ that satisfies the \UDS\ property of all
modules in a subgraph formed by a given $C(S_{i\ell})$.

To simplify notations, the given safe subset $S_{i\ell}$ will be
denoted below by $S$, the closure $C(S_{i\ell})$ will
be denoted by $C$, and the output hidden subset $H_i$ will be denoted by
$H$.
Our PTIME algorithm uses dynamic programming to find the optimal
$H$.
\par
First note that since $C$ is the public-closure of (some) output
attributes for a tree workflow, $C$ is a collection of trees all of which are rooted at the private module
$m_i$. % \red{CHECK}. 
Let us consider the tree $T$ rooted at $m_i$ with subtrees in $C$,
(note that $m_i$ can have private children that are not considered in $T$).
Let %the modules in $T_W$ be renumbered as $m_1, \cdots, m_k$ in order.
$k$ be the number of modules in $T$, and the modules in $T$ be renumbered as $m_i, m_1, \cdots, m_k$,
where the private module $m_i$ is the root, and the rest are public modules.
\par
Now we solve the problem by dynamic programming as follows.
Let $Q$
be an $k \times L$ two-dimensional array, where $Q[j, \ell]$, $1 \leq j\leq k, 1 \leq \ell \leq p_j$ denotes
the cost of minimum cost hidden subset $H^{j\ell}$ that (i) satisfies
the \UDS\ condition for all public modules %$m_1$ to $m_j$ and
in the subtree of $T$ rooted at $m_j$, that we denote by $T_j$;
and, (ii) $H^{j\ell} \cap A_j = U_{j\ell}$.
(recall that $I_j O_j, A_{j}$ is the set of input, output and all attributes of $m_j$ respectively);
the actual solution can be stored easily by standard
argument. The algorithm is described below.
%For simplicity in presentation, we assume that all $p_j = L$, in reality, the $j$-th row of $Q$ will have $p_j$
%entries.

\begin{itemize}
\item \textbf{Initialization for leaf nodes}.~~
The initialization step handles all leaf nodes $m_j$ in $T$. For a leaf node $m_j$,
$1 \leq \ell \leq p_j$,
\begin{eqnarray*}
Q[j, \ell] & = & c(U_{j, \ell})
\end{eqnarray*}
\item \textbf{Internal nodes}.~~
The internal nodes are considered in a bottom-up fashion (by a post-order traversal),
and $Q[j, \ell]$ is computed for a node $m_j$ after its children are processed.

For an internal node $m_j$, let $m_{i_1}, \cdots, m_{i_x}$ be its children in $T$. Then for $1 \leq \ell \leq p_j$,
\begin{enumerate}
    \item Consider \UDS\ subset $U_{j,\ell}$.
    \item For $y = 1$ to $x$, let $U^y = U_{j, \ell} \cap I_{i_y}$ %(intersection of $U_{j, \ell}$ with the inputs of $m_{i_y}$).
    Since there is no data sharing, $U^y$-s are disjoint
    \item For $y = 1$ to $x$,
        \begin{eqnarray*}
        k^y & = & \rm{argmin}_k Q[i_y, k] ~~~\text{ where the minimum is over} \\
        & & ~~~~ 1 \leq k \leq p_{i_y}~~~ \text{ s.t. }~~~ U_{i_y, k} \cap I_{i_y} = U^y\\
        & = & \perp \text{ (undefined),~~~~~~~ if there is no such } k
        \end{eqnarray*}

    \item $Q[j, \ell]$ is computed as follows.
        \begin{eqnarray*}
        Q[j, \ell] & = & \infty ~~~~~~~~~~~~~~~ \text{if } \exists y, 1 \leq y \leq x, ~~~k^y = \perp\\
        & = & c(I_j \cap U_{j\ell}) + \sum_{y = 1}^x Q[i_y, k^y]~~~~~~~~(\text{otherwise})
        \end{eqnarray*}
\end{enumerate}

\item \textbf{Final solution for $S$}.~~ Now consider the private module $m_i$ that is the root of $T$. Recall that we
fixed a safe solution $S$ of $m_i$ for doing the analysis. Let $m_{i_1}, \cdots, m_{i_x}$ be  the children of $m_i$ in $T$
(which are public modules). Similar to the step before, we consider the min-cost solutions of its children which
exactly match the hidden subset $S$ of $m_i$.
\begin{enumerate}
    \item Consider safe subset $S$ of $m_i$.
    \item For $y = 1$ to $x$, let $S^y = S \cap I_{i_y}$ %(intersection of $\comple{S_{i*}}$ with the inputs of $m_{i_y}$).
    Since there is no data sharing, again, $S^y$-s are disjoint
    \item For $y = 1$ to $x$,
        \begin{eqnarray*}
        k^y & = & \rm{argmin}_k Q[i_y, k] ~~~\text{ where the minimum is over} \\
        & & ~~~~ 1 \leq k \leq p_{i_y}~~~ \text{ s.t. }~~~ U_{i_y, k} \cap I_{i_y} \supseteq S^y\\
        & = & \perp \text{ (undefined),~~~~~~~ if there is no such } k
        \end{eqnarray*}
    \item The cost of the optimal $H$ (let us denote that by $c^*$) is computed as follows.
        \begin{eqnarray*}
        c^* & = & \infty ~~~~~~~~~~~~~~~ \text{if } \exists y, 1 \leq y \leq x, ~~~k^y = \perp\\
        & = & \sum_{y = 1}^x Q[i_y, k^y]~~~~~~~~(\text{otherwise})
        \end{eqnarray*}
\end{enumerate}
\end{itemize}

It is not hard to see that the trivial solution of \UDS\ subsets that include all attributes of the modules
gives a finite cost solution by the above algorithm.
%\red{argue that at least some non-infinite solution for $m_i$?}

%It is interesting to note that such a $q$ always exists at least for one $\ell \leq p_j$:
%while defining \UDS\ subsets in Definition~\ref{def:uds},
%we discussed that any public module $m_j$ is \UDS\ \wrt its entire attribute set $A_j$. Hence
%$A_{j-1} \in \uds_{j-1}$ and $A_j \in \uds_{j}$, and complement of both $A_{j-1}$ and $A_j$ is $\emptyset$
%which will make the equality check true.

%The following lemma (whose proof is given later) 
Lemma~\ref{lem:correct-tree-dp} stated and proved below
shows that $Q[j,
\ell]$ correctly stores the desired value. 
%Then the optimal solution
%$H$ has cost $\min_{1 \leq \ell \leq p_k} Q[k, \ell]$; the
%corresponding solution $H$ can be found by standard procedure.
Given this lemma, the correctness of the algorithm easily follows. For hidden subset $H \supseteq S$
in the closure, for every public child $m_{i_y}$ of $m_i$, $H \cap I_{i_y} \supseteq S \cap I_{i_y} = S^y$.
Further, each such $m_{i_y}$ has to be \UDS\ \wrt $H^{j\ell}$. In other words, for each $m_{i_y}$, $H \cap I_{i_y}$
must equal $U_{i_y, k^y}$ for some $1 \leq k^y \leq p_{i_y}$. The last step in our algorithm (that computes $c^*$)
tries to find such a $k^y$ that has the minimum cost $Q[i_y, k^y]$, and the total cost $c^*$
of $H$ is $\sum_{m_{i_y}} Q[i_y, k^y]$ where the sum is over all children of $m_i$ in the tree $T$
(the trees rooted at $m_{i_y}$ are disjoint, so the optimal cost $c^*$ is sum of those costs).
This proves Theorem~\ref{prop:step3-chain-wf} for tree workflows \\
%\end{proof}

%\subsubsection{Proof of Lemma~\ref{lem:correct-tree-dp}}
%To complete the proof of Theorem~\ref{prop:step3-tree-wf}, we need

\textbf{$Q[j, \ell]$ stores correct values.~~}
The following lemma shows that the algorithm stores correct values in $Q[j, \ell]$
for all public modules $m_j$ in the closure $C$.

\begin{lemma}\label{lem:correct-tree-dp}
For $1 \leq j \leq k$, let $T_j$ be the subtree rooted at $m_j$ and let $\attr_j = \bigcup_{m_q \in T_j} A_q$.
The entry $Q[j, \ell]$, $1 \leq \ell \leq
p_j$,  stores the minimum cost of the hidden attributes $H^{j\ell} \subseteq \attr_j$ % \supseteq \widebar{S_{i*}}$ such that
such that
$A_j \cap H^{j\ell} = U_{j\ell}$, and every module $m_q \in T_j$, is \UDS\ \wrt $A_q \cap H^{j\ell}$.
\end{lemma}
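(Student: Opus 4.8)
For a tree workflow's public closure $C$ rooted at private module $m_i$, renumber the public modules in $C$ as $m_1,\ldots,m_k$. Let $T_j$ be the subtree rooted at $m_j$, and $\attr_j = \bigcup_{m_q \in T_j} A_q$. Then $Q[j,\ell]$ correctly stores the minimum cost of a hidden set $H^{j\ell} \subseteq \attr_j$ with $A_j \cap H^{j\ell} = U_{j\ell}$, such that every module $m_q \in T_j$ is UDS w.r.t. $A_q \cap H^{j\ell}$.

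**My proof plan:**

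This is a standard correctness proof for a bottom-up (post-order) dynamic program on a tree, so the natural approach is induction on the structure of $T$, going from leaves upward. Let me think about the two directions: (1) $Q[j,\ell]$ is an upper bound — i.e., there's a valid solution with that cost; (2) $Q[j,\ell]$ is a lower bound — any valid solution costs at least $Q[j,\ell]$.

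**Base case (leaves).** For a leaf module $m_j$, the subtree $T_j = \{m_j\}$, so $\attr_j = A_j$. The only constraint is $A_j \cap H^{j\ell} = U_{j\ell}$ combined with $H^{j\ell} \subseteq A_j$, which forces $H^{j\ell} = U_{j\ell}$. Since $U_{j\ell} \in \uds_j$, $m_j$ is automatically UDS. So the minimum (and only) cost is $c(U_{j\ell})$, which is exactly what the initialization sets.

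**Inductive step (internal nodes).** Here is where the tree-specific reasoning (branching) matters, and this is the main obstacle. For internal $m_j$ with public children $m_{i_1},\ldots,m_{i_x}$, the key structural facts I'd use are: the subtrees $T_{i_1},\ldots,T_{i_x}$ are vertex-disjoint (tree structure), and because the workflow has no data sharing, the attribute sets $\attr_{i_1},\ldots,\attr_{i_x}$ are pairwise disjoint and each is disjoint from $A_j$ except that $O_j$ feeds into the children's inputs. I'd decompose the cost as $c(H^{j\ell}) = c(A_j \cap H^{j\ell}) + \sum_y c(\attr_{i_y} \cap H^{j\ell})$, and argue that the children's contributions are independent subproblems. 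The crucial coupling is the constraint at each edge $O_j \cap I_{i_y}$: the hidden attributes on the shared port must be consistent, i.e. $U_{j\ell} \cap I_{i_y}$ (determined by $m_j$'s choice) must equal $U_{i_y,k} \cap I_{i_y}$ for whichever UDS-subset $k$ the child uses. For the upper bound, given optimal child solutions realizing $Q[i_y,k^y]$ that are compatible at the ports, their union with $A_j \cap H^{j\ell} = U_{j\ell}$ gives a valid solution; I must check each $m_q \in T_j$ stays UDS, which holds since the restriction of the union to each $A_q$ equals what it was in the child solution (disjointness ensures no interference). For the lower bound, given any valid $H^{j\ell}$, I'd restrict it to each $\attr_{i_y}$ to get a feasible child solution satisfying the port constraint, invoking the induction hypothesis to bound its cost below by $Q[i_y, k^y]$; the disjoint decomposition then sums these up.

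**The key technical lemma I'll lean on** is Proposition~\ref{prop:uds-property}: for a public module, a UDS subset is uniquely determined by its intersection with the output attributes. This guarantees that the port-matching condition $U_{i_y,k} \cap I_{i_y} = U^y$ pins down the child's UDS choice consistently, so the ``$\argmin$'' over compatible $k$ is well-defined and the decomposition is unambiguous. I'd also note that a valid finite-cost solution always exists (hide all attributes), matching the remark in the algorithm. The main subtlety — and where I'd spend the most care — is verifying that the \emph{union} of independently-optimal child solutions remains UDS at $m_j$ itself and at every descendant, which requires carefully exploiting the no-data-sharing disjointness so that hiding decisions in one subtree cannot break UDS-ty in another; this is exactly the place where the tree (single-predecessor, no sharing) hypothesis is essential and where a naive DAG generalization would fail.
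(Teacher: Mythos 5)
Your overall strategy---bottom-up induction over the tree, a leaf base case, then decomposing a solution for the subtree $T_j$ into the node's own contribution plus independent contributions from the pairwise-disjoint child subtrees, glued together by the port-matching condition $U_{j\ell}\cap I_{i_y}=U_{i_y,k}\cap I_{i_y}$---is exactly the paper's proof, and your upper-bound (union/substitution) direction is sound: the union restricted to $A_{i_y}$ equals $U_{i_y,k^y}$ precisely because port compatibility gives $U_{j\ell}\cap I_{i_y}=U_{i_y,k^y}\cap I_{i_y}\subseteq U_{i_y,k^y}$.

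However, the cost decomposition you propose is wrong as written, and it is the step the whole induction hangs on. You split $c(H^{j\ell})=c(A_j\cap H^{j\ell})+\sum_y c(\attr_{i_y}\cap H^{j\ell})$, but $A_j$ and $\attr_{i_y}$ are \emph{not} disjoint: they share the port $O_j\cap I_{i_y}$, and hidden port attributes (the typical case, since propagation pushes hiding through ports and the matching condition forces parent and child to agree there; the trivial all-hidden \UDS\ subsets are an extreme example) get counted twice. Consequently your right-hand side does not equal the algorithm's recurrence $Q[j,\ell]=c(I_j\cap U_{j\ell})+\sum_{y} Q[i_y,k^y]$, which charges each port exactly once, \emph{inside} the child term $Q[i_y,k^y]$ (recall leaves are initialized to $c(U_{j\ell})$, inputs included). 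The correct split---the one the paper uses---is on $I_j$: $c(H^{j\ell})=c(H^{j\ell}\cap I_j)+c(H^{j\ell}\setminus I_j)$, where $c(H^{j\ell}\cap I_j)=c(I_j\cap U_{j\ell})$, and $H^{j\ell}\setminus I_j$ is then partitioned among the disjoint sets $\attr_{i_y}$; with that accounting the induction goes through. A secondary slip: Proposition~\ref{prop:uds-property} says a \UDS\ subset is determined by its intersection with the module's \emph{output} attributes, so it does not ``pin down'' the child's \UDS\ choice from the input port $I_{i_y}$---several $U_{i_y,k}$ may agree on $I_{i_y}$, which is exactly why the algorithm minimizes over all compatible $k$. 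That uniqueness proposition is what the \emph{chain} proof needs (there the matching is against the upstream module's outputs); in the tree argument you only need that the restriction $A_{i_y}\cap H^{j\ell}$ is itself some port-compatible \UDS\ subset, after which the minimum over compatible $k$ together with the induction hypothesis (an exchange argument) yields the bound.
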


To complete the proof of Theorem~\ref{prop:step3-chain-wf} for tree workflows, we need
to prove Lemma~\ref{lem:correct-tree-dp}, that we prove next.

\begin{proof} %[of Lemma~\ref{lem:correct-tree-dp}]
We prove the lemma by an induction on all nodes at depth $h = H$ down to 1 of the tree $T$, where
depth $H$ contains all leaf nodes and depth 1 contains the children of the root $m_i$ (which is at depth 0).
\par
First consider any leaf node $m_j$ at height $H$. Then $T_j$ contains only $m_j$ and $\attr_j = A_j$.
For any $1 \leq \ell \leq p_j$, since $\attr_j = A_j \supseteq H^{j\ell}$ and $A_j \cap H^{j\ell} = U_{j, \ell}$.
%hence $H^{j\ell} = U_{j, \ell}}$. 
In this case $H^{j\ell}$ is unique and $Q[j, \ell]$ correctly stores
$c(U_{j, \ell}) = c(H^{j\ell})$.
\par
Suppose the induction holds for all nodes up to height $h + 1$, and consider a node $m_j$ at height
$h$. Let $m_{i_1}, \cdots, m_{i_x}$ be the children of $m_j$ which are at height $h+1$.
Let $H^{j\ell}$ be the min-cost solution, which is partitioned into two disjoint component:

$$c(H^{j\ell}) = c(H^{j\ell} \cap I_j) + c(H^{j\ell} \setminus I_j)$$
\par
First we argue that $c(H^{j\ell} \cap I_j) = c(U_{j, \ell})$.
$A_j \cap H^{j\ell} = U^{j\ell}$. %So $U_{j\ell} = A_j \setminus U_{j\ell} = A_j \cap H^{j\ell}$.
Then $I_j \cap U_{j\ell}$ = $I_j \cap A_j \cap H^{j\ell}$ = $I_j \cap H^{j\ell}$, since $I_j \subseteq A_j$.
Hence $c(I_j \cap H^{j\ell}) = c(I_j \cap U_{j\ell})$. This accounts for the cost of the first part of $Q[j, \ell]$.
\par
Next we analyze the cost $c(H^{j\ell} \setminus I_j)$.
This cost comes from the subtrees $T_{i_1}, \cdots, T_{i_x}$ which are disjoint due to the tree structure and absence of
data-sharing.
Let us partition the subset $H^{j\ell} \setminus I_j$
into disjoint parts $(H^{j\ell} \setminus I_j) \cap \attr_{i_y}$, $1 \leq y \leq x$.
%We claim that 
Below we prove that $c((H^{j\ell} \setminus I_j) \cap \attr_{i_y}) = Q[i_y, k^y]$, $1 \leq y \leq x$,
where $k^y$ is computed as in the algorithm. This will complete the proof of the lemma.
\par
%\textbf{Proof of claim: 
%Here we argue $c((H^{j\ell} \setminus I_j) \cap \attr_{i_y}) = Q[i_y, k^y]$, $1 \leq y \leq x$}.~~
To see this, let $H' = (H^{j\ell} \setminus I_j) \cap \attr_{i_y}$.
Clearly, $\attr_{i_y} \supseteq H'$. Every $m_q \in T_j$ is \UDS\ \wrt $A_q \cap H^{j\ell}$.
If also $m_q \in T_{i_y}$, then $A_q \cap H' = A_q \cap H^{j\ell}$, and therefore all $m_q \in T_{i_y}$
are also \UDS\ \wrt $H'$. In particular, $m_{i_y}$ is \UDS\ \wrt $H'$, and therefore $A_{i_y} \cap H' = U_{i_y, k^y}$
for some $k^y$,
since $U_{i_y, k^y}$ was chosen as the \UDS\ set by our algorithm.
\par
Finally we argue that $c(H') = c(H^{i_y, k^y})$, where $H^{i_y, k^y}$ is the min-cost solution for $m_{i_y}$
among all such subsets. This follows from our induction hypothesis, since $m_{i_y}$ is a node at depth $h+1$.
Therefore, $c(H') = c(H^{i_y, k^y}) = Q[i_y, k^y]$, \ie\
$$c((H^{j\ell} \setminus I_j) \cap \attr_{i_y}) = Q[i_y, k^y]$$
as desired.
This %completes the proof of 
proves the lemma.
 \end{proof}

%\subsection{Proof of Theorem ~\ref{lem:single-priv-soln-nphard}}\label{sec:proof-lem:single-priv-soln-nphard}
\subsection{Proof of NP-hardness for DAG Workflows}\label{sec:proof-lem:single-priv-soln-nphard}
Here we prove NP-hardness for arbitrary DAG workflows as stated in Theorem~\ref{prop:step3-chain-wf}
by a reduction from 3SAT.
%					\textsc{Theorem}~\ref{lem:single-priv-soln-nphard}.~~ \emph{The
%					problem of testing whether the single-subset problem has a solution
%					with cost smaller than a given bound is NP-complete when the
%					public-closure forms an arbitrary subgraph. This is the case even
%					when both number of attributes and the number of safe and UDS
%					subsets of the individual modules is bounded by a (small) constant.
%					}\\
%\end{lemma}

\begin{figure}[t]
\centering
\includegraphics[scale=0.3]{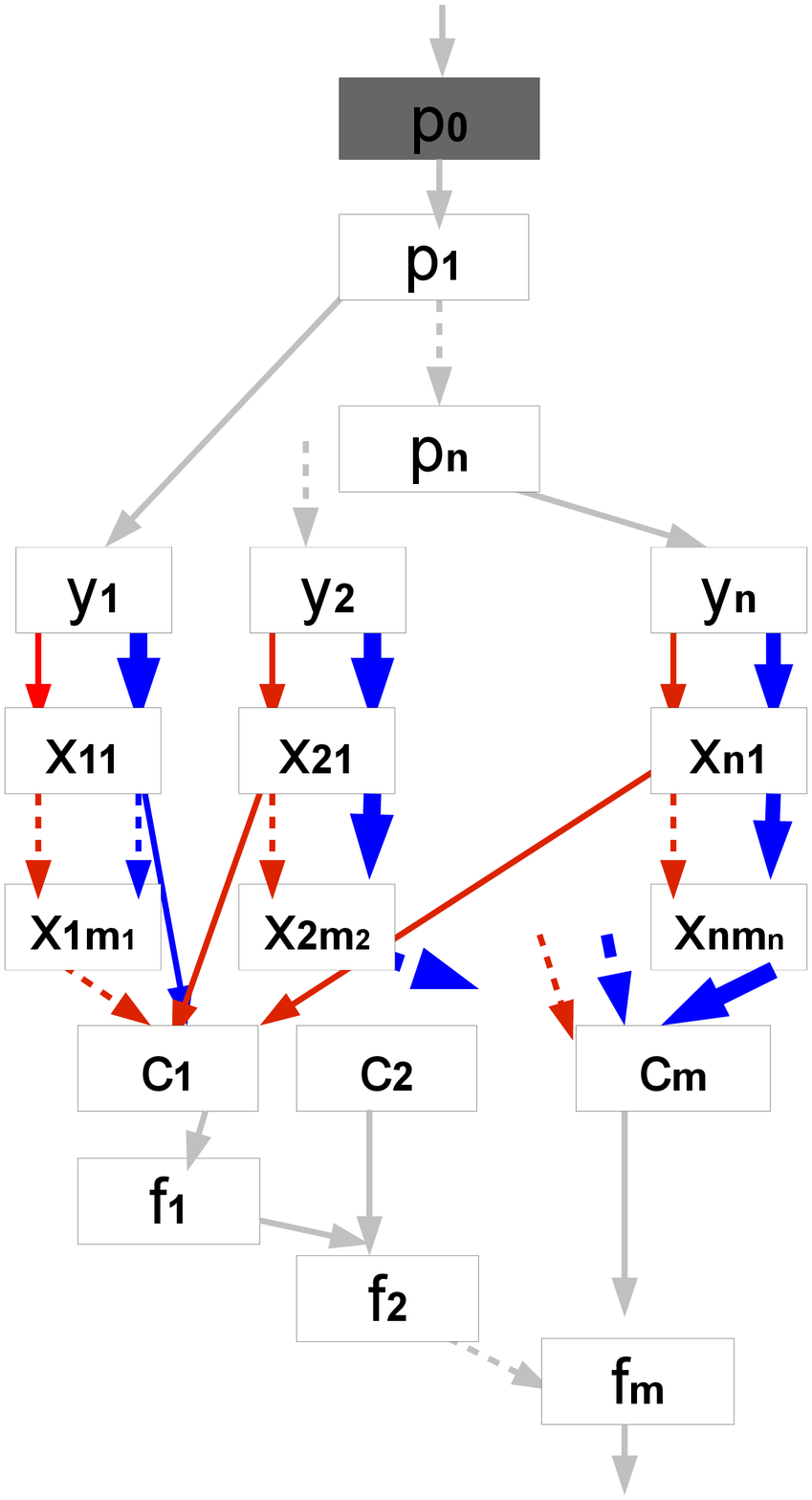}
\caption{Reduction from 3SAT.  White modules are public, Grey are private.
Red thin edges denote TRUE assignment, Blue bold edges denote FALSE assignment.
} \label{fig:wf-nphard}
\end{figure}

%\begin{proof}
Given a CNF formula $\psi$ on $n$ variables $z_1, \cdots, z_n$ and $m$ clauses $\psi_1, \cdots, \psi_m$,
we construct a graph as shown in Figure~\ref{fig:wf-nphard}. Let variable $z_i$ occurs in $m_i$ different clauses
(as positive or negative literals). In the figure, the module $p_0$
is the single-private module ($m_i$), having a single output attribute $a$. The rest of the modules are the public modules in the
public-closure $C(\{a\})$.
\par
For every variable $z_i$, we create $m_i +2$ nodes: $p_i, y_i$ and $x_{i,1}, \cdots, x_{i,m_i}$. For every clause $\psi_j$, we create
$2$ modules $C_j$ and $f_j$.
\par
The \textbf{edge connections} are as follows:
\begin{enumerate}
    \item $p_0$ sends its single output $a$ to $p_1$.
    \item For every $i = 1$ to $n-1$, $p_i$ has two outputs; one is sent to $p_{i+1}$ and the other is sent to $y_{i}$.
    $p_n$ sends its single output to $y_{n}$.
    \item Each $y_i$, $i = 1$ to $n$, sends two outputs to $x_{i,1}$. The blue outgoing edge from $y_i$ denotes positive assignment of the variable $z_i$,
    whereas the red edge denotes negative assignment of the variable $z_i$.
    \item Each $x_{i,j}$, $i = 1$ to $n$, $j = 1$ to $m_i-1$, sends two outputs (blue and red) to $x_{i,j+1}$. In addition, if $x_{i,j}$, $i = 1$ to $n$, $j = 1$ to $m_i$ sends a blue (resp. red) edge to clause node $C_k$ if the variable $z_i$ is a positive (resp. negative) in the clause $C_k$
    (and $C_k$ is the $j$-th such clause containing $z_i$).
    \item Each $C_j$, $j = 1$ to $m$, sends its single output to $f_j$.
    \item Each $f_j$, $j = 1$ to $m-1$, sends its single output to $f_{j+1}$, $f_m$ outputs the single final output.
\end{enumerate}
\par
The \UDS\ sets are defined as follows:
\begin{enumerate}
    \item For every $i = 1$ to $n-1$, $p_i$ has a single \UDS\ set: hide all its inputs and outputs.
    \item Each $y_i$, $i = 1$ to $n$, has three \UDS\ choices: (1)    hide its unique input and blue output, (2)  hide its unique input and red output,
    (3) hide its single input and both blue and red outputs.
    \item Each $x_{i,j}$, $i = 1$ to $n$, $j = 1$ to $m_i$, has three choices: (1) hide blue input and all blue outputs,
    (2) hide red input and all red outputs, (3) hide all inputs and all outputs.
    \item Each $C_j$, $j = 1$ to $m$, has choices: hide the single output and at least of the three inputs.
    \item Each $f_j$, $j = 1$ to $m$, has the single choice: hide all its inputs and outputs.
\end{enumerate}
\par
\textbf{Cost.~~} The outputs from $y_i$, $i = 1$ to $n$ has unit cost, the cost of the other attributes is 0.
The following lemma proves correctness of the construction.

\begin{lemma}
There is a solution of single-module problem of cost $= n$ if and only if the 3SAT formula $\psi$
is satisfiable.
\end{lemma}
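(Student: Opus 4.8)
The plan is to prove the two implications separately, using the unit-cost budget $n$ as the bridge between feasible hidden subsets $H$ and truth assignments of $\psi$. The only attributes carrying nonzero cost are the two colored outputs of each variable gadget $y_i$, and every one of the three \UDS\ choices listed for $y_i$ hides at least one of these outputs; hence \emph{any} feasible $H$ has $\cost(H)\ge n$, so $n$ is automatically a lower bound on the optimum, and $\cost(H)=n$ forces each $y_i$ to use choice (1) or choice (2), i.e.\ to hide exactly one of its blue/red outputs. I read this single hidden output as a truth value for $z_i$: hiding the blue output (choice (1)) encodes $z_i=\textsc{True}$ and hiding the red output (choice (2)) encodes $z_i=\textsc{False}$, matching the convention that the blue edge of $y_i$ carries the positive assignment and the red edge the negative one. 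The correspondence is set up so that, in both directions, a literal-edge into a clause node is hidden precisely when the corresponding literal is satisfied.

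For the forward direction (satisfiable $\Rightarrow$ cost-$n$ solution) I would start from a satisfying assignment and build $H$ top-down. Since $p_0$'s safe subset forces $a\in H$, and each $p_i$ has the unique \UDS\ choice ``hide all attributes,'' the entire $p$-chain is hidden, so every input of every $y_i$ is hidden, as all of $y_i$'s options require; I then hide the blue output of $y_i$ when $z_i=\textsc{True}$ and the red output when $z_i=\textsc{False}$. The hidden input color then propagates down the chain $x_{i,1},\dots,x_{i,m_i}$, because a hidden blue (resp.\ red) input matches exactly choice (1) (resp.\ (2)) of $x_{i,j}$, which hides all blue (resp.\ red) outputs; consequently the colored edge $x_{i,j}\to C_k$ lies in $H$ exactly when the literal of $z_i$ in $C_k$ is satisfied. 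Each clause module $C_j$ therefore receives at least one hidden literal-edge, so hiding $C_j$'s output together with that input is a legal \UDS\ choice for $C_j$, and the $f_j$-chain uses its unique ``hide all'' choice. Every module's restriction of $H$ is a valid \UDS\ subset, and only the $n$ chosen $y_i$-outputs cost anything, so $\cost(H)=n$.

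For the reverse direction (cost-$n$ solution $\Rightarrow$ satisfiable) I would show that any cost-$n$ $H$ is essentially forced to behave like the construction above. As already noted, each $y_i$ hides its input and exactly one colored output, which defines the assignment. The key step is a propagation lemma: if $x_{i,j}$ has exactly one of its two colored inputs hidden, then the only \UDS\ choice of $x_{i,j}$ consistent with $H$ is the matching single-color option, since choice (3) (``hide everything'') would require the still-visible colored input to be hidden, and the opposite-color option would require the hidden input to be visible. An induction along the chain (using that each edge is a single shared attribute, simultaneously an output of one gadget and the input of the next) shows every $x_{i,j}$ hides exactly the outputs of the color committed to at $y_i$, so $x_{i,j}\to C_k$ is in $H$ iff the corresponding literal is satisfied. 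Because every \UDS\ choice of $C_j$ requires at least one hidden input, feasibility of $H$ forces each $C_j$ to have a hidden incoming literal-edge, i.e.\ each clause contains a satisfied literal; hence the assignment satisfies $\psi$.

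The main obstacle is the propagation lemma in the reverse direction: one must verify, directly from the explicitly listed \UDS\ sets of the $x_{i,j}$ gadgets, that a single hidden colored input deterministically forces the matching color choice and excludes the ``hide all'' option, and that this forcing composes cleanly across the chain and across the shared edge attributes. A secondary point worth making explicit is the realizability of the stated \UDS\ sets by concrete module functions---the clause gadget behaving like the OR-type module used in the coNP-hardness construction of Theorem~\ref{thm:nphard-uds}---together with the observation that $\cost(H)\ge n$ always holds, so that the value $n$ is genuinely the minimum and ``cost $=n$'' coincides with feasibility of an optimal-color selection.
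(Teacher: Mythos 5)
Your proof is correct and follows essentially the same route as the paper's: the same blue/red encoding of truth values (blue hidden $=$ \textsc{True}), the same top-down construction of a cost-$n$ hidden set from a satisfying assignment, and the same forcing argument in the reverse direction (cost $n$ makes each $y_i$ hide exactly one colored output, which propagates down the $x_{i,j}$ chain and forces a hidden literal-edge at every clause gadget). The only cosmetic difference is that you rule out the ``hide-all'' option at $x_{i,j}$ by a local consistency check against its visible colored input, while the paper propagates that option upward to $y_i$ and derives a cost contradiction---the same forcing, phrased in the opposite direction.
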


\begin{proof}
(if) Suppose the 3SAT formula is satisfiable, so there is an assignment of the variables $z_i$ that makes $\Psi$ true.
If $z_i$ is set to \true\ (resp \false), choose the blue (resp. red) outgoing edge from $y_i$. Then choose the other edges
accordingly: (1) choose outgoing edge from $p_0$, (2) choose all input and outputs of $p_i$, $i = 1$ to $n$;
(3) if blue (resp. red) input of $x_{i,j}$ is  chosen, all its blue (resp. red) outputs  are chosen; and,
(4) all inputs and outputs of $f_j$ are chosen. Clearly, all these are \UDS\ sets by construction.

So we have to only argue about the clause nodes $C_j$. Since $\psi$ is satisfied by the given assignment, there is a
literal $z_i \in C_j$ (positive or negative), whose assignment makes it true. Hence at least one of the inputs to $C_j$
will be chosen. So the \UDS\ requirements of all the \UDS\ clauses are satisfied.
The total cost of the solution is $n$ since exactly one output of the $y_i$ nodes, $i = 1$ to $n$, have been chosen.
\\

(only if) Suppose there is a solution to the single-module problem of cost $n$. Then each $y_i$ can choose exactly one
output (at least one output has to be chosen to satisfy \UDS\ property for each $y_i$, and more than one output
cannot be chosen as the cost is $n$). If $y_i$ chooses blue (resp. red) output, this forces the $x_{i,j}$ nodes to select the corresponding
blue (resp. red) inputs and outputs. No $x_{i, j}$ can choose the \UDS\ option of selecting all its inputs and outputs as in that case
finally $y_i$ be forced to select both outputs which will exceed the cost.  Since $C_j$ satisfies \UDS\ condition,
this in turn forces each $C_j$ to select the corresponding blue (resp. red) inputs.
\par
If the blue (resp. red) output of $y_i$ is chosen, the variable is set to \true\ (resp. \false). By the above argument,
at least one such red or blue input will be chosen as input to each $C_j$, that satisfies the corresponding
clause $\psi_j$.
\end{proof}

\section{General Workflows}\label{sec:app-general-wf}
%\section{Privacy Theorem for General Workflows}\label{sec:app-general-wf}
In this section we discuss the privacy theorem for general workflows as outlined in Section~\ref{sec:general-wf}.
First we define directed-path and downward-closure as follows 
(similar to public path and public-closure).
%\red{this should be combined with defn~\ref{def:publicpath}}:
\begin{definition}\label{def:directedpath}
A module $m_1$ has \emph{\textbf{a directed}} path to another module $m_2$, if there are  
modules $m_{i_1},m_{i_2}, \cdots, m_{i_j}$ such that $m_{i_1}=m_1$,
$m_{i_j}=m_2$, and for all $1\leq k < j$, $O_{i_k} \cap I_{i_{k+1}}
\neq \emptyset$.

An attribute $a \in A$ has a directed path from to module $m_j$, if there is a module $m_k$
such that $a \in I_k$ and $m_k$ has a directed path to $m_j$.
\end{definition}

\begin{definition}\label{def:downward-closure}
Given a private module $m_i$ and a set of hidden output attributes
$h_i \subseteq O_i$ of $m_i$, the \emph{\textbf{downward-closure}} of
$m_i$ \wrt $h_i$, denoted by $D(h_i)$, is the %smallest
set of modules $m_j$ (both private and public) such that there 
is a directed path from some attribute $a \in h_i$ to $m_j$.
\end{definition}

Also recall downstream-safety (\DS ty) defined in Definition~\ref{def:uds}
which says that for equivalent inputs to a module with respect to 
hidden attributes, the outputs must be equivalent.
We prove the following theorem in this section:\\

%\textsc{Theorem}~\ref{thm:privacy-general} ({\bf Privacy Theorem
\begin{theorem}\label{thm:privacy-general} 
({\bf Privacy Theorem for General workflows}) 
Let $W$ be any workflow. For each private module 
$m_i$ in $W$, let  $H_i$ be a subset of hidden attributes such that
(i) $h_i = H_i \cap O_i$ is safe for $\Gamma$-standalone-privacy of $m_i$, 
(ii) each private and public module $m_j$ in the downward-closure %$C(H_i \cap O_i)$
$D(h_i)$
is \DS\ \wrt\ $A_j \cap H_i$, 
and (iii) $H_i \subseteq O_i \cup \bigcup_{j: m_j \in
D(h_i)} A_j$. Then the workflow $W$ is
$\Gamma$-private \wrt\ %$\bigcup_{i \in M^-}H_i$.
$H = \bigcup_{i: m_i \textrm{ is private}}H_i$.
%%%%Let
%%%%$D(\widebar{V_i})$ be the downward-closure of $m_i$ \wrt hidden
%%%%attributes $\widebar{V_i}$. 
%%%Let $H_i$ be a set of hidden attributes
%%%s.t. $ \widebar{V_i} \subseteq H_i \subseteq O_i \cup \bigcup_{k \in
%%%D(\widebar{V_i})} A_k$
%%%and where for every module $m_j \in D(\widebar{V_i})$ ($m_j$ can be private or public), 
%%%$m_j$ is downstream-safe (DS) \wrt $A_j \setminus H_i$. Then the workflow $W$ is
%%%$\Gamma$-private \wrt the set of visible attributes $V = A
%%%\setminus (\bigcup_{i \in M^-} H_i)$.

%\emph{
%							Let $W$ be any workflow. For each private module 
%							$m_i$ in $W$, let  $V_i$ be a safe
%							subset for $\Gamma$-standalone-privacy s.t. only output attributes
%							of $m_i$ are hidden (i.e. $\widebar{V_i} \subseteq O_i$). Let
%							$D(\widebar{V_i})$ be the downward-closure of $m_i$ \wrt hidden
%							attributes $\widebar{V_i}$. Let $H_i$ be a set of hidden attributes
%							s.t. $ \widebar{V_i} \subseteq H_i \subseteq O_i \cup \bigcup_{k \in
%							D(\widebar{V_i})} A_k$
%							and where for every module $m_j \in D(\widebar{V_i})$ ($m_j$ can be private or public), 
%							$m_j$ is downstream-safe (DS) \wrt $A_j \setminus H_i$. Then the workflow $W$ is
%							$\Gamma$-private \wrt the set of visible attributes $V = A
%							\setminus (\bigcup_{i \in M^-} H_i)$.
\end{theorem}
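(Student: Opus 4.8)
The plan is to mirror the proof of Theorem~\ref{thm:privacy-downward}, replacing the public-closure by the downward-closure $D(h_i)$ and the \UDS\ requirement by \DS, and to absorb the new difficulty---that the sets $H_i$ now overlap---into a single structural fact: \DS\ subsets are closed under union. Concretely, I would isolate two lemmas. The first is an analog of Lemma~\ref{lem:main-private-single-pred} for a single private module: if $m_i$ is $\Gamma$-standalone-private \wrt $h_i \subseteq O_i$, and a hidden set $H_i \supseteq h_i$ with $H_i \subseteq O_i \cup \bigcup_{j:\, m_j \in D(h_i)} A_j$ renders every module $m_j \in D(h_i)$ (now public \emph{or} private) \DS\ \wrt $A_j \cap H_i$, then every candidate standalone output $\tup{y} \in \Out_{\tup{x}, m_i, h_i}$ is also a candidate workflow output, $\tup{y} \in \Out_{\tup{x}, W, H_i}$. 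The second is the union lemma: if a module is \DS\ \wrt $H_1$ and \wrt $H_2$, then it is \DS\ \wrt $H_1 \cup H_2$.

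For the main lemma I would reuse the flipping machinery (\Flip, \EFlip) of Definitions~\ref{def:flip} and~\ref{def:eflip}. Fix $\tup{x}$, $\tup{z}=m_i(\tup{x})$ and a candidate $\tup{y}$; by Lemma~\ref{lem:out-x-output} (which, as noted there, needs no single-predecessor hypothesis) we have $\tup{y}\equiv_{h_i}\tup{z}$. I would build a possible world $R'$ by redefining $m_i$ to swap the outputs $\tup{y}\leftrightarrow\tup{z}$ on their preimages, and then propagating this swap forward along directed paths: each private descendant in $D(h_i)$ is redefined by pre-composing with the induced flip on its inputs, while public modules are left untouched. The point of using $D(h_i)$ rather than a public-closure is that propagation no longer stops at a private module, so I never face a visible downstream output that is pinned by an untouched private module; consequently only \DS\ of the traversed modules is needed, not upstream-safety. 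The verification is an induction in topological order showing that the executions of $R$ and $R'$ agree on visible attributes: \DS\ guarantees that feeding visibly-equal but hidden-flipped inputs to any module in the closure produces visibly-equal outputs, so the flip never escapes its $\equiv_{H_i}$-class, while modules outside $D(h_i)$ see identical, unflipped inputs.

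The union lemma I would prove directly from the functional characterization of \DS: $m$ is \DS\ \wrt $H$ iff $\proj{O\setminus H}{m(\tup{x})}$ depends only on $\proj{I\setminus H}{\tup{x}}$. Given $\tup{x}\equiv_{H_1\cup H_2}\tup{x'}$, I would interpolate a tuple $\tup{x}^{*}$ agreeing with $\tup{x}$ on $A\setminus H_1$ and with $\tup{x'}$ on $A\setminus H_2$; this is consistent precisely because $\tup{x},\tup{x'}$ already agree on $A\setminus(H_1\cup H_2)=(A\setminus H_1)\cap(A\setminus H_2)$. Then $\tup{x}\equiv_{H_1}\tup{x}^{*}$ and $\tup{x}^{*}\equiv_{H_2}\tup{x'}$, so applying \DS\ \wrt $H_1$ and \wrt $H_2$ in turn gives $m(\tup{x})\equiv_{H_1}m(\tup{x}^{*})$ and $m(\tup{x}^{*})\equiv_{H_2}m(\tup{x'})$; intersecting on $O\setminus(H_1\cup H_2)$ yields $m(\tup{x})\equiv_{H_1\cup H_2}m(\tup{x'})$.

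Finally I would assemble the theorem. Set $H=\bigcup_{i:\, m_i \textrm{ private}}H_i$, fix a private module $m_i$ and a module $m_j\in D(h_i)$. For every private $m_{i'}$ whose closure contains $m_j$, condition (ii) gives \DS\ of $m_j$ \wrt $A_j\cap H_{i'}$, so the union lemma makes $m_j$ \DS\ \wrt the union of these sets; a short reachability argument (using condition (iii) together with the fact that if an output hidden in $H_{i'}$ flowed into $A_j$ then $m_j$ would lie in $D(h_{i'})$) shows this union is exactly $A_j\cap H$. Hence the hypotheses of the main lemma hold with $H$ in place of $H_i$, giving $\Out_{\tup{x},m_i,h_i}\subseteq\Out_{\tup{x},W,H}$ for every input $\tup{x}$; since $h_i$ is $\Gamma$-standalone-safe, $|\Out_{\tup{x},W,H}|\ge|\Out_{\tup{x},m_i,h_i}|\ge\Gamma$, so $m_i$ is $\Gamma$-workflow-private and $W$ is $\Gamma$-private \wrt $H$. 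The main obstacle is the possible-world construction of the first lemma: propagating the flip consistently through an arbitrarily interleaved mix of public and private modules, under data sharing and overlapping closures, and proving by topological induction that the visible projection is preserved---this is the analog of the already delicate Case~II argument for single-predecessor workflows, now without the crutch of upstream-safety and without disjointness of the $H_i$.
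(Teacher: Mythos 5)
Your proposal follows the paper's own route: the same decomposition into a single-module main lemma (Lemma~\ref{lem:main-private-general}) plus a union-closure lemma for \DS\ subsets (Lemma~\ref{lem:ds-union}, proved by exactly your interpolation argument), assembled by applying the union lemma and then the main lemma with $H$ in place of $H_i$. One remark on the construction: in the paper's possible world only $m_i$ is redefined (post-composed with $\Flip_{\tup{y},\tup{z}}$), and \emph{every} other module, public and private, is left untouched; the \DS\ hypothesis on all of $D(h_i)$ already forces, by topological induction, that every downstream value stays in its $\equiv_{H_i}$-class, and modules outside $D(h_i)$ see literally identical inputs. Your extra pre-composition of private descendants with the induced flip is therefore unnecessary, though not harmful, since flipping preserves $\equiv_{H_i}$-classes and \DS\ then still yields agreement on visible attributes.

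The genuine gap is in your assembly step. You claim that $\bigcup_{i'\colon m_j\in D(h_{i'})}(A_j\cap H_{i'}) = A_j\cap H$, on the grounds that any attribute of $H_{i'}$ meeting $A_j$ forces $m_j\in D(h_{i'})$. That equality is false, and the cleanest counterexample needs no data sharing: take $m_j$ to be a \emph{private} module inside $D(h_i)$ (exactly the new phenomenon in general workflows) with its own nonempty safe set $h_j$. Then $h_j\subseteq A_j\cap H$, but $m_j\notin D(h_j)$ because in a DAG no module is downstream of its own outputs, so $h_j$ is missing from your union; similarly $H_{i'}$ may hide attributes of $O_j$ that are inputs of modules in $D(h_{i'})$ even though $m_j\notin D(h_{i'})$. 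Since the main lemma requires $m_j$ to be \DS\ \wrt\ the \emph{full} set $A_j\cap H$, and \DS\ is not monotone under enlarging the hidden set, establishing \DS\ \wrt\ a proper subset does not close the proof. The missing ingredient is precisely the paper's Case~III together with Observation~\ref{obs:ds}: every attribute in the discrepancy $(A_j\cap H)\setminus\bigcup_{i'\colon m_j\in D(h_{i'})}(A_j\cap H_{i'})$ lies in $O_j$, and \DS\ \emph{is} preserved when additional output attributes (but no additional inputs) are hidden. Adding that output-monotonicity observation, and then taking the union via your Lemma, recovers \DS\ \wrt\ $A_j\cap H$ exactly; without it your argument fails on precisely the private modules inside the closures.
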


In the proof of Theorem~\ref{thm:privacy-downward} from Lemma~\ref{lem:main-private-single-pred},
we used the fact that for single-predecessor workflows, for two distinct private modules $m_i, m_k$,
the public-closures and the hidden subsets $H_i, H_j$ are disjoint. %\red{create and refer to the lemma}. 
However, it is not hard to see that 
this is not the case for general workflows, where the downward-closure and the subsets $H_i$ may overlap.
Further, the \DS\ property is not monotone (hiding more output attributes will maintain the \DS\ property, but hiding more input attributes
may destroy the \DS\ property). So we need to argue that the \DS\ property is maintained when we take union of $H_i$
sets in the workflow which is formalized by the following lemma.
%The following proposition will be helpful to prove the privacy theorem for general workflows. 

\begin{lemma}\label{lem:ds-union}
If a module $m_j$ is \DS\ \wrt sets $H_1, H_2 \subseteq A_j$, then $m_j$ is \DS\ \wrt $H = H_1 \cup H_2$.
\end{lemma}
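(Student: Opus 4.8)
The plan is to prove the statement directly from the definition of downstream-safety by constructing a single intermediate input that bridges $H_1$ and $H_2$, thereby side-stepping the non-monotonicity of \DS ty. Write $I = I_j$ and $O = O_j$, and recall that for inputs $\tup{x} \equiv_H \tup{x'}$ means $\proj{I \setminus H}{\tup{x}} = \proj{I \setminus H}{\tup{x'}}$, while for outputs it means agreement on $O \setminus H$. Fix two inputs $\tup{x}, \tup{x'}$ to $m_j$ with $\tup{x} \equiv_H \tup{x'}$. Since $H = H_1 \cup H_2$ we have $I \setminus H = (I \setminus H_1) \cap (I \setminus H_2)$, so $\tup{x}$ and $\tup{x'}$ already agree on every input attribute that is visible under both $H_1$ and $H_2$; this agreement is the only place the hypothesis is used, and it is exactly what makes the construction below consistent.

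First I would build an input $\tup{x''}$ that equals $\tup{x}$ on $I \setminus H_1$ and equals $\tup{x'}$ on $I \setminus H_2$. These two prescriptions overlap only on $(I \setminus H_1) \cap (I \setminus H_2) = I \setminus H$, where $\tup{x}$ and $\tup{x'}$ coincide, so they are consistent; on $H_1 \cap H_2 \cap I$ the value of $\tup{x''}$ is unconstrained and may be chosen arbitrarily. Because the domain $\Dom = \prod_{a \in I}\Delta_a$ is a full product and $m_j$ is total on it, $\tup{x''}$ is a legitimate input. By construction $\tup{x} \equiv_{H_1} \tup{x''}$ (they agree on $I \setminus H_1$) and $\tup{x''} \equiv_{H_2} \tup{x'}$ (they agree on $I \setminus H_2$).

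Next I would apply the two hypotheses in sequence. Downstream-safety of $m_j$ \wrt $H_1$ gives $m_j(\tup{x}) \equiv_{H_1} m_j(\tup{x''})$, i.e.\ agreement on $O \setminus H_1$; downstream-safety \wrt $H_2$ gives $m_j(\tup{x''}) \equiv_{H_2} m_j(\tup{x'})$, i.e.\ agreement on $O \setminus H_2$. Restricting both agreements to the smaller set $O \setminus H = (O \setminus H_1) \cap (O \setminus H_2)$ and composing them yields $\proj{O \setminus H}{m_j(\tup{x})} = \proj{O \setminus H}{m_j(\tup{x''})} = \proj{O \setminus H}{m_j(\tup{x'})}$, that is $m_j(\tup{x}) \equiv_H m_j(\tup{x'})$. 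Since $\tup{x}, \tup{x'}$ were arbitrary equivalent inputs, $m_j$ is \DS\ \wrt $H$.

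The main obstacle is the construction of $\tup{x''}$ together with the verification that it is a valid input: this is precisely the point where a naive appeal to monotonicity of \DS ty fails, since $H = H_1 \cup H_2$ may hide additional input attributes relative to either $H_i$, enlarging the input-equivalence classes. The remaining projection manipulations are routine. Finally, the argument extends by a trivial induction to any finite union of \DS\ subsets, which is what the general-workflow privacy theorem needs when the per-module sets $H_i$ are allowed to overlap.
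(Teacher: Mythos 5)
Your proof is correct and follows essentially the same route as the paper's: the paper also constructs a hybrid input $\tup{x_3}$ agreeing with $\tup{x_1}$ on $I_j \setminus H_1$ and with $\tup{x_2}$ on $I_j \setminus H_2$, applies \DS ty\ \wrt\ $H_1$ and $H_2$ in turn, and intersects the two output agreements down to $O_j \setminus H$. The only cosmetic difference is that the paper fixes the value of the intermediate tuple on $I_j \cap H_1 \cap H_2$ to be $\tup{x_1}$'s value, where you leave it arbitrary; both choices work.
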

%\begin{proof}[Sketch]
Given two equivalent inputs $\tup{x_1} \equiv_{H} \tup{x_2}$ \wrt $H = H_1 \cup H_2$, we have to show that their
outputs are equivalent: $m_j(\tup{x_1}) \equiv_{H} m_j(\tup{x_2})$. Even if $\tup{x_1}, \tup{x_2}$ are equivalent \wrt $H$, 
they may not be equivalent \wrt $H_1$ or $H_2$. In the proof we construct a new tuple $\tup{x_3}$ such that
$\tup{x_1} \equiv_{H_1} \tup{x_3}$, and $\tup{x_2} \equiv_{H_2} \tup{x_3}$. Then using the \DS\ properties of $H_1$ and $H_2$,
we show that $m_j(\tup{x_1}) \equiv_{H} m_j(\tup{x_3}) \equiv_{V} m_j(\tup{x_2})$.
The formal proof is given below.
%\end{proof}
%The full proof is given in Appendix~\ref{sec:proof-lem:ds-union}.
\begin{proof}
Let $H = H_1 \cup H_2$. % = $A_j \setminus (\widebar{V_1} \cup \widebar{V_2})$.
Let $\tup{x_1}$ and $\tup{x_2}$ be two input tuples to $m_j$ such that $\tup{x_1} \equiv_{H} \tup{x_2}$.
i.e. 
\begin{equation}
\proj{I_j \setminus H}{\tup{x_1}} = \proj{I_j \setminus H}{\tup{x_2}}\label{equn:ds-1}
\end{equation}
%$$\proj{V_1 \cap I_j}{\tup{x_1}} = \proj{V \cap I_j}{\tup{x_2}}$$
%
%Let us partition the set of inputs to $m_j$ based on their equivalence \wrt $V_1 \cap I_j$
%as $X_1, \cdots, X_k$, and let the set of outputs of these groups be $Y_1, \cdots, Y_k$
%($Y_{\ell} = \{m_j(\tup{x}) : \tup{x} \in X_{\ell}\}$).
 For $a \in I_j$, let $x_3[a]$ denote the value of $a$-th attribute
of $x_3$ (similarly $x_1[a], x_2[a]$). From (\ref{equn:ds-1}), for $a \in  I_j \setminus H$, $x_1[a] = x_2[a]$.
Let us define a tuple $\tup{x_3}$ as follows on four disjoint subsets of $I_j$: % (since $H = H_1 \cup H_2$):
\begin{eqnarray*}
x_3[a] & = & x_1[a]~~~~~ \text{ if } a \in I_j \cap H_1 \cap H_2\\ 
%I_j \setminus (V_1 \cup V_2)\\ %\\%(\widebar{V_2} \cap I_j) \setminus (\widebar{V_1} \cap I_j)\\
& = & x_1[a]~~~~~ \text{ if } a \in I_j \cap (H_2 \setminus H_1)\\ %(V_1 \setminus V_2)\\
& = & x_2[a]~~~~~ \text{ if } a \in I_j \cap (H_1 \setminus H_2)\\ %(V_2 \setminus V_1)\\
& = & x_1[a] = x_2[a] ~~~~~ \text{ if } a \in I_j \setminus H %I_j \cap V
%& = & x_1[a] = x_2[a] ~~~~~ \text{ if } a \in V \cap I_j
\end{eqnarray*}
For instance, on attribute set $I_j = \angb{a_1, \cdots, a_5}$, let
$\tup{x_1} = \angb{\ul{2},\ul{3},\ul{2},6,7}$, $\tup{x_2} = \angb{\ul{4}, \ul{5}, \ul{9}, 6, 7}$, 
$H_1 = \{a_1, a_2\}$ and $H_2 = \{a_2, a_3\}$, $H = \{a_1, a_2, a_3\}$
(in $\tup{x_1}, \tup{x_2}$, the hidden attribute values in $H$ are underlined).
Then $\tup{x_3} = \angb{\ul{4}, \ul{3}, \ul{2}, 6, 7}$.\\

(1) First we claim that, $\tup{x_1} \equiv_{H_1} \tup{x_3}$, or,  
\begin{equation}
\proj{I_j \setminus H_1}{\tup{x_1}} = \proj{I_j \setminus H_i}{\tup{x_3}} \label{equn:ds-2}
\end{equation}
Partition $I_j \setminus H_1$ into two disjoint subsets, $I_j \cap (H_2 \setminus H_1)$, and, $I_j \setminus (H_1 \cup H_2) = I_j \setminus H$.
From the definition of $\tup{x_3}$, for all $a \in I_j \cap (H_2 \setminus H_1)$ and all $a \in I_j \setminus H$,
$x_1[a] = x_3[a]$. This shows (\ref{equn:ds-2}).\\

(2) Next we claim that, $\tup{x_2} \equiv_{H_2} \tup{x_3}$, or,  
\begin{equation}
\proj{I_j \setminus H_2}{\tup{x_2}} = \proj{I_j \setminus H_2}{\tup{x_3}} \label{equn:ds-3}
\end{equation}
Again partition $I_j \setminus H_2$ into two disjoint subsets, $I_j \cap (H_1 \setminus H_2)$, and, $I_j \setminus (H_1 \cup H_2) = I_j \setminus H$.
From the definition of $\tup{x_3}$, for all $a \in I_j \cap (H_1 \setminus H_2)$ and all $a \in I_j \setminus H$,
$x_2[a] = x_3[a]$. This shows (\ref{equn:ds-3}). (\ref{equn:ds-2}) and (\ref{equn:ds-3}) can also be verified from the above example.\\

(3) Now by the condition stated in the lemma, $m_j$ is \DS\ \wrt $H_1$ and $H_2$. Therefore, using (\ref{equn:ds-2}) and (\ref{equn:ds-3}),
$m_j(\tup{x_1}) \equiv_{H_1} m_j(\tup{x_3})$ and $m_j(\tup{x_2}) \equiv_{H_2} m_j(\tup{x_3})$
or,  
\begin{equation}
\proj{O_j \setminus H_1}{m_j(\tup{x_1})} = \proj{O_j \setminus H_1}{m_j(\tup{x_3})} \label{equn:ds-4}
\end{equation}
and
\begin{equation}
\proj{O_j \setminus H_2}{m_j(\tup{x_2})} = \proj{O_j \setminus H_2}{m_j(\tup{x_3})} \label{equn:ds-5}
\end{equation}

Since $O_j \setminus H$ $= O_j \setminus (H_1 \cup H_2)$ $\subseteq O_j \setminus H_1$, from (\ref{equn:ds-4})
\begin{equation}
\proj{O_j \setminus H}{m_j(\tup{x_1})} = \proj{O_j \setminus H}{m_j(\tup{x_3})} \label{equn:ds-6}
\end{equation}

Similarly, $O_j \setminus H$ $\subseteq O_j \setminus H_2$, from (\ref{equn:ds-5})
\begin{equation}
\proj{O_j \setminus H}{m_j(\tup{x_2})} = \proj{O_j \setminus H}{m_j(\tup{x_3})} \label{equn:ds-7}
\end{equation}

From (\ref{equn:ds-6}) and (\ref{equn:ds-7}),
\begin{equation}
\proj{O_j \setminus H}{m_j(\tup{x_1})} = \proj{O_j \setminus H}{m_j(\tup{x_2})} \label{equn:ds-8}
\end{equation}
In other words, the output tuples $m_j(\tup{x_1}), m_j(\tup{x_2})$, that are defined on attribute set $O_j$,
\begin{equation}
m_j(\tup{x_1}) \equiv_{H} m_j(\tup{x_2}) \label{equn:ds-9}
\end{equation}
Since we started with two arbitrary input tuples $\tup{x_1} \equiv_{V} \tup{x_2}$, this shows that for all 
equivalent input tuples the outputs are also equivalent. In other words,
$m_j$ is \DS\ \wrt $H = H_1 \cup H_2$.
\end{proof}

Along with this lemma, two other simple observations will be useful.
\begin{observation}\label{obs:ds}
\begin{enumerate}
	\item Any module $m_j$ is \DS\ \wrt %$A_j$ 
	$\emptyset$ (hiding nothing maintains downstream-safety property).
	\item If $m_j$ is \DS\ \wrt $H$, and if $H'$ is such that $H \subseteq H'$, but $I_j \setminus H' = I_j \setminus H$, then $m_j$
	is also \DS\ \wrt $H'$ (hiding more output attributes maintains downstream-safety property).
\end{enumerate}
\end{observation}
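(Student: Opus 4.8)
The plan is to unfold Definition~\ref{def:uds} (downstream-safety) together with Definition~\ref{def:equiv} (tuple equivalence) separately for each of the two claims; both reduce to elementary set manipulations on $I_j$ and $O_j$, with no appeal to workflow structure or possible worlds.

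For the first claim, I would instantiate the \DS\ condition at $H = \emptyset$. Here the input-side equivalence $\tup{x} \equiv_\emptyset \tup{x'}$ unfolds to $\proj{I_j \setminus \emptyset}{\tup{x}} = \proj{I_j}{\tup{x'}}$, i.e. $\tup{x} = \tup{x'}$ as tuples on $I_j$. Since $m_j$ is a single-valued function, $\tup{x} = \tup{x'}$ immediately gives $m_j(\tup{x}) = m_j(\tup{x'})$, hence $m_j(\tup{x}) \equiv_\emptyset m_j(\tup{x'})$. Thus the antecedent of the implication in Definition~\ref{def:uds} is satisfied only by identical tuples, for which the consequent is trivial; this establishes \DS ty \wrt $\emptyset$.

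For the second claim, the crucial preliminary observation is that the two hypotheses $H \subseteq H'$ and $I_j \setminus H' = I_j \setminus H$ together force $H \cap I_j = H' \cap I_j$; consequently every attribute of $H' \setminus H$ lies in $O_j$ (or outside $A_j$). The argument then proceeds in three short steps. First, take any inputs $\tup{x}, \tup{x'}$ with $\tup{x} \equiv_{H'} \tup{x'}$; since $I_j \setminus H' = I_j \setminus H$, the visible input projections coincide, so this is literally the statement $\tup{x} \equiv_{H} \tup{x'}$. Second, apply the assumed \DS ty of $m_j$ \wrt $H$ to conclude $m_j(\tup{x}) \equiv_{H} m_j(\tup{x'})$, i.e. equality of the output tuples on $O_j \setminus H$. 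Third, because $H \subseteq H'$ we have $O_j \setminus H' \subseteq O_j \setminus H$, so equality of the outputs on the larger visible output set $O_j \setminus H$ restricts to equality on $O_j \setminus H'$, which is exactly $m_j(\tup{x}) \equiv_{H'} m_j(\tup{x'})$. This yields \DS ty \wrt $H'$.

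There is no substantive obstacle here; the only point requiring care is to track that input-visibility and output-visibility behave oppositely under enlarging the hidden set. Enlarging $H$ to $H'$ would in general shrink the input-side visible set $I_j \setminus H'$, weakening the antecedent of the \DS\ implication --- which is precisely why the hypothesis $I_j \setminus H' = I_j \setminus H$ is imposed, pinning the antecedent so that \DS ty \wrt $H$ applies unchanged. Meanwhile the output-side visible set $O_j \setminus H'$ genuinely shrinks, which only makes the consequent easier to satisfy. Making this asymmetry explicit is the entire content of the argument, and it is exactly the property invoked in the general-workflow theorem when unioning the per-module subsets $H_i$.
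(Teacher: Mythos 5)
Your proof is correct. The paper states Observation~\ref{obs:ds} without proof (it is explicitly offered as a ``simple observation'' preceding the proof of Theorem~\ref{thm:privacy-general}), and your argument --- instantiating Definition~\ref{def:uds} at $H = \emptyset$ so that the antecedent forces $\tup{x} = \tup{x'}$, and for the second claim observing that $I_j \setminus H' = I_j \setminus H$ pins the input-side antecedent while $H \subseteq H'$ gives $O_j \setminus H' \subseteq O_j \setminus H$, so the output-side consequent only weakens --- is exactly the routine unfolding the authors intend, including the input/output asymmetry that is the reason the union argument in the general-workflow theorem goes through.
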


\subsection{Main Lemma for Privacy Theorem for General Workflows}\label{sec:proof-lem:main-private-general}
The following lemma is the crucial component in the proof of Theorem\ref{thm:privacy-general},
and is analogous to Lemma~\ref{lem:main-private-single-pred} for single-predecessor workflows.

\begin{lemma}\label{lem:main-private-general}
Consider a standalone private module $m_i$, a set of 
hidden attributes $h_i$, any input $\tup{x}$ to $m_i$, 
and any candidate output $\tup{y} \in \Out_{\tup{x}, m_i, h_i}$ of $\tup{x}$.
Then $\tup{y} \in \Out_{\tup{x}, W, H_i}$ when $m_i$ belongs to an arbitrary (general) workflow $W$,  
and a set attributes $H_i \subseteq A$ is hidden 
such that (i) $h_i \subseteq H_i$, (ii) only output attributes from $O_i$ are included in %$\widebar{V_i}$
$h_i$ (i.e. $h_i \subseteq O_i$),
and (iii) every module $m_j$ in the downward-closure $D(h_i)$ is \DS\ \wrt $A_j \cap H_i$.

%%						Consider any workflow $W$; any private module $m_i$ in $W$, a set of visible attributes $V_i$, and any input 
%%						$\tup{x} \in \proj{I_i}{R}$; and any $\tup{y} \in \Out_{\tup{x}, m_i, V_i}$.
%%						Given a set of hidden attributes $H_i \subseteq A$,
%%						such that (i) the hidden attributes $\widebar{V_i} \subseteq H_i$, % and $H_i \subseteq A_i \cup \bigcup_{j \in C(\widebar{V_i})} A_j$,
%%						(ii) only output attributes from $O_i$ are included in $\widebar{V_i}$ (i.e. $\widebar{V_i} \subseteq O_i$),
%%						and (iii) every module $m_j$ (private or public) in the downward-closure $D(\widebar{V_i})$ is \DS\ \wrt $A_j \setminus H_i$.
%%						Then $\tup{y} \in \Out_{\tup{x}, W, V}$ %\wrt visible attributes
%%						where the set of visible attributes $V = A \setminus H_i$.
\end{lemma}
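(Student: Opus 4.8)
The plan is to mirror the structure used for single-predecessor workflows but adapt the flipping construction to handle the fact that the downward-closure $D(h_i)$ now contains private modules whose outputs may be visible. As in the single-predecessor case, I would fix a private module $m_i$, an input $\tup{x}$, a safe hidden subset $h_i \subseteq O_i$, and a candidate output $\tup{y} \in \Out_{\tup{x}, m_i, h_i}$; by Lemma~\ref{lem:out-x-output} we have $\tup{y} \equiv_{h_i} \tup{z}$ where $\tup{z} = m_i(\tup{x})$, and by Corollary~\ref{cor:out-equiv} this equivalence lifts to any $H_i \supseteq h_i$. The goal is to construct a possible world $R' \in \Worlds(R, H_i)$ in which $m_i$ is redefined so that $\widehat{m}_i(\tup{x}) = \tup{y}$, while all public modules are left unchanged and the projections of $R$ and $R'$ on the visible attributes agree.

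First I would set up the redefinition of the private modules using a flipping scheme analogous to $\Flip_{\tup{y},\tup{z}}$, processing modules in topological order. The key structural difference from Lemma~\ref{lem:main-private-single-pred} is that the \UDS ty requirement is replaced by the weaker \DS ty requirement, but now this requirement must hold for \emph{every} module $m_j \in D(h_i)$, including private ones, and propagation continues through these private successors rather than stopping at a public closure. I would prove by induction on the topological order that the output of each module $m_j$ in the redefined world is equivalent (with respect to $H_i$) to its output in the original world, i.e.\ $\tup{d}_{j,\widehat{m}} \equiv_{H_i} \tup{d}_{j,m}$, rather than establishing an exact flip relation. The \DS ty property of each $m_j \in D(h_i)$ is exactly what is needed to push equivalence of inputs forward to equivalence of outputs through the closure; for modules outside $D(h_i)$, no attribute on the relevant wires is hidden, so their behavior is unaffected and outputs coincide.

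The main obstacle, and the reason the proof departs most sharply from the single-predecessor case, is that the sets $H_i$ are no longer guaranteed to be pairwise disjoint: a module $m_j$ may lie in $D(h_i) \cap D(h_\ell)$ for two distinct private modules, so the final hidden set $H = \bigcup_{i:\, m_i\text{ private}} H_i$ may hide, on $m_j$, a union $A_j \cap H = \bigcup (A_j \cap H_i)$ of several individually-\DS subsets. Since \DS ty is \emph{not} monotone under adding input attributes, it is not immediate that $m_j$ remains \DS with respect to this union. This is precisely resolved by Lemma~\ref{lem:ds-union}, which shows \DS subsets are closed under union, together with Observation~\ref{obs:ds}: any module not in $D(h_i)$ is \DS \wrt $\emptyset$, and hiding extra output attributes preserves \DS ty. Thus I would first invoke Lemma~\ref{lem:ds-union} to argue that with respect to the global $H$, every module $m_j$ is \DS \wrt $A_j \cap H$, which lets me run the induction using $H$ (or the relevant $H_i'$) uniformly across all modules.

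Finally, to conclude $\Gamma$-privacy from Lemma~\ref{lem:main-private-general}, I would argue, exactly as in the proof of Theorem~\ref{thm:privacy-downward}, that the conditions on each $H_i$ imply the hypotheses of the main lemma when $H_i$ is replaced by $H = \bigcup H_\ell$: condition (i) $h_i \subseteq H_i \subseteq H$ is immediate, (ii) $h_i \subseteq O_i$ holds by assumption, and (iii) follows because every $m_j \in D(h_i)$ is \DS \wrt $A_j \cap H_i$ by hypothesis and hence \DS \wrt $A_j \cap H$ by the union-closure lemma. Since $|\Out_{\tup{x}, m_i, h_i}| \geq \Gamma$ by $\Gamma$-standalone-privacy, the main lemma yields $|\Out_{\tup{x}, W, H}| \geq \Gamma$ for every input $\tup{x}$ to every private module $m_i$, so by Definition~\ref{def:workflow-privacy} the workflow $W$ is $\Gamma$-private \wrt $H$. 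The step I expect to be most delicate is verifying that the topological-order induction still goes through when redefinitions propagate through private modules carrying \emph{partially} hidden, partially visible wires, since the flip must be applied selectively to the hidden components while leaving visible components equivalent—this is the general-workflow analogue of the $\EFlip$ bookkeeping in Case~II of Lemma~\ref{lem:main-private-single-pred}.
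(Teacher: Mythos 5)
Your overall skeleton matches the paper's proof of this lemma: fix $m_i$, $\tup{x}$, $h_i$, $\tup{y}$, invoke Lemma~\ref{lem:out-x-output}, build a possible world $R'$ in which $\widehat{m}_i(\tup{x})=\tup{y}$, and run a topological induction showing $\tup{d}_{j,\widehat{m}} \equiv_{H_i} \tup{d}_{j,m}$ for $m_j \in D(h_i)\cup\{m_i\}$ and $\tup{d}_{j,\widehat{m}} = \tup{d}_{j,m}$ otherwise, with \DS ty\ pushing equivalence of inputs forward to equivalence of outputs. (One scope remark: your third paragraph about non-disjoint $H_i$'s and Lemma~\ref{lem:ds-union} is not an obstacle for this lemma at all, since the lemma concerns a single private module and a single set $H_i$; that machinery belongs to the derivation of Theorem~\ref{thm:privacy-general} from the lemma, where you do correctly deploy it.)

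However, there is a genuine gap in the mechanism you propose for constructing $R'$. You plan to redefine the downstream \emph{private} modules by a flipping scheme and you expect the hard part to be ``the general-workflow analogue of the \EFlip\ bookkeeping.'' That step would in fact fail. The \EFlip\ construction in Case~II of Lemma~\ref{lem:main-private-single-pred} is well-defined only because, in a single-predecessor workflow, the composite public module $M=m_k$ receives \emph{all} of its inputs from $m_i$, so the reference tuples $\tup{w_y}=M(\proj{I_k}{\tup{y}})$ and $\tup{w_z}=M(\proj{I_k}{\tup{z}})$ are fixed tuples, and the upstream-safety half of \UDS ty\ guarantees that the preimage classes being swapped agree on visible attributes. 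In a general workflow neither ingredient survives: a closure module may have several predecessors, so $\tup{w_y},\tup{w_z}$ are not well-defined (the paper notes exactly this when motivating the general case), and since only \DS ty\ is assumed, downstream of the first \DS-but-not-\US\ module you retain only $\equiv_{H_i}$-equivalence rather than the exact flip correspondence your redefinitions would need to be consistent. The key idea of the paper's proof---and the one missing from your plan---is that \emph{no} redefinition of any module other than $m_i$ is needed: set $\widehat{m}_i(\tup{u}) = \Flip_{\tup{y},\tup{z}}(m_i(\tup{u}))$ and leave every other module, public and private alike, unchanged. Hypothesis~(iii), which in this lemma demands \DS ty\ of the private modules in $D(h_i)$ as well, is precisely what guarantees that these unchanged modules map the $\equiv_{H_i}$-equivalent inputs they receive in $R'$ to $\equiv_{H_i}$-equivalent outputs, so the visible projections of $R$ and $R'$ agree; modules outside $D(h_i)$ receive identical inputs and produce identical outputs. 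With that correction---trivial redefinitions everywhere except $m_i$---your induction statement and your use of \DS ty\ go through essentially verbatim.
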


\begin{proof}
We fix a module $m_i$, an input $\tup{x}$ to $m_i$, a set of safe hidden %visible 
attributes
$h_i$, and a
candidate output $\tup{y} \in \Out_{\tup{x}, m_i, h_i}$ for $\tup{x}$. 
For simplicity, let us refer to the set of modules in $D(h_i)$
by $D$.
We will show that $\tup{y} \in \Out_{\tup{x}, W, H_i}$ % where the set of visible attributes $V = A \setminus \widebar{H_i}$.
where the hidden attributes $H_i$ satisfies the conditions in the lemma.
In the proof, we show the existence of a possible world $R' \in \Worlds(R, H_i)$, 
such that if $\proj{I_i}{\tup{t}} = \tup{x}$ for some $\tup{t} \in R'$, then $\proj{O_i}{\tup{t}} = \tup{y}$.
Since $\tup{y} \in \Out_{x, m_i, h_i}$, by Lemma~\ref{lem:out-x-output}, $\tup{y} \equiv_{h_i} \tup{z}$ where $\tup{z} = m_i(\tup{x})$.

%\red{INCOMPLETE, TO BE COMPLETED}

We will use the $\Flip$ function used in the proof of Lemma~\ref{lem:main-private-single-pred}
(see Appendix~\ref{sec:proof-lem:main-private-single-pred}).
We redefine the module $m_i$ to $\widehat{m}_i$ as follows. 
For an input $\tup{u}$ to $m_i$, 
$\widehat{m}_i(\tup{u}) = \Flip_{\tup{y}, \tup{z}}(m_i(\tup{u}))$.
All other public and private modules are unchanged, $\widehat{m}_j = m_j$.
%All private modules $m_j \neq m_i$ are redefined as follows: On an input $\tup{u}$ to $m_j$,
%$\widehat{m}_j(\tup{u}) = m_j(\Flip_{y, z}(\tup{u}))$.
The required possible world $R'$ is obtained by taking the join of the standalone relations of these $\widehat{m}_j$-s, $j \in [n]$.

First note that by the definition of $\widehat{m}_i$, $\widehat{m}_i(\tup{x}) = \tup{y}$ (since $\widehat{m}_i(x) = \Flip_{\tup{y}, \tup{z}}(m_i(x)) = \Flip_{\tup{y}, \tup{z}}(\tup{z}) = \tup{y}$,
from Observation~\ref{obs:flip}).
Hence if $\proj{I_i}{\tup{t}} = \tup{x}$ for some $\tup{t} \in R'$, then $\proj{O_i}{\tup{t}} = \tup{y}$.

Next we argue that  $R' \in \Worlds(R, H_i)$. Since $R'$ is the join of the standalone relations for 
modules $\widehat{m}_j$-s, $R'$ maintains all functional dependencies $I_j \rightarrow O_j$. 
Also none of the public modules are unchanged, hence for any public module $m_j$ and any tuple $t$
in $R'$, $\proj{O_j}{\tup{t}} = m_j(\proj{I_j}{\tup{t}})$. So we only need to show that 
the projection of $R$ and $R'$ on the visible attributes are the same. 

%Consider a tuple $t$ in $R$ and let $\tup{t_0} = \proj{I_1}{\tup{t}}$
%be the projection of the input on the inbitial input $I_1$ ($I_1$ is the key of the relation
%$R$ of the workflow). If $\proj{O_i}{\tup{t}} \notin \{\tup{y}, \tup{z}\}$, then 
%the tuple $\tup{t'}$ in $R'$ where $\tup{t_0} = \proj{I_1}{\tup{t'}}$
%is identical to $\tup{t}$. 

%
%Since $I_1$ is essentially the key attributes of relation $R$ or $R'$,
%there are two unique tuples $\tup{t} \in R$ and $\tup{t'} \in R'$
%such that $\proj{I_1}{\tup{t}} = \proj{I_1}{\tup{t'}} = \tup{p}$,
%for some arbitrary tuple $\tup{p}$ on attributes $I_1$.
Let us assume, wlog. that the modules are numbered in topologically sorted order.
Let $I_0$ be the initial input attributes to the workflow, and let $p$ be a tuple defined on $I_0$.
There are two unique tuples $\tup{t} \in R$ and $\tup{t'} \in R'$
such that $\proj{I_1}{\tup{t}} = \proj{I_1}{\tup{t'}} = \tup{p}$.
Note that any intermediate or final attribute $a \in A \setminus I_0$ belongs to $O_j$, for a unique $j \in [1, n]$
(since for $j \neq \ell$, $O_j \cap O_{\ell} = \phi$).
So it suffices to show that $t, t'$ projected on $O_j$ are equivalent \wrt visible attributes for all 
module $j$, $j = 1$ to $n$.
\par
Let $\tup{c}_{j, m}, \tup{c}_{j, \widehat{m}}$ be the values of input attributes $I_j$ and $\tup{d}_{j, m}, \tup{d}_{j, \widehat{m}}$
be the values of output attributes $O_j$ of module $m_j$, in $\tup{t} \in R$ and $\tup{t'} \in R'$
respectively on initial input attributes $\tup{p}$
(i.e. $\tup{c}_{j, m} = \proj{I_j}{\tup{t}}$, $\tup{c}_{j, \widehat{m}} = \proj{I_j}{\tup{t'}}$,
$\tup{d}_{j, m} = \proj{O_j}{\tup{t}}$ and $\tup{d}_{j, \widehat{m}} = \proj{O_j}{\tup{t'}}$).
We prove by induction on $j = 1$ to $n$ that 

\begin{eqnarray}
\tup{d}_{j, \widehat{m}} & \equiv_{H_i} &  \tup{d}_{j, m} ~~~~~~~~~~~\text{if } j = i \text{ or } m_j \in D \label{equn:gen-flip-1}\\
\tup{d}_{j, \widehat{m}} & = & \tup{d}_{j, m}~~~~~~~~~~~~~~~\text{otherwise}\label{equn:gen-flip-2}
\end{eqnarray}

If the above is true for all $j$, then $\proj{O_j}{\tup{t}} \equiv_{H_i} \proj{O_j}{\tup{t}}$, along with the fact that the initial inputs
$\tup{p}$ are the same, this implies that $\tup{t} \equiv_{H_i} \tup{t'}$.

\textbf{Proof of (\ref{equn:gen-flip-1}) and (\ref{equn:gen-flip-2}).~~} The base case follows for $j = 1$.
If $m_1 \neq m_i$ ($m_j$ can be public or private), then $I_1 \cap O_i = \emptyset$, so  for all input $\tup{u}$,
$\widehat{m}_j(\tup{u}) = m_j(\Flip_{\tup{y}, \tup{z}}(\tup{u})) = m_j(\tup{u})$. Since the inputs $\tup{c}_{1, \widehat{m}} = \tup{c}_{1, m}$
(both projections of initial input $p$ on $I_1$),
the outputs $\tup{d}_{1, \widehat{m}} = \tup{d}_{1, m}$. This shows (\ref{equn:gen-flip-2}).
If $m_1 = m_i$, the inputs are the same, and by definition of $\widehat{m}_1$,
$\tup{d}_{1, \widehat{m}} = \widehat{m}_1(\tup{c}_{1, \widehat{m}}) = \Flip_{\tup{y}, \tup{z}}(m_i(\tup{c}_{1, \widehat{m}}))$ $= \Flip_{\tup{y}, \tup{z}}(m_i(\tup{c}_{1, m})) = \Flip_{\tup{y}, \tup{z}}(\tup{d}_{1, m})$.
Since $\tup{y}, \tup{z}$ only differ in the hidden attributes, by the definition of the $\Flip$ function
$\tup{d}_{1, \widehat{m}} \equiv_{H_i} \tup{d}_{1, m}$.
This shows (\ref{equn:gen-flip-1}). %\red{add a proposition somewhere.}
Note that the module $m_1$ cannot belong to $D$ since then it will have predecessor $m_i$
and cannot be the first module in topological order.

Suppose the hypothesis holds until $j-1$, consider $m_{j}$. There will be three cases to consider.

%From the induction hypothesis,
%$\tup{c}_{j, \widehat{m}} = \Flip_{y, z}(\tup{c}_{j, m})$, hence $\tup{c}_{j, m} = \Flip_{y, z}(\tup{c}_{j, \widehat{m}})$ (see Observation~\ref{obs:flip}).

\begin{itemize}
  \item[(i)] If $j = i$, for all predecessors $m_k$ of $m_i$ ($O_k \cap I_i \neq \emptyset$), $k \neq i$ and $m_k \notin D$,
  since the workflow is a DAG. Therefore from (\ref{equn:gen-flip-2}), using the induction hypothesis, $\tup{c}_{i, \widehat{m}} = \tup{c}_{i, m}$.
  Hence $\tup{d}_{i, \widehat{m}} = \widehat{m}_i(\tup{c}_{i, \widehat{m}}) = \Flip_{\tup{y}, \tup{z}}(m_i(\tup{c}_{i, \widehat{m}}))$ $= \Flip_{\tup{y}, \tup{z}}(m_i(\tup{c}_{i, m})) = \Flip_{\tup{y}, \tup{z}}(\tup{d}_{i, m})$.
  Again, $\tup{y}, \tup{z}$ are equivalent \wrt $H_i$, so $\tup{d}_{i, \widehat{m}} \equiv_{H_i} \tup{d}_{i, m}$. This shows (\ref{equn:gen-flip-1})
  in the inductive step.

\item [(ii)] If $j \neq i$ ($\widehat{m}_j = m_j$) and $m_j \notin D$, then $m_j$ does not get any of its inputs from any module in $D$,
or any hidden attributes from $m_i$ (then by the definition of $D$, $m_j \in D$). Using IH, from (\ref{equn:gen-flip-2})  
and from (\ref{equn:gen-flip-1}), using the fact that $\tup{y}, \tup{z}$ are equivalent on visible attributes, $\tup{c}_{j, \widehat{m}} = \tup{c}_{j, m}$. 
Then $\tup{d}_{j, \widehat{m}} = m_j(\tup{c}_{j, \widehat{m}}) = m_j(\tup{c}_{j, m}) = \tup{d}_{j, m}$. This shows (\ref{equn:gen-flip-2})
  in the inductive step.

\item[(iii)] If $j \neq i$, but $m_j \in D$, $m_j$ can get all its inputs either from $m_i$, from other modules in $D$,
or from modules not in $D$. Using the IH from (\ref{equn:gen-flip-1}) and (\ref{equn:gen-flip-2}), $\tup{c}_{j, \widehat{m}} \equiv_{H_i} \tup{c}_{j, m}$.
Since $m_j \in D$, by the condition of the lemma, $m_j$ is \DS\ \wrt $H_i$. Therefore the corresponding outputs $\tup{d}_{j, \widehat{m}}  = m_j(\tup{c}_{j, \widehat{m}})$
and $\tup{d}_{j, m}  = m_j(\tup{c}_{j, m})$ are equivalent, or $\tup{d}_{j, \widehat{m}} \equiv_{H_i} \tup{d}_{j, m}$. This again shows (\ref{equn:gen-flip-1})
  in the inductive step.
\end{itemize} 
Hence the IH holds for all $j = 1$ to $n$ and this completes the proof of the lemma. 
\end{proof}

\eat{
						\subsection{Proof of Lemma~\ref{lem:ds-union}}\label{sec:proof-lem:ds-union}
						\textsc{Lemma}~\ref{lem:ds-union}.~~
						\emph{
						If a module $m_j$ is \DS\ \wrt sets $V_1, V_2 \subseteq A_j$, then $m_j$ is \DS\ \wrt $V = V_1 \cap V_2$.
						}\\
						
						\begin{proof}
						Let $V = V_1 \cap V_2$ = $A_j \setminus (\widebar{V_1} \cup \widebar{V_2})$.
						Let $\tup{x_1}$ and $\tup{x_2}$ be two input tuples to $m_j$ such that $\tup{x_1} \equiv_{V} \tup{x_2}$.
						i.e. 
						\begin{equation}
						\proj{V \cap I_j}{\tup{x_1}} = \proj{V \cap I_j}{\tup{x_2}}\label{equn:ds-1}
						\end{equation}
						%$$\proj{V_1 \cap I_j}{\tup{x_1}} = \proj{V \cap I_j}{\tup{x_2}}$$
						%
						%Let us partition the set of inputs to $m_j$ based on their equivalence \wrt $V_1 \cap I_j$
						%as $X_1, \cdots, X_k$, and let the set of outputs of these groups be $Y_1, \cdots, Y_k$
						%($Y_{\ell} = \{m_j(\tup{x}) : \tup{x} \in X_{\ell}\}$).
						 For $a \in I_j$, let $x_3[a]$ denote the value of $a$-th attribute
						of $x_3$ (similarly $x_1[a], x_2[a]$). From (\ref{equn:ds-1}), for $a \in V \cap I_j$, $x_1[a] = x_2[a]$.
						Let us define a tuple $\tup{x_3}$ as follows on four disjoint subsets of $I_j$ (since $V = V_1 \cap V_2$):
						\begin{eqnarray*}
						x_3[a] & = & x_1[a]~~~~~ \text{ if } a \in I_j \setminus (V_1 \cup V_2)\\ %\\%(\widebar{V_2} \cap I_j) \setminus (\widebar{V_1} \cap I_j)\\
						& = & x_1[a]~~~~~ \text{ if } a \in I_j \cap (V_1 \setminus V_2)\\
						& = & x_2[a]~~~~~ \text{ if } a \in I_j \cap (V_2 \setminus V_1)\\
						& = & x_1[a] = x_2[a] ~~~~~ \text{ if } a \in I_j \cap V
						%& = & x_1[a] = x_2[a] ~~~~~ \text{ if } a \in V \cap I_j
						\end{eqnarray*}
						For instance, on attribute set $I_j = \angb{a_1, \cdots, a_5}$, let
						$\tup{x_1} = \angb{\ul{2},\ul{3},\ul{2},6,7}$, $\tup{x_2} = \angb{\ul{4}, \ul{5}, \ul{9}, 6, 7}$, 
						$V_1 = \{a_3, a_4, a_5\}$ and $V_2 = \{a_1, a_4, a_5\}$, $V = \{a_4, a_5\}$
						(in $\tup{x_1}, \tup{x_2}$, the hidden attribute values in $\widebar{V} = \{a_1, a_2, a_3\}$ are underlined).
						Then $\tup{x_3} = \angb{\ul{4}, \ul{3}, \ul{2}, 6, 7}$.\\
						
						(1) First we claim that, $\tup{x_1} \equiv_{V_1} \tup{x_3}$, or,  
						\begin{equation}
						\proj{V_1 \cap I_j}{\tup{x_1}} = \proj{V_1 \cap I_j}{\tup{x_3}} \label{equn:ds-2}
						\end{equation}
						Partition $V_1 \cap I_j$ into two disjoint subsets, $I_j \cap (V_1 \setminus V_2)$, and, $I_j \cap (V_1 \cup V_2) = I_j \cap V$.
						From the definition of $\tup{x_3}$, for all $a \in I_j \cap (V_1 \setminus V_2)$ and all $a \in I_j \cap V$,
						$x_1[a] = x_3[a]$. This shows (\ref{equn:ds-2}).\\
						
						(2) Next we claim that, $\tup{x_2} \equiv_{V_2} \tup{x_3}$, or,  
						\begin{equation}
						\proj{V_2 \cap I_j}{\tup{x_2}} = \proj{V_2 \cap I_j}{\tup{x_3}} \label{equn:ds-3}
						\end{equation}
						Again partition $V_2 \cap I_j$ into two disjoint subsets, $I_j \cap (V_2 \setminus V_1)$, and, $I_j \cap (V_1 \cup V_2) = I_j \cap V$.
						From the definition of $\tup{x_3}$, for all $a \in I_j \cap (V_2 \setminus V_1)$ and all $a \in I_j \cap V$,
						$x_2[a] = x_3[a]$. This shows (\ref{equn:ds-3}). (\ref{equn:ds-2}) and (\ref{equn:ds-3}) can also be verified from the above example.\\
						
						(3) Now by the condition stated in the lemma, $m_j$ is \DS\ \wrt $V_1$ and $V_2$. Therefore, using (\ref{equn:ds-2}) and (\ref{equn:ds-3}),
						$m_j(\tup{x_1}) \equiv_{V_1} m_j(\tup{x_3})$ and $m_j(\tup{x_2}) \equiv_{V_2} m_j(\tup{x_3})$
						or,  
						\begin{equation}
						\proj{V_1 \cap O_j}{m(\tup{x_1})} = \proj{V_1 \cap O_j}{m_j(\tup{x_3})} \label{equn:ds-4}
						\end{equation}
						and
						\begin{equation}
						\proj{V_2 \cap O_j}{m(\tup{x_2})} = \proj{V_2 \cap O_j}{m_j(\tup{x_3})} \label{equn:ds-5}
						\end{equation}
						
						Since $V \cap O_j$ $= (V_1 \cap V_2) \cap O_j$ $\subseteq V_1 \cap O_j$, from (\ref{equn:ds-4})
						\begin{equation}
						\proj{V \cap O_j}{m(\tup{x_1})} = \proj{V \cap O_j}{m_j(\tup{x_3})} \label{equn:ds-6}
						\end{equation}
						
						Similarly, $V \cap O_j$ $\subseteq V_2 \cap O_j$, from (\ref{equn:ds-5})
						\begin{equation}
						\proj{V \cap O_j}{m(\tup{x_2})} = \proj{V \cap O_j}{m_j(\tup{x_3})} \label{equn:ds-7}
						\end{equation}
						
						From (\ref{equn:ds-6}) and (\ref{equn:ds-7}),
						\begin{equation}
						\proj{V \cap O_j}{m(\tup{x_1})} = \proj{V \cap O_j}{m_j(\tup{x_2})} \label{equn:ds-8}
						\end{equation}
						In other words, the output tuples $m(\tup{x_1}), m(\tup{x_2})$, that are defined on attribute set $O_j$,
						\begin{equation}
						m(\tup{x_1}) \equiv_{V} m_j(\tup{x_2}) \label{equn:ds-9}
						\end{equation}
						Since we started with two arbitrary input tuples $\tup{x_1} \equiv_{V} \tup{x_2}$, this shows that for all 
						equivalent input tuples the outputs are also equivalent. In other words,
						$m_j$ is \DS\ \wrt $V = V_1 \cap V_2$.
						\end{proof}

}

\subsection{Proof of Theorem~\ref{thm:privacy-general}}
Finally, we prove Theorem~\ref{thm:privacy-general}
using Lemmas~\ref{lem:ds-union} and \ref{lem:main-private-general}.
%
%\textbf{
%({\bf Privacy Theorem for General workflows}) 
%Let $W$ be any workflow. For each private module 
%$m_i$ in $W$, let  $H_i$ be a subset of hidden attributes such that
%(i) $h_i = H_i \cap O_i$ is safe for $\Gamma$-standalone-privacy of $m_i$, 
%(ii) each private and public module $m_j$ in the downward-closure %$C(H_i \cap O_i)$
%$D(h_i)$
%is \DS\ \wrt\ $A_j \cap H_i$, 
%and (iii) $H_i \subseteq O_i \cup \bigcup_{j: m_j \in
%D(h_i)} A_j$. Then the workflow $W$ is
%$\Gamma$-private \wrt\ %$\bigcup_{i \in M^-}H_i$.
%$H = \bigcup_{i: m_i \textrm{ is private}}H_i$.
%}
%
%\textbf{
%Consider a standalone private module $m_i$, a set of 
%hidden attributes $h_i$, any input $\tup{x}$ to $m_i$, 
%and any candidate output $\tup{y} \in \Out_{\tup{x}, m_i, h_i}$ of $\tup{x}$.
%Then $\tup{y} \in \Out_{\tup{x}, W, H_i}$ when $m_i$ belongs to an arbitrary (general) workflow $W$,  
%and a set attributes $H_i \subseteq A$ is hidden 
%such that (i) $h_i \subseteq H_i$, (ii) only output attributes from $O_i$ are included in %$\widebar{V_i}$
%$h_i$ (i.e. $h_i \subseteq O_i$),
%and (iii) every module $m_j$ in the downward-closure $D(h_i)$ is \DS\ \wrt $A_j \cap H_i$.
%}

\begin{proof}[of Theorem~\ref{thm:privacy-general}]
We argue that if $H_i$ satisfies the conditions in 
Theorem~\ref{thm:privacy-general}, %Lemma~\ref{lem:main-private-single-pred},
then $H_i' = \bigcup_{i: m_i \textrm{ is private}}H_i$ satisfies the
conditions in Lemma~\ref{lem:main-private-general}. 
%To see this,
%observe that the only additional condition on $H_i$ in the theorem
%is that $H_i \subseteq A_i \cup \bigcup_{j \in C(\widebar{V_i})}
%A_j$. 
The first two conditions are easily satisfied by $H_i'$: 
(i) $h_i \subseteq H_i \subseteq H_i'$  and (ii)  $h_i \subseteq O_i$.
So we need to show (iii), \ie\ all modules in the downward-closure $D(h_i)$ are \DS\ \wrt\ $A_j \cap H_i'$. 
\par
%Now we argue that (iii) every module $m_j$ in the downward-closure
%$D(\widebar{V_i})$ is \DS\ \wrt $H_i'$. 
From the conditions in the theorem, each module $m_j \in D(h_i)$ 
is \DS\ \wrt $A_j \cap H_i$.
We show that for any other private module $m_k \neq m_i$, $m_j$
is also \DS\ \wrt $A_j \cap H_k$. There may be three such cases as discussed below.
\par
\textbf{Case-I:}~~ If $m_j \in D(h_k)$, by the \DS ty conditions in the theorem, $m_j$ is \DS\ \wrt $A_j \cap H_k$.
\par 
\textbf{Case-II:}~~ If $m_j \notin D(h_k)$ and $m_j \neq m_k$, for any private module $m_k \neq m_i$,
$A_j \cap H_k = \emptyset$ (since $H_k \subseteq O_k \cup \bigcup_{\ell \in D(h_k)} A_\ell$
from the theorem). %Hence $A_j \setminus H_k  = A_j$, and f
From Observation~\ref{obs:ds}, $m_j$ is \DS\ \wrt $A_j \cap H_k$.
\par
\textbf{Case-III:}~~ 
If $m_j \notin D(h_k)$ but $m_j = m_k$ (or $j = k$), then $H_k \cap A_j \subseteq O_j$
(again since $H_k \subseteq O_k \cup \bigcup_{\ell \in D(h_k)} A_\ell$ and $O_k = O_j$). 
From Observation~\ref{obs:ds}, $m_j$ is \DS\ \wrt %$A_j$,
$\emptyset$, and %$A_j \setminus H_k \subseteq A_j$
$A_j \cap H_k %\subseteq A_j$
\supseteq \emptyset$. Further, $I_j \setminus \emptyset = I_j = I_j \setminus (A_j \cap H_k)$.
This is because $H_k \cap A_j \subseteq O_j$, since $O_j \cap I_j = \emptyset$, $I_j \cap (A_j \cap H_k) = \emptyset$.
Hence from the same observation, $m_j$ is \DS\ \wrt $A_j \cap H_k$.
%is such that $A_j \cap I_j = I_j = (A_j \cap H_k) \cap I_j$. 
%%%%Then again from the same observation, $m_j$
%%%%is \DS\ \wrt $A_j \cap H_k$.
%			If $m_j \notin D(h_k)$ but $m_j = m_k$ (or $j = k$), then $H_k \cap A_j \subseteq O_j$
%			(again since $H_k \subseteq O_k \cup \bigcup_{\ell \in D(h_k)} A_\ell$ and $O_k = O_j$). 
%			From Observation~\ref{obs:ds}, $m_j$ is \UDS\ \wrt %$A_j$, and 
%			$\emptyset$, and $A_j \setminus H_k \subseteq A_j$
%			is such that $A_j \cap I_j = I_j = (A_j \setminus H_k) \cap I_j$. Then again from the same observation, $m_j$
%			is \UDS\ \wrt $A_j \cap H_k$.
\par 
Hence $m_j$ is \DS\ \wrt $A_j \cap H_i$ and for all private modules $m_k$, $m_k \neq m_i$, $m_j$ is \DS\ \wrt $A_j \cap H_k$.
By Lemma~\ref{lem:ds-union}, then $m_j$ is \DS\ \wrt $(A_j \cap H_i) \cup (A_j \cap H_k)$
$= A_j \cap (H_i \cup H_k)$. %\red{plz check if we need the simple explanation in parenthesis}
%(this is by the simple fact that  $(A_j \setminus H_i) \cap (A_j \setminus H_k)$ =  $(A_j \cap \widebar{H_i}) \cap (A_j \cap \widebar{H_k})$
%= $A_j \cap \widebar{H_i \cup H_k}$ = $A_j \setminus (H_i \cup H_j)$).
By a simple induction on all private modules %$k \in M^{-}$, 
$m_k$, $m_j$ is \DS\ \wrt $A_j \cap (\bigcup_{k: m_k \textrm{ is private}}) H_k$ = $A_j \cap H_i'$.
Hence $H_i'$ satisfies the conditions stated in the lemma. The rest of the proof follows by the same argument as in the proof of 
Theorem~\ref{thm:privacy-downward}.
\eat{
						\par
						%\red{This para is a copy from Theorem~\ref{thm:privacy-downward}, remove it? }
						Now the proof can be completed using exactly the same argument as in Theorem~\ref{thm:privacy-downward}.
						Theorem~\ref{thm:privacy-general} also states that
						each private module $m_i$ is $\Gamma$-standalone-private
						\wrt hidden attributes $h_i$, i.e., $|\Out_{\tup{x}, m_i, h_i}| \geq \Gamma$ for all input
						$\tup{x}$ to $m_i$, for all private modules $m_i$ %$i \in M^-$ 
						(see
						Definition~\ref{def:standalone-privacy}). From
						Lemma~\ref{lem:main-private-single-pred}, using $H_i'$ in place of $H_i$, this implies that for all input
						$\tup{x}$ to private modules $m_i$, $|\Out_{\tup{x}, W, H_i}| \geq \Gamma$ where %$V = A \cap H_i' = A \setminus \bigcup_{\ell \in M^-} H_{\ell}$.
						From Definition~\ref{def:workflow-privacy}, this implies that each $m_i \in M^-$ is $\Gamma$-workflow-private
						\wrt $V = A \setminus \bigcup_{i \in M^-} H_{i}$; equivalently $W$ is $\Gamma$-private \wrt $V$. 
}
\end{proof}

%\input{opt-approx-general}

%
%Again, similar to Lemma~\ref{lem:main-private-single-pred}, we present the following lemma
%that allows us to analyze every private module separately to prove Theorem~\ref{thm:privacy-general}.
%The proof of the lemma is given in Appendix~\ref{sec:proof-lem:main-private-general}.

%\red{SHOULD BE THE SAME STATEMENT OF THE LEMMA}

%					\begin{lemma}\label{lem:main-private-general}
%					Consider any workflow $W$; any private module $m_i$ in $W$, a set of visible attributes $V_i$, and any input 
%					$\tup{x} \in \proj{I_i}{R}$; and any $\tup{y} \in \Out_{\tup{x}, m_i, V_i}$.
%					Given a set of hidden attributes $H_i \subseteq A$,
%					such that (i) the hidden attributes $\widebar{V_i} \subseteq H_i$, % and $H_i \subseteq A_i \cup \bigcup_{j \in C(\widebar{V_i})} A_j$,
%					(ii) only output attributes from $O_i$ are included in $\widebar{V_i}$ (i.e. $\widebar{V_i} \subseteq O_i$),
%					and (iii) every module $m_j$ (private or public) in the downward-closure $D(\widebar{V_i})$ is \DS\ \wrt $A_j \setminus H_i$.
%					Then $\tup{y} \in \Out_{\tup{x}, W, V}$ %\wrt visible attributes
%					where the set of visible attributes $V = A \setminus H_i$.
%					\end{lemma}

%\subsection{Proof of Lemma~\ref{lem:main-private-general}}\label{sec:proof-lem:main-private-general}

%\input{app-todo}
\end{document}